\documentclass
  [
    reqno 
  ]
  {amsart}
\usepackage[margin=3.18cm, top=2.54cm, bottom=2.54cm]{geometry}

\allowdisplaybreaks

\usepackage{caption}
\usepackage[leftcaption]{sidecap}

\usepackage{enumitem}
\setlist[enumerate,1]{label=\textbf{\textup{(\roman*)}}}
\setlist{
  topsep=0pt,
  itemsep=20pt,
  leftmargin=1cm,
}
\setlist{before=\leavevmode}

\usepackage[
  backend=biber,
  style=alphabetic,
  giveninits=true, 
  backref=true
]{biblatex}
\usepackage{lmodern}
\usepackage{anyfontsize}
    
\usepackage{multirow}
\usepackage{hhline}

\usepackage{tabularray}
\UseTblrLibrary{booktabs}

\usepackage{adjustbox}
\usepackage{tcolorbox}

\newtcolorbox{standout}{
  colback=gray!15,
  boxrule=0pt,
  left=.3cm,
  right=.3cm,
  top=.18cm,
  bottom=.18cm,
  boxsep=0pt
}

\usepackage{mathtools}
\usepackage{amssymb}
\usepackage{amsthm}
\usepackage{xfrac}
\usepackage{stmaryrd} 
\usepackage{scalerel} 
\usepackage{stackengine} 
\usepackage[safe]{tipa} 
\usepackage[
  cal=euler,
  scr=dutchcal
]
 {mathalpha} 

 \newcommand{\bracket}[3]{%
  \stretchleftright
    {#1}
    {%
      \ensurestackMath{\addstackgap[1pt]{#2}}%
      \vrule width 0pt depth 2pt height 0pt
    }
    {#3}%
} 
\newcommand{\scaledbracket}[3]{%
  \ThisStyle{%
    \stretchleftright
      {#1}
      {
        \ensurestackMath{\addstackgap[1pt]{\SavedStyle #2}}%
        \vrule width 0pt depth 1.5pt height 0pt
      }
      {#3}%
  }%
}
\newcommand{\bracketmid}[4]{%
  \stretchleftright{#1}{%
    \ensurestackMath{%
      \addstackgap[2pt]{#2}%
      \,\stretchrel*{|}{\addstackgap[2pt]{#2#3}}\,%
      \addstackgap[2pt]{#3}%
    }%
  }{#4}%
}

\theoremstyle{plain}
\newtheorem{theorem}{Theorem}[section]
\newtheorem{lemma}[theorem]{Lemma}
\newtheorem{proposition}[theorem]{Proposition}

\theoremstyle{definition}
\newtheorem{definition}[theorem]{Definition}
\newtheorem{example}[theorem]{Example}

\theoremstyle{remark}
\newtheorem{remark}[theorem]{Remark}
\usepackage[
  colorlinks=true,
  linkcolor=darkgreen,
  citecolor=darkgreen,
  urlcolor=darkgreen
]{hyperref}
\usepackage{xurl} 

\usepackage{cleveref}
\crefname{equation}{}{}
\crefname{section}{\S}{\S\S}
\crefname{subsection}{\S}{\S\S}
\crefname{subsubsection}{\S}{\S\S}
\crefname{definition}{Def.}{Defs.}
\crefname{theorem}{Thm.}{Thms.}
\crefname{corollary}{Cor.}{Cors.}
\crefname{lemma}{Lem.}{Lems.}
\crefname{proposition}{Prop.}{Props.}
\crefname{remark}{Rem.}{Rems.}
\crefname{notation}{Ntn.}{Ntns.}
\crefname{fact}{Fact}{Fact}
\crefname{example}{Ex.}{Exs.}
\crefname{figure}{Fig.}{Figs.}
\crefname{table}{Tab.}{Tabs.}
\crefname{footnote}{ftn.}{ftns.}
\Crefname{footnote}{Ftn.}{Ftns.}

\usepackage{tikz}
\usetikzlibrary{
  cd,
  calc,
  arrows.meta,
  backgrounds,
  decorations,
  decorations.pathmorphing, 
}

\tikzcdset{
  arrow style=tikz, 
  diagrams={
    trim left=
    {([xshift=5pt]current bounding box.west)},
    trim right=
    {([xshift=-5pt]current bounding box.east)},
    baseline=
    {([yshift=-2pt]current bounding box.center)},
    >={
      Computer Modern Rightarrow[
        length=4pt, width=4pt
      ]
    },
    hook/.style={
      {Hooks[right, length=2pt, width=8pt]}->
    },
    hook'/.style={
      {Hooks[left, length=2pt, width=8pt]}->
    },
    shorten=0pt,
  }
}

\definecolor{darkblue}{rgb}{0.05,0.25,0.65}
\definecolor{darkgreen}{RGB}{20,140,10}
\definecolor{lightgray}{rgb}{0.9,0.9,0.9}
\definecolor{darkorange}{RGB}{200,100,5}
\definecolor{darkyellow}{rgb}{.91,.91,0}
\definecolor{lightolive}{RGB}{225, 220, 185}

\let\originalsslash\sslash
\renewcommand{\sslash}{\mathord{\originalsslash}}

\makeatletter
\newcommand{\cpt}{\mathpalette\cpt@inner\relax}
\newcommand{\cpt@inner}[2]{%
  \scalebox{0.5}[0.9]{$#1\cup$}
  #1\{\infty\}
}
\makeatother

\makeatletter
\newcommand{\plus}{\mathpalette\sqcpt@inner\relax}
\newcommand{\sqcpt@inner}[2]{%
  \scalebox{0.5}[0.9]{$#1\sqcup$}
  #1\{\infty\}
}
\makeatother

\DeclareRobustCommand{\rchi}{{\mathpalette\irchi\relax}}
\newcommand{\irchi}[2]{\raisebox{\depth}{$#1\chi$}} 

\tikzset{
  snake left/.style={
    rounded corners,
    to path={
      let \p1 = (\tikztostart.east),
          \p2 = (\tikztotarget.west),
          \p3 = ($(\p1)!0.5!(\p2)$),
          \n1 = {8pt} 
      in
      (\p1)
      -- (\x1 + \n1, \y1)
      -- (\x1 + \n1, \y3)
      -- (\x2 - \n1, \y3) \tikztonodes
      -- (\x2 - \n1, \y2)
      -- (\p2)
    }
  }
}

\tikzset{
  uphordown/.style={
    rounded corners,
    to path={
      let \p1 = (\tikztostart.north),
          \p2 = (\tikztotarget.north),
          \n1 = {max(\y1,\y2) + 8pt}
      in
      (\p1)
      -- (\x1, \n1)
      -- (\x2, \n1) \tikztonodes 
      -- (\p2)
    }
  }
}

\tikzset{
  downhorup/.style={
    rounded corners,
    to path={
      let \p1 = (\tikztostart.south),
          \p2 = (\tikztotarget.south),
          \n1 = {min(\y1,\y2) - 8pt}
      in
      (\p1)
      -- (\x1, \n1)
      -- (\x2, \n1) \tikztonodes 
      -- (\p2)
    }
  }
}

\tikzset{
  rightvertleft/.style={
    rounded corners,
    to path={
      let \p1 = (\tikztostart.east),
          \p2 = (\tikztotarget.east),
          \n1 = {max(\x1,\x2) + 8pt}
      in
      (\p1)
      -- (\n1, \y1)
      -- (\n1, \y2) \tikztonodes 
      -- (\p2)
    }
  }
}

\tikzset{
  leftvertright/.style={
    rounded corners,
    to path={
      let \p1 = (\tikztostart.west),
          \p2 = (\tikztotarget.west),
          \n1 = {min(\x1,\x2) - 8pt}
      in
      (\p1)
      -- (\n1, \y1)
      -- (\n1, \y2) \tikztonodes 
      -- (\p2)
    }
  }
}

\newcommand{\inlinetikzcd}[1]{\begin{tikzcd}[sep=small, ampersand replacement=\&]#1\end{tikzcd}}

\renewcommand{\setminus}{-}

\newcommand{\defneq}{\equiv}

\newcommand{\HilbertSpace}{%
  \mathscr{H}%
}

\newcommand{\BoundaryDomain}{N}
\newcommand{\BulkDomain}{\Sigma}
\newcommand{\DeepBulkDomain}{\overline{\Sigma}}

\newcommand{\BulkInclusion}{i_{\mathrm{blk}}}
\newcommand{\BoundaryInclusion}{i_{\mathrm{bdr}}}
\newcommand{\BulkRestriction}{r_{\mathrm{blk}}}
\newcommand{\BoundaryRestriction}{r_{\mathrm{bdr}}}

\newcommand
  {\RefinedBulkInclusion}
  {i_{\mathrm{blk}'}}

\newcommand
  {\RefinedBoundaryRestriction}
  {r_{\mathrm{bdr}'}}

\newcommand{\Unwinding}{\partial^{\mathrm{uw}}}
\newcommand{\Rewinding}{\partial^{\mathrm{rw}}}

\newcommand
  {\RefinedUnwinding}
  {\partial^{\mathrm{uw}'}}

\newcommand{\UnpointedTop}{\mathrm{Top}}
\newcommand{\Top}{\UnpointedTop^{\!\ast}}
\newcommand{\Map}{\mathrm{Map}^{\!\ast}}
\newcommand{\UnpointedMap}{\mathrm{Map}}

\newcommand{\ClassifyingA}{\mathcal{A}}
\newcommand{\ClassifyingB}{\mathcal{B}}
\newcommand{\ClassifyingF}{\mathcal{F}}

\newcommand
  {\PhaseSpace}
  {\mathrm{PhsSp}}

\newcommand{\CurrentDensity}{\lambda}

\newcommand{\shape}{%
  \mathord{\scalerel*{\raisebox{0.1ex}{\textesh}}{f}}%
  \mkern 2mu
}

\newcommand{\FQHPhaseSpace}{\mathrm{fqhPhsSp}}

\newcommand{\RRationalization}{L^{\!\mathbb{R}}}

\begin{document}

\setlength{\abovedisplayskip}{3.5pt}
\setlength{\belowdisplayskip}{3.5pt}
\setlength{\abovedisplayshortskip}{-5pt}
\setlength{\belowdisplayshortskip}{3pt}

\title{Bulk-Edge Correspondence via Higher Gauge Theory}

\thanks{\emph{Funding} by Tamkeen UAE under the 
NYU Abu Dhabi Research Institute grant {\tt CG008}.}

\author{Hisham Sati}
\address{Mathematics Program and Center for Quantum and Topological Systems, New York University Abu Dhabi}
\curraddr{}
\email{hsati@nyu.edu}
\thanks{}

\author{Urs Schreiber           }
\address{Mathematics Program and Center for Quantum and Topological Systems, New York University Abu Dhabi}
\curraddr{}
\email{us13@nyu.edu}
\thanks{}

\subjclass[2020]{%
  Primary:
  81V70, 
  55Q55, 
  81T70; 
  Secondary:
  83E50, 
  81T30, 
  18N60 
}

\keywords{%
  bulk-boundary correspondence,
  higher gauge theory,
  topological order,
  fractional quantum Hall systems,
  edge currents,
  nonabelian cohomology, 
  Hopf fibration,
  homotopy theory,
  cohomotopy,
  geometric homotopy theory,
  $\infty$-topos theory,
  geometric engineering,
  supergravity,
  M-branes,
  Chern-Simons theory,
  Floreanini-Jackiw theory%
}

\date{\today}

\dedicatory{
  \href{https://ncatlab.org/nlab/show/Center+for+Quantum+and+Topological+Systems}{\includegraphics[width=3.1cm]{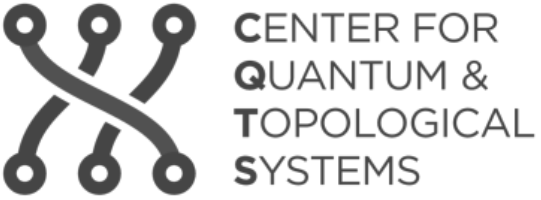}}
}

\begin{abstract}
 More profound than bulk topological order of quantum materials is only its unwinding via gapless excitations along boundaries of the sample. We recast this \emph{bulk-edge correspondence} --- for the experimentally relevant case of fractional quantum Hall (FQH) systems --- in terms of effective relative higher gauge theory, controlled by choices of classifying fibrations. For FQH systems, we identify the complex Hopf fibration as classifying the bulk/boundary topological effects, and find that it yields a non-Lagrangian reconstruction of Floreanini-Jackiw/Wess-Zumino-Witten chiral edge currents.
 
 Remarkably, the resulting effective FQH higher gauge theory turns out to be ``geometrically engineered'' on M2/M5-branes probing A-type orbi-singularities in 11D supergravity, globally completed by flux-quantization in twisted equivariant differential (TED) Cohomotopy: Here the M-string ends of M2-branes on M5-branes engineer the FQH liquid's boundary.
 This geometric engineering on M-branes might naturally elucidate the curious combination of $W_\infty$-symmetry and of super-symmetry that is known to govern the collective excitations of FQH liquids at long wavelengths.
\end{abstract}

\date{\today}

\maketitle

\newpage

\tableofcontents

\newpage

\section
{Introduction}

The field of \emph{topological phases of matter} (cf. \parencites{MoessnerMoore2021}{Stanescu2020}{BernevigHughes2013}) rests on a curious dichotomy: While its eponymous topological effects reside in the interior ``bulk'' of energetically gapped and thereby isolated quantum materials, any observation of these effects (and hence their technological exploitation, such as for much-anticipated topological quantum computing hardware \cite{FreedmanKitaevLarsenWang2003,SatiValera2025}) must occur at boundaries (at \emph{edges} for 2D electron liquids, cf. \cite[\S 1.3]{Wen1995}), where the system's topological twisting ``unwinds'' as it transitions into the topologically trivial vacuum surrounding the topological material (a phenomenon that we formalize in \cref{OnABNCFOrTopologicalOrders}). 

A \emph{bulk-boundary correspondence} (BBC, also \emph{bulk-edge correspondence}, \cite{KellendonkEtAl2002}, cf. \parencites[\S 6]{BernevigHughes2013}{ProdanSchulzBaldes2016}), in this context, is the precise characterization of how dynamical boundary phenomena reflect the bulk topology. 

We highlight that when described in terms of effective field theories (cf. \parencites{Fradkin2013}{Fradkin2024}), the BBC is closely related to the ``holographic principle'' originating in high energy physics (cf. \cite{Zaanen2015}), such as embodied by the famous relation between 3D Chern-Simons (CS) field theory and 2D Wess-Zumino-Witten (WZW) field theory (\parencites{Witten1989}, cf. \cite{CabraRossini1997,GukovEtAl2004}) ---  and we give a new non-Lagrangian perspective on this classical scenario (in \cref{OnIdentifyingEdgeCurrentsInTEDCoh,Conclusions}).

So far, the mathematically most developed formalization of BBC (\cite{KellendonkEtAl2002,KellendonkSchulzBaldes2004}, cf. \parencites{ProdanSchulzBaldes2016}[\S 3]{MathaiThiang2016}) restricts itself to non-interacting topological phases, models their configuration by a short exact sequence of $C^\ast$-algebras $A$ of quantum observables,
\begin{equation}
  \begin{tikzcd}[
    column sep=30pt,
    row sep=-2pt
  ]
    0 
    \ar[r]
    &
    A_{\mathrm{bdr}}
    \ar[rr, hook]
    &&
    A_{\mathrm{\mathrm{full}}}
    \ar[rr, ->>]
    &&
    A_{\mathrm{blk}}
    \ar[r]
    &
    0
    \mathrlap{\,,}
    \\
    &
    \mathclap{
      \substack{
        \text{\color{darkblue}boundary}
        \\
        \text{\color{darkblue}observables}
      }
    }
    \ar[
      rr,
      phantom,
      "{
        \substack{\text{\color{darkgreen}among}}
      }"
    ]
    &&
    \mathclap{
      \substack{
        \text{\color{darkblue}full}
        \\
        \text{\color{darkblue}observables}
      }
    }
    \ar[
      rr,
      phantom,
      "{
        \substack{\text{\color{darkgreen}covering}}
      }"
    ]
    &&
    \mathclap{
      \substack{
        \text{\color{darkblue}bulk}
        \\
        \text{\color{darkblue}observables}
      }
    }
  \end{tikzcd}
\end{equation}
and then interprets the \emph{connecting homomorphism} $\partial_{-1}$ in the induced long exact sequence of operator K-theory groups $K_n(-)$,
\begin{equation}
  \label{TheTraditionalCOnnectingHomomorphism}
  \begin{tikzcd}[column sep=large]
    &
    \cdots
    \ar[r]
    &
    \overset{\mathclap{
      \substack{
        \text{\color{darkblue}bulk phases}
      }}
    }{
    K_0\bracket({
      A_{\mathrm{blk}}
    })
    }
    \ar[
      dll,
      snake left,
      "{ 
        \partial_{-1} 
      }"{
        description,
        yshift=-2pt
      },
      "{
        \substack{\text{\color{darkgreen}
          correspondence
        }}
      }"{swap}
    ]
    \\
    \underset{\mathclap{
      \substack{\color{darkblue}%
        \text{boundary dynamics}}
    }}{
    K_{-1}\bracket({
      A_{\mathrm{bdr}}
    })
    }
    \ar[r]
    & 
    \cdots
    \mathrlap{\,,}
  \end{tikzcd}
\end{equation}
as mapping topological bulk phases to their effect on the boundary. (In applications,  while not an isomorphism by itself, this homomorphism induces an equality of relevant bulk/boundary data under suitable ``trace''-operations.)

But a similar formalization of the BBC for strongly interacting, long-range correlated  and topologically ordered phases has not yet been discussed. In particular, traditional theory for edge modes of fractional quantum Hall systems is in significant tension with experimental results \cite{GuerreroSuarezEtAl2025}, indicating that new theoretical approaches to this seemingly classical topic are still needed. While we will not try to make detailed comparison to experiment here, we take this as motivation to explore a new theoretical foundation for effective edge phenomena of a more robust global nature than traditional Lagrangian approaches.

\section{Methods}
\label{OnMethods}

Here we find a detailed bulk/boundary correspondence for effective (macroscopic, infrared) quantum observables in strongly-coupled systems exhibiting \emph{topological order} \parencites{Wen1991}{Wen2013}[\S II.B]{SS25-Crys} of the kind exhibited by  fractional quantum Hall systems (FQH, cf. \cite{Stormer99}) and their ``anomalous'' crystalline counterparts (FQAH, cf. \parencites{zhao2025exploring}{SS25-FQAH}).

\subsection
{Topological Homotopy}

Our analysis (following \cite{SS25-FQH}) is homotopy theoretical (cf. \cref{OnBackground} for background), which allows us to generalize away from topological K-theory \cref{TheTraditionalCOnnectingHomomorphism} to unstable/nonabelian cohomology theories \cite[\S 2]{FSS23-Char} that intrinsically reflect the topological order of FQH systems (as recalled in \cref{OnRecallingUnboundedFQHOrders}).

To this end, we are dealing with \emph{pointed topological spaces} (cf. \cref{OnSomeGeneralTopology}) 
of two kinds:
\begin{description}
  \item[(i) Domain spaces] $\BulkDomain$, $\BoundaryDomain$, ... 
  
  modeling slabs of quantum materials, including a \emph{point at infinity} $\infty_{{}_{\BulkDomain}} \in \BulkDomain$, ... .

  \item[(ii) Classifying spaces] $\ClassifyingB$, $\ClassifyingA$, ...

  for fields or Bloch Hamiltonians, including the point representing zero, $0_{{}_{\ClassifyingB}} \in \ClassifyingB$, ... .
\end{description}
Between these spaces we consider pointed continuous maps (just \emph{maps}, for short), denoted by arrows, where solid arrows are used to denote specified maps like boundary inclusions $\inlinetikzcd{\BoundaryDomain \ar[r, hook, "{\phi}"] \& \BulkDomain}$ or classifying fibrations $\inlinetikzcd{\ClassifyingA \ar[r, ->>, "\wp"] \& \ClassifyingB}$ or basepoint inclusions, while dashed arrows indicate unspecified maps like (classifying) topological field configurations $\inlinetikzcd{\Sigma \ar[r, dashed, "c_{\BulkDomain}"] \& \ClassifyingB}$. 

These maps are required to preserve the given base points, which for the maps from domain to classifying spaces expresses the \emph{vanishing at infinity} characterizing \emph{solitonic} field configurations:
\footnote{
Cubical diagrams of arrows are always understood to ``commute'', meaning that the two possible diagonal composite maps are implied or required to agree.
 
}
\begin{equation}
  \begin{tikzcd}[row sep=12pt, 
    column sep=75pt
  ]
    \BulkDomain
    \ar[
      r, 
      dashed,
      "{
        \text{classifying map}
      }"{color=darkgreen}
    ]
    &
    \ClassifyingB
    \\
    \{ \infty_{\BulkDomain}\}
    \ar[u, hook]
    \ar[
      r,
      "{
        \text{vanishes at infinity}
      }"{swap, color=darkgreen}
    ]
    &
    \{ 0_{\ClassifyingB} \}
    \mathrlap{\,.}
    \ar[u, hook]
  \end{tikzcd}
\end{equation}

What exactly this means physically is all encoded in the choice of the domain space. For example, for $S$ an unpointed  space (assumed locally compact)  we have
\begin{description}
  \item[-- $\Sigma := S \sqcup \{\infty\}$]
  the disjoint union with $\{\infty\}$,

  modeling the situation where $\infty$ cannot be reached from any other point,
  
  \item[--  $\Sigma := S_{\cpt}$]

  the \emph{one-point compactification}, 
  
  which identifies all the \emph{ends} of $S$ with $\infty$. 
\end{description}
For instance: $(\mathbb{R}^n)_{\cpt} \simeq S^n$ but $(S^n)_{\cpt} \simeq S^n \sqcup \{\infty\}$.

\begin{figure}[htb]
\centering
\adjustbox{
  rndfbox=4pt,
  scale=.75
}{
\hspace{5pt}
\begin{tikzpicture}

\node at (0,0) {
  \includegraphics[width=10cm]{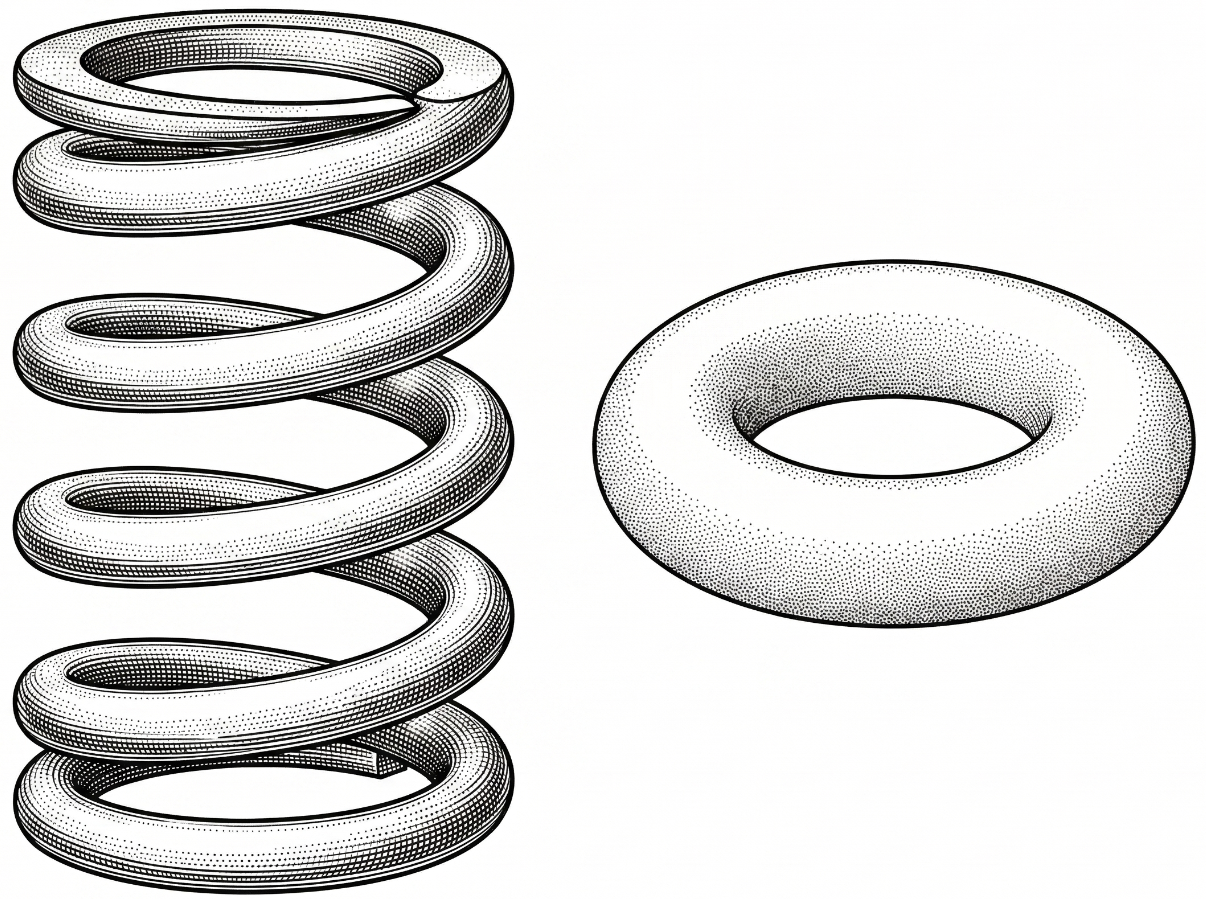}
};

\draw[
  shift={(2.4,.36)},
  darkorange,
  dashed,
  line width=2
]
  (0,0) ellipse (1.9 and .92);
\draw[
  shift={(2.4,.36)},
  fill=black
]
  (0:1.9) circle (.08);

\draw[
  darkorange,
  densely dashed,
  line width=2
]
  plot[
    smooth, 
    tension=1
  ]
  coordinates {
    (-1.1,1.6)
    (-1.45,1.25)
    (-2,1)
    (-2.5,.88)
    (-3,.8)
    (-3.5,.76)
    (-4,.76)
    (-4.4,.81)
    (-4.55,.9)
    (-4.56,1.05)
    (-4.45,1.14)
    (-4.3,1.18)
    (-4.1,1.19)
    (-3.8,1.2)
    (-3.5,1.18)
    (-3.2,1.15)
    (-2.9,1.11)
    (-2.73,1.08)
  };

\draw[
  darkorange,
  densely dashed,
  line width=2
]
  plot[
    smooth, 
    tension=1
  ]
  coordinates {
    (-1.07,.25)
    (-1.09,.4)
    (-1.17,.51)
    (-1.3,.61)
    (-1.42,.68)
  };

\draw[
  fill=black!70
]
  (-1.07,.25) circle (.08);

\draw[
  fill=darkorange,
  draw=darkorange
]
  (-1.1,1.6) circle (.08);

\node at
  (-.6,.25) {0};
\node at
  (4.53,.36) {0};

\node at
  (-6.6,+3.8) {$\vdots$};

\draw[
  fill=black!70
]
  (-6.6,.25) circle (.08);
\node at
  (-6.3,.25) {0};

\draw[
  fill=darkorange,
  draw=darkorange
]
  (-6.6,1.6) circle (.08);

\draw[
  draw=gray,
  fill=gray
]
  (-6.6,1.6+1.6-.25) circle (.08);

\draw[
  draw=gray,
  fill=gray
]
  (-6.6,.25-1.6+.25) circle (.08);

\draw[
  draw=gray,
  fill=gray
]
  (-6.6,.25-1.6+.25-1.6+.25) circle (.08);

\node at
  (-6.6,.25-1.6+.25-1.6+.25-.7) {$\vdots$};

\draw[
  draw=white,
  fill=white
]
  (3.8,1) circle (.16);
\node at (3.8,1) 
  {$\gamma$};

\draw[
  draw=white,
  fill=white
]
  (-3.57,.8) circle (.21);
\node at (-3.6,.8) 
  {$\widehat{\gamma}$};

\node at (-6.1,1.6) 
  {$\widehat{\gamma}_{\vert \{1\}}$};

\node at (-7.4,4.7) {
  \rlap{
    $
      \begin{tikzcd}[
        column sep=42pt,
        ampersand replacement=\&
      ]
        \overset{
          {\color{darkblue}
          \substack{
            \text{deep bulk}
            \\
            \text{moduli}
          }}
        }
        {
        \Map\bracket({
          \DeepBulkDomain,
          \ClassifyingB
        })
        }
        \ar[
          r,
          "{
            \overset{
              \substack{\color{darkgreen}
                \text{among}
              }
            }
            {
              i_{\mathrm{blk}}
            }
          }"
        ]
        \&
        \overset{
          {\color{darkblue}
          \substack{
            \text{total system}
            \\
            \text{moduli}
          }}
        }
        {
          \Map\bracket({
            \phi, \wp
          })
        }
        \ar[
          r,
          "{
            \overset{
            \substack{
              \color{darkgreen}
              \text{restricting to}
            }
            }
            {
              r_{\mathrm{bdr}}
            }
          }"
        ]
        \&[40pt]
        \overset{
          {\color{darkblue}
          \substack{
            \text{near bndry}
            \\
            \text{moduli}
          }}
        }{
        \Map\bracket({
          \BoundaryDomain,
          \ClassifyingA
        })
        }
      \end{tikzcd}
    $
  }
};

\node at (-7.4,-4.7) {
  \rlap{
    $
      \begin{tikzcd}[
        column sep=42pt,
        ampersand replacement=\&
      ]
        \underset{
          {\color{darkblue}
          \substack{
            \text{pure bndry}
            \\
            \text{moduli}
          }}
        }
        {
        \Map\bracket({
          \BoundaryDomain,
          \ClassifyingF
        })
        }
        \ar[
          r,
          "{
            \underset{
              \substack{\color{darkgreen}
                \text{among}
              }
            }
            {
              i_{\mathrm{bdr}}
            }
          }"{swap}
        ]
        \&
        \underset{
          {\color{darkblue}
          \substack{
            \text{total system}
            \\
            \text{moduli}
          }}
        }
        {
          \Map\bracket({
            \phi, \wp
          })
        }
        \ar[
          r,
          "{
            \underset{
            \substack{
              \color{darkgreen}
              \text{restricting to}
            }
            }
            {
              r_{\mathrm{blk}}
            }
          }"{swap}
        ]
        \&[40pt]
        \underset{
          {\color{darkblue}
          \substack{
            \text{exposed bulk}
            \\
            \text{moduli}
          }}
        }{
        \Map\bracket({
          \BoundaryDomain,
          \ClassifyingA
        })
        }
      \end{tikzcd}
    $
  }
};

\draw[
  gray
]
  (-5.4,4) --
  (-5.4,-4);

\draw[
  gray
]
  (-.3,4) --
  (-.3,-4);

\end{tikzpicture}
}
\caption{
  \label{UnwindingOfPhases}
  The topological moduli space of the total system forms a couple \cref{HomotopyFiberSequencesOfModuli} of homotopy fibrations (\cref{TheLongExactSequences}) exhibiting relations between bulk and boundary moduli. Schematically indicated is:
   on the left 
     the space of deep bulk or pure boundary moduli, which as such may form separate topological phases $\widehat{\gamma}_{\vert \{1\}}$;
    in the middle the ambient space of total system moduli, in which such nominally separate bulk/boundary phases may be connected by deformation paths $\widehat{\gamma}$ to the trivial phase $0$;
      on the right, the corresponding loop $\gamma$ in the complementary space of near boundary or exposed bulk moduli.
   This association constitutes a map of phases (homotopy classes) from right to left, this being the \emph{connecting homomorphisms}
   (\cref{ConnectingHomomorphism})
   $\Unwinding_n$
   \cref{TheUnwindingLES}
   or
   $\Rewinding_n$ \cref{TheRewindingLES}  in the induced homotopy long exact sequences (\cref{HomotopyLES}) shown in \cref{OnBoundaryUnwindingOfBulkOrders} and \cref{OnBulkRewindingOfBoundaryShifts}, respectively, as illustrated in the analogous \cref{ConnectingHomomorphismSchematics}.
}
\end{figure}

\subsection
{Bounded Topological Order in Relative Nonabelian Cohomology}
\label
{OnBoundedTopoOrderOnRelNonabCohomology}

The basis for our analysis of topological phases are the following general observations which we will illustrate in a moment:
\begin{enumerate}
\item
 Topological phases are not necessarily classified by a ``stable'' cohomology theory like K-theory with its classifying space $\mathrm{Fred}$ of Fredholm operators, $\tilde K(-) = \pi_0 \, \Map\bracket({-,\mathrm{Fred}})$, but generally by an ``unstable'', ``fragile''
 (cf. \cite{SS25-Crys})
 or ``nonabelian'' cohomology (cf. \parencites[\S 2]{FSS23-Char}[\S 1]{SS25-TEC}) with general pointed topological classifying spaces $\ClassifyingB$ \cref{CohomologyAsHomotopyClasses}:
 \footnote{
    The tilde notation ``$\tilde H$'' in \cref{DefinitionOnNonabelianCohomology} is, as usual, for ``reduced cohomology''.
    In the case $\BulkDomain := S \sqcup \{\infty\}$ this specializes to ``unreduced cohomology'':
    $
      H^\bullet({S; -})
      \defneq
      \tilde H^\bullet\bracket({
        S \sqcup \{\infty\};-
      })
    $.
  }
\begin{equation}
  \label{DefinitionOnNonabelianCohomology}
  \tilde H^{-k}\bracket({\BulkDomain;\ClassifyingB})
  :=
  \pi_k\, \Map\bracket({
    \BulkDomain; \ClassifyingB
  })
  =
  \pi_k
  \big\{
  \begin{tikzcd}[sep=15pt]
    \BulkDomain
    \ar[r, dashed]
    &
    \ClassifyingB
  \end{tikzcd}
  \big\}
  \mathrlap{\,,}
\end{equation}
\begin{enumerate}
\item
topological phases themselves are classified in degree 0,
\item
topological order is in degree -1 --- namely is the monodromy of gapped quantum ground states $\vert \psi \rangle \in \HilbertSpace$ along parameter paths, constituting linear representations 
\begin{equation}
  \label{UnboundedOrderRepresentation}
  \begin{tikzcd}
    \tilde H^{-1}\bracket({
      \BulkDomain, \ClassifyingA
    })
    \ar[r]
    &
    \mathrm{Aut}\bracket({
      \HilbertSpace
    })
  \end{tikzcd}
\end{equation}
of the fundamental group of the space of topological parameters (cf. \cite[\S II.B]{SS25-Crys}).

\end{enumerate}

\item In the presence of a \emph{boundary inclusion}, 
\begin{equation}
\label{BoundaryInclusionInIntroduction}
  \inlinetikzcd{ 
    \BoundaryDomain 
    \ar[r, hook, "{\phi}"] 
    \& 
    \BulkDomain 
  } 
  \text{ with quotient } 
  \DeepBulkDomain,
\end{equation}
this classifying space is refined to a \emph{classifying fibration} 
\begin{equation}
  \label{ClassifyingFibrationInIntroduction}
  \inlinetikzcd{
    \ClassifyingA 
    \ar[r, ->>, "{ \wp }"] 
    \& 
    \ClassifyingB
  }
  \text{ with fiber }
  \ClassifyingF
  \mathrlap{\,,}
\end{equation}
and the nonabelian cohomology to its \emph{twisted relative cohomology} version \parencites[\S 1]{SS25-Orient}[\S 3.2]{BaSS26-MString} given by the homotopy of compatible bulk/boundary pairs of classifying maps:
\begin{equation}
\label{TwistedRelativeNonabelianCohomology}
  \begin{aligned}
  \tilde H^{-k}\bracket({
    \phi; \wp
  })
  &
  :=
  \pi_k\, \Map({
    \phi, \wp
  })
  \\[8pt]
  &
  :=
  \pi_k
  \bracket({
  \Map\bracket({
    \BulkDomain, \ClassifyingB
  })
  \underset{
    \Map\scaledbracket({
      \BoundaryDomain,
      \ClassifyingB
    })
  }{
    \times
  }
  \Map\bracket({
    \BoundaryDomain, 
    \ClassifyingA
  })
  })
  =
  \pi_k
  \smash[t]{
  \left\{
  \begin{tikzcd}[row sep=10pt, 
    ampersand replacement=\&
  ]
    \BoundaryDomain
    \ar[d, hook, "{ \phi }"]
    \ar[
      r, 
      dashed,
      "{ c_N }"
    ]
    \&
    \ClassifyingA
    \ar[
      d,
      ->>,
      "{ \wp }"
    ]
    \\
    \BulkDomain
    \ar[
      r, 
      dashed,
      "{ c_\BulkDomain }"
    ]
    \&
    \ClassifyingB
  \end{tikzcd}
  \,\right\}
  }
  \mathrlap{\,.}
  \end{aligned}
\end{equation}
In words, the commuting diagram on the right encodes how the bulk topology, classified by maps $\inlinetikzcd{\BulkDomain \ar[r, dashed] \& \ClassifyingB}$ as in \cref{DefinitionOnNonabelianCohomology}, compatibly restricts to boundary effects classified by maps $\inlinetikzcd{\BoundaryDomain \ar[r, dashed] \& \ClassifyingA}$.

In particular, the ground state Hilbert space $\HilbertSpace$ of the total system adiabatically transforms under the monodromy (fundamental) group of this twisted relative moduli space, in generalization of \cref{UnboundedOrderRepresentation}:
\begin{equation}
\label{BoundedOrderRepresentation}
  \begin{tikzcd}
    \tilde H^{-1}\bracket({
      \phi; \wp
    })
    \ar[r]
    &
    \mathrm{Aut}\bracket({
      \HilbertSpace
    }).
  \end{tikzcd}
\end{equation}
\end{enumerate}

Central to our discussion is now the observation (\cref{TheLongExactSequences}) that these moduli spaces for bulk/boundary fields form a couple of homotopy fiber sequences of this form:
\begin{subequations}
  \label{HomotopyFiberSequencesOfModuli}
  \begin{align}
  \label{HomotopyFiberSequencesOfModuli-PureBdr-Full-ExposedBulk}
  \begin{tikzcd}[
    ampersand replacement=\&,
    column sep=45pt,
    row sep=-2pt
  ]
    \Map\bracket({
      \BoundaryDomain,
      \ClassifyingF
    })
    \ar[
      rr,
      "{
        \BoundaryInclusion
        \,:\,
        c_N 
        \,\mapsto\,
        (c_N,0)
      }"
    ]
    \&\&
    \Map\bracket({
      \phi, \wp
    })
    \ar[
      rr,
      "{
        \BulkRestriction
        \,:\,
        (c_\BoundaryDomain, 
        c_\BulkDomain)
        \,\mapsto\,
        c_\BulkDomain
      }"
    ]
    \&\&
    \Map\bracket({
      \BulkDomain,
      \ClassifyingB
    })
    \mathrlap{\,,}
    \\
    \substack{
      \text{\color{darkblue}pure bndry}
      \\
      \text{\color{darkblue}moduli}
    }
    \ar[
      rr,
      phantom,
      "{
        \substack{
          \text{\color{darkgreen}among}
        }
      }"
    ]
    \&\&
    \substack{
      \text{\color{darkblue}total system}
      \\
      \text{\color{darkblue}moduli}
    }
    \ar[
      rr,
      phantom,
      "{
        \substack{\text{
          \color{darkgreen}restricting to
        }}
      }"
    ]
    \&\&
    \substack{
      \text{\color{darkblue}exposed bulk}
      \\
      \text{\color{darkblue}moduli}
    }
  \end{tikzcd}
  \\
  \label{HomotopyFiberSequencesOfModuli-DeepBulk-Full-NearBdr}
  \begin{tikzcd}[
    ampersand replacement=\&,
    column sep=45pt,
    row sep=-2pt
  ]
    \Map\bracket({
      \DeepBulkDomain, 
      \ClassifyingB
    })
    \ar[
      rr,
      "{
        \BulkInclusion 
        \,:\,
        c_{\DeepBulkDomain}
        \,\mapsto\,
        (0,c_{\BulkDomain})
      }"
    ]
    \&\&
    \Map\bracket({
      \phi,\wp
    })
    \ar[
      rr,
      "{
        \BoundaryRestriction 
        \,:\,
        (c_\BoundaryDomain, 
        c_\BulkDomain)
        \,\mapsto\,
        c_N
      }"
    ]
    \&\&
    \Map\bracket({
      \BoundaryDomain, 
      \ClassifyingA
    })
    \mathrlap{\,.}
    \\
    \substack{
      \text{\color{darkblue}deep bulk}
      \\
      \text{\color{darkblue}moduli}
    }
    \ar[
      rr,
      phantom,
      "{
        \substack{
          \text{\color{darkgreen}among}
        }
      }"
    ]
    \&\&
    \substack{
      \text{\color{darkblue}total system}
      \\
      \text{\color{darkblue}moduli}
    }
    \ar[
      rr,
      phantom,
      "{
        \substack{\text{
          \color{darkgreen}
          restricting to
        }}
      }"
    ]
    \&\&
    \substack{
      \text{\color{darkblue}near bndry}
      \\
      \text{\color{darkblue}moduli}
    }
  \end{tikzcd}
  \end{align}
\end{subequations}

\begin{remark}[The domain and coefficient spaces]
\label[remark]{DomainAndCoefficientSpaces}
As indicated in \cref{HomotopyFiberSequencesOfModuli},
here we recognize, 
besides
\begin{description}
  \item[$N$] the \emph{boundary domain},
  \item[$\ClassifyingB$] the 
  \emph{bulk coefficients},
\end{description}
also
\begin{description}
\item[$\DeepBulkDomain$] the \emph{deep bulk domain}, where processes never reach the boundary,

\item[$\BulkDomain$] the \emph{exposed bulk domain}, where processes may reach the boundary,

\item[$\ClassifyingF$] the \emph{pure boundary coefficients}, available even when the bulk field is zero,

\item[$\ClassifyingA$] the \emph{near boundary coefficients}, available when extending into the bulk.
\end{description}
\end{remark}

This makes for a more fine-grained formalization of the situation than traditionally considered. In particular this resolves the following important subtlety, which traditionally has remained implicit: 
\begin{remark}[Topological nature of bounded phases]
A would-be topological phase over a bulk domain $\BulkDomain$ whose gap closes over the boundary $\BoundaryDomain$ is, strictly speaking, \emph{not topological in total} (since it is already not gapped in total). When one still addresses it, as usual, as a topological phase, one is actually referring to the deep bulk domain $\DeepBulkDomain$, where the presence of the boundary may be taken to be negligible. What this means precisely is exactly what is expressed by the above \cref{HomotopyFiberSequencesOfModuli}.
\end{remark}

\subsection
{A Bulk-Boundary Correspondence for Topological Orders}
\label
{OnABNCFOrTopologicalOrders}

Now, the long homotopy exact sequences (\cref{HomotopyLES}) associated with these homotopy fiber sequences \cref{HomotopyFiberSequencesOfModuli} involve the cohomology sets \cref{TwistedRelativeNonabelianCohomology},
on which the connecting homomorphisms (\cref{ConnectingHomomorphism}) are different from but conceptually akin to those traditionally considered in \cref{TheTraditionalCOnnectingHomomorphism}, as follows (\cref{TheLongExactSequences}).

\subsubsection
{Boundary unwinding of bulk orders}
\label
{OnBoundaryUnwindingOfBulkOrders}

First, for the homotopy fiber sequence \cref{HomotopyFiberSequencesOfModuli-DeepBulk-Full-NearBdr}, our \cref{TheLongExactSequences} yields a long exact sequence of the following form and interpretation:
\begin{equation}
  \label{TheUnwindingLES}
  \begin{tikzcd}[
    column sep=60pt
  ]
    &&
    \overset
    {
      {\color{darkblue}
      \substack{
        \text{near bndry}
        \\
        \text{unwindings}
      }}
    }
    {
      \tilde H^{-2}\bracket({
        \BoundaryDomain;
        \ClassifyingA
      })
    }
    \ar[
      dll,
      snake left,
      "{ 
        \Unwinding_1 
      }"{description, yshift=-4},
      "{
        \text{\color{darkgreen}
          deep bulk orders unwinding near bndry
        }
      }"{swap}
    ]
    \\
    \underset
    {
      \color{darkblue}
      \substack{
        \text{deep bulk}
        \\
        \text{top. orders}
      }
    }
    {
    \tilde H^{-1}\bracket({
      \DeepBulkDomain;
      \ClassifyingB
    })
    }
    \ar[
      r,
      "{
        \BulkInclusion
      }",
      "{
        \substack{\color{darkgreen}
          \text{
            among
          }
        }
      }"{swap}
    ]
    &
    \underset
    {
      {\color{darkblue}
      \substack{
        \text{total system}
        \\
        \text{top. orders}
      }}
    }
    {
      \tilde H^{-1}({
        \phi; \wp
      })
    }
    \ar[
      r,
      "{
        \BoundaryRestriction
      }",
      "{
        \color{darkgreen}
        \substack{
          \text{restrict to}
        }
      }"{swap}
    ]
    &
    \underset
    {
      \color{darkblue}
      \substack{
        \text{near bndry}
        \\
        \text{top. orders}
      }
    }
    {
      \tilde H^{-1}\bracket({
        \BoundaryDomain; \ClassifyingA
      })\,.
    }
  \end{tikzcd}
\end{equation}
That the connecting homomorphism $\Unwinding_1$ here indeed reflects the \emph{unwindings} of the deep bulk topological orders, when brought in contact with the boundary, may be seen from its explicit formula \cref{AConnectingHomomorphism}, illustrated in \cref{UnwindingOfPhases,ConnectingHomomorphismSchematics}. 

We highlight now that when the boundary is gapless, then also any system containing the boundary is gapless in its totality and hence in particular not topologically ordered, whence the last two terms in \cref{TheUnwindingLES} vanish:
\begin{equation}
\label
{GaplessBoundaryImpliesVanishingTotalCohomologyInNegativeDegree1}
  \text{boundary gapless}
  \;\;\Rightarrow\;\;
  \tilde H^{-1}\bracket({
    \phi; \wp
  })
  = 
  1
  \mathrlap{\,.}
\end{equation}
This is evident but may deserve highlighting in view of traditional terminology, which commonly speaks of \emph{topological phases with topologically trivial boundaries}. What this should really mean is that the bottom left term in \cref{TheUnwindingLES} is non-vanishing, hence that there may be topological order \emph{deep in the bulk}.

But then exactness of \cref{TheUnwindingLES} shows, first, that these deep bulk topological orders are all reflected in their unwinding processes near the boundary, in that the connecting homomorphism is surjective:
\begin{equation}
  \label{SurjectiveUnwindingBBC}
  \substack{
    \text{boundary} 
    \\
    \text{gapless}
  }
  \;\;\;\;
  \Rightarrow
  \;\;\;\;
  \begin{tikzcd}[
    column sep=30pt
  ]
    &&
    \overset
    {
      {\color{darkblue}
      \substack{
        \text{near bndry}
        \\
        \text{unwindings}
      }}
    }
    {
      \tilde H^{-2}\bracket({
        \BoundaryDomain;
        \ClassifyingA
      })
    }
    \ar[
      dll,
      ->>,
      snake left,
      "{ 
        \Unwinding_1 
      }"{description, yshift=-4},
      "{
        \text{\color{darkgreen}
          {\color{purple}all} deep bulk orders unwind near bndry
        }
      }"{swap}
    ]
    \\
    \underset
    {
      \color{darkblue}
      \substack{
        \text{deep bulk}
        \\
        \text{top. orders}
      }
    }
    {
    \tilde H^{-1}\bracket({
      \DeepBulkDomain;
      \ClassifyingB
    })
    }
    \mathrlap{\,.}
  \end{tikzcd}
\end{equation}

With this, the exactness of \cref{TheUnwindingLES} implies
furthermore that the possible ambiguities in the near boundary unwinding of bulk phases are the boundary-restrictions of unwindings that exist already in the total system, hence that we have a genuine \emph{correspondence} (isomorphism) between bulk orders and their near boundary unwindings in this case:
\begin{equation}
  \label{TheConditionalBBC}
  \substack{
    \text{Boundary} 
    \\
    \text{gapless}
  }
  \;\Rightarrow\;
  \left(
    \overset
    {
      {\color{darkblue}
      \substack{
        \text{bdry restrictions}
        \\
        \text{of total unwindings}
      }}
    }
    {
    (r_{\mathrm{bdr}})_\ast
    \bracket({
     \tilde H^{-2}\bracket({
       \phi; \wp
     })
    })
    }
    =
    0
    \;\;
    \Leftrightarrow
    \;\;
  \begin{tikzcd}[
    column sep=22pt
  ]
    &&
    \overset
    {
      {\color{darkblue}
      \substack{
        \text{near bndry}
        \\
        \text{unwindings}
      }}
    }
    {
      \tilde H^{-2}\bracket({
        \BoundaryDomain;
        \ClassifyingA
      })
    }
    \ar[
      dll,
      equals,
      snake left,
      "{ 
        \Unwinding_1 
      }"{description, yshift=-2},
      "{
        \text{\color{darkgreen}
          {deep bulk orders unwind
          \color{purple}uniquely} 
          near bndry
        }
      }"{swap}
    ]
    \\
    \underset
    {
      \color{darkblue}
      \substack{
        \text{deep bulk}
        \\
        \text{top. orders}
      }
    }
    {
    \tilde H^{-1}\bracket({
      \DeepBulkDomain;
      \ClassifyingB
    })
    }
  \end{tikzcd}
  \;\,
  \right)
  \mathrlap{.}
\end{equation}

This is a sharp version of the bulk/boundary correspondence, in a strict sense, for topological orders. Moreover, we will find that the full sequence \cref{TheUnwindingLES} yields relevant bulk/boundary information even when the boundary and total system are not unordered (cf. \cref{OnOverTheClosedAnnulus}).

\subsubsection
{Bulk rewinding of boundary shifts}
\label{OnBulkRewindingOfBoundaryShifts}

But there is also the converse perspective, where instead of watching the bulk topology unwind across the boundary (\cref{OnBoundaryUnwindingOfBulkOrders}), we look at the corresponding pure boundary phenomenon and then ``rewind'' how that originates in the bulk. It is now clear that this converse perspective is analogously described by the other homotopy fiber sequence \cref{HomotopyFiberSequencesOfModuli-PureBdr-Full-ExposedBulk}, inducing a long exact sequence of cohomology groups of the following form and interpretation:
\begin{equation}
  \label{TheRewindingLES}
  \begin{tikzcd}[
   column sep=60pt
  ]
    &&
    \overset{
      {\color{darkblue}
      \substack{
        \text{exposed bulk}
        \\
        \text{rewindings}
      }}
    }
    {
    \tilde H^{-2}\bracket({
      \BulkDomain; 
      \ClassifyingB
    })
    }
    \ar[
      dll,
      snake left,
      "{
        \color{darkgreen}
        \substack{
          \text{pure boundary shifts rewinding into the bulk}
        }
      }"{swap},
      "{ 
        \Rewinding_1
      }"{description, yshift=-4pt}
    ]
    \\
    \underset{\color{darkblue}
      \substack{
        \text{pure bndry}
        \\
        \text{spectral shifts}
      }
    }{
      \tilde H^{-1}\bracket({
        \BoundaryDomain;
        \ClassifyingF
      })
    }
    \ar[
      r,
      "{
        i_{\mathrm{bdr}}
      }",
      "{
        \color{darkgreen}
        \substack{\text{among}}
      }"{swap},
    ]
    &
    \underset
    {
      {\color{darkblue}
      \substack{
        \text{total system}
        \\
        \text{top. orders}
      }}
    }
    {
      \tilde H^{-1}\bracket({
        \phi; \wp
      })
    }
    \ar[
      r,
      "{
        i_{\mathrm{blk}}
      }",
      "{
        \color{darkgreen}
        \substack{\text{restrict to}}
      }"{swap},
    ]
    &
    \underset{
      {\color{darkblue}
      \substack{
        \text{exposed bulk}
        \\
        \text{top. orders}
      }}
    }
    {
    \tilde H^{-1}\bracket({
      \BulkDomain;
      \ClassifyingB
    })\,.
    }
  \end{tikzcd}
\end{equation}
Here \emph{spectral shifts} are the discrete topological effect on edge modes induced by adiabatic loops in parameter space (cf. \cite[(3.1)]{Wen1990}).

Now, analogously to before, if the exposed bulk is gapless (being gapless at its boundary), and with it so is the total system, then the necessary and sufficient condition for the connecting homomorphism in \cref{TheRewindingLES} to be an isomorphism and hence a BBC is that the bulk restrictions of the bulk rewindings are trivial:
\begin{equation}
  \label{TheSecondConditionalBBC}
  \substack{
    \text{Exposed}
    \\
    \text{bulk} 
    \\
    \text{gapless}
  }
  \;\Rightarrow\;
  \left(
    \overset
    {
      {\color{darkblue}
      \substack{
        \text{bulk restrictions}
        \\
        \text{of total rewindings}
      }}
    }
    {
    (r_{\mathrm{blk}})_\ast
    \bracket({
     \tilde H^{-2}\bracket({
       \phi; \wp
     })
    })
    }
    =
    0
    \;
    \Leftrightarrow
    \;\;\;
  \begin{tikzcd}[
    column sep=22pt
  ]
    &&
    \overset
    {
      {\color{darkblue}
      \substack{
        \text{exposed bulk}
        \\
        \text{rewindings}
      }}
    }
    {
      \tilde H^{-2}\bracket({
        \BulkDomain;
        \ClassifyingB
      })
    }
    \ar[
      dll,
      equals,
      snake left,
      "{ 
        \Rewinding_1 
      }"{description, yshift=-2},
      "{
        \text{\color{darkgreen}
          {pure bdry shifts rewind
          \color{purple}uniquely} 
          in bulk
        }
      }"{swap}
    ]
    \\
    \underset
    {
      \color{darkblue}
      \substack{
        \text{pure bdry}
        \\
        \text{top. orders}
      }
    }
    {
    \tilde H^{-1}\bracket({
      \BoundaryDomain;
      \ClassifyingF
    })
    }
  \end{tikzcd}
  \;\,
  \right)
  \mathrlap{.}
\end{equation}

\subsection
{Refinement to Differential Relative Nonabelian Cohomology}
\label
{OnBndryFieldsInHGT}

Tacitly assumed in the discussion so far is that the homotopy type of the topological moduli space of the system --- over which its local systems of Hilbert spaces of quantum ground states are adiabatically parameterized, forming representations \cref{BoundedOrderRepresentation} of its fundamental group --- is accurately modeled solely by relative maps into a classifying fibration. 

Closer analysis reveals that this statement may receive corrections depending on the exact nature of the equations of motion on the effective higher gauge fields whose global topological configurations constitute this moduli space. 

\subsubsection
{Lightning excursion through geometric homotopy}
\label
{OnExcursionThroughGeometricHomotopy}

In order to explain this one needs methods of geometric homotopy theory whose discussion we relegate to \cref{OnSomeGeometricHomotopy}; but very briefly:
\begin{enumerate}
\item
Equations of motion on flux densities of \emph{higher Maxwell type} are encoded by fibrations $\mathfrak{l}\wp$  of \emph{characteristic $L_\infty$-algebras}, in that their solution spaces are equivalently the spaces of closed (flat) relative $\mathfrak{l}\wp$-valued differential forms on the relative spatial domain $\phi$ (for which we assume a disjoint point at infinity now, not to overburden the discussion):
\begin{equation}
  \mathbf{Sol}(\phi)
  \simeq
  \mathbf{\Omega}^1_{\mathrm{cl}}\bracket({
    \phi; \mathfrak{l}\wp
  })
\end{equation}
(where the boldface indicates that these are generalized \emph{differential geometric} spaces, namely \emph{smooth 0-stacks}, or \emph{smooth sets}, cf. \cref{OnSmoothSetsOfLAlgValuedForms}). 

\item 
More general equations of motion, where some of the flux density species $F^{(i)}$ are of \emph{Chern-Simons type} in that they vanish on-shell, $F^{(i)} = 0$, are encoded by fibrations 
\begin{equation}
\label{ThePrimeMapInIntroduction}
  \begin{tikzcd}
    \mathfrak{l}\wp' 
    \ar[
      r,
      ->>,
      "{
        \prime
      }"
    ] 
    &
    \mathfrak{l}\wp
  \end{tikzcd}
\end{equation}
(of fibrations \cref{RelativeWhiteheadLAlgebra} of $L_\infty$-algebras) whose cokernel picks these $F^{(i)}$.

\item 
The actual solution space of the corresponding higher gauge fields (not just of their flux densities), hence the \emph{phase space} of the \emph{Maxwell/Chern-Simons type} higher gauge theory is then a \emph{smooth $\infty$-stack} or \emph{smooth $\infty$-groupoid} denoted
\begin{equation}
  \label{ThePhaseSpaceInIntro}
  \mathrm{PhsSp}'(\phi)
  =
  \UnpointedMap\bracket({
    \phi,
    \wp
  })
  \;\;\;
  \underset{\mathclap{
    \shape
    \mathbf{\Omega}^1_{\mathrm{cl}}
    \scaledbracket({
      \phi; \mathfrak{l}\wp
    })
  }}
    {\times}
  \;\;
  \mathbf{\Omega}^1_{\mathrm{cl}}\bracket({
    \phi;
    \mathfrak{l}\wp'
  })
  \mathrlap{\,.}
\end{equation}

\item
The underlying homotopy type of this phase space \cref{ThePhaseSpaceInIntro}, called its \emph{shape} $\shape(-)$, is that of the following homotopy fiber product of ordinary relative mapping spaces
\begin{equation}
\label
{PrimedPhaseSpaceInIntroduction}
  \shape \mathrm{PhsSp}'\bracket({
    \phi;
    \wp
  })
  \sim
  \UnpointedMap(\phi,\wp)
  \;\;
  \underset{\mathclap{
    \UnpointedMap\scaledbracket({
      \phi;
      \RRationalization\wp
    })
  }}
  {
    \times^h
  }
  \;\;
  \UnpointedMap\bracket({
    \phi;
    \RRationalization\wp'
  })
  \mathrlap{\,,}
  \hspace{.8cm}
  \begin{tikzcd}[
    column sep=15pt
  ]
    \shape \mathrm{PhsSp}'\bracket({
      \phi;
      \wp
    })
    \ar[
      d,
      "{
        \shape F
      }"
    ]
    \ar[
      r,
      "{ \shape \rchi }"
    ]
    \ar[
      dr,
      phantom,
      "{ \lrcorner_h }"{pos=.1}
    ]
    &
    \UnpointedMap\bracket({
      \phi; 
      \wp
    })
    \ar[
      d,
      "{
        \eta^{\mathbb{R}}_\ast
      }"
    ]
    \\
    \UnpointedMap\bracket({
      \phi;
      \RRationalization\wp'
    })
    \ar[
      r,
      "{
        \shape \prime_\ast
      }"
    ]
    &
    \UnpointedMap\bracket({
      \phi;
      \RRationalization\wp
    })
    \mathrlap{\,,}
  \end{tikzcd}
\end{equation}
where $\RRationalization(-)$ \cref{RRationalization} denotes \emph{rationalization} of (fibrations of) spaces, over the real numbers.
\end{enumerate}

\subsubsection
{The refined twisted relative cohomology set}

Hence, after the dust has settled, it is the homotopy groups of the space \cref{PrimedPhaseSpaceInIntroduction} that generalize those in \cref{TwistedRelativeNonabelianCohomology} used above:
\begin{equation}
\label
{PrimedCohomologySetsInIntro}
  H'^{-k}\bracket({
    \phi;
    \wp
  })
  :=
  \pi_k\bracket({
    \shape \mathrm{PhsSp}'
    \bracket({
      \phi;
      \wp
    })
  })
  \mathrlap{\,.}
\end{equation}

Incidentally, it is immediate that this reduces to the previous notion \cref{TwistedRelativeNonabelianCohomology} when the comparison map \cref{ThePrimeMapInIntroduction} is trivial --- hence when none of the effective flux densities is constrained to vanish on-shell:
\begin{equation}
  \prime = \mathrm{id}
  \;\;\;\;
  \Rightarrow
  \;\;\;\;
  \inlinetikzcd{
    H^{'-k}\bracket({
      \phi; \wp
    })
    \ar[
      r, 
      "{ \sim }",
      "{ \shape \rchi }"{swap}
    ]
    \&
    H^{-k}\bracket({
      \phi; \wp
    })
  }
  \mathrlap{\,,}
\end{equation}
while in general the comparison map $\shape \rchi$ \cref{PrimedPhaseSpaceInIntroduction} will detect a difference.

\subsubsection
{Differential refinement of bulk-boundary correspondence}
\label
{OnDifferentialRefinementOfBBC}

But it is also immediate (since homotopy fiber products commute over each other) that the adjusted space \cref{PrimedPhaseSpaceInIntroduction} compatibly sits in homotopy fiber sequences generalizing the previous ones \cref{HomotopyFiberSequencesOfModuli}. 

\begin{align}
\label{TheRefinedHomotopyFiberSequence}
\adjustbox{scale=.91,center}{
  \begin{tikzcd}[
    ampersand replacement=\&,
    column sep=10pt,
    row sep=-4pt
  ]
  \shape
  \mathrm{PhsSp}'\bracket({
    \DeepBulkDomain;
    \ClassifyingB
  })
  \ar[r]
  \&
  \shape
  \mathrm{PhsSp}'(\phi;\wp)
  \ar[r]
  \&
  \shape\mathrm{PhsSp}'\bracket({
    \BoundaryDomain,
    \ClassifyingA
  })
  \\
  \rotatebox[origin=c]{-90}{$:=$}
  \&
  \rotatebox[origin=c]{-90}{$:=$}
  \&
  \rotatebox[origin=c]{-90}{$:=$}
  \\
  \Map\bracket({
    \DeepBulkDomain,
    \ClassifyingB
  })
  \quad 
  \underset{\mathclap{
    \Map\scaledbracket({
      \DeepBulkDomain,
      \RRationalization
      \ClassifyingB
    })
  }}
  {\times^h}
  \quad 
  \Map\bracket({
    \DeepBulkDomain,
    \RRationalization
    \ClassifyingB'
  })
  \ar[r, shorten=-2pt]
  \&
  \UnpointedMap\bracket({
    \phi,
    \wp
  })
  \quad 
  \underset{\mathclap{
    \UnpointedMap\scaledbracket({
      \phi, 
      \RRationalization\wp
    })
  }}
  {\times^h}
  \quad 
  \UnpointedMap\bracket({
    \phi,
    \RRationalization\wp'
  })
  \ar[r, shorten=-2pt]
  \&
  \UnpointedMap\bracket({
    \BoundaryDomain,
    \ClassifyingA
  })
  \quad 
  \underset{\mathclap{
    \UnpointedMap\scaledbracket({
      \BoundaryDomain,
      \RRationalization
      \ClassifyingA
    })
  }}
  {\times^h}
  \quad 
  \UnpointedMap\bracket({
    \BoundaryDomain,
    \RRationalization
    \ClassifyingA'
  })
  \\
    \substack{
      \text{\color{gray}refined}
      \\
      \text{\color{darkblue}deep bulk}
      \\
      \text{\color{darkblue}moduli}
    }
    \ar[
      r,
      phantom,
      "{
        \substack{
          \text{\color{darkgreen}among}
        }
      }"
    ]
    \&
    \substack{
      \text{\color{gray}refined}
      \\
      \text{\color{darkblue}total system}
      \\
      \text{\color{darkblue}moduli}
    }
    \ar[
      r,
      phantom,
      "{
        \substack{\text{
          \color{darkgreen}
          restricting to
        }}
      }"
    ]
    \&
    \substack{
      \text{\color{gray}refined}
      \\
      \text{\color{darkblue}near bndry}
      \\
      \text{\color{darkblue}moduli}
    }
  \end{tikzcd}
  }
\end{align}

Therefore, our discussion of the bulk-boundary correspondence (\cref{OnBoundaryUnwindingOfBulkOrders}) generalizes straightforwardly to this refinement. In particular, the refined cohomology sets \cref{PrimedCohomologySetsInIntro} sit in long exact sequence refining the \emph{unwinding LES} \cref{TheUnwindingLES}:
\begin{equation}
  \label{TheRefinedUnwindingLES}
  \begin{tikzcd}[
    column sep=60pt
  ]
    &&
    \overset
    {
      {\color{darkblue}
      \substack{
        \text{\color{gray}refined}
        \\
        \text{near bndry}
        \\
        \text{unwindings}
      }}
    }
    {
      \tilde H'^{-2}\bracket({
        \BoundaryDomain;
        \ClassifyingA
      })
    }
    \ar[
      dll,
      snake left,
      "{ 
        \RefinedUnwinding_1 
      }"{description, yshift=-4},
      "{
        \substack{
        \text{\color{gray}refined}
        \\
        \text{\color{darkgreen}
          deep bulk orders unwinding near bndry
        }
        }
      }"{swap}
    ]
    \\
    \underset
    {
      \color{darkblue}
      \substack{
        \text{\color{gray}refined}
        \\
        \text{deep bulk}
        \\
        \text{top. orders}
      }
    }
    {
    \tilde H'^{-1}\bracket({
      \DeepBulkDomain;
      \ClassifyingB
    })
    }
    \ar[
      r,
      "{
        \RefinedBulkInclusion
      }",
      "{
        \substack{\color{darkgreen}
          \text{
            among
          }
        }
      }"{swap}
    ]
    &
    \underset
    {
      {\color{darkblue}
      \substack{
        \text{\color{gray}refined}
        \\
        \text{total system}
        \\
        \text{top. orders}
      }}
    }
    {
      \tilde H'^{-1}({
        \phi; \wp
      })
    }
    \ar[
      r,
      "{
        \RefinedBoundaryRestriction
      }",
      "{
        \color{darkgreen}
        \substack{
          \text{restrict to}
        }
      }"{swap}
    ]
    &
    \underset
    {
      \color{darkblue}
      \substack{
        \text{\color{gray}refined}
        \\
        \text{near bndry}
        \\
        \text{top. orders}
      }
    }
    {
      \tilde H'^{-1}\bracket({
        N; \ClassifyingA
      })\,.
    }
  \end{tikzcd}
\end{equation}

All this holds for any choice of refinement $\prime$ \cref{ThePrimeMapInIntroduction}. Below in \cref{OnIdentifyingThePhaseSpace} we will identify the refinement that accurately reflects the nature of topological order in FQH systems.

\section
{Results}
\label
{OnResults}

We now work out instances of this new bulk/boundary correspondence and compare to physical expectations.
To put this into context, first we briefly recall (\cref{OnRecallingUnboundedFQHOrders}) how our formalism captures FQH phenomenology in the absence of boundaries.

\subsection
{Recalling Unbounded FQH Orders in 2-Cohomotopy}
\label{OnRecallingUnboundedFQHOrders}

The specialization of the general formulation \cref{OnBoundedTopoOrderOnRelNonabCohomology} to fractional quantum Hall systems turns out to be given by taking the bulk classifying space to be the 2-sphere
\begin{equation}
  \ClassifyingB
  :=
  \mathbb{C}P^1 \simeq S^2
\end{equation}
This is the result of \cite{SS25-AbelianAnyons,SS25-FQH} (which may be understood as a refinement of the \emph{Hopfion model} for abelian anyons \cite{Wilczek1983}, cf. \cite[\S II.C]{Forte1992}), based in particular on the following two theorems:

\subsubsection
{FQH Order deep in the Plane}
\label{OnFQHOrderDeepInthePlane}

Consider the case of an FQH liquid in the deep bulk of the disk,
\begin{equation}
  \DeepBulkDomain 
  \defneq
  D^2/\partial D^2 
  \simeq S^2
\end{equation}

The algebra of quantum observables on the topological solitons over this domain is hence (\cite[\S 2.1]{SS25-FQH} following \parencites{SS24-Obs}) the group algebra of the integers:
\begin{equation}
  \mathrm{Obs}
  \defneq
  \mathbb{C}\bracket[{
    \pi_1\, \Map\bracket({
      \DeepBulkDomain,
      S^2
    })
  }]
  \simeq
  \mathbb{C}\bracket[{
   \mathbb{Z} 
  }]
\end{equation}

\begin{figure}[htb]
\centering
\adjustbox{
  rndfbox=4pt
}{
\adjustbox{
  scale=.9
}{
\begin{tikzpicture}[
  scale=.75
]

\node
  at (.3,.55+.8)
  {
    \adjustbox{
      bgcolor=white,
      scale=.7
    }{
      \color{darkblue}
      \bf
      \def\arraystretch{.9}
      \begin{tabular}{c}
        surplus
        \\
        flux quantum:
        \\
        \color{purple}
        quasi-hole
        \\
        \color{purple}
        vortex
      \end{tabular}
    }
  };

\draw[
  dashed,
  fill=lightgray
]
  (0,0)
  -- (5,0)
  -- (7+.3-.1,2+.3)
  -- (2.8+.3+.1,2+.3)
  -- cycle;

\begin{scope}[
  shift={(2.4,.5)}
]
\shadedraw[
  draw opacity=0,
  inner color=olive,
  outer color=lightolive
]
  (0,0) ellipse (.7 and .3);
\end{scope}

\begin{scope}[
  shift={(4.5,1.5)}
]

\begin{scope}[
 scale=1.8
]
\shadedraw[
  draw opacity=0,
  inner color=olive,
  outer color=lightolive
]
  (0,0) ellipse (.7 and .25);
\end{scope}

\begin{scope}[
 scale=1.45
]
\shadedraw[
  draw opacity=0,
  inner color=olive,
  outer color=lightolive
]
  (0,0) ellipse (.7 and .25);
\end{scope}

\shadedraw[
  draw opacity=0,
  inner color=olive,
  outer color=lightolive
]
  (0,0) ellipse (.7 and .25);

\begin{scope}[
  scale=.2
]
\draw[
  fill=black
]
  (0,0) ellipse (.7 and .25);
\end{scope}

\end{scope}

\draw[
  white,
  line width=2
]
  (1.3, 1.8) .. controls 
  (2,2.2) and 
  (2.2,1.5) ..
  (2.32,.7);
\draw[
  -Latex,
  black!70
]
  (1.3, 1.8) .. controls 
  (2,2.2) and 
  (2.2,1.5) ..
  (2.32,.7);

\node
  at (1.3,2.7)
  {
    \adjustbox{
      scale=.7
    }{
      \color{darkblue}
      \bf
      \def\arraystretch{.9}
      \def\tabcolsep{-5pt}
      \begin{tabular}{c}
        $k$ flux-quanta
        absorbed
        \\
        by each electron:
      \end{tabular}
    }
  };

\draw[
 line width=2.5pt,
  white
]
  (2.4, 3.1) .. controls 
  (2.8,3.3) and 
  (4,3.5) ..
  (4.3,1.8);

\draw[
  -Latex,
  black!70
]
  (2.4, 3.1) .. controls 
  (2.8,3.3) and 
  (4,3.5) ..
  (4.3,1.8);

\node at 
  (5.9,2.6)
  {
   \scalebox{.8}{
     \color{gray}
     (cf. \cite[Fig. 16]{Stormer99})  
   }
  };

\node[
  gray,
  rotate=-20,
  scale=.73
] 
  at (4.8,+.3) {$\Sigma^2$};

\end{tikzpicture}
}
\hspace{-.5cm}
\begin{tikzpicture}[
    xscale=.7
  ]
    \draw[
      gray!30,
      fill=gray!30
    ]
      (-4.6,-1.5) --
      (+1.8,-1.5) --
      (+1.8+3-.5,-.4) --
      (-4.6+3+.5,-.4) -- cycle;

    \begin{scope}[
      shift={(-1,-1)},
      scale=1.2
    ]
    \shadedraw[
      draw opacity=0,
      inner color=olive,
      outer color=lightolive
    ]
      (0,0) ellipse (.7 and .1);
    \end{scope}

    \draw[
     line width=1.4
    ]
      (-1,-1) .. controls
      (-1,0) and
      (+1,0) ..
      (+1,+1);

  \begin{scope}
    \clip 
      (-1.5,-.2) rectangle (+1.5,1);
    \draw[
     line width=7,
     white
    ]
      (+1,-1) .. controls
      (+1,0) and
      (-1,0) ..
      (-1,+1);
  \end{scope}
  
    \begin{scope}[
      shift={(+1,-1)},
      scale=1.2
    ]
    \shadedraw[
      draw opacity=0,
      inner color=olive,
      outer color=lightolive
    ]
      (0,0) ellipse (.7 and .1);
    \end{scope}
    \draw[
     line width=1.4
    ]
      (+1,-1) .. controls
      (+1,0) and
      (-1,0) ..
      (-1,+1);

  \node[
    rotate=-25,
    scale=.7,
    gray
  ]
    at (3.2,-.58) {
      $\Sigma^2$
    };

  \draw[
    -Latex,
    gray
  ]
    (-3.4,-1.35) -- 
    node[
      near end, 
      sloped,
      scale=.7,
      yshift=7pt
      ] {time}
    (-3.4, 1.2);

  \node[
    scale=.7
  ] at 
    (0,-1.3)
   {\bf \color{darkblue} 
   surplus flux quanta};

  \node[
    scale=.7
  ] at 
    (1.5,0)
   {\bf \color{darkgreen} braiding};

  \node[
    fill=white,
    scale=.8
  ] at (-2.5,-1) {$
    \vert \Psi \rangle
  $};

  \node[
    fill=white,
    scale=.8
  ] at (-2.5,.7) {$
    e^{\tfrac{\pi \mathrm{i}}{k}}
    \vert \Psi \rangle
  $};

  \draw[
    |->,
    black!80,
    line width=.5
  ]
    (-2.5,-.6) --
    (-2.5, .3);

  \end{tikzpicture}
}
\caption{
\label{fluxBraidingSchematics}
{Anyons in FQH liquids} are (quasi-hole vortices associated with) surplus magnetic flux quanta 
(relative to a given rational \emph{filling fraction} of $k$ flux quanta per electron)
through an electron gas occupying an effectively 2-dimensional semiconducting surface $\BulkDomain$. 
The adiabatic \emph{braiding} of worldlines of pairs of such anyons causes the quantum state of the entire system to pick up a fixed complex \emph{braiding phase} factor. 
}
\end{figure}

In order to see what physics these observables actually observe,  
notice, with \cite[Thm. 2.19]{SS25-AbelianAnyons}, that under the \emph{Pontrjagin theorem}:
\begin{enumerate}
\item
elements of the moduli space $\Map\bracket({ \DeepBulkDomain, S^2 })$ are identified with configurations of signed points in the interior of the disk, here understood as configurations of quasi-particle/holes in the FQH system,

\item loops in the moduli space
$\Map\bracket({
  \DeepBulkDomain, S^2
})$ are identified with \emph{framed oriented links}, subject to continuous deformation by link cobordism, here to be understood as vacuum-to-vacuum anyon braiding processes,

\item
passage to connected moduli components computes exactly the total crossing/braiding number or \emph{writhe} of these framed links (cf. \cref{FramedLinks}):
\begin{equation}
  \label{LinksToWrithe}
  \begin{tikzcd}[
   row sep=-2pt
  ]
    \Map\bracket({
      I/\partial I,
      \Map\bracket({
        \DeepBulkDomain,
        S^2  
      })
    })
    \ar[
      d,
      "{ \sim }"{sloped}
    ]
    \ar[
      rr,
      ->>,
      "{ [-] }"
    ]
    &&
    \pi_1\bracket({
      \Map\bracket({
        \DeepBulkDomain,
        S^2  
      })
    })
    \ar[
      d,
      "{ \sim }"{sloped}
    ]
    \\[20pt]
    \Map\bracket({
      \mathbb{R}^3_{\cpt},
      S^2
    })
    \ar[
      rr,
      ->>,
      "{ [-] }"
    ]
    &&
    \pi_0
    \Map\bracket({
      \mathbb{R}^3_{\cpt},
      S^2
    })    
    \\
    \underset{\mathclap{
      \color{darkblue}
      \substack{
        \text{framed link = }
        \\
        \text{anyon vac process}
      }
    }}{L}
    \ar[
      rr,
      |->,
      shorten=15pt
    ]
      &&
    \underset{\mathclap{
      \color{darkblue}
      \substack{
        \text{\textit{writhe} =}
        \\
        \text{net braiding number}
      }
    }}
    {
      \# L
      \in 
      \mathbb{Z}
    }
  \end{tikzcd}
\end{equation}

\begin{figure}[htb]
\centering
\adjustbox{
  rndfbox=4pt
}{
$
\adjustbox{
  raise=-1cm,
  scale=.9
}{
\begin{tikzpicture}[
  scale=1
]

\begin{scope}[
  shift={(1,0)}
]
\draw[line width=2, -Latex]
  (0:1) arc (0:180:1);
\end{scope}

\draw[line width=7,white]
  (0:1) arc (0:180:1);
\draw[line width=2, -Latex]
  (0:1) arc (0:180:1);

\draw[line width=2, -Latex]
  (180:1) arc (180:360:1);

\begin{scope}[shift={(1,0)}]
\draw[line width=7, white]
  (180:1) arc (180:360:1);
\draw[line width=2, -Latex]
  (180:1) arc (180:360:1);
\end{scope}

\node[gray]
  at (.5,.64) {\color{red} 
    \scalebox{.9}{$-$}
  };
\node[gray]
  at (.5,-.64) {\color{red}
    \scalebox{.9}{$-$}
  };
  
\end{tikzpicture}
}
\overset{\#}{\longmapsto}
-2
$
\hspace{.3cm}
$
\adjustbox{
  raise=-2.7cm,
  scale=.55
}{
\begin{tikzpicture}
\foreach \n in {0,1,2} {
\begin{scope}[
  rotate=\n*120-4
]
\draw[
  line width=2.8,
  -Latex
]
 (0-.1,-1+.14)
   .. controls
   (-1,.2) and (-2,2) ..
 (0,2)
   .. controls
   (1,2) and (1,1) ..
  (.9,.7);
\end{scope}

\node[darkgreen]
  at (\n*120+31:.6) {
    \scalebox{1}{$+$}
  };

};
\end{tikzpicture}
\hspace{-25pt}
}
\overset{\#}{\longmapsto}
+3
$
\hspace{.3cm}
$
\adjustbox{
  raise=-1.3cm,
  scale=1
}{
\begin{tikzpicture}

\node at (0,0) {
  \includegraphics[width=2.2cm]{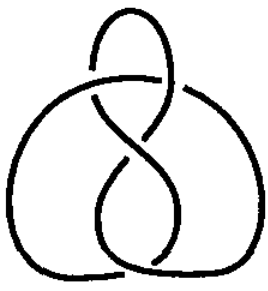}
};

\draw[line width=1.2pt,-Latex]
  (-.276,-.8) -- 
  (-.274,-.8-.02);
\draw[line width=1.2pt,-Latex]
  (.35, -.5) --
  (.35, -.47);
\draw[line width=1.2pt,-Latex]
  (-.09,1.09) --
  (-.09+.05, 1.13);
\draw[line width=1.2pt,-Latex]
  (-.18,.54) --
  (-.19-.01, .54);

\node[red]
  at (0,-1.2) {
    \scalebox{.7}{$-$}
  };
\node[red]
  at (.3,0) {
    \scalebox{.7}{$-$}
  };
\node[darkgreen]
  at (.5,.64) {
    \scalebox{.7}{$+$}
  };
\node[darkgreen]
  at (-.57,.64) {
    \scalebox{.7}{$+$}
  };

\end{tikzpicture}
\hspace{-10pt}
}
\overset{\#}{\longmapsto} 0
$
}
\caption{\label{FramedLinks}
  Some (blackboard-)framed Wilson loop/links and their total crossing/braiding number (writhe), cf. \cref{LinksToWrithe}. 
}
\end{figure}

Now, a pure quantum state $\vert k \rangle$ on a commutative algebra of quantum observables corresponds to its expectation value map,
\begin{equation}
  \langle - \rangle
  :=
  \langle k \vert - \vert k \rangle
  \mathrlap{\,,}
\end{equation}
being
a star-algebra homomorphism from observables to probability amplitudes \cite[Prop. 3.2]{SS25-AbelianAnyons}:
\begin{equation}
  \label{ExpectationValueOfAnyonsInPlane}
  \begin{tikzcd}[
    row sep=0pt
  ]
    \mathbb{C}\bracket[{\mathbb{Z}}]
    \ar[
      rr,
      "{
        \langle k\vert 
          - 
        \vert k \rangle
      }"
    ]
    &&
    \mathbb{C}
    \\
    {[L]} 
    \ar[
      rr,
      |->,
      shorten=6pt
    ]
      &&
    \exp\bracket({
      \tfrac{\pi \mathrm{i}}{k}
      \# L
    })
    \mathrlap{\,,}
  \end{tikzcd}
\end{equation}
which as such is fixed, as shown, by an element $k \in \mathbb{R} \setminus \{0\}$ (which will be quantized to an integer in \cref{OnFQHOrderOnTheTorus}).
This means that the expectation value of these operators in a pure state $\vert k \rangle$ is
\begin{equation}
  \bracket\langle{
    [L]
  }\rangle
  =
  \exp\bracket({
    \tfrac{\pi \mathrm{i}}{k}
    \# L
  })
  \mathrlap{\,.}
\end{equation}
But this is:
\begin{enumerate}
  \item[\bf (a)]
  exactly the expectation value of Wilson loop observables $L$ in abelian Chern-Simons theory at level $k$ (\parencites[Rem. 3.4]{SS25-AbelianAnyons}[\S 2.2.4]{SS25-WilsonLoops}),
  \item[\bf (b)]
  given by assigning a unit \emph{braiding phase} 
  \begin{equation}
    \label{BraidingOnTheDisk}
    \zeta 
      = 
    e^{\pi \mathrm{i}/k}
  \end{equation}
  to each crossing of a pair of quasi-hole worldlines along their trajectory $L$, which is exactly the observed anyon braiding phase in FQH systems at filling fraction $\nu = 1/k$ (cf. \cref{fluxBraidingSchematics}).
\end{enumerate}

This shows that and how $\ClassifyingB \defneq \mathbb{C}P^1 \simeq S^2$ is indeed a classifying space for the topology of FQH systems deep in their bulk.

\end{enumerate}

\subsubsection
{FQH Order on the Torus}
\label{OnFQHOrderOnTheTorus}

Next, consider the case that the bulk domain is the torus $\BulkDomain = T^2 := S^1 \times S^1$. 
By a remarkable result going back to \parencites{Hansen1974} we have:
\begin{proposition}[{\parencites[Thm 1]{LarmoreThomas1980}[Prop. 1.5]{Kallel2001}[Thm. 3.3]{KSS26-HigherDimAnyons}}]
\label[proposition]{FundamentalGrouOfMapsFromT2ToS2}
  The fundamental group of maps from the torus to the 2-sphere is the integer Heisenberg group at level=2:
  \begin{equation}
    \label{IntegerHeisenbergGroup}
    \pi_1 \UnpointedMap\bracket({
      T^2, S^2
    })
    \simeq
    \mathrm{Heis}_3(\mathbb{Z})
    \defneq
    \bracket\langle{
      W_b, W_b, \zeta
    }\rangle
    \big/
    \bracket({
      [W_a, W_b] = \zeta^2,
      \;
      [\zeta,-] = \mathrm{e}
    })
    \,.
  \end{equation}
\end{proposition}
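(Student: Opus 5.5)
The plan is to compute $\pi_1$ of $\UnpointedMap\bracket({T^2,S^2})$ at the component of the constant map, using two fibration sequences.

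\emph{Reduction to pointed maps.} Evaluation at the base point gives a fibration
\[
  \Map\bracket({T^2,S^2}) \longrightarrow \UnpointedMap\bracket({T^2,S^2}) \xrightarrow{\ \mathrm{ev}\ } S^2 .
\]
On the component of the constant map, this fibration has a section: send a point $s \in S^2$ to the constant map at $s$. Hence the connecting map $\pi_2(S^2) \to \pi_1 \Map\bracket({T^2,S^2})$ vanishes. Since also $\pi_1(S^2) = 0$, the long exact sequence gives
\[
  \pi_1 \Map\bracket({T^2,S^2}) \simeq \pi_1 \UnpointedMap\bracket({T^2,S^2}),
\]
and it suffices to compute the pointed case.

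\emph{Cell structure.} Use the cofiber sequence
\[
  S^1 \xrightarrow{\ aba^{-1}b^{-1}\ } S^1 \vee S^1 \hookrightarrow T^2 \longrightarrow S^2 .
\]
Mapping into $S^2$ yields a fibration
\[
  \Map\bracket({S^2,S^2}) \longrightarrow \Map\bracket({T^2,S^2}) \longrightarrow \Map\bracket({S^1\vee S^1,S^2}) \simeq \Omega S^2 \times \Omega S^2 .
\]
Its long exact sequence at the constant map contains
\[
  \pi_3(S^2) \simeq \mathbb{Z}
  \to \pi_1 \Map\bracket({T^2,S^2})
  \to \pi_2(S^2)^2 \simeq \mathbb{Z}^2
  \to \pi_0\Map\bracket({S^2,S^2}) .
\]

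The last map is induced by precomposing with the commutator word. On homotopy groups of degree $\geq 1$ it is additive, and the commutator word has total exponent zero in each letter, so this map is zero. The first map is injective, because the preceding map $\pi_2(\Omega S^2)^2 = \pi_3(S^2)^2 \to \pi_3(S^2)$ is likewise given by the abelianized commutator word and so vanishes. Therefore $\pi_1\Map\bracket({T^2,S^2})$ is an extension
\[
  1 \to \mathbb{Z}\langle\zeta\rangle \to \pi_1 \to \mathbb{Z}^2\langle W_a, W_b\rangle \to 1 ,
\]
with $\zeta$ the Hopf class. The extension is central: conjugation by any element acts on the image of $\pi_3(S^2)$ through $\pi_1$ of the base $\Omega S^2 \times \Omega S^2$, and this action is trivial because the fibration is principal, being pulled back from the path fibration of $\Map\bracket({S^1,S^2})$.

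\emph{The commutator (main obstacle).} It remains to show $[W_a,W_b] = \zeta^2$. Lift $W_a$ and $W_b$ to loops of maps $S^1 \times T^2 \to S^2$ that are supported on the cells $S^1 \times S^1_a$ and $S^1 \times S^1_b$ respectively, each of degree one there. The group commutator of two such loops is a map $S^1 \times T^2 \to S^2$ that is trivial on the $2$-skeleton of $T^3$. Its class is read off on the top cell of $T^3$ through the attaching map of that cell, and the standard identification says it equals the Whitehead product $[\iota_2,\iota_2] \in \pi_3(S^2)$. Equivalently, by adjunction, it is the Samelson product in $\pi_2(\Omega S^2)$.

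I would verify this using the known value $[\iota_2,\iota_2] = 2\eta$, where $\eta$ is the Hopf map, up to sign. Alternatively, one can compute the Hopf invariant of the commutator map via Pontrjagin--Thom: the two degree-one tori give framed links in $S^1\times T^2$ whose linking contributes $2$. This gives $[W_a,W_b] = \zeta^{\pm 2}$, and the sign is absorbed by reorienting $W_b$.

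Finally, the presentation follows. Every central extension of $\mathbb{Z}^2$ by $\mathbb{Z}$ with commutator $\zeta^2$ is isomorphic to $\bracket\langle{W_a,W_b,\zeta}\rangle / \bracket({[W_a,W_b]=\zeta^2,\ [\zeta,-]=\mathrm{e}})$. This holds because such extensions are classified by $H^2(\mathbb{Z}^2;\mathbb{Z}) \simeq \mathbb{Z}$, and the class is detected by the commutator.
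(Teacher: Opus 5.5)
The paper gives no proof of this proposition; it imports it from the literature (Hansen, Larmore--Thomas, Kallel, KSS26). Your argument is a correct, self-contained derivation, and it uses exactly the tools the paper sets up in its Proofs section. The split evaluation fibration (\cref{HomotopyGroupsOfFreeMappingSpace}) passes you to pointed maps at the constant component. The restriction fibration along the cell inclusion $S^1\vee S^1\hookrightarrow T^2$ (\cref{MappingFiberSequenceOfCWInclusion}) is literally the pullback of the path fibration over $\Omega S^2$ along the commutator map, and hence principal. The following steps are all fine:
\begin{enumerate}
\item the vanishing of the connecting maps, since commutator words act additively on $\pi_{\geq 1}$ of the H-group $\Omega S^2$;
\item the centrality of the image of $\pi_3(S^2)$;
\item the final classification via $H^2(\mathbb{Z}^2;\mathbb{Z})\cong\mathbb{Z}$, including absorbing the sign by inverting $W_b$.
\end{enumerate}
Your route buys transparency: the ``level $2$'' is visibly the Hopf invariant of $[\iota_2,\iota_2]=\pm 2\eta$. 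The cited sources buy generality instead (all surfaces and all degree components, where $\mathbb{Z}/2|d|$ replaces $\mathbb{Z}$). Your reduction via the constant-map section only works in degree $0$, but that is the case intended here.

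One step needs more careful wording. The loop commutator is only \emph{homotopic} to a map trivial on the $2$-skeleton. Also, the attaching map of the top cell of $T^3$ lands in $T^3_{(2)}$, so ``reading off through that attaching map'' does not by itself produce an element of $\pi_3(S^2)$. The attaching map that actually enters is that of the top cell of $S^2\times S^2$, i.e.\ the universal Whitehead product $S^3\to S^2\vee S^2$, followed by the fold map.

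The cleanest rigorous version is the adjunction you mention only in passing:
\begin{enumerate}
\item $\pi_1\Map(T^2,S^2)\cong[T^2,\Omega S^2]_\ast$, with group structure induced pointwise by loop multiplication.
\item Replacing $\Omega S^2$ by an equivalent topological group, the commutator of $\iota_1\circ p_a$ and $\iota_1\circ p_b$ is literally $(x,y)\mapsto \iota_1(x)\,\iota_1(y)\,\iota_1(x)^{-1}\iota_1(y)^{-1}$.
\item This map factors through the collapse $T^2\to S^1\wedge S^1$ as the Samelson product $\langle\iota_1,\iota_1\rangle$, so it lies in the image of the fiber inclusion.
\item By Samelson's theorem its adjoint is $\pm[\iota_2,\iota_2]=\pm 2\eta$.
\end{enumerate}
With that, $[W_a,W_b]=\zeta^{\pm 2}$ is proved. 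The Pontrjagin--Thom ``linking contributes $2$'' alternative is too vague to count as a proof and can be dropped.
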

The corresponding group algebra, $\mathbb{C}\bracket[{ \mathrm{Heis}_3(\mathbb{Z}) }]$, is exactly the algebra of Wilson loop observables in abelian Chern-Simons theory on the torus, hence of FQH anyon observables on the torus (cf. \cite[(4.9)]{WenNiu1990}\cite[(4.14)]{IengoLechner1992}, reviewed in \cite[(4.21)]{Fradkin2013}\cite[(5.28)]{Tong2016} and Fig. \ref{FigureTorusAlgebra}).
\footnote{
  While routinely discussed in theory, the experimental realization of toroidal geometries for actual FQH liquids is elusive (remembering that it requires a strong magnetic field everywhere transversal to the surface). But in the ``anomalous'' momentum-space realization of the fractional quantum Hall effect in fractional Chern insulators (cf. \cite{zhao2025exploring}), where the role of the magnetic flux density is instead taken by the intrinsic Berry curvature of the system, this toroidal geometry is actually the default, being the Brillouin torus of electron quasi-momenta. Discussion of our cohomotopical description for these dual FQAH systems is in \cite{SS25-FQAH,SS25-Crys}.
}

\begin{SCfigure}[1.6][htb]
\caption{\label{FigureTorusAlgebra}
  The group algebra of the integer Heisenberg group \cref{IntegerHeisenbergGroup} is that characteristic of observables on FQH anyons on a torus, where the quantum state $\vert \psi \rangle$ of the system changes by the \emph{square} of the anyon braiding phase $\zeta$ (\cref{fluxBraidingSchematics}) as one moves back and forth along a pair of basis 1-cycles in the torus (cf. \cite[\S 3.3]{SS25-FQH}).
}
\begin{tikzpicture}[scale=0.94,
  >={
    Computer Modern Rightarrow[
      length=4pt, width=4pt
    ]
  }
]

\draw[
  fill=lightgray,
]
  (0,0) rectangle (2,2);

\node at (1,.8) {%
  \clap{\smash{$T^2$}}%
};

\draw[
  line width=3,
  color=darkorange
] 
  (0,0) --
  (0,2);
\draw[
  ->,
  line width=1
]
  (0,1) --
  (0,1.01);

\draw[
  line width=3,
  color=darkorange
] 
  (2,0) --
  (2,2);
\draw[
  ->,
  line width=1
]
  (2,1) --
  (2,1.01);

\draw[
  line width=3,
  color=darkgreen
] 
  (0,2) --
  (2,2);
\draw[
  ->,
  line width=1
]
  (1,2) --
  (1.01,2);

\draw[
  line width=3,
  color=darkgreen
] 
  (0,0) --
  (2,0);
\draw[
  ->,
  line width=1
]
  (1,0) --
  (1.01,0);

\draw[
  fill=black
] 
(0,0) circle (.1);
\draw[
  fill=black
] 
(2,0) circle (.1);
\draw[
  fill=black
] 
(0,2) circle (.1);
(2,0) circle (.1);
\draw[
  fill=black
] 
(2,2) circle (.1);

\node at (2-.4,2+.5) {
  $\mathcolor{purple}{\zeta}^2\vert \psi \rangle$
};

\node[
  rotate=+135  
] at (2+.2, 2+.2) {%
  \clap{$\mapsto$}%
};

\node at (2.55,2-.25) {
  $\vert \psi \rangle$
};

\draw[
  line width=.5,
  |->,
]
  (2.55, 1.4) --
  node[
    xshift=0pt,
    scale=.9
  ] {\colorbox{white}{$
    W_b
      ^{-1}
  $}}
  (2.55,-.2);

\draw[
  line width=.5,
  |->,
]
  (2.4, -.47) --
  node[
    yshift=0pt,
    scale=.9
  ]{\colorbox{white}{$
    W_a^{-1}
  $}}
  (-.4, -.47);

\draw[
  line width=.5,
  |->,
]
  (-.55, -.2) --
  node[
    xshift=0pt,
    scale=.9
  ] {\colorbox{white}{$
    W_b
  $}}
  (-.55,2.3);

\draw[
  line width=.5,
  |->,
]
  (-.4, 2.45) --
  node[
    scale=.9,
  ]{\colorbox{white}{$
    W_a
  $}}
  (1, 2.45);
  
\end{tikzpicture}
\end{SCfigure}

This result (\cref{FundamentalGrouOfMapsFromT2ToS2}) further shows that, when regarded as a classifying space, the 2-sphere encodes fine detail of the characteristic topological order of FQH systems, at least on bounded domains (more on this in \cite{SS25-FQH}).

With this bulk result well established, next we ask how to generalize this to a classifying fibration over the 2-sphere such as to furthermore capture the behaviour of FQH systems on bounded domains.

\subsection
{Determining the Near Boundary Classifying Fibration}
\label
{OnDeterminingTheNearBdrClassfFib}

With $\ClassifyingB \defneq S^2$ thereby identified as the classifying space for unbounded FQH topological orders (\cref{OnRecallingUnboundedFQHOrders}) our task is to identify the appropriate classifying fibration $\inlinetikzcd{\ClassifyingA \ar[r, "{ \wp }"] \& S^2}$ for the topological properties of bounded FQH systems. 

\begin{SCfigure}[.5][htb] 
\caption{\label{SpatialDomainGraphics}
  Some examples of spatial domains
  \cref{TheDomainCofiber} considered here, cf. \cref{OnExamplesOfDomainSpaces}.
}
\centering
  \adjustbox{scale=0.9, 
    rndfbox=4pt
  }{
  \def\arraystretch{1.6}
  \begin{tabular}{rccc}
    closed disk
    &
    \adjustbox{
      raise=-.2cm
    }{
    \begin{tikzpicture}[scale=.7]
      \draw[
        line width=1,
        fill=gray!70
      ]
        (0,0)
        ellipse
        (2 and .5);
    \end{tikzpicture}
    }
    &
    $D^2$
    \\[5pt]
    closed annulus
    &
    \adjustbox{
      raise=-.4cm
    }{
    \begin{tikzpicture}[scale=.7]
      \draw[
        line width=1,
        fill=gray!70
      ]
        (0,0)
        ellipse
        (2 and .5);
      \draw[
        line width=1,
        fill=white
      ]
        (0,0)
        ellipse
        (2.4*.3 and .5*.3);
    \end{tikzpicture}
    }
    &
    $A^2$
    \\[5pt]
    constricted annulus
    &
    \adjustbox{
      raise=-.4cm
    }{
    \begin{tikzpicture}[scale=.7]
      \draw[
        line width=1,
        fill=gray!70
      ]
        (0,0)
        ellipse
        (2 and .5);
      \draw[
        line width=1,
        fill=white
      ]
        (.74,0)
        ellipse
        (2.4*.5 and .5*.32);
     \draw[fill=black]
       (2,0) circle (.1);
    \end{tikzpicture}
    }
    &
    $A^2_{\mathrm{cns}}$
    &
    (Def. \ref{ConstrictedAnnulus})
  \end{tabular}
  }
\end{SCfigure}

Concretely, to reflect that the total system over the disk $\BulkDomain := D^2 \sqcup \{\infty\}$ (cf. \cref{SpatialDomainGraphics}) is gapless (since it is so over the edge), we need to find $\inlinetikzcd{\ClassifyingA \ar[r, "{ \wp }"] \& S^2}$ such that \cref{GaplessBoundaryImpliesVanishingTotalCohomologyInNegativeDegree1}:
\begin{equation}
  H^{-1}({
    \iota_{D^2};
    \wp
  })
  = 
  0
  \mathrlap{\,.}
\end{equation}
Incidentally, this is \emph{not} the case for the na{\"i}ve choice $\wp := 0_{S^2}$, which instead makes the topological order of the total system be that of its deep bulk, by \cref{TotalHomotopyForTrivialA,ComplexHopfFibrationOnPi3}:
\begin{equation}
  H^{-1}\bracket({
    \iota_{D^2}
    ;
    0_{S^2}
  })
  \simeq
  \tilde H^{-1}\bracket({
    S^2;
    S^2
  })
  \simeq
  \mathbb{Z}
  \,.
\end{equation}

Instead, we find (\cref{MappingBoundaryInclusionOfDiskIntoComplexHopfFibration}) that a choice for $\wp$ which yields BBC as in \cref{TheConditionalBBC,TheSecondConditionalBBC} is the complex Hopf fibration \cref{TheComplexHopfFibration}
\begin{equation}
  \wp 
    := 
  h_{\mathbb{C}}
    :
  \begin{tikzcd}
    S^3 
      \ar[r, ->>] & 
    S^2
  \end{tikzcd}
\end{equation}
in that:
\begin{equation}
\label{PlainSituationOverTheClosedDisk}
  \left.
  \begin{aligned}
    \phi & = \iota_{D^2}
    \\
    \wp & = h_{\mathbb{C}}
  \end{aligned}
  \right\}
  \;\;
  \Rightarrow
  \;\;
  \begin{cases}
    H^{-1}\bracket({\partial D^2; S^3})
    =
    1
    & 
    \substack{
      \text{(no order near boundary)}
    }
    \\
    \inlinetikzcd{
    H^{-2}\bracket({
      \partial D^2; S^3
    })
    \ar[
      r, 
      "{ \Unwinding_1 }"{swap},
      "{ \sim }"{}
    ]
    \&
    \tilde H^{-1}\bracket({
      D^2/\partial D^2; S^2
    })
    }
    \simeq
    \mathbb{Z}
    &
    \substack{
      \text{(unwinding BBC)}
    }
    \\
    \inlinetikzcd{
    H^{-2}\bracket({
      D^2; S^2
    })
    \ar[
      r, 
      "{ \Rewinding_1 }"{swap},
      "{ \sim }"{}
    ]
    \&
    H^{-1}\bracket({
      \partial D^2; S^1
    })
    }
    \simeq
    \mathbb{Z}
    &
    \substack{
      \text{(rewinding BBC)}.
    }
  \end{cases}
\end{equation}

\begin{remark}
Concretely, the proof of \cref{MappingBoundaryInclusionOfDiskIntoComplexHopfFibration} shows that the complex Hopf fibration $h_{\mathbb{C}}$ works as a classifying fibration because: 
\begin{enumerate}
\item
the 2-connectivity of $S^3$ ensures that the boundary remains gapless, while
\item
its key property $\inlinetikzcd{\pi_3(S^3) \ar[r, "{ (h_{\mathbb{C}})_\ast }", "{ \sim }"{swap}] \& \pi_3(S^2) } $ \cref{ComplexHopfFibrationOnPi3} still allows it to unwind
the deep bulk phases in $\pi_1 \UnpointedMap(S^2, S^2) \simeq \pi_3(S^2)$.
\end{enumerate}
Conversely, this also means that under 3-truncation (\cite[(408)]{FSS23-Char}, which is all that matters for the computation, cf. \cite[Prop. A.43]{FSS23-Char}) the complex Hopf fibration  is the essentially unique classifying fibration with these properties. Of course, the 3-truncation of $h_{\mathbb{C}}$ is an infinite cell complex, so that the actual Hopf fibration, with its single cells, stands out as the minimal untruncated solution.
\end{remark}

\subsection
{Classifying Bounded Topological Orders}
\label{OnClassifyingBoundedTopologicalOrders}

With the appropriate classifying fibration identified as the complex Hopf fibration (\cref{OnDeterminingTheNearBdrClassfFib}) by comparison with the case over $\BulkDomain \defneq D^2$, we may now compute the implications for other bulk domains.

Most important in applications is the case where the material domain is the closed annulus $A^2$.
To prepare discussion for that, we first consider its \emph{constricted} version $A^2_{\mathrm{cns}}$  (\cref{ConstrictedAnnulus}, cf. \cref{SpatialDomainGraphics}):

\subsubsection
{Over the constricted annulus}
\label
{OnOverTheConstrictedAnnulus}

For the constricted annulus domain \cref{ConstrictedAnnulus}, we find the unwinding BBC \cref{SurjectiveUnwindingBBC},
to be surjective, of the following concrete form:
\begin{equation}
\label{PreviewResultOverA2Cns}
  \left.
  \begin{aligned}
    \phi & = \iota_{A^2_{\mathrm{cns}}}
    \\
    \wp & = h_{\mathbb{C}}
  \end{aligned}
  \right\}
  \;\;
  \Rightarrow
  \;\;
  \begin{cases}
    H^{-1}\bracket({
      \iota_{A^2_{\mathrm{cns}}}; 
      h_{\mathbb{C}}
    })
    =
    1
    & 
    \substack{
      \text{(no top. order} 
      \\
      \text{in tot. system)}
    }
    \\
    \begin{tikzcd}[
      row sep=4pt
    ]
    \tilde H^{-2}\bracket({
      \partial A^2_{\mathrm{cns}}; 
      S^3
    })
    \ar[
      d,
      equals
    ]
    \ar[
      r, 
      "{ \Unwinding_1 }"
    ]
    &
    \tilde H^{-1}\bracket({
      A^2_{\mathrm{cns}}
      /
      \partial A^2_{\mathrm{cns}}; 
      S^2
    })
    \ar[
      d,
      equals
    ]
    \\
    \mathbb{Z}^2
    \ar[
      r,
      "{
        \big(+1, -1\big)
      }"{description}
    ]
    &
    \mathbb{Z}
    \end{tikzcd}
    &
    \substack{
      \text{(unwinding BBC).}
    }
  \end{cases}
\end{equation}
In view of \cref{OnFQHOrderDeepInthePlane} this says that there is an algebra of observables, $\mathbb{C}[\mathbb{Z}]_{\pm}$, associated with each edge, with the expectation value of the observable $\widehat n_{\pm} \in \mathbb{Z} \subset \mathbb{C}\bracket[{\mathbb{Z}}]$ in the FQH state $\vert k \rangle$ being, by \cref{ExpectationValueOfAnyonsInPlane}:
\begin{equation}
  \label{EdgeCurrentObservables}
  \begin{tikzcd}
  \bracket\langle{ 
    \widehat n_{\pm} 
  }\rangle
  =
  e^{ \pm \tfrac{\pi i}{k} n_{\pm} }
  \mathrlap{\,,}
  \end{tikzcd}
\end{equation}
because
\begin{equation}
  \begin{tikzcd}[
    row sep=-3pt
  ]
    \mathbb{C}\bracket[{\mathbb{Z}}]
    \otimes_{_{\mathbb{C}}}
    \mathbb{C}\bracket[{\mathbb{Z}}]
    \ar[
      rr,
      "{
        \mathbb{C}\bracket[{
          \Unwinding_1
        }]
      }"
    ]
    &&
    \mathbb{C}\bracket[{\mathbb{Z}}]
    \ar[
      rr,
      "{
        \langle - \rangle
      }"
    ]
    &&
    \mathbb{C}
    \\
    \widehat{n}_+
    \otimes
    \widehat{n}_-
    \ar[
      rr,
      |->,
      shorten=11pt
    ]
    &&
    \widehat{n}_+ 
    - 
    \widehat{n}_-
    \ar[
      rr,
      |->,
      shorten=11pt
    ]
    &&
    \exp\bracket({
      \tfrac{\pi\mathrm{i}}{k}
      (n_+ - n_-)
    })
    \mathrlap{\,.}
  \end{tikzcd}
\end{equation}

But this formula \cref{EdgeCurrentObservables} expresses exactly the (time-independent, purely topological) observables on edge currents along (constricted) annuli (cf. \cite[p. 2335]{ChamonEtAl1997}), where $\pm \tfrac{\pi \mathrm{i}}{k} n_\pm$ is the (angular) \emph{Fermi momentum} of $n_{\pm}$ quasi-particles propagating along either edge.

This supports the idea that the choice of classifying fibration $\wp \defneq h_{\mathbb{C}}$ in \cref{OnDeterminingTheNearBdrClassfFib} correctly extends the \emph{Hopfion} model for bulk FQH anyons of \cref{OnRecallingUnboundedFQHOrders} to the situation with boundaries. We proceed to see what happens when we include the full bulk annulus into this picture:

\subsubsection
{Over the closed annulus}
\label
{OnOverTheClosedAnnulus}

The further computation for the actual closed annulus $A^2$ turns out to yield (\cref{MappingBoundaryInclusionOfClosedAnnulusIntoComplexHopfFibration}) an extension of the result for the constricted annulus $A^2_{\mathrm{cns}}$ (from \cref{OnOverTheConstrictedAnnulus}), induced by the canonical projection map $\inlinetikzcd{A^2\! \ar[r, ->>] \& \! A^2_{\mathrm{cns}}}$ \cref{TheProjectionFromClosedToConstrictedAnnulus}. Namely, the formalism of \cref{OnABNCFOrTopologicalOrders} yields that:
\begin{enumerate}
\item 
there is topological order now in the whole system, exhibited by a cyclic group's worth of monodromies acting on its ground states, 

\item the previous bulk/boundary correspondence map \cref{PreviewResultOverA2Cns} still holds away from this total system order, for its deep bulk order,
\end{enumerate}
in that (\cref{MappingBoundaryInclusionOfClosedAnnulusIntoComplexHopfFibration}):
\begin{equation}
  \label{PreviewResultOverA2}
  \left.
  \begin{aligned}
    \phi & = \iota_{A^2}
    \\
    \wp & = h_{\mathbb{C}}
  \end{aligned}
  \right\}
  \;\;
  \Rightarrow
  \;\;
  \begin{cases}
    H^{-1}\bracket({
      \iota_{A^2}; 
      h_{\mathbb{C}}
    })
    =
    \mathbb{Z}
    & 
    \substack{
      \text{(cyclic top. order} 
      \\
      \text{in tot. system)}
    }
    \\
    \begin{tikzcd}[
      ampersand replacement=\&,
      row sep=4pt
    ]
    \tilde H^{-2}\bracket({
      \partial A^2; 
      S^3
    })
    \ar[
      d,
      equals
    ]
    \ar[
      r, 
      "{ \Unwinding_1 }"
    ]
    \&
    \tilde H^{-1}\bracket({
      A^2
      /
      \partial A^2; 
      S^2
    })
    \ar[
      d,
      equals
    ]
    \\
    \mathbb{Z}^2
    \ar[
      r,
      "{ \scalebox{0.7}{$
        \left(
        \begin{matrix}
          + 1 & -1
          \\
          \,0 & \; 0
        \end{matrix}
        \right)
        $}
      }"{description}
    ]
    \&
    \mathbb{Z}^2
    \end{tikzcd}
    &
    \substack{
      \text{(partial unwinding BBC).}
    }
  \end{cases}
\end{equation}

Here the last line means that the $\mathbb{Z}^2$'s worth of boundary contributions (from the two edge modes) correspond only to one of two $\mathbb{Z}$ monodromy summands found in the deep bulk over the closed annulus (in a way that is isolated \cref{PreviewResultOverA2Cns} already over the constricted annulus), while one deep bulk copy of $\mathbb{Z}$ monodromy survives near the boundary.

Interestingly, this is \emph{not} the physically expected result: According to the above logic \cref{GaplessBoundaryImpliesVanishingTotalCohomologyInNegativeDegree1}, with the gap meant to be closed over the boundary, there should \emph{not} be nontrivial topological order \cref{BoundedOrderRepresentation} in the total system, it should only manifest nontrivially in the deep bulk. 

But we have not yet considered the \emph{differential refinement}, according to \cref{OnBndryFieldsInHGT}, of the classifying fibration, to account for the expected Chern-Simons type nature (cf. \cref{OnExcursionThroughGeometricHomotopy}) of the effective FQH fields in the bulk. This is what we turn to next in \cref{OnIdentifyingEdgeCurrentsInTEDCoh}, and we find in \cref{OnTheHomotopyOfThePhaseSpace} that this exactly rectifies the situation.

\subsection
{Identifying Edge Currents in TED Cohomotopy}
\label
{OnIdentifyingEdgeCurrentsInTEDCoh}

With the correct classifying fibration \cref{ClassifyingFibrationInIntroduction} for the topological nature of bulk/boundary FQH systems
plausibly identified (the complex Hopf fibration),
properly reproducing the expected phenomena over the closed disk
(\cref{OnDeterminingTheNearBdrClassfFib}) and over the constricted annulus (\cref{OnOverTheConstrictedAnnulus}), we turn to considering its differential refinement in the sense of \cref{OnBndryFieldsInHGT} such as to adjust for the one remaining discrepancy over the closed annulus found in \cref{OnOverTheClosedAnnulus}.

\subsubsection
{Identifying the global phase space}
\label
{OnIdentifyingThePhaseSpace}

Recall, from  \cref{ThePrimeMapInIntroduction,RelativeWhiteheadLAlgebra}, that a classifying fibration $\wp$ induces a fibration $\mathfrak{l}\wp$ of $L_\infty$-algebras (its \emph{relative Whitehead bracket $L_\infty$-algebra}), which encodes the Gauss laws of corresponding effective flux densities, as well as the local structure of their gauge potentials. For $\wp \defneq h_{\mathbb{C}}$ the complex Hopf fibration, the corresponding relations are shown in 
\cref{GaugePotentialForHopfFib}. 

The analogous analysis for the identity map $\wp' \defneq \mathrm{id}_{S^3}$ on the 3-sphere yields the same relations as in \cref{GaugePotentialForHopfFib}, except that the flux densities in degree 2 and 1 disappear, $F_2 = 0$ and $H_1 = 0$. But these are exactly the correct constraints expected on a Chern-Simons type bulk field. In detail:

\begin{remark}[A character image for abelian CS/WZW fields]
\label[remark]{IdentifyingCSFJUnderCharacterMap}
When the local gauge potentials are those induced by $h_\mathbb{C}$ while the flux densities are constrained to be induced by the image of $\inlinetikzcd{ \mathrm{id}_{S^3} \ar[r] \& h_{\mathbb{C}} }$, then, according to \cref{GaugePotentialForHopfFib}, the form of these local equations is that characteristic of:
\begin{enumerate}
  \item 
  an on-shell abelian Chern-Simons field $A_1$ in the bulk $\BulkDomain$ (cf. \cite{Dunne1998,nLab:AbelianChernSimons}),
  \[
    \mathrm{d}\, A_1 = 0
    \mathrlap{\,,}
  \]

  \item a corresponding Floreanini-Jackiw/Wess-Zumino-Witten current density $\CurrentDensity$ (cf. \parencites[(2.62)]{Wen1992}{nLab:FloreaniniJackiwTheory}) on the boundary $\inlinetikzcd{\BoundaryDomain \ar[r, hook, "{\phi}"] \& \BulkDomain}$:
  \[
    \mathrm{d}\lambda = 
    \phi^\ast A_1
    \mathrlap{\,.}
  \]
\end{enumerate}
This is hence exactly the effective on-shell field content of FQH liquids traditionally modeled by Lagrangian boundary abelian Chern-Simons theory, here reproduced by a non-Lagrangian construction (cf. \cite{SS25-ISQS29}).
\end{remark}

\begin{table}[htb]
\caption{
  \label{GaugePotentialForHopfFib}
  Under the twisted character map (\cref{OnTheCharacterMap}), classifying fibrations determine systems of Gau{\ss} laws on flux density species along domain embeddings $\phi : \BoundaryDomain \hookrightarrow \BulkDomain$, as well as the structure of local gauge potentials.  
  Shown is the case of the complex Hopf fibration $h_{\mathbb{C}}$, exhibiting besides a higher gauge field $B_2$, an abelian bulk gauge field $A_1$ and a boundary current $\CurrentDensity$ (\parencites[\S B]{SS25-Srni}[Rem. 3.20]{Banerjee2025-Potentials}, building on \parencites[\S 4.1]{GSS25-M5}). The analogous formulas for the identity fibration $\mathrm{id}_{S^3}$ are obtained by setting $F_2 = 0$ (which is the Chern-Simons equation of motion for $A_1$) and $H_1 = 0$ (which makes $\CurrentDensity$ be the corresponding FJ/WZW boundary current), cf. \cref{IdentifyingCSFJUnderCharacterMap}. 
}
\centering
\adjustbox{scale=0.95, 
 rndfbox=4pt
}{
  \begin{tikzcd}[
    ampersand replacement=\&,
    column sep=4pt
  ]
    \mbox{\textbf{\footnotesize
      \begin{tblr}{
        colspec=c,
        rowsep=0pt,
      } 
        Classifying
        \\
        fibrations
      \end{tblr}
    }}
    \&
    \mbox{\textbf{\footnotesize
    \begin{tblr}{
      colspec=c,
      rowsep=0pt,
    }
      Relative Sul- 
      \\
      livan models
    \end{tblr}
    }}
    \&
    \mbox{\textbf{\footnotesize
    \begin{tblr}{
      colspec={c},
      rowsep=0pt,
    }
      Gau{\ss} laws on
      \\
      flux densities
    \end{tblr}
    }}
    \&
    \mbox{\textbf{\footnotesize
    \begin{tblr}{
      colspec={c},
      rowsep=0pt,
    }
      Local gauge
      \\
      potentials
    \end{tblr}
    }}
    \\[-20pt]
    S^3
    \ar[
      d,
      "{ 
        h_{\mathbb{C}} 
      }"{description, pos=.45}
    ]
    \&
    \begin{tblr}{
      colspec={rl},
      rowsep=0pt,
      colsep=2pt
    }
      \mathrm{d}\, h_1 
      &
      \mathcolor{black!75}{
        = f_2
      }
    \end{tblr}
    \&
    \begin{tblr}{
      colspec={rl},
      rowsep=0pt,
      colsep=2pt
    }
      \mathrm{d}\, H_1 
      &
      = \phi^\ast F_2
    \end{tblr}
    \&
    \begin{tblr}{
      colspec={rl},
      rowsep=0pt,
      colsep=2pt
    }
      \mathrm{d}\,
      \CurrentDensity
      &
      =
      \phi^\ast A_1
         - 
       H_1 
    \end{tblr}
    \\
    S^2
    \&
    \begin{tblr}{
      colspec={rl},
      rowsep=0pt,
      colsep=2pt
    }
       \mathrm{d}\, h_3 &= f_2 f_2
       \\
       \mathrm{d}\, f_2 & = 0
    \end{tblr}
    \&
    \begin{tblr}{
      colspec={rl},
      rowsep=0pt,
      colsep=2pt
    }
       \mathrm{d}\, H_3 
         &= F_2 \wedge F_2
       \\
       \mathrm{d}\, F_2 
         & = 0
    \end{tblr}
    \&
    \begin{tblr}{
      colspec={rl},
      rowsep=0pt,
      colsep=2pt
    }
      \mathrm{d}\, B_2 
        & =  H_3 - A_1 \wedge F_2
      \\
      \mathrm{d}\, A_1 & = F_2
    \end{tblr}
  \end{tikzcd}
}

\end{table}

In view of the general discussion in \cref{OnBndryFieldsInHGT}, this suggests that the correct phase space object \cref{ThePhaseSpaceInIntro} to consider is that induced by the fibration of fibrations
\begin{equation}
\label
{TheRefinementFibrationForFQH}
  \big(
  \inlinetikzcd{
    \wp'
    \ar[r, "{ \prime }"]
    \&
    \wp
  }
  \big)
  :=
  \Big(
  \begin{tikzcd}
    \mathrm{id}_{S^3}
    \ar[
      r,
      "{ 
        (
          \mathrm{id}, 
          h_{\mathbb{C}}
        ) 
      }"
    ]
    &
    h_{\mathbb{C}}
  \end{tikzcd}
  \Big)
  ,
  \;\;\;\;
  \begin{tikzcd}[column sep=large]
    S^3
    \ar[
      d,
      "{ \mathrm{id}_{S^3} }"
    ]
    \ar[
      r,
      "{
        \mathrm{id}_{S^3}
      }"
    ]
    &
    S^3
    \ar[
      d,
      "{ h_{\mathbb{C}} }"
    ]
    \\
    S^3
    \ar[
      r,
      "{ h_{\mathbb{C}} }"
    ]
    &
    S^2
    \mathrlap{\,,}
  \end{tikzcd}
\end{equation}
hence:
\begin{definition}
\label[definition]
{FQHPhaseSpace}
As the cohomotopically flux-quantized phase space,
$\FQHPhaseSpace(\phi)$,
for the global effective description of bounded FQH liquids on $\inlinetikzcd{ \BoundaryDomain \ar[r, hook, "{ \phi }"] \& \BulkDomain }$ we take the following homotopy pullback in $\mathrm{SmthGrpd}_\infty$:
\begin{equation}
\label
{TheFQHPhaseSpace}
  \begin{tikzcd}[
    ampersand replacement=\&,
    column sep=35pt,
    row sep=0pt
  ]
    \FQHPhaseSpace(\phi)
    \ar[
      rr,
      "{\ }"{name=s, swap}
    ]
    \ar[
      d,
      "{\ }"{name=t}
    ]
    \ar[
      drr,
      phantom,
      "{ \lrcorner_h }"{pos=0}
    ]
    \ar[
      from=s, 
      to=t,
      Rightarrow,
      "{  
        \begin{subarray}{l}
          \mathrm{d}\CurrentDensity 
          = 
          \phi^\ast A_1
          \\
          \\
          \mathrm{d}A_1 
          = 0
          \\
          \mathrm{d}B_2 
          = H_3
        \end{subarray}
      }"{color=gray}
    ]
    \&[20pt]
    \&
    \UnpointedMap\bracket({
      \phi, h_{\mathbb{C}}
    })
    \ar[
      d,
      "{
        \mathbf{ch}^{h_{\mathbb{C}}}
      }"
    ]
    \\[+50pt]
    \mathbf{\Omega}^1_{\mathrm{cl}}\bracket({
      \phi; 
      \mathfrak{l}
      \mathrm{id}_{S^3}
    })
    \ar[
      r,
      "{
        \mathfrak{l}(
          \mathrm{id}_{S^3},
          h_{\mathbb{C}}
        )\ast
      }"
    ]
    \&
    \mathbf{\Omega}^1_{\mathrm{cl}}\bracket({
      \phi; 
      \mathfrak{l}
      h_{\mathbb{C}}
    })
    \ar[
      r,
      "{
        \eta^{\shape}
      }"
    ]
    \&
    \shape
    \mathbf{\Omega}^1_{\mathrm{cl}}\bracket({
      \phi; \mathfrak{l}h_{\mathbb{C}}
    }) 
    \mathrlap{\,,}
    \\
    \mathcolor{gray}{\substack{
      H_1  = 0
      \\
      \\
      \mathrm{d}\, H_3 = 0
      \\
      F_2 = 0
    }}
    \&
    \mathcolor{gray}{\substack{
      \mathrm{d}\, H_1 = \phi^\ast F_2
      \\
      \\
      \mathrm{d}\, H_3 = F_2 \wedge F_2
      \\
      \mathrm{d}\, F_2 = 0
    }}
  \end{tikzcd}
\end{equation}
where in gray we are indicating (using \cref{IdentifyingCSFJUnderCharacterMap,GaugePotentialForHopfFib}) which parts of this diagram encode the flux densities and their equations of motions (namely the bottom left corner, constraining the bottom middle object), and which encode the gauge potential relations (namely the universal homotopy filling this diagram).
\end{definition}

\begin{remark}
  As highlighted, various structures appearing here are familiar from traditional Lagrangian abelian Chern-Simons theory and its boundary FL/WZW field theory (\cref{IdentifyingCSFJUnderCharacterMap}). And yet the construction is fundamentally different (non-Lagrangian, immediately globally defined, intrinsically flux quantized in non-abelian cohomology) and answers to questions that are at least not readily addressed with Lagrangian field theory methods.
\end{remark}

We proceed to compute the homotopy groups, in low degree, of the shape \cref{PrimedPhaseSpaceInIntroduction} 
\begin{equation}
  \shape \FQHPhaseSpace(\phi)
  =
  \UnpointedMap\bracket({
    \phi; h_{\mathbb{C}}
  })
  \;\;\;\,
  \underset{\mathclap{
    \UnpointedMap\scaledbracket({
      \phi; 
      \RRationalization h_{\mathbb{C}}
    })
  }}
    {\times}
  \;\;\;\,
  \UnpointedMap\bracket({
    \phi;
    \RRationalization \mathrm{id}_{S^3}
  })
\end{equation}
of this phase space \cref{TheFQHPhaseSpace}.

\subsubsection
{Homotopy of the phase space}
\label{OnTheHomotopyOfThePhaseSpace}

We find indeed that with the differential refinement \cref{TheFQHPhaseSpace} of the relative mapping space, the previous conclusions over the closed disk and over the constricted annulus remain unchanged, while over the closed annulus exactly the one sticking point \cref{PreviewResultOverA2} gets rectified:
\begin{enumerate}

\item
(\cref{PhaseSpaceHomotopyOverClosedDisk})
For $\phi \defneq \iota_{D^2}$
we find that still
$
  \pi_1\bracket({
    \shape\FQHPhaseSpace(\iota_{D^2})
  })
  \simeq
  1
  \,.
$ 
and that the unwinding BBC remains an isomorphism, as was the case \cref{PlainSituationOverTheClosedDisk} before refinement.

\item (\cref{PhaseSpaceHomotopyOverAnnulus})
For $\phi \defneq \iota_{A^2}$
we find  that the total system is now unordered as it should be,
$
  \pi_1\bracket({
    \shape\FQHPhaseSpace(\iota_{A^2})
  })
  \simeq
  1
  \,.
$ 
with a bulk boundary correspondence reflecting that bulk order unwinds with one sign on one boundary and with the other sign on the other boundary, as was the case without refinement \cref{PreviewResultOverA2Cns} over the constricted annulus.

\end{enumerate}

We give detailed proofs of these results in \cref{Proofs} and put them in perspective in \cref{OnGeometricEngineeringOnMBraneProbes,Conclusions}.

\section
{Proofs}
\label{Proofs}

Here we give precise definitions and detailed proofs for the statements discussed above in \cref{OnMethods} and \cref{OnResults}.

\subsection
{The General Bulk-Boundary Correspondence}

First we establish in general the fiber sequences \cref{HomotopyFiberSequencesOfModuli} and the induced homotopy exact sequence \cref{TheUnwindingLES}.

\begin{definition}
\label[definition]{RelativeMappingSpace}
Given 

\begin{enumerate}

\item
a pointed cell complex inclusion $\inlinetikzcd{ N \ar[r, hook, "{\phi}"] \& \BulkDomain }$ with quotient $\DeepBulkDomain$ \cref{Cofiber},
\begin{equation}
  \label{TheDomainCofiber}
  \begin{tikzcd}[column sep=huge]
    N
    \ar[
      d,
      hook,
      "{ \phi }"{swap},
      "{ \in \mathrm{Cof} }"
    ]
    \ar[
      r
    ]
    \ar[
      dr,
      phantom,
      "{ \ulcorner }"{pos=.9}
    ]
    &
    \ast
    \ar[d]
    \\
    \BulkDomain
    \ar[r, "{ q }"]
    &
    \DeepBulkDomain
    \mathrlap{\,,}
  \end{tikzcd}
\end{equation}

\item a pointed Serre fibration $\inlinetikzcd{ \ClassifyingA \ar[r, "{ \wp }", "{ \in \mathrm{Fib} }"{swap}] \& \ClassifyingB }$ (cf. \cref{SerreFibration}) with fiber $\ClassifyingF$ \cref{Fiber},
\begin{equation}
  \label{TheCodomainFiber}
  \begin{tikzcd}[column sep=large]
    \ClassifyingF 
    \ar[r, hook, "{ i }"]
    \ar[d]
    \ar[
      dr,
      phantom,
      "{ \lrcorner }"{pos=.1}
    ]
    &
    \ClassifyingA
    \ar[
      d,
      "{ \wp }"{swap},
      "{ \mathrlap{\in \mathrm{Fib}} }"
    ]
    \\
    \ast
    \ar[r]
    &
    \ClassifyingB
    \mathrlap{\,,}
  \end{tikzcd}
\end{equation}
\end{enumerate}
we say that the \emph{$\wp$-twisted $\phi$-relative mapping space} (\parencites[Fig. 2]{SS25-Orient}[(68)]{BaSS26-MString}) is the fiber product space
\begin{equation}
  \label{TheRelativeMappingSpace}
  \Map({\phi,\wp})
  :=
  \Map({N,\ClassifyingA})
  \underset{\Map({N,\ClassifyingB})}{\times}
  \Map({\BulkDomain, \ClassifyingB})
\end{equation}
of compatible pairs of maps shown as dashed arrows in the following commuting diagram:
\begin{equation}
  \Map\bracket({
    \phi, \wp
  })
  =
  \left\{\,
  \begin{tikzcd}[column sep=large]
    N
    \ar[
      d,
      hook,
      "{ \phi }"{swap},
      "{ \in \mathrm{Cof} }"
    ]
    \ar[
      r,
      dashed
    ]
    &
    \ClassifyingA
    \ar[
      d,
      "{ \wp }"{swap},
      "{ \in \mathrm{Fib} }"
    ]
    \\
    \BulkDomain
    \ar[
      r,
      dashed
    ]
    &
    \ClassifyingB
    \mathrlap{\,.}
  \end{tikzcd}
  \, \right\}
  \mathrlap{.}
\end{equation}
\end{definition}
\begin{example}
  For $\ClassifyingA = \ClassifyingB$ and $\wp$ the identity, also $\Map(\BoundaryDomain,\wp)$ is an identity, so that \cref{TheRelativeMappingSpace} reduces to
  \begin{equation}
    \label{TwRelMapIntoIdentity}
    \Map\bracket({
      \phi, \mathrm{id}_{\ClassifyingB}
    })
    \simeq
    \Map\bracket({
      \BulkDomain,
      \ClassifyingB
    })
    \mathrlap{\,,}
    \hspace{.6cm}
    \begin{tikzcd}[row sep=12pt, column sep=large]
     \BoundaryDomain
     \ar[d, hook, "{ \phi }"]
     \ar[
       r,
       dashed,
       "{ \exists ! }"
     ]
     &
     \ClassifyingB
     \ar[d, equals]
     \\
     \BulkDomain
     \ar[r, "{ \forall }"]
     &
     \ClassifyingB
     \mathrlap{\,.}
    \end{tikzcd}
  \end{equation}
\end{example}
\begin{lemma}
In the situation of \textup{\cref{RelativeMappingSpace}}, we have Cartesian squares \cref{CartesianSquare} as follows:
\begin{equation}
  \label{ThePastingDiagram}
  \begin{tikzcd}[column sep=huge]
    &
    \Map({
      \DeepBulkDomain, 
      \ClassifyingB
    })
    \ar[r]
    \ar[
      d,
      "{ \BulkInclusion }"{swap}
    ]
    \ar[
      dr,
      phantom,
      "{ \lrcorner }"{pos=.05}
    ]
    &
    \ast
    \ar[d]
    \\
    \Map\bracket({
      \BoundaryDomain, 
      \ClassifyingF
    })
    \ar[
      dr,
      phantom,
      "{ \lrcorner }"{pos=.1}
    ]
    \ar[
      r,
      "{ \BoundaryInclusion }"
    ]
    \ar[d]
    &
    \Map({\phi,\wp})
    \ar[
      r,
      "{ \BoundaryRestriction }",
      "{ \in \mathrm{Fib} }"{swap}
    ]
    \ar[
      d,
      "{ \BulkRestriction }"{swap},
      "{
        \in \mathrm{Fib}
      }"
    ]
    \ar[
      dr,
      phantom,
      "{ \lrcorner }"{pos=.05}
    ]
    &
    \Map({N,\ClassifyingA})
    \ar[
      d,
      "{ \wp_\ast }"{swap},
      "{
        \in \mathrm{Fib}
      }"
    ]
    \\
    \ast
    \ar[r]
    &
    \Map({\BulkDomain,\ClassifyingB})
    \ar[
      r,
      "{ \phi^\ast }",
      "{ \in \mathrm{Fib} }"{swap}
    ]
    &
    \Map({N,\ClassifyingB})
    \mathrlap{\,.}
  \end{tikzcd}
\end{equation}
\end{lemma}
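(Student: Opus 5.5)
The plan is to verify the three squares separately: first the bottom-right square, then the two outer squares by pasting and exponential-law arguments. Along the way we record the fibration properties.

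\emph{Bottom-right square.} This square is Cartesian by the very definition \cref{TheRelativeMappingSpace} of $\Map(\phi,\wp)$ as a fiber product. Its maps $\BoundaryRestriction$ and $\BulkRestriction$ are the two projections.

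\emph{Fibration properties.} Since $\phi$ is a (pointed) cell complex inclusion, hence a cofibration, restriction $\phi^\ast : \Map(\BulkDomain,\ClassifyingB) \to \Map(N,\ClassifyingB)$ is a Serre fibration. This follows from the homotopy extension property: a lifting problem against $D^n \hookrightarrow D^n\times I$ for $\phi^\ast$ transposes, via the exponential law for the compactly generated mapping spaces, to an extension problem along the pushout-product $(D^n \times N)\cup(\partial\text{-data}\times \BulkDomain) \hookrightarrow D^n\times I \times \BulkDomain$, solved by a retraction. Dually, since $\wp$ is a Serre fibration and $N$ is a CW complex, $\wp_\ast : \Map(N,\ClassifyingA)\to\Map(N,\ClassifyingB)$ is a Serre fibration. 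Its lifting problems transpose to lifting $D^n\times I\times N$ against $\wp$ relative to $D^n\times N$, and this is solvable because $D^n\times N \hookrightarrow D^n \times I \times N$ is an acyclic relative cell inclusion. Base change then gives that $\BoundaryRestriction$ (pulled back from $\phi^\ast$) and $\BulkRestriction$ (pulled back from $\wp_\ast$) are Serre fibrations. This is the step needing most care: we must use the pushout-product / SM7-type property of the Quillen model structure on (compactly generated) pointed spaces, and keep the pointedness conventions consistent.

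\emph{Left square.} By the pasting law, the fiber of $\BulkRestriction$ over the basepoint (the zero map $\BulkDomain\to\ClassifyingB$) is the fiber of $\wp_\ast$ over the zero map $N\to\ClassifyingB$. That fiber consists of maps $N\to\ClassifyingA$ landing in $\wp^{-1}(0)=\ClassifyingF$, hence equals $\Map(N,\ClassifyingF)$, since $\Map(N,-)$ preserves pullbacks \cref{TheCodomainFiber} as a right adjoint. The induced map is $c_N\mapsto (c_N,0)$, which is $\BoundaryInclusion$.

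\emph{Top square.} Likewise, the fiber of $\BoundaryRestriction$ over the zero map $N\to\ClassifyingA$ is the fiber of $\phi^\ast$ over the zero map $N\to\ClassifyingB$. This is the space of pointed maps $\BulkDomain\to\ClassifyingB$ sending $N$ to the basepoint. By the universal property of the pushout \cref{TheDomainCofiber}, which $\Map(-,\ClassifyingB)$ sends to a pullback, that space is $\Map(\DeepBulkDomain,\ClassifyingB)$ via $q^\ast$, with induced map $\BulkInclusion$.

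For these identifications to be homeomorphisms rather than mere bijections, we work in compactly generated spaces. There $\Map(-,\ClassifyingB)$ turns colimits into limits and $\Map(N,-)$ preserves limits, so the fiber identifications hold as spaces.
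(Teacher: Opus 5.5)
Your argument is correct and is essentially the paper's: the bottom-right square is Cartesian by definition, and the top and left squares follow from the pasting law, because $\Map(-,\ClassifyingB)$ and $\Map(\BoundaryDomain,-)$ send the domain cofiber square and the codomain fiber square to Cartesian squares — which is exactly your identification of the fibers of $\BoundaryRestriction$ and $\BulkRestriction$. Beyond the paper, whose proof leaves the $\in\mathrm{Fib}$ labels unjustified, you also argue that $\phi^\ast$ and $\wp_\ast$, and hence by base change $\BoundaryRestriction$ and $\BulkRestriction$, are Serre fibrations; this is sound, except for a typo: the domain of the transposed extension problem should be $(D^n\times\{0\}\times\BulkDomain)\cup(D^n\times I\times\BoundaryDomain)$.
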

\begin{proof}
The bottom right square is Cartesian by \cref{RelativeMappingSpace}, expressing \cref{TheRelativeMappingSpace}.
But also the right and bottom total rectangles are Cartesian, since they are the images of the domain cofiber sequence \cref{TheDomainCofiber} under $\Map\bracket({-,\ClassifyingB})$, and of the codomain fiber sequence \cref{TheCodomainFiber} under $\Map\bracket({\BoundaryDomain,-})$, respectively,  
and since $\inlinetikzcd{\Map(-,-) : \bracket({\mathrm{Top}^{\ast}})^{\mathrm{op}} \times \mathrm{Top}^{\ast} \ar[r] \& \mathrm{Top}^{\ast} }$ preserves limits in each argument (\cref{MapPreservesLimits}).
Therefore, the pasting law \cref{PastingLaw} implies that also the top and the left squares are Cartesian.
\end{proof}

\begin{proposition}
\label[proposition]
{TheLongExactSequences}
In the situation of \textup{\cref{RelativeMappingSpace}}, we have long exact sequences of homotopy groups, as follows:
\begin{enumerate}
\item
The homotopy groups of the relative mapping space \eqref{TheRelativeMappingSpace} sit in a long exact sequence of this form:
\begin{equation}
  \label{HomotopyLESForRelativeMap}
  \begin{tikzcd}[column sep=38pt]
    &
    \;\;\;\cdots\;\;\;
    \ar[r]
    &
    \pi_2\,
    \Map({N,\ClassifyingA})
    \ar[
      dll,
      snake left,
      "{
        \Unwinding_1
      }"{description}
    ]
    \\
    \pi_1\, \Map({\DeepBulkDomain,\ClassifyingB})
    \ar[
      r,
      "{
        \pi_1(\BulkInclusion)
      }"
    ]
    &
    \pi_1\,
    \Map({\phi,\wp})
    \ar[
      r,
      "{
        \pi_1(\BoundaryRestriction)
      }"
    ]
    &
    \pi_1\,
    \Map({N,\ClassifyingA})
    \ar[
      dll,
      snake left,
      "{ 
        \Unwinding_0
      }"{description}
    ]
    \\
    \pi_0\, \Map({\DeepBulkDomain,\ClassifyingB})
    \ar[
      r,
      "{ \pi_0(\BulkInclusion) }"
    ]
    &
    \pi_0\,
    \Map({\phi,\wp})
    \ar[
      r,
      "{ \pi_0(\BoundaryRestriction) }"
    ]
    &
    \pi_0\,
    \Map({N,\ClassifyingA})
    \mathrlap{\,,}
  \end{tikzcd}
\end{equation}
and this has a morphism of sequences to the homotopy LES induced by 
$\Map\big({\inlinetikzcd{\DeepBulkDomain \ar[r, <-, "{ q } "] \& \BulkDomain \ar[r, <-, "{\phi}"] \& \BoundaryDomain}, \ClassifyingB}\big)$ \textup{(\cref{MappingFiberSequenceOfCWInclusion})}, of this form:
\begin{equation}
  \label{MapFromHomotopyLESOfRelMap}
  \begin{tikzcd}[
    column sep=30pt
  ]
    \pi_{n+1}
    \Map\bracket({
      N, 
      \ClassifyingA
    })
    \ar[
      r,
      "{ \Unwinding_{n} }"
    ]
    \ar[
      d,
      "{ 
        \pi_{n+1}(\wp_\ast) 
      }"
    ]
    &
    \pi_n \Map\bracket({
      \DeepBulkDomain, 
      \ClassifyingB
    })
    \ar[
      d,
      equals
    ]
    \ar[
      r,
      "{
        \pi_n(\BulkInclusion)
      }"
    ]
    &
    \pi_n \Map\bracket({
      \phi, \wp
    })
    \ar[
      d,
      "{
        \pi_n(\BulkRestriction)
      }"
    ]
    \ar[
      r,
      "{
        \pi_n(\BoundaryRestriction)
      }"
    ]
    &
    \pi_n \Map\bracket({
      N, \ClassifyingA
    })
    \ar[
      d,
      "{ 
        \pi_n(\wp_\ast) 
      }"
    ]
    \\
    \pi_{n+1}
    \Map\bracket({
      N,
      \ClassifyingB
    })
    \ar[
      r,
      "{ 
        \partial^{\phi^\ast}_n 
      }"
    ]
    &
    \pi_n\Map\bracket({
      \DeepBulkDomain, \ClassifyingB
    })
    \ar[
      r,
      "{
        \pi_n(q^\ast)
      }"
    ]
    &
    \pi_n\Map\bracket({
      \BulkDomain, \ClassifyingB
    })
    \ar[
      r,
      "{
        \pi_n(\phi^\ast)
      }"
    ]
    &
    \pi_n\Map\bracket({
      N, \ClassifyingB
    })
    \mathrlap{\,.}
  \end{tikzcd}
\end{equation}

\item
The homotopy groups of the relative mapping space also sit in a long exact sequence of this form:
\begin{equation}
  \label{SecondHomotopyLESForRelativeMap}
  \begin{tikzcd}[row sep=small]
    &
    \;\;\cdots\;\;
    \ar[r]
    &
    \pi_2\,
    \Map\bracket({
      \BulkDomain, \ClassifyingB
    })
    \ar[
      dll,
      snake left
    ]
    \\
    \pi_1\,
    \Map\bracket({
      \BoundaryDomain, 
      \ClassifyingF
    })
    \ar[r]
    &
    \pi_1\,
    \Map\bracket({
      \phi, \wp
    })
    \ar[r]
    &
    \pi_1\,
    \Map\bracket({
      \BulkDomain, \ClassifyingB
    })
    \ar[
      dll,
      snake left
    ]
    \\
    \pi_0\,
    \Map\bracket({
      \BoundaryDomain, 
      \ClassifyingF
    })
    \ar[r]
    &
    \pi_0\,
    \Map\bracket({
      \phi, \wp
    })
    \ar[r]
    &
    \pi_0\,
    \Map\bracket({
      \BulkDomain, \ClassifyingB
    })\,,
  \end{tikzcd}
\end{equation}
and there is a morphism of sequences
\begin{equation}
  \label{MapFromSecondHomotopyLESOfRelMap}
  \begin{tikzcd}
    \pi_{n+1}\, \Map\bracket({
      \BulkDomain, \ClassifyingB
    })
    \ar[
      d,
      "{ \phi^\ast }"
    ]
    \ar[
      r,
      "{ \partial_n }"
    ]
    &
    \pi_n\, \Map\bracket({
      \BoundaryDomain, 
      \ClassifyingF
    })
    \ar[
      d,
      equals
    ]
    \ar[r]
    &
    \pi_n\, \Map\bracket({
      \phi, \wp
    })
    \ar[d]
    \ar[r]
    &
    \pi_n\, \Map\bracket({
      \BulkDomain, \ClassifyingB
    })
    \ar[
      d,
      "{ \phi^\ast }"
    ]
    \\
    \pi_{n+1}\,
    \Map\bracket({
      \BoundaryDomain, 
      \ClassifyingB
    })
    \ar[r, "{ \partial_n }"]
    &
    \pi_n\, \Map\bracket({
      \BoundaryDomain, 
      \ClassifyingF
    }) 
    \ar[r]
    &
    \pi_n\, \Map\bracket({
      \BoundaryDomain, 
      \ClassifyingA
    }) 
    \ar[r]
    &
    \pi_n\,
    \Map\bracket({
      N, \ClassifyingB
    })
    \mathrlap{\,.}
  \end{tikzcd}
\end{equation}
\end{enumerate}
\end{proposition}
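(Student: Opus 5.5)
The plan is to read both long exact sequences off the pasting diagram \cref{ThePastingDiagram} of the preceding lemma, via the homotopy long exact sequence of a Serre fibration (\cref{HomotopyLES}). Naturality of the connecting homomorphism (\cref{ConnectingHomomorphism}) under maps of fiber sequences then supplies the comparison morphisms.

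For \textbf{(i)}, I first note that $\BoundaryRestriction$ is a Serre fibration, being the pullback of $\phi^\ast$. The map $\phi^\ast = \Map(\phi,\ClassifyingB)$ is itself a fibration because $\phi$ is a cofibration, a relative cell complex, and $\Map(-,\ClassifyingB)$ turns cofibrations into fibrations. The top square of \cref{ThePastingDiagram} is Cartesian, so the fiber of $\BoundaryRestriction$ over the basepoint is $\Map(\DeepBulkDomain,\ClassifyingB)$, included via $\BulkInclusion$. Applying \cref{HomotopyLES} to this fiber sequence, based at the constant map, yields \cref{HomotopyLESForRelativeMap}, with $\Unwinding_n$ its connecting map. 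The right vertical composite in \cref{ThePastingDiagram} gives a commuting map of fiber sequences from
\[
\Map(\DeepBulkDomain,\ClassifyingB)\to\Map(\phi,\wp)\to\Map(N,\ClassifyingA)
\]
to
\[
\Map(\DeepBulkDomain,\ClassifyingB)\xrightarrow{q^\ast}\Map(\BulkDomain,\ClassifyingB)\xrightarrow{\phi^\ast}\Map(N,\ClassifyingB)
\]
(\cref{MappingFiberSequenceOfCWInclusion}). Here the middle map is $\BulkRestriction$, the right map is $\wp_\ast$, and the map on fibers is the identity, since the composite $\BulkRestriction\circ\BulkInclusion$ is $q^\ast$ by the top and right total rectangles. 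Functoriality of $\pi_n$ makes the two right-hand squares of \cref{MapFromHomotopyLESOfRelMap} commute. Naturality of the connecting map gives $\partial^{\phi^\ast}_n\circ\pi_{n+1}(\wp_\ast)=\Unwinding_n$, which is the left square.

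For \textbf{(ii)} the argument is symmetric. $\BulkRestriction$ is the pullback of the fibration $\wp_\ast$, where $\wp_\ast$ is a fibration because $\Map(N,-)$ preserves Serre fibrations ($N$ being a cell complex, by the exponential law). The left Cartesian square identifies its fiber as $\Map(N,\ClassifyingF)$, included via $\BoundaryInclusion$, and \cref{HomotopyLES} gives \cref{SecondHomotopyLESForRelativeMap}. The bottom horizontal composite provides a map of fiber sequences into $\Map(N,\ClassifyingF)\to\Map(N,\ClassifyingA)\to\Map(N,\ClassifyingB)$, acting as the identity on fibers, as $\BoundaryRestriction$ in the middle, and as $\phi^\ast$ on the base. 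Naturality again yields \cref{MapFromSecondHomotopyLESOfRelMap}.

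I expect the main obstacle to be the bookkeeping rather than any real difficulty. I must verify that the fibration properties claimed in \cref{ThePastingDiagram} hold in the pointed compactly generated setting, so that the strict fibers are homotopy fibers. I must also confirm that the low-degree terms ($\pi_0$, pointed sets) are exact in the pointed-set sense, with exactness at $\pi_0\Map(\phi,\wp)$ and the action of $\pi_1$ of the base handled as in \cref{HomotopyLES}. Finally, I must check that the basepoints (constant maps) are compatible across the squares, so that the connecting maps commute with the vertical maps on the nose rather than only up to sign or conjugation.
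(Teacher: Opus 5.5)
Your proposal is correct and follows the same route as the paper: both long exact sequences are read off the top and left Cartesian squares of \cref{ThePastingDiagram}, and the comparison morphisms come from naturality of the homotopy LES applied to the inclusion of these squares into the right and bottom rectangles, respectively. Your extra checks are the bookkeeping the paper leaves implicit in the $\in \mathrm{Fib}$ annotations of the preceding lemma: that $\BoundaryRestriction$ and $\BulkRestriction$ are Serre fibrations as pullbacks of $\phi^\ast$ and $\wp_\ast$, that exactness holds in the pointed-set sense in degree 0, and that the map on fibers is the identity because $\BulkRestriction \circ \BulkInclusion = q^\ast$.
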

\begin{proof}
The claimed long exact sequence \cref{HomotopyLESForRelativeMap} is the homotopy long exact sequence \cref{HomotopyLES} induced by the homotopy fiber sequence exhibited by the top square in \cref{ThePastingDiagram}, and the morphism of sequences \cref{MapFromHomotopyLESOfRelMap} is that induced via naturality \cref{NaturalityOfHomotopyLES} by the inclusion of the top square into the right rectangle of \cref{ThePastingDiagram}.

In the same way, the claimed long exact sequence \cref{SecondHomotopyLESForRelativeMap} is the homotopy long exact sequence \cref{TheHomotopyLES} induced by the fiber sequence exhibited by the left square in \cref{ThePastingDiagram}, and the morphism of sequences \cref{MapFromSecondHomotopyLESOfRelMap} is that induced via naturality \cref{NaturalityOfHomotopyLES} by the inclusion of the left square into the bottom rectangle in \cref{ThePastingDiagram}.
\end{proof}

\begin{example}
  For $\ClassifyingA \simeq \ast$ we have
  \begin{equation}
    \label{TotalHomotopyForTrivialA}
    \pi_n
    \Map\bracket({
      \phi, \wp
    })
    \simeq
    \pi_n
    \Map\bracket({
      \DeepBulkDomain,
      \ClassifyingB
    })
    \mathrlap{\,.}
  \end{equation}
\end{example}
\begin{proof}
  The assumption immediately implies that $\Map\bracket({-,\ClassifyingA}) \simeq \ast$, whence the claim follows by the exactness of \cref{MapFromHomotopyLESOfRelMap}.
\end{proof}

\subsection
{The Correspondence over particular Domains}

Here we compute homotopy groups of $\phi$-related $\wp$-twisted mapping spaces \cref{TheRelativeMappingSpace} for specific choices of domain inclusions $\phi$ and classifying fibrations $\wp$.

\subsubsection
{Examples of domain spaces}
\label{OnExamplesOfDomainSpaces}

Before entering the actual computations in \cref{ExamplesOfTwistedRelativeCohomology}, here we record some definitions and facts concerning the spatial domains \cref{TheDomainCofiber} involved, cf. \cref{SpatialDomainGraphics}.

\begin{lemma}
  The boundary inclusions of both the closed annulus and the constricted annulus \cref{TheConstrictedAnnulus} are homotopy equivalent to codiagonals, whence in particular their induced homomorphisms on homotopy groups of mapping spaces into a fixed classifying space are diagonal maps $\Delta$:
  \begin{equation}
    \label{AnnulusBoundaryIncIsCodiagonal}
    \begin{aligned}
    \pi_n
    \Map\bracket({
      \iota_{A^2_{\mathrm{cns}}},
      \ClassifyingB
    })
    &
    :
    \inlinetikzcd{
      \pi_n\Map\bracket({
        S^1,
        \ClassifyingB
      })
      \ar[r, "{ \Delta }"]
      \&
      \pi_n\Map\bracket({
        S^1,
        \ClassifyingB
      })^2
    }
    \\
    \pi_n
    \UnpointedMap\bracket({
      \iota_{A^2},
      \ClassifyingB
    })
    &
    :
    \inlinetikzcd{
      \pi_n\UnpointedMap\bracket({
        S^1,
        \ClassifyingB
      })
      \ar[r, "{ \Delta }"]
      \&
      \pi_n\UnpointedMap\bracket({
        S^1,
        \ClassifyingB
      })^2
      \mathrlap{\,.}
    }
    \end{aligned}
  \end{equation}
\end{lemma}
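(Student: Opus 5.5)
We prove the claim by exhibiting, for each of the two domains, an explicit homotopy equivalence of boundary inclusions (in the arrow category) with the fold map $\nabla$ out of a coproduct of two circles.

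\emph{Closed annulus.} Take $A^2 = S^1 \times [0,1]$ as an unpointed space, with $\iota_{A^2} : S^1 \sqcup S^1 \simeq S^1 \times \{0,1\} \hookrightarrow S^1 \times [0,1]$. Use the deformation retraction $S^1 \times [0,1] \to S^1$ given by projection $\mathrm{pr}$, a homotopy equivalence. Then $\mathrm{pr} \circ \iota_{A^2}$ is literally the codiagonal $\nabla : S^1 \sqcup S^1 \to S^1$. So we have a commuting square from $\iota_{A^2}$ to $\nabla$ whose components are $\mathrm{id}$ and $\mathrm{pr}$, both homotopy equivalences.

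\emph{Constricted annulus.} Proceed analogously, using the pointed version from \cref{ConstrictedAnnulus}: the two boundary circles are wedged at the constriction point, so the boundary is $S^1 \vee S^1 \hookrightarrow A^2_{\mathrm{cns}}$. Here $A^2_{\mathrm{cns}}$ (the annulus with a radial segment from the inner to the outer circle collapsed to the base point) still deformation retracts, pointedly, onto one circle. The composite is the pointed fold map $S^1 \vee S^1 \to S^1$. The needed check is that the radial projection descends to the quotient and stays a pointed homotopy equivalence. This holds because the collapsed segment is contractible and the quotient by a contractible subcomplex is a homotopy equivalence for CW pairs.

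Now apply $\Map(-,\ClassifyingB)$ (respectively $\UnpointedMap(-,\ClassifyingB)$). It sends homotopy equivalences to homotopy equivalences, since the domains are cofibrant cell complexes. So the induced map on $\pi_n$ is identified, up to the isomorphisms induced by the vertical equivalences, with the map induced by $\nabla$.

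Finally, $\Map$ turns colimits in the first argument into limits (\cref{MapPreservesLimits}):
\[
\Map(S^1\vee S^1,\ClassifyingB)\simeq \Map(S^1,\ClassifyingB)^2 \quad\text{and}\quad \UnpointedMap(S^1\sqcup S^1,\ClassifyingB)\simeq \UnpointedMap(S^1,\ClassifyingB)^2 .
\]
Under these identifications, precomposition with the fold map becomes the diagonal map. Since $\pi_n$ preserves products, we obtain $\Delta$ on homotopy groups, as claimed.

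The main obstacle is only the pointed bookkeeping for $A^2_{\mathrm{cns}}$: one must make sure that the retraction respects the base point, so that the identification is one of pointed mapping spaces. In the unpointed annulus case, one must fix consistent base points of the mapping spaces, namely the constant maps, for the $\pi_n$ to be compared.
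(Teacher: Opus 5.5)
Your proof is correct and takes the same route as the paper. The paper's proof is the one-line observation that the (constricted) annulus (pointedly) deformation retracts onto either boundary circle; you spell out how this identifies the boundary inclusion with the fold map, and how the fold map becomes the diagonal under $\Map(-,\ClassifyingB)$. For the constricted annulus, the explicit homotopy $[s,t]\mapsto[s,ut]$ already descends to the quotient and is pointed, so your detour through quotients by contractible subcomplexes is not needed.
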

\begin{proof}
  Since the (constricted) annulus evidently (pointed) deformation retracts onto any one of its boundary circles. 
\end{proof}
\begin{lemma}
  The quotient coprojection $q$ \cref{TheDomainCofiber} of the (constricted) annulus is null homotopic, whence in particular the induced homomorphisms on homotopy groups of mapping spaces are null:
  \begin{equation}
    \label{AnnulusCoprojectionIsNull}
    \begin{aligned}
      \pi_n 
      \bracket({
        q^\ast_{\smash{A^2_{\mathrm{cns}}}}
      })
      & = 0
      \mathrlap{\,,}
      \\
      \pi_n 
      \bracket({
        q^\ast_{A^2}
      })
      & = 0
      \mathrlap{\,.}
    \end{aligned}
  \end{equation}
\end{lemma}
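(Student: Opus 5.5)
The plan is to show that the quotient coprojection $q : \BulkDomain \to \DeepBulkDomain$ of \cref{TheDomainCofiber} is pointed null-homotopic, in both the closed and the constricted case. The claim \cref{AnnulusCoprojectionIsNull} then follows formally. Precomposition $\Map(-,\ClassifyingB)$ sends pointed homotopies to homotopies of maps between mapping spaces. So $q^\ast$ is homotopic to the constant map, and it therefore induces the trivial homomorphism on all $\pi_n$ (and the constant map of pointed sets on $\pi_0$). The same argument applies verbatim to the unpointed version $\UnpointedMap$.

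For the null-homotopy I would use the previous lemma. The boundary inclusion restricted to one boundary circle, $S^1 \hookrightarrow \BulkDomain$, is a (pointed) homotopy equivalence, because the (constricted) annulus deformation retracts onto it. Let $r : \BulkDomain \to S^1$ be a homotopy inverse. Then $\mathrm{id}_{\BulkDomain} \simeq \iota_{S^1} \circ r$, and hence
\[
  q \simeq q \circ \iota_{S^1} \circ r .
\]
But $q \circ \iota_{S^1}$ is constant, because $q$ collapses the entire boundary $\BoundaryDomain \supset S^1$ to the base point of $\DeepBulkDomain$. Hence $q$ is null-homotopic.

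The step needing care is the \emph{pointed} version. For the constricted annulus, the constriction point $\infty$ lies on both boundary circles, so the deformation retraction can be chosen to fix it, and everything is pointed. For the closed annulus, where the domain is taken with a disjoint base point (the unpointed setting $\UnpointedMap$ of \cref{AnnulusBoundaryIncIsCodiagonal}), the disjoint point is fixed automatically. Alternatively, the cofibration $\phi$ is well-pointed, so free homotopies can be made pointed by the homotopy extension property. In all cases the induced maps on homotopy groups vanish, as claimed.
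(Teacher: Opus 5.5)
Your proposal is correct and is essentially the paper's argument: the paper also uses that the (constricted) annulus deformation retracts onto a boundary circle, which $q$ collapses to the basepoint, so $q$ is null-homotopic and hence $q^\ast$ induces trivial maps on homotopy groups. You simply spell out what the paper leaves implicit: the deformation retraction can be chosen basepoint-preserving in both cases, and a pointed null-homotopy of $q$ yields a pointed null-homotopy of $q^\ast$ between the mapping spaces.
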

\begin{proof}
  Since the (constricted) annulus deformation retracts onto any one of its boundary circles, which are identified with the basepoint of the quotient, by construction.
\end{proof}

\begin{SCfigure}[.8][htb]
\caption{\label{SphereWithPolesIdentified}
The quotient $A^2/\partial A^2$ of the closed annulus by its boundary is
homotopy equivalent to the sphere with an arc, $a_{\mathrm{ext}}$, attached to a pair of antipodal points (top map). But this, in turn, is also homotopy-equivalent to the result of contracting an arc, $a_{\mathrm{int}}$, connecting these two points inside the sphere (bottom map): This yields the wedge sum of the sphere with a circle (\cref{QuotientOfClsdAnnulusByBndry}).
}

\adjustbox{
  raise=-2cm,
  scale=1.13
}{
\begin{tikzpicture}
  \node at (0,0) {
  \includegraphics[width=5cm]{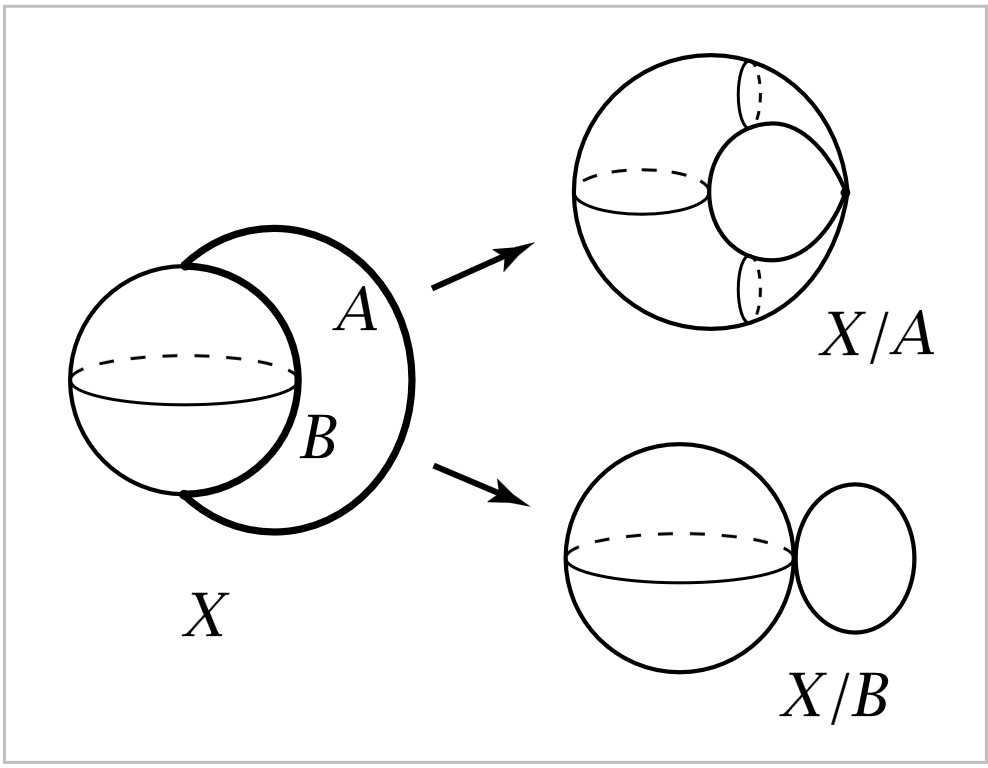}
  };
  \node[
    scale=.5,
    rotate=-40
  ] at (-1.87,.86)
  {
    \color{purple}
    \hspace{-20pt}bndry cmpnt
  };
  \begin{scope}[
    yscale=-1
  ]
  \node[
    scale=.5,
    rotate=+40
  ] at (-1.85,.85)
  { \color{purple}
    \hspace{-17pt}bndry cmpnt
  };
  \end{scope}
  \node[
    scale=.8
  ] 
    at (-2.15,-.53)
  {
  };
  \node[
    scale=.8
  ] 
    at (.9,.08)
  {$A^2/\partial A^2\;
  \raisebox{3pt}{\rotatebox[origin=c]{10}{$\simeq$}}$};
  \node[scale=.8] at (.7,-1.7)
  {
    $S^2 \vee S^1
   \;
  \adjustbox{rotate=10,raise=2pt}{$\simeq$}\,$
  };
  \node[
    rotate=-20,
    scale=.8
  ]
    at (-.2,-.6)
    {$\sim$};
  \node[
    rotate=+20,
    scale=.8
  ]
    at (-.18,+.66)
    {$\sim$};
  \node[
    scale=.7
  ]
  at (1.98,.97) 
  {$\infty$};
  \begin{scope}[
    shift={(-1.47,.57)}
  ]
  \draw[purple]
    (180-5:.2 and .1) arc
    (180-5:360+5:.11 and .05);
  \end{scope}
  \begin{scope}[
    yscale=-1,
    shift={(-1.47,.52)},
  ]
  \draw[purple]
    (180-5:.2 and .1) arc
    (180-5:360+5:.11 and .05);
  \end{scope}

  \draw[
    draw=white,
    fill=white
  ] 
    (-.7,.35) circle
    (.16);
  \node[scale=.7] at (-.7,.35) 
    {$a_{\mathrm{ext}}$};

  \draw[
    draw=white,
    fill=white  
  ]
    (-.86,.-.27) circle
    (.16);
  \node[scale=.7] at 
    (-.84,.-.27)
    {$a_{\mathrm{int}}$};

  \draw[
    draw=white,
    fill=white
  ]
    (2.15,.22) circle
    (.16);

  \node[scale=.7] at (2.2,.2)
    {$a_{\mathrm{ext}}$};

  \draw[
    draw=white,
    fill=white
  ]
    (1.95,-1.58) circle
    (.16);
  \node[scale=.7] 
    at (1.99,-1.64) 
     {$a_{\mathrm{int}}$};

  \node[scale=.8] at (-.55,-1.15) {
    $=\! S^2 \cup a_{\mathrm{ext}}$
  };

\end{tikzpicture}
  }
  \hspace{-10pt}
  \adjustbox{
    rotate=-90,
    scale=.7
  }{
    \color{gray}
    \clap{
    graphics adapted from 
    \cite[p 11]{Hatcher2002}
    }
  }

\end{SCfigure}

\begin{lemma}[Deep bulk domain of closed annulus]
  \label[lemma]{QuotientOfClsdAnnulusByBndry}
  The quotient of the boundary inclusion of the closed annulus \textup{(\cref{SpatialDomainGraphics})} is homotopy-equivalent to the wedge sum of the 2-sphere with a circle:
\begin{equation}
  \label{DeepBulkOfClosedAnnulus}
  \begin{aligned}
  A^2/\partial A^2
  & \simeq
  A^2/\bracket({
    S^1 \sqcup S^1
  })
  \simeq
  S^2 / S^0
  \\
  & \sim
  S^2 \vee S^1
  \,.
  \end{aligned}
\end{equation}
\end{lemma}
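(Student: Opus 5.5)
We prove the claim in three steps, following the chain of equivalences in \cref{DeepBulkOfClosedAnnulus} and the picture in \cref{SphereWithPolesIdentified}.

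\textbf{Step 1: from the annulus to the sphere with two points identified.} Write the closed annulus as $A^2 \cong S^1 \times [0,1]$, with boundary $\partial A^2 = S^1 \times \{0,1\} \cong S^1 \sqcup S^1$. Collapsing each boundary circle to its own point gives the space obtained from $S^1\times[0,1]$ by collapsing the two end circles separately. This is the unreduced suspension of $S^1$, which is $S^2$, with the two collapsed circles becoming the poles $S^0 \subset S^2$. Identifying these two cone points with each other then yields $A^2/\partial A^2 \cong S^2/S^0$. This is the first line of \cref{DeepBulkOfClosedAnnulus}, and it is a homeomorphism rather than just a homotopy equivalence.

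\textbf{Step 2: from $S^2/S^0$ to a sphere with an external arc.} Since $(S^2,S^0)$ is a CW pair, the quotient $S^2/S^0$ is homotopy equivalent to the mapping cone of $S^0 \hookrightarrow S^2$. Here the cone on $S^0$ is an arc, so the mapping cone is $S^2 \cup a_{\mathrm{ext}}$, the sphere with an arc attached at two antipodal points. Equivalently: attaching the contractible arc $a_{\mathrm{ext}}$ and then collapsing it does not change the homotopy type, by the standard fact that collapsing a contractible subcomplex is a homotopy equivalence.

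\textbf{Step 3: from the sphere with an arc to $S^2\vee S^1$.} Inside $S^2$, choose a meridian arc $a_{\mathrm{int}}$ joining the two poles. It is a contractible subcomplex of $S^2\cup a_{\mathrm{ext}}$, so collapsing it is again a homotopy equivalence. After collapsing, $S^2/a_{\mathrm{int}} \cong S^2$. The arc $a_{\mathrm{ext}}$ now has both endpoints at the single collapsed point, so it becomes a circle wedged on there. The result is $S^2 \vee S^1$.

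For use in the pointed setting, the basepoint should be taken as the image of $\partial A^2$. This point lies on the wedge point, so all the equivalences above are pointed.

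\textbf{Main obstacle.} The only step needing care is invoking the correct CW-collapse lemma (\cite[Prop.~0.17]{Hatcher2002}): it requires checking that $a_{\mathrm{ext}}$ and $a_{\mathrm{int}}$ are genuine contractible subcomplexes for a suitable cell structure. For example, take $0$-cells at the two poles, $1$-cells $a_{\mathrm{int}}$ and $a_{\mathrm{ext}}$, and two $2$-cells for $S^2$ cut along the meridian. As a consistency check, the Euler characteristic is $\chi(A^2/\partial A^2)=\chi(A^2)-\chi(\partial A^2)+1 = 0-0+1 = 1 = \chi(S^2\vee S^1)$.
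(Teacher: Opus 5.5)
Your proof is correct and is essentially the paper's argument, which is Hatcher's Example 0.8: identify $A^2/\partial A^2 \cong S^2/S^0 \cong (S^2\cup a_{\mathrm{ext}})/a_{\mathrm{ext}}$, note that collapsing either of the contractible subcomplexes $a_{\mathrm{ext}}$ or $a_{\mathrm{int}}$ of $S^2\cup a_{\mathrm{ext}}$ is a homotopy equivalence, and observe that the collapse of $a_{\mathrm{int}}$ is $S^2\vee S^1$. One small slip in your explicit cell structure: cutting $S^2$ along the single meridian arc $a_{\mathrm{int}}$ leaves one open disk, so there should be a single $2$-cell (as in Hatcher), not two; the cells you listed would give $\chi = 2 \neq 1 = \chi(S^2\cup a_{\mathrm{ext}})$.
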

\begin{proof}
  This follows, as indicated in \cref{SphereWithPolesIdentified}, using that the quotient of a CW-complex by a contractible subcomplex, such as $a_{\mathrm{int}}, a_{\mathrm{ext}} \subset S^2 \cup a_{\mathrm{ext}}$, is a homotopy equivalence (by {\parencites[Ex. 0.8]{Hatcher2002}, cf. 
  \cite[Lem. 3.46]{SS25-FQH}}):
  \begin{equation}
    \label{HomotopyEquivalenceFromQuotientOfAnnulusToSphereWedgeCircle}
    \begin{tikzcd}
      A^2 / \partial A^2
      \simeq
      \bracket({
        S^2 \cup a_{\mathrm{ext}}
      })/a_{\mathrm{ext}}
      \ar[
        r,
        "{ \sim }"
      ]
      &
      \bracket({    
        S^2 \cup a_{\mathrm{ext}}
      })/a_{\mathrm{int}}      
      \simeq
      S^2 \vee S^1
      \mathrlap{\,.}
    \end{tikzcd}
    \qedhere
  \end{equation}
\end{proof}

Noting that the closed annulus is equivalently the closed cylinder over $S^1$, we consider also the following variant:
\begin{definition}
  \label[definition]{ConstrictedAnnulus}
  The \emph{constricted annulus} (cf. \cref{SpatialDomainGraphics}) is the ``reduced cylinder'' over $S^1$:
  \begin{equation}
    \label{TheConstrictedAnnulus}
    A^2_{\mathrm{cns}}
    :=
    \frac
      { S^1 \times I }
      { \{s_0\} \times I }
    \in \Top
    \mathrlap{\,,}
  \end{equation}
  with boundary
  \begin{equation}
    \label{BoundaryOfConstrictedAnnulus}
    \begin{tikzcd}
      S^1 \vee S^1
      \ar[
        rr,
        "{ \iota_{A^2_{\mathrm{cns}}} }"
      ]
      &&
      A^2_{\mathrm{cns}}
      \mathrlap{\,.}
    \end{tikzcd}
  \end{equation}
\end{definition}

\begin{lemma}
\label[lemma]{ProjectionFromClosedToConstrictedAnnulus}
  The canonical projection, $c$, from the closed annulus with a disjoint basepoint to the constricted annulus \cref{TheConstrictedAnnulus} (hence from the unreduced to the reduced cylinder over $S^1$) induces on quotients (by the boundary inclusions) the projection onto the $S^2$ wedge summand in \cref{DeepBulkOfClosedAnnulus}:
\begin{equation}
  \label{TheProjectionFromClosedToConstrictedAnnulus}
  \begin{tikzcd}[row sep=13pt]
    \bracket({
      S^1 \sqcup S^1
    }) \sqcup \{\infty\}
    \ar[
      r,
      ->>
    ]
    \ar[d, hook]
    &
    S^1 \vee S^1
    \ar[d, hook]
    \\
    A^2 \sqcup \{\infty\}
    \ar[
      r,
      ->>,
      "{ c }"
    ]
    \ar[d, ->>]
    &
    A^2_{\mathrm{cns}}
    \ar[d, ->>]
    \\
    A^2 / \partial A^2
    \ar[
      r,
      ->>,
      "{ \overline{c} }"
    ]
    &
    S^2
    \\[-8pt]
    S^2 \vee S^1
    \ar[
      ur,
      "{
        (\mathrm{id}, 0)
      }"{sloped, swap}
    ]
    \ar[
      u,
      shorten=-2pt,
      "{ 
        \sim 
      }"{sloped, pos=.35}
    ]
    \mathrlap{\,.}
  \end{tikzcd}
\end{equation}
\end{lemma}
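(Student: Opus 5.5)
The plan is to make everything explicit with cell structures and then track the $2$-cell. I will model the closed annulus as the cylinder $S^1 \times I$, with boundary $S^1 \times \{0,1\}$, and with a disjoint basepoint adjoined. The constricted annulus is $(S^1\times I)/(\{s_0\}\times I)$, and the projection $c$ is the evident quotient map that collapses the segment $\{s_0\}\times I$ together with the disjoint point $\infty$ to the basepoint. Restricted to the boundary, $c$ sends $(S^1 \sqcup S^1)\sqcup\{\infty\}$ onto $S^1\vee S^1$ by identifying the two points $(s_0,0)$ and $(s_0,1)$. This makes the top square commute on the nose. Then $\overline{c}$ is the map induced on cofibers by the universal property of the pushout \cref{TheDomainCofiber}.

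Next I identify the cofibers. On the left, $A^2\sqcup\{\infty\}$ modulo its boundary (with $\infty$) is $(S^1\times I)/(S^1\times\partial I)$, the sphere with its two poles identified. On the right, $A^2_{\mathrm{cns}}/(S^1\vee S^1)$ is $(S^1\times I)/(S^1\times\partial I\cup \{s_0\}\times I)$. This is the reduced suspension $\Sigma S^1\simeq S^2$. Under these identifications $\overline{c}$ is the quotient map that further collapses the arc $\{s_0\}\times I$. In the language of \cref{SphereWithPolesIdentified}, this arc joins the two identified poles inside the sphere, so it is exactly the interior arc $a_{\mathrm{int}}$. Hence $\overline{c}$ is the quotient map $S^2/S^0 \to S^2/a_{\mathrm{int}}$.

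Finally I compose with the equivalence $S^2\vee S^1 \xrightarrow{\sim} A^2/\partial A^2$ from \cref{QuotientOfClsdAnnulusByBndry}, taking it as a homotopy inverse of the map \cref{HomotopyEquivalenceFromQuotientOfAnnulusToSphereWedgeCircle}. It then suffices to check two things about the composite $S^2\vee S^1\to S^2$.

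\textbf{The $S^1$ summand.} This circle is the image of $a_{\mathrm{ext}}\cup a_{\mathrm{int}}$, a loop through the identified poles. Under $\overline{c}$ the arc $a_{\mathrm{int}}$ is collapsed, and the external arc corresponds to the basepoint collapsed in forming $A^2/\partial A^2$. So the loop maps to a loop in $\{s_0\}\times I$ together with the basepoint, which is collapsed. The restriction is therefore constant, and in any case null since $\pi_1(S^2)=0$. This gives the $0$ in $(\mathrm{id},0)$.

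\textbf{The $S^2$ summand.} Here the composite is a map $S^2\to S^2$, and I only need its degree to be $\pm1$ (orientations can be chosen so that it is $+1$). The cleanest check is on $H_2$. The $2$-cell of $S^1\times I$ relative to its boundary maps homeomorphically onto the interior of the $2$-cell of $\Sigma S^1$, because $c$ is injective off the collapsed set. The equivalence of \cref{QuotientOfClsdAnnulusByBndry} likewise carries the $S^2$ wedge summand to this same $2$-cell, since it only collapses contractible arcs. So the composite has degree one, and by Hopf's degree theorem it is homotopic to the identity.

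The main obstacle is bookkeeping: one must make sure the specific equivalence of \cref{QuotientOfClsdAnnulusByBndry} is the one whose arcs match $a_{\mathrm{int}}=\{s_0\}\times I$. The lemma only asserts the diagram up to homotopy, and any choice of $a_{\mathrm{int}}$ disjoint from the interior of the $2$-cell works. So I expect to fix the choice by declaring $a_{\mathrm{int}}:=\{s_0\}\times I$, which costs nothing.
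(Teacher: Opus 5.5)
Your proof is correct. Its core is the same as the paper's: both hinge on recognizing $\overline{c}$ as the further quotient of $A^2/\partial A^2 \cong (S^2\cup a_{\mathrm{ext}})/a_{\mathrm{ext}}$ by the interior arc $a_{\mathrm{int}} = \{s_0\}\times I$.

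Where you differ is in how you close the argument. The paper closes by a strict commutativity argument. With $X := S^2\cup a_{\mathrm{ext}}$, both $X/a_{\mathrm{ext}}$ and $X/a_{\mathrm{int}}$ receive the quotient maps from $X$, which are equivalences. Further collapsing the other arc gives the same quotient $X/(a_{\mathrm{ext}}\cup a_{\mathrm{int}})\cong S^2$, since the order of collapsing is irrelevant. On $X/a_{\mathrm{int}}\cong S^2\vee S^1$, that further collapse is literally the map killing the circle summand, i.e.\ $(\mathrm{id},0)$. So the paper identifies the composite at the level of quotient maps out of $X$, for the specific equivalence of the preceding lemma.

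You instead use that pointed homotopy classes $S^2\vee S^1\to S^2$ form $\pi_2(S^2)\times\pi_1(S^2)\cong\mathbb{Z}$, so only the degree on the $S^2$ summand matters. This makes the conclusion hold, up to the conventional sign, for \emph{any} equivalence $S^2\vee S^1\simeq A^2/\partial A^2$. Consequently the ``bookkeeping obstacle'' you flag at the end does not actually arise.

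Two simplifications are available within your route. First, your cell-tracking on the $S^2$ summand can be replaced by a homological statement. The map $\overline{c}$ is an isomorphism on $H_2$, because both quotients $S^2\to S^2/S^0$ and $S^2\to S^2/a_{\mathrm{int}}$ are. Any homotopy equivalence is an isomorphism on $H_2$, and $H_2(S^2\vee S^1)$ is carried entirely by the $S^2$ summand. Second, your explicit tracking of the loop $a_{\mathrm{ext}}\cup a_{\mathrm{int}}$ is unnecessary, since $\pi_1(S^2)=0$ already forces the circle summand to be null.
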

\begin{proof}
In view of \cref{SphereWithPolesIdentified}, consider the following diagram
\begin{equation}
  \begin{tikzcd}[column sep=large]
    \bracket({
      S^2 \cup a_{\mathrm{ext}}
    })
      \big/ 
    a_{\mathrm{ext}}
    \ar[
      rr,
      uphordown,
      "{ \overline{c} }"      
    ]
    \ar[
      r,
      ->>
    ]
    &
    \bracket({
      \bracket({
        S^2 \cup a_{\mathrm{ext}}
      })
        \big/
      a_{\mathrm{int}}
    })
      \big/ 
    a_{\mathrm{ext}}
    \ar[
      r,
      "{ \simeq }"
    ]
    &
    S^2
    \\
    \bracket({
      S^2 \cup a_{\mathrm{ext}}
    }) 
      \big/
    a_{\mathrm{int}}
    \ar[
      u,
      "{ \sim }"{sloped, pos=.4},
      "{
        \cref{HomotopyEquivalenceFromQuotientOfAnnulusToSphereWedgeCircle}
      }"{swap, pos=.4}
    ]
    \ar[
      ur,
      ->>
    ]
    \ar[
      r,
      "{ \simeq }"
    ]
    &
    S^2 \vee S^1
    \mathrlap{\,.}
    \ar[
      ur,
      "{ 
        (\mathrm{id}, 0) 
      }"{sloped, swap}
    ]
    &
  \end{tikzcd}
\end{equation}
Here, the right square commutes by construction, and the left triangle commutes because the order of taking the quotients by $a_{\mathrm{ext}}$ and $a_{\mathrm{int}}$ is irrelevant. Therefore, the whole diagram commutes and the claim follows.
\end{proof}

\subsubsection
{Examples of twisted relative cohomology}
\label{ExamplesOfTwistedRelativeCohomology}

Central to our discussion in \cref{OnDeterminingTheNearBdrClassfFib} is this example:
\begin{proposition}[Mapping \textbf{disk} boundary inclusion into Hopf fibration]
\label[proposition]{MappingBoundaryInclusionOfDiskIntoComplexHopfFibration}
In the case where 
\begin{enumerate}

\item  $\phi \defneq \big({\iota_{D^2} : \inlinetikzcd{S^1 \ar[r, hook] \& D^2}}\big) \sqcup \mathrm{id}_{\{\infty\}}$ is the boundary inclusion of the closed disk $D^2$,

\item  $\wp \defneq h_{\mathbb{C}} : \inlinetikzcd{ S^3 \ar[r] \& S^2 }$ is the complex Hopf fibration \cref{TheComplexHopfFibration}
\end{enumerate}
so that compatible pairs of classifying maps are unpointed diagrams of the form 
\[
  \begin{tikzcd}[row sep=small]
    S^1
    \ar[
      d,
      hook,
      "{ \iota_{D^2} }"{swap}
    ]
    \ar[
      r,
      dashed
    ]
    &
    S^3
    \ar[
      d,
      "{ h_{\mathbb{C}} \,}"{swap},
    ]
    \\
    D^2
    \ar[
      r,
      dashed
    ]
    &
    S^2
    \mathrlap{\,,}
  \end{tikzcd}
\]
the low-degree  homotopy of the $\iota_{D^2}$-relative $h_{\mathbb{C}}$-twisted mapping space \cref{TheRelativeMappingSpace} is:
\begin{subequations}
  \label{PinMapIotaD2hC}
  \begin{align}
    \label{pi0MapIotaD2HC}
    \pi_0\, 
    \UnpointedMap\bracket({
      \iota_{D^2}
      ,
      h_{\mathbb{C}}
    })
    & \simeq
    \mathbb{Z}
    \mathrlap{\,,}
    \\
    \label{Pi1MapIotaD2hC}
    \pi_1\, 
    \UnpointedMap\bracket({
      \iota_{D^2}
      ,
      h_{\mathbb{C}}
    })
    & \simeq
    1
    \mathrlap{\,,}
    \\
    \label{Pi2MapIotaD2hC}
    \pi_2\, 
    \UnpointedMap\bracket({
      \iota_{D^2}
      ,
      h_{\mathbb{C}}
    })
    & \simeq
    0
    \mathrlap{\,,}
  \end{align}
\end{subequations}
and the unwinding homomorphism \cref{TheUnwindingLES} is an isomorphism:
\begin{equation}
  \label{UnwindingIsoForIotaD2HC}
  \begin{tikzcd}[row sep=small]
    &&
    \overbrace{
    \pi_2 \, \UnpointedMap\bracket({
      S^1, S^3
    })
    }^{ \smash[t]{\mathbb{Z}} }
    \ar[
      dll,
      snake left,
      "{ \Unwinding_1 }"{swap},
      "{ \sim }"
    ]
    \\
    \underbrace{
    \pi_1 \, \Map\bracket({
      D^2/\partial D^2
      ,
      S^2
    })
    \mathrlap{\,.}
    }_{
      \mathbb{Z}
    }
  \end{tikzcd}
\end{equation}
as is the rewinding homomorphism \cref{TheRewindingLES}:
\begin{equation}
  \label{RewindingIsoForIotaD2HC}
  \begin{tikzcd}[row sep=small]
    &&
    \overset{
      \mathbb{Z}
    }{
    \overbrace{
    \pi_2 \, \UnpointedMap\bracket({D^2, S^2})
    }}
    \ar[
      dll,
      snake left,
      "{
        \Rewinding_1
      }"{swap},
      "{ \sim }"
    ]
    \\
    \underbrace{
      \pi_1 \UnpointedMap\bracket({S^1, S^1})
      \mathrlap{\,.}
    }_{
      \mathbb{Z}
    }
  \end{tikzcd}
\end{equation}
\end{proposition}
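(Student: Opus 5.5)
Feed the data of this case into the two long exact sequences of \cref{TheLongExactSequences}, using the base-point-disjoint domain $\phi = \iota_{D^2}\sqcup \mathrm{id}_{\{\infty\}}$, so that pointed mapping spaces out of $S^1_{+}$, $D^2_{+}$ are the unpointed ones. The basic inputs are:
\begin{itemize}
\item the deep bulk is $D^2/\partial D^2 \simeq S^2$, so $\pi_n \Map(S^2,S^2)\simeq \pi_{n+2}(S^2)$, which is $\mathbb{Z}$ for $n=0,1$ and $\mathbb{Z}/2$ for $n=2$;
\item $D^2$ is contractible, so $\UnpointedMap(D^2,S^2)\simeq S^2$;
\item the fiber is $\ClassifyingF = S^1$.
\end{itemize}

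The near-boundary space is $\UnpointedMap(S^1,S^3) = L S^3 \simeq S^3 \times \Omega S^3$, using that $S^3$ is a group. Since $S^3$ is 2-connected, $\pi_0 L S^3 = 0$ and $\pi_1 L S^3 = 0$. We also have $\pi_2 L S^3 \simeq \pi_2 S^3 \oplus \pi_3 S^3 \simeq \mathbb{Z}$ and $\pi_3 L S^3 \simeq \pi_3 S^3\oplus \pi_4 S^3 \simeq \mathbb{Z}\oplus\mathbb{Z}/2$.

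\textbf{Unwinding sequence \cref{HomotopyLESForRelativeMap}.}
\begin{itemize}
\item The tail $\pi_1 L S^3 =0 \to \pi_0\Map(S^2,S^2)\to \pi_0 \Map(\phi,h_{\mathbb{C}}) \to \pi_0 L S^3 = 0$ gives \cref{pi0MapIotaD2HC}. Exactness at the left end in $\pi_0$ holds because the action of $\pi_1 LS^3$ is trivial.
\item For $\pi_1$ and the isomorphism \cref{UnwindingIsoForIotaD2HC}, consider
\[
\pi_2 LS^3 \xrightarrow{\Unwinding_1} \pi_1\Map(S^2,S^2) \to \pi_1\Map(\phi,h_{\mathbb{C}}) \to \pi_1 LS^3 = 0 .
\]
It then suffices to show that $\Unwinding_1$ is an isomorphism $\mathbb{Z}\to\mathbb{Z}$. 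This gives $\pi_1=1$, and with injectivity also controls $\pi_2$.
\end{itemize}
To identify $\Unwinding_1$, I would use the morphism of sequences \cref{MapFromHomotopyLESOfRelMap}. It factors $\Unwinding_1$ as
\[
\pi_2 LS^3 \xrightarrow{(h_{\mathbb{C}})_\ast} \pi_2 LS^2 \xrightarrow{\partial^{\phi^\ast}_1} \pi_1 \Map(S^2,S^2).
\]
Both loop spaces split compatibly via the based/free loop fibrations $\Omega X\to LX\to X$. The bottom row is the sequence for $\Map(S^2,S^2)\to \UnpointedMap(D^2,S^2)\simeq S^2 \to LS^2$, so $\partial^{\phi^\ast}_1$ restricted to the $\pi_2\Omega S^2=\pi_3 S^2$ summand is the identification $\pi_3 S^2 \simeq \pi_1\Omega^2 S^2$. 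Hence $\Unwinding_1$ becomes, up to sign, the map $(h_{\mathbb{C}})_\ast:\pi_3 S^3\to\pi_3S^2$, which is an isomorphism by \cref{ComplexHopfFibrationOnPi3}.

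The main obstacle is checking this compatibility of splittings and connecting maps carefully, in particular that the $\pi_2 S^3=0$ summand causes no trouble.

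\textbf{$\pi_2$.} Consider
\[
\pi_3 LS^3 \xrightarrow{\Unwinding_2} \pi_2\Map(S^2,S^2)=\mathbb{Z}/2 \to \pi_2\Map(\phi,h_{\mathbb{C}}) \to \pi_2 LS^3 \xrightarrow{\Unwinding_1,\ \sim} \pi_1\Map(S^2,S^2).
\]
Because $\Unwinding_1$ is injective, $\pi_2\Map(\phi,h_{\mathbb{C}})$ is the cokernel of $\Unwinding_2$. By the same factorization, $\Unwinding_2$ restricted to the $\pi_4 S^3$ summand is $(h_{\mathbb{C}})_\ast:\pi_4S^3\to\pi_4S^2$, which is an isomorphism $\mathbb{Z}/2\to\mathbb{Z}/2$ since $h_{\mathbb{C}}$ has fiber $S^1$. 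Hence $\Unwinding_2$ is surjective, giving \cref{Pi2MapIotaD2hC}.

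\textbf{Rewinding sequence \cref{SecondHomotopyLESForRelativeMap}.} Consider
\[
\pi_2 \UnpointedMap(D^2,S^2) \xrightarrow{\Rewinding_1} \pi_1 LS^1 \to \pi_1\Map(\phi,h_{\mathbb{C}})=1 \to \pi_1 \UnpointedMap(D^2,S^2)=0 .
\]
This gives surjectivity of $\Rewinding_1$. Injectivity follows from $\pi_2\Map(\phi,h_{\mathbb{C}})=0$, already proven.

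Both groups are $\mathbb{Z}$. Explicitly $\pi_1 LS^1 \simeq \pi_1 S^1\oplus\pi_1\Omega S^1 = \mathbb{Z}$, and via \cref{MapFromSecondHomotopyLESOfRelMap} the map $\Rewinding_1$ is the Hopf connecting map $\pi_2S^2\to\pi_1S^1$ evaluated on constant loops.
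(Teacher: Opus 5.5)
Your proposal is correct and follows essentially the same route as the paper: the same unwinding and rewinding long exact sequences, and the same factorization $\Unwinding_n = \partial^{\phi^\ast}_n \circ (h_{\mathbb{C}})_\ast$ via \cref{MapFromHomotopyLESOfRelMap} together with the based/free loop splitting and $\pi_2 S^3 = 0$. The differences are only in bookkeeping. You identify $\partial^{\phi^\ast}_n$ on the based-loop summand by naturality against the path-loop fibration, where the paper reads it off from exactness of the $q^\ast$-sequence \cref{InvokingTheTheQStarSequence}. You also obtain the rewinding isomorphism from $\pi_1 = 1$ together with $\pi_2 = 0$, rather than from surjectivity of a map $\mathbb{Z} \to \mathbb{Z}$.
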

\begin{proof}
  The deep bulk quotient domain is evidently
  $
    \DeepBulkDomain 
    \defneq 
    D^2 / \partial D^2
      \simeq 
    S^2
  $.
Using this in \eqref{HomotopyLESForRelativeMap} gives for $\pi_0$ the sequence
\begin{equation}
  \begin{tikzcd}[column sep=6pt]
    && 
    \pi_1\UnpointedMap\bracket({
      S^1, S^3
    })
    \ar[
      dll,
      snake left,
      "{ 
        \Unwinding_0 
      }"{description}
    ]
    \\
    \pi_0\Map\bracket({
      S^2, S^2
    })
    \ar[r, shorten=-2pt]
    &
    \pi_0 \Map\bracket({
      \phi, \wp
    })
    \ar[r, shorten=-2pt]
    &
    \pi_0\UnpointedMap\bracket({
      S^1, S^3
    })
  \end{tikzcd}
  \;\,=\,\;
  \begin{tikzcd}[
    column sep=6pt
  ]
    && 
    1
    \ar[
      dll,
      snake left,
      "{
        \Unwinding_0
      }"{description}
    ]
    \\
    \mathbb{Z}
    \ar[
      r,
      shorten=-2pt
    ]
    &
    \pi_0\, \UnpointedMap\bracket({
      \iota_{D^2},
      h_{\mathbb{C}}
    })
    \ar[
      r,
      shorten=-2pt
    ]
    &
    \ast
    \mathrlap{\,,}
  \end{tikzcd}
\end{equation}
where in the bottom left we used \cref{HopfDegreeTheorem}.
This implies the claim \cref{pi0MapIotaD2HC} by exactness. 

The analogous analysis in degree 1 gives the following exact sequence
\begin{equation}
  \label{TheLESForpi1OfDiskIntoCHopf}
  \begin{tikzcd}[column sep=6pt]
    && 
    \pi_2\UnpointedMap\bracket({
      S^1, S^3
    })
    \ar[
      dll,
      snake left,
      "{ 
        \Unwinding_1 
      }"{description}
    ]
    \\
    \pi_1\Map\bracket({
      S^2, S^2
    })
    \ar[r, shorten=-2pt]
    &
    \pi_1\Map\bracket({
      \phi, \wp
    })
    \ar[r, shorten=-2pt]
    &
    \pi_1\UnpointedMap\bracket({
      S^1, S^3
    })
  \end{tikzcd}
  \;\,=\,\;
  \begin{tikzcd}[column sep=6pt]
    && 
    \mathbb{Z}
    \ar[
      dll,
      snake left,
      "{
        \Unwinding_1
      }"{description}
    ]
    \\
    \mathbb{Z}
    \ar[r, shorten=-2pt]
    &
    \pi_1 \UnpointedMap\bracket({
      \iota_{D^2},
      h_{\mathbb{C}}
    })
    \ar[r, shorten=-2pt]
    &
    1
    \mathrlap{\,,}
  \end{tikzcd}
\end{equation}
where in the bottom left we used \cref{IterationOfHomotopyGroups,ComplexHopfFibrationOnPi3}, and on the right we used \cref{HomotopyGroupsOfFreeLoopSpace}.

We claim now that the connecting homomorphism $\Unwinding_1$ in \cref{TheLESForpi1OfDiskIntoCHopf} is surjective. 
To see this, we first observe with \cref{MapFromHomotopyLESOfRelMap} that it factors as
\begin{equation}
  \label{AFactorizationOfUnwinding}
  \begin{tikzcd}[
    row sep=6pt, column sep=large
  ]
    \pi_2\, \UnpointedMap\bracket({
      S^1, S^3
    })
    \ar[
      rr,
      uphordown,
      "{ \Unwinding_1 }"{description}
    ]
    \ar[
      r,
      "{ (h_{\mathbb{C}})_\ast }"
    ]
    \ar[
      d,
      equals
    ]
    &
    \pi_2\, \UnpointedMap\bracket({
      S^1, S^2
    })
    \ar[
      r,
      "{  \partial^{\iota_{\smash{D^2}}^\ast}_1 }"
    ]
    \ar[
      d,
      equals
    ]
    &
    \pi_1\, \Map\bracket({
      S^2, S^2
    })
    \\
    \pi_2\bracket({S^3})
    \times
    \pi_2\bracket({\Omega S^3})
    \ar[
      d,
      equals
    ]
    \ar[
      r,
      "{
        (h_{\mathbb{C}})_\ast
      }"
    ]
    &
    \pi_2\bracket({S^2})
    \times
    \pi_2\bracket({\Omega S^2}) 
    \ar[
      d,
      equals
    ]
    \\
    0 
    \oplus
    \mathbb{Z}
    \ar[
      r,
      "{ (0,\mathrm{id}) }"
    ]
    &
    \mathbb{Z} \oplus \mathbb{Z}
    \mathrlap{\,,}
  \end{tikzcd}
\end{equation}
where in the bottom left we used \cref{HomotopyGroupsOfFreeLoopSpace,ComplexHopfFibrationOnPi3}.
Secondly, we notice that the map $\partial^{\iota_{D^2}^\ast}_1$ in \cref{AFactorizationOfUnwinding} is surjective, 
with kernel the first $\mathbb{Z}$-summand,
by exactness of the following homotopy LES (from \cref{MappingFiberSequenceOfCWInclusion}):
\begin{equation}
  \label{InvokingTheTheQStarSequence}
  \begin{tikzcd}[
    row sep=5pt, column sep=32pt
  ]
    \pi_2\, \UnpointedMap\bracket({
      D^2, S^2
    })
    \ar[
      d, 
      equals
    ]
    \ar[
      r,
      "{
       \pi_2(\iota^\ast_{\smash{D^2}})
      }"
    ]
    &
    \pi_2\, \UnpointedMap\bracket({
      S^1, S^2
    })
    \ar[
      d, 
      equals
    ]
    \ar[
      r,
      ->>,
      "{  
        \partial^{\iota^\ast_{\smash{D^2}}}_1 
      }"
    ]
    &
    \pi_1\, \Map\bracket({
      S^2, S^2
    })
    \ar[
      d,
      equals
    ]
    \ar[
      r,
      "{ q^\ast }"
    ]
    &
    \pi_1\, \UnpointedMap\bracket({
      D^2, S^2
    })    
    \ar[
      d,
      equals
    ]
    \\
    \mathbb{Z}
    \ar[
      r,
      "{ (\mathrm{id},0) }"
    ]
    &
    \mathbb{Z} \oplus \mathbb{Z}
    \ar[r, ->>]
    &
    \mathbb{Z}
    \ar[r]
    &
    1
    \mathrlap{\,.}
  \end{tikzcd}
\end{equation}
In combination this shows that $\Unwinding_1$ is surjective. Together with the exactness of  \cref{TheLESForpi1OfDiskIntoCHopf} this yields the second claim \cref{Pi1MapIotaD2hC}.
By exactness of the rewinding LES \cref{TheRewindingLES} this implies that also $\Rewinding_1$ is surjective.
But surjective homomorphisms of the form $\inlinetikzcd{\mathbb{Z} \ar[r] \& \mathbb{Z}}$ are necessarily isomorphisms, whence we also have shown the third claim \cref{UnwindingIsoForIotaD2HC} and fourth claim \cref{RewindingIsoForIotaD2HC}.

Finally, the analogous analysis in degree 2 gives the following exact sequence
\begin{equation}
  \label{TheLESForpi2OfDiskIntoCHopf}
  \begin{tikzcd}[
    column sep=4pt
  ]
    && 
    \pi_3\UnpointedMap\bracket({
      S^1, S^3
    })
    \ar[
      dll,
      snake left,
      "{ 
        \Unwinding_2 
      }"{description}
    ]
    \\
    \pi_2\Map\bracket({
      S^2, S^2
    })
    \ar[r, shorten=-2pt]
    &
    \pi_2\Map\bracket({
      \phi, \wp
    })
    \ar[r, shorten=-2pt]
    &
    \pi_2\UnpointedMap\bracket({
      S^1, S^3
    })
    \ar[
      dll,
      snake left,
      "{ 
        \Unwinding_1 
      }",
      "{ \sim }"{swap}
    ]
    \\
    \pi_1\Map\bracket({
      S^2, S^2
    })
  \end{tikzcd}
  \;\,=\,\;
  \begin{tikzcd}[
    column sep=4pt
  ]
    &
    &[-15pt] 
    \mathbb{Z}
    \oplus
    \mathbb{Z}_{/2}
    \ar[
      dll,
      snake left,
      "{
        \Unwinding_2
      }"{description}
    ]
    \\
    \mathbb{Z}_{/2}
    \ar[r, shorten=-2pt]
    &
    \pi_2 \UnpointedMap\bracket({
      \iota_{D^2},
      h_{\mathbb{C}}
    })
    \ar[r, shorten=-2pt]
    &
    \mathbb{Z}
    \ar[
      dll,
      snake left,
      "{ \sim }"{swap}
    ]
    \\
    \mathbb{Z}
    \mathrlap{\,,}
  \end{tikzcd}
\end{equation}
where for the first two groups we used \cref{SecondStableStem} and the isomorphy of $\Unwinding_1$ is the previous result \cref{UnwindingIsoForIotaD2HC}. 

Therefore, by exactness, to prove the claim \cref{Pi2MapIotaD2hC} it is now sufficient to see that $\Unwinding_2$ is surjective:
This follows by factoring it
via \cref{MapFromHomotopyLESOfRelMap}:
\begin{equation}
  \label{AFactorizationOfUnwinding2}
  \begin{tikzcd}[
    row sep=6pt, column sep=large
  ]
    \pi_3\, \UnpointedMap\bracket({
      S^1, S^3
    })
    \ar[
      rr,
      uphordown,
      "{ \Unwinding_2 }"{description}
    ]
    \ar[
      r,
      "{ (h_{\mathbb{C}})_\ast }"
    ]
    \ar[
      d,
      equals
    ]
    &
    \pi_3\, \UnpointedMap\bracket({
      S^1, S^2
    })
    \ar[
      r,
      "{  
        \partial^{\iota_{\smash{D^2}}^\ast}_2 
      }"
    ]
    \ar[
      d,
      equals
    ]
    &
    \pi_2\, \Map\bracket({
      S^2, S^2
    })
    \\
    \pi_3\bracket({S^3})
    \times
    \pi_3\bracket({\Omega S^3})
    \ar[
      d,
      equals
    ]
    \ar[
      r,
      "{
        (h_{\mathbb{C}})_\ast
      }"
    ]
    &
    \pi_3\bracket({S^2})
    \times
    \pi_3\bracket({\Omega S^2}) 
    \ar[
      d,
      equals
    ]
    \\
    \mathbb{Z} 
    \oplus
    \mathbb{Z}_{/2}
    \ar[
      r,
      "{ 
        (\mathrm{id},\mathrm{id}) 
      }"
    ]
    &
    \mathbb{Z} 
      \oplus 
    \mathbb{Z}_{/2}
    \mathrlap{\,,}
  \end{tikzcd}
\end{equation}
where on the left we used \cref{ComplexHopfFibrationOnPi3},
and then using the
exactness of the following homotopy LES (from \cref{MappingFiberSequenceOfCWInclusion}):
\begin{equation}
  \label{InvokingTheTheQStarSequence2}
  \begin{tikzcd}[
    row sep=5pt, column sep=32pt
  ]
    \pi_3\, \UnpointedMap\bracket({
      D^2, S^2
    })
    \ar[
      d, 
      equals
    ]
    \ar[
      r,
      "{
       \pi_3(\iota^\ast_{\smash{D^2}})
      }"
    ]
    &
    \pi_3\, \UnpointedMap\bracket({
      S^1, S^2
    })
    \ar[
      d, 
      equals
    ]
    \ar[
      r,
      ->>,
      "{  
        \partial^{\iota^\ast_{\smash{D^2}}}_2 
      }"
    ]
    &
    \pi_2\, \Map\bracket({
      S^2, S^2
    })
    \ar[
      d,
      equals
    ]
    \ar[
      r,
      "{ q^\ast }"
    ]
    &
    \pi_2\, \UnpointedMap\bracket({
      D^2, S^2
    })    
    \ar[
      d,
      equals
    ]
    \\
    \mathbb{Z}
    \ar[
      r,
      "{ (\mathrm{id},0) }"
    ]
    &
    \mathbb{Z} 
      \oplus 
    \mathbb{Z}_{/2}
    \ar[r, ->>]
    &
    \mathbb{Z}_{/2}
    \ar[r]
    &
    0
    \mathrlap{\,,}
  \end{tikzcd}
\end{equation}
which jointly shows that $\Unwinding_2$ is projection onto the $\mathbb{Z}_{/2}$-factor. 
\end{proof}

In preparation of discussing the closed annulus domain in \cref{MappingBoundaryInclusionOfClosedAnnulusIntoComplexHopfFibration}, we now first consider the following statement:
\begin{proposition}[Mapping \textbf{constricted annulus} boundary inclusion into Hopf fibration]
\label[proposition]{MappingBoundaryInclusionOfConstrictedAnnulusIntoComplexHopfFibration}
In the case where 
\begin{enumerate}

\item  $\phi \defneq \big(\inlinetikzcd{ S^1 \vee S^1 \ar[r, "{ \iota_{A^2_{\mathrm{cns}}} }"] \&  A^2_{\mathrm{cns}}  }\big) $ is the boundary inclusion \cref{BoundaryOfConstrictedAnnulus} of the constricted annulus \textup{(\cref{ConstrictedAnnulus})},

\item  $\wp \defneq h_{\mathbb{C}} : \inlinetikzcd{ S^3 \ar[r] \& S^2 }$ is the complex Hopf fibration \cref{TheComplexHopfFibration},
\end{enumerate}
we have
\begin{subequations}
\begin{align}
  \label{pi0MapIotaA2CnsHC}
  \pi_0
  \Map\bracket({
    \iota_{A^2_{\mathrm{cns}}},
    h_{\mathbb{C}}
  })
  &
  \simeq 
  \mathbb{Z}
  \\
  \label{pi1MapIotaA2CnsHC}
  \pi_1
  \Map\bracket({
    \iota_{A^2_{\mathrm{cns}}},
    h_{\mathbb{C}}
  })
  & 
  \simeq
  1
  \mathrlap{\,,}
  \\
  \label{pi2MapIotaA2CnsHC}
  \pi_2
  \Map\bracket({
    \iota_{A^2_{\mathrm{cns}}},
    h_{\mathbb{C}}
  })
  & 
  \simeq
  \mathbb{Z}
  \mathrlap{\,,}
\end{align}
\end{subequations}
and the unwinding homomorphism \cref{TheUnwindingLES} is:
\begin{equation}
  \label{UnwindingIsoForIotaA2HC}
  \begin{tikzcd}[row sep=small,
   column sep=-10pt
  ]
    &&
    \pi_2 \, \Map\bracket({
      \partial A^2_{\mathrm{cns}} 
      , 
      S^3
    })
    \ar[
      dll,
      snake left,
      "{
        \Unwinding_1
      }"{description}
    ]
    \\
    \pi_1\, 
    \Map\bracket({
      A^2_{\mathrm{cns}}
      /
      \partial A^2_{\mathrm{cns}}, 
      S^2
    })
  \end{tikzcd}
  \;\;=\;\;
  \begin{tikzcd}[row sep=small,
    ampersand replacement=\&
  ]
    \&\& \mathbb{Z}^2
    \ar[
      dll,
      "{
        (
        \begin{matrix}
          +1 & -1
        \end{matrix}
        )
      }"{description}
    ]
    \\
    \mathbb{Z}
  \end{tikzcd}
\end{equation}
while the rewinding homomorphism \cref{TheRewindingLES} is:
\begin{equation}
  \label{RewindingIsoForIotaA2CnsHC}
  \begin{tikzcd}[row sep=small,
    column sep=0pt
  ]
    &&
    \pi_2 \Map\bracket({
      A^2_{\mathrm{cns}}, 
      S^2
    })
    \ar[
      dll,
      snake left,
      "{
        \Rewinding_1
      }"{description},
    ]
    \\
    \pi_1\, \Map\bracket({
      \partial A^2_{\mathrm{cns}}
      , 
      S^1
    })
    \mathrlap{\,.}
  \end{tikzcd}
  \;\;
  =
  \;\;
  \begin{tikzcd}[row sep=small]
    && \mathbb{Z}
    \ar[
      dll,
      "{ 0 }"{description}
    ]
    \\
    1
  \end{tikzcd}
\end{equation}
\end{proposition}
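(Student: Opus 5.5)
The plan is to run the unwinding long exact sequence \cref{HomotopyLESForRelativeMap} for $\phi \defneq \iota_{A^2_{\mathrm{cns}}}$ and $\wp \defneq h_{\mathbb{C}}$. Each connecting homomorphism will be computed by factoring it, via the morphism of sequences \cref{MapFromHomotopyLESOfRelMap}, through the cofiber sequence of $\Map\bracket({-,S^2})$. This is the same pattern as in the proof of \cref{MappingBoundaryInclusionOfDiskIntoComplexHopfFibration}.

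\emph{Step 1: the three domain spaces.} By \cref{ProjectionFromClosedToConstrictedAnnulus}, the deep bulk is $A^2_{\mathrm{cns}}/\partial A^2_{\mathrm{cns}} \simeq S^2$; it is the reduced suspension of $S^1$. Hence $\pi_n \Map\bracket({S^2,S^2}) = \pi_{n+2}(S^2)$, which is $\mathbb{Z},\mathbb{Z},\mathbb{Z}_{/2}$ for $n=0,1,2$. The boundary is $S^1\vee S^1$, so
\[
  \Map\bracket({S^1\vee S^1,\ClassifyingA}) \simeq (\Omega \ClassifyingA)^2 ,
\]
and therefore $\pi_n\Map\bracket({S^1\vee S^1,S^3}) = \pi_{n+1}(S^3)^2$, which is $0,0,\mathbb{Z}^2,(\mathbb{Z}_{/2})^2$ for $n=0,1,2,3$. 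Since $A^2_{\mathrm{cns}}$ pointed-deformation-retracts onto a boundary circle, $\Map\bracket({A^2_{\mathrm{cns}},S^2}) \simeq \Omega S^2$, with $\pi_n = \pi_{n+1}(S^2)$.

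\emph{Step 2: degrees 0 and 1.} In degree 0, exactness gives
\[
  \pi_1\Map\bracket({S^1\vee S^1,S^3}) = 0 \to \mathbb{Z} \to \pi_0 \Map\bracket({\phi,\wp}) \to \ast ,
\]
which yields \cref{pi0MapIotaA2CnsHC}. In degree 1 the sequence reads
\[
  \mathbb{Z}^2 \xrightarrow{\;\Unwinding_1\;} \mathbb{Z} \to \pi_1\Map\bracket({\phi,\wp}) \to 0 .
\]
So \cref{pi1MapIotaA2CnsHC} follows once I show that $\Unwinding_1$ is surjective, and it suffices to identify $\Unwinding_1$ with $(+1,-1)$.

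By \cref{MapFromHomotopyLESOfRelMap}, $\Unwinding_1$ factors as $\partial^{\phi^\ast}_1 \circ (h_{\mathbb{C}})_\ast$. The first factor is $(h_{\mathbb{C}})_\ast:\pi_3(S^3)^2 \to \pi_3(S^2)^2$, which is an isomorphism by \cref{ComplexHopfFibrationOnPi3}. For the second factor I use the cofiber long exact sequence from \cref{MappingFiberSequenceOfCWInclusion}:
\[
  \pi_2\Map\bracket({A^2_{\mathrm{cns}},S^2}) = \mathbb{Z}
  \xrightarrow{\;\Delta\;} \mathbb{Z}^2
  \xrightarrow{\;\partial^{\phi^\ast}_1\;} \mathbb{Z}
  \xrightarrow{\;q^\ast = 0\;} \cdots
\]
Here the first map is the diagonal by \cref{AnnulusBoundaryIncIsCodiagonal}, and $q^\ast$ vanishes by \cref{AnnulusCoprojectionIsNull}. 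Exactness then forces $\partial^{\phi^\ast}_1$ to be surjective with kernel the diagonal. Hence it is $\pm(1,-1)$, and after fixing orientations it is $(+1,-1)$. This gives \cref{UnwindingIsoForIotaA2HC}, and with it \cref{pi1MapIotaA2CnsHC}.

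\emph{Step 3: degree 2.} Here the sequence is
\[
  (\mathbb{Z}_{/2})^2 \xrightarrow{\;\Unwinding_2\;} \mathbb{Z}_{/2} \to \pi_2\Map\bracket({\phi,\wp}) \to \ker \Unwinding_1 \cong \mathbb{Z} \to 0 .
\]
I will factor $\Unwinding_2$ the same way. The map $(h_{\mathbb{C}})_\ast$ on $\pi_4$ is an isomorphism $\mathbb{Z}_{/2}\to\mathbb{Z}_{/2}$, by the Hopf fibration long exact sequence since $\pi_n(S^1)=0$ for $n\geq 2$. The map $\partial^{\phi^\ast}_2 : (\mathbb{Z}_{/2})^2 \to \mathbb{Z}_{/2}$ is surjective because the next map $q^\ast$ is again zero. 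Hence $\Unwinding_2$ is surjective, so $\pi_2\Map\bracket({\phi,\wp}) \cong \mathbb{Z}$, since a surjection onto a free group splits. This is \cref{pi2MapIotaA2CnsHC}.

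\emph{Step 4: rewinding.} The fiber of $h_{\mathbb{C}}$ is $S^1$, so
\[
  \pi_1 \Map\bracket({S^1\vee S^1,S^1}) = \pi_2(S^1)^2 = 0 .
\]
Thus $\Rewinding_1 : \mathbb{Z} \to 1$ is necessarily the zero map, which is \cref{RewindingIsoForIotaA2CnsHC}.

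The main obstacle is Step 2: pinning down the connecting map $\partial^{\phi^\ast}_1$, and in particular its relative sign $(+1,-1)$. This rests on the codiagonal identification of the boundary restriction together with the nullity of $q$. I would handle it by checking the orientations of the two boundary circles against the suspension coordinate of $A^2_{\mathrm{cns}}/\partial A^2_{\mathrm{cns}} \simeq \Sigma S^1$.
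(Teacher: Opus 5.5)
Your proposal is correct and follows essentially the same route as the paper. You run the unwinding LES and factor $\Unwinding_1$ and $\Unwinding_2$ through $(h_{\mathbb{C}})_\ast$ and the cofiber connecting maps $\partial^{\phi^\ast}_n$, whose surjectivity and kernels are fixed by $\iota^\ast = \Delta$ and $q^\ast = 0$, and you read off the rewinding map from $\pi_2(S^1)^2 = 0$. The relative sign you flag as the main obstacle is already forced by your own exactness argument, since a surjection $\mathbb{Z}^2 \to \mathbb{Z}$ with diagonal kernel is $\pm(1,-1)$; only the conventional global sign remains, exactly as in the paper.
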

\begin{proof}
First, note that the quotient domain in this case is again the 2-sphere, $\DeepBulkDomain \defneq A^2_{\mathrm{cns}}/\partial A^2_{\mathrm{cns}} \simeq S^2$.
In particular, all groups are identified as claimed, by \cref{IterationOfHomotopyGroups,ComplexHopfFibrationOnPi3}, so that the claim \cref{RewindingIsoForIotaA2CnsHC} is already established.

Moreover, the long exact sequence \cref{HomotopyLESForRelativeMap} here is of this form:
\begin{equation}
  \label{UnwindingLESForIotaA2CnsIntoHC}
  \adjustbox{scale=.95}{
  \begin{tikzcd}[
    column sep=6pt
  ]
    &
    {}
    \ar[r, -, dotted]
    &
    \pi_3\Map\bracket({
      S^1, S^3
    })^2
    \ar[
      dll,
      snake left,
      "{ \Unwinding_2 }"{description}
    ]
    \\
    \pi_2\Map\bracket({
      S^2, S^2
    })
    \ar[r, shorten=-2pt]
    &
    \pi_2\Map\bracket({
      \phi,\wp
    })
    \ar[r, shorten=-2pt]
    &
    \pi_2 \Map\bracket({
      S^1,
      S^3
    })^2
    \ar[
      dll,
      snake left,
      "{
        \Unwinding_1
      }"{description}
    ]
    \\
    \pi_1 \Map\bracket({
      S^2,
      S^2
    })
    \ar[
      r,
      shorten=-2pt
    ]
    &
    \pi_1 \Map\bracket({
      \phi,
      \wp
    })
    \ar[
      r,
      shorten=-2pt
    ]
    &
    \pi_1 \Map\bracket({
      S^1, S^3
    })^2
    \ar[
      dll,
      snake left,
      "{ \Unwinding_0 }"{description}
    ]
    \\
    \pi_0\Map\bracket({
      S^2, S^2
    })
    \ar[r, shorten=-2pt]
    &
    \pi_0\Map\bracket({
      \phi, \wp
    })
    \ar[r, shorten=-2pt]
    &
    \pi_0\Map\bracket({
      S^1, S^3
    })^2
  \end{tikzcd}
  \;\;\;=\;\;
  \begin{tikzcd}[
    column sep=4.5pt
  ]
    && 
    \mathbb{Z}_{/2}^2
    \ar[
      dll,
      snake left,
      "{ \Unwinding_2 }"{description}
    ]
    \\
    \mathbb{Z}_{/2}
    \ar[r, shorten=-2pt]
    &
    \pi_2 \Map\bracket({
      \iota_{A^2_{\mathrm{cns}}},
      h_{\mathbb{C}}
    })
    \ar[r, shorten=-2pt]
    &
    \mathbb{Z}^2
    \ar[
      dll,
      snake left,
      "{
        \Unwinding_1
      }"{description}
    ]
    \\
    \mathbb{Z}
    \ar[
      r,
      shorten=-2pt
    ]
    & 
    \pi_1\Map\bracket({
      \iota_{A^2_{\mathrm{cns}}},
      h_{\mathbb{C}}
    })
    \ar[
      r,
      shorten=-2pt
    ]
    & 
    1
    \ar[
      dll,
      snake left,
      "{ \Unwinding_0 }"{description}
    ]
    \\
    \mathbb{Z}
    \ar[r, shorten=-2pt]
    &
    \pi_0\Map\bracket({
      \iota_{A^2_{\mathrm{cns}}},
      h_{\mathbb{C}}
    })
    \ar[r, shorten=-2pt]
    &
    \ast
    \mathrlap{\,,}
  \end{tikzcd}
  }
\end{equation}
where on the right we used \cref{PointedMapsOutOfWedgeSum} and on the left \cref{IterationOfHomotopyGroups,ComplexHopfFibrationOnPi3,SecondStableStem}.

Hence exactness of the tail end of this sequence immediately implies the claim \cref{pi0MapIotaA2CnsHC}.

To understand $\Unwinding_1$, 
we first observe with \cref{MapFromHomotopyLESOfRelMap} that it factors, much as in \cref{AFactorizationOfUnwinding}, as shown in the top row of the following diagram
\begin{equation}
  \label{AFactorizationOfAnotherUnwinding}
  \begin{tikzcd}[
    row sep=6pt, column sep=large
  ]
    \pi_2
    \Map\bracket({
      S^1, S^3
    })^2
    \ar[
      rr,
      uphordown,
      "{ \Unwinding_1 }"{description}
    ]
    \ar[
      r,
      "{ (h_{\mathbb{C}})_\ast }"
    ]
    \ar[
      d,
      equals
    ]
    &
    \pi_2
    \Map\bracket({
      S^1, S^2
    })^2
    \ar[
      r,
      "{  
        \partial
        ^{
          \iota
          _{\smash{
            A
              ^{2}
              _{\mathrm{cns}}
            }
          }^\ast
        }
        _1 
      }"
    ]
    \ar[
      d,
      equals
    ]
    &
    \pi_1\, \Map\bracket({
      S^2, S^2
    })
    \ar[
      d,
      equals
    ]
    \\
    \pi_3\bracket({S^3})^2
    \ar[
      d,
      equals
    ]
    \ar[
      r,
      "{
        (h_{\mathbb{C}})_\ast
      }"
    ]
    &
    \pi_3\bracket({S^2})^2
    \ar[
      d,
      equals
    ]
    &
    \pi_3\bracket({S^2})
    \ar[
      d,
      equals
    ]
    \\
    \mathbb{Z}^2
    \ar[
      r,
      "{ \mathrm{id} }"
    ]
    &
    \mathbb{Z}^2
    \ar[
       r,
       "{
         \big(
           +1 \; -1
         \big)
       }"
    ]
    &
    \mathbb{Z}
    \mathrlap{\,.}
  \end{tikzcd}
\end{equation}
Here the second row follows by \cref{IterationOfHomotopyGroups} and the last row by \cref{ComplexHopfFibrationOnPi3}.
To see the right square in this diagram,
observe that exactness of the sequence (\cref{MappingFiberSequenceOfCWInclusion})
\begin{equation}
  \begin{tikzcd}[
    column sep=60pt
  ]
    \pi_2\Map\bracket({
      S^2, S^2
    })
    \ar[
      r,
      "{  
        \pi_2(q^\ast_{\smash{A^2_{\mathrm{cns}}}})
      }"{yshift=3pt},
      "{ 
      = 0
      \,
      \scalebox{.7}{\cref{AnnulusCoprojectionIsNull}}
      }"{swap}
    ]
    &
    \pi_2\Map\bracket({S^1, S^2})
    \ar[
      r,
      "{
        \pi_2(
          \iota^\ast_{\smash{
            A^2_{\mathrm{cns}}
          }}
        )
      }"{yshift=3pt},
      "{ 
        = \Delta 
        \,
        \scalebox{.7}{\cref{AnnulusBoundaryIncIsCodiagonal}}
      }"{swap}
    ]
    &
    \pi_2\Map\bracket({S^1,S^2})^2
    \ar[
      dll,
      snake left,
      "{
        \partial_1^{
          \iota^\ast_{\smash{
            A^2_{\mathrm{cns}}
          }}
        }
      }"{description}
    ]
    \\
    \pi_1\Map\bracket({
      S^2, S^2
    })
    \ar[
      r,
      "{  
        \pi_1(q^\ast_{\smash{A^2_{\mathrm{cns}}}})
      }"{yshift=3pt},
      "{
       = 0
       \,
       \scalebox{.7}{\cref{AnnulusCoprojectionIsNull}}
      }"{swap}
    ]
    &
    \pi_1 \Map\bracket({S^1, S^2})
  \end{tikzcd}
\end{equation}
shows that 
\begin{equation}
 \label{IdentifyingPartial1IotaAstA2Cns}
 \mathrm{ker}\bracket({
   \Unwinding_1
 })
 \simeq
 \mathrm{ker}\bracket({
   \partial_1^{
     \iota^\ast_{\smash{
       A^2_{\mathrm{cns}}
     }}
   }
 })
 \simeq
 \Delta(\mathbb{Z})
 \subset 
 \mathbb{Z}^2
 \mathrlap{\,,}
\end{equation}
which demonstrates the claimed matrix representation \cref{UnwindingIsoForIotaA2HC} of $\Unwinding_1$ (up to a conventional global sign). This being surjective, exactness of \cref{UnwindingLESForIotaA2CnsIntoHC} implies the claim \cref{pi1MapIotaA2CnsHC}.

Similarly, to understand $\Unwinding_2$, we factor it via \cref{MapFromHomotopyLESOfRelMap}, as: 
\begin{equation}
  \label{AFactorizationOfAnotherUnwinding2}
  \begin{tikzcd}[
    row sep=6pt, column sep=huge
  ]
    \pi_3
    \Map\bracket({
      S^1, S^3
    })^2
    \ar[
      rr,
      uphordown,
      "{ \Unwinding_2 }"{description}
    ]
    \ar[
      r,
      "{ (h_{\mathbb{C}})_\ast }"
    ]
    \ar[
      d,
      equals
    ]
    &
    \pi_3
    \Map\bracket({
      S^1, S^2
    })^2
    \ar[
      r,
      "{  
        \partial
        ^{
          \iota
          _{\smash{
            A
              ^{2}
              _{\mathrm{cns}}
            }
          }^\ast
        }
        _2 
      }"
    ]
    \ar[
      d,
      equals
    ]
    &
    \pi_2\, \Map\bracket({
      S^2, S^2
    })
    \ar[
      d,
      equals
    ]
    \\
    \pi_4\bracket({S^3})^2
    \ar[
      d,
      equals
    ]
    \ar[
      r,
      "{
        (h_{\mathbb{C}})_\ast
      }"
    ]
    &
    \pi_4\bracket({S^2})^2
    \ar[
      d,
      equals
    ]
    &
    \pi_4\bracket({S^2})
    \ar[
      d,
      equals
    ]
    \\
    \mathbb{Z}_{/2}^2
    \ar[
      r,
      "{ \mathrm{id} }"
    ]
    &
    \mathbb{Z}_{/2}^2
    \ar[
       r,
       "{
         (
           [+1] \; [-1]
         )
       }"
    ]
    &
    \mathbb{Z}_{/2}
    \mathrlap{\,,}
  \end{tikzcd}
\end{equation}
where the bottom right morphism follows analogously to \cref{IdentifyingPartial1IotaAstA2Cns}. This shows that $\Unwinding_2$ is surjective.

Therefore the exactness of \cref{UnwindingLESForIotaA2CnsIntoHC} implies that $\pi_2\Map\bracket({\iota_{A^2_{\mathrm{cns}}}, h_{\mathbb{C}}}) \simeq \mathrm{ker}\bracket({\Unwinding_1})$. With \cref{IdentifyingPartial1IotaAstA2Cns} this yields the claim \cref{pi2MapIotaA2CnsHC}.
\end{proof}

Now we use \cref{MappingBoundaryInclusionOfConstrictedAnnulusIntoComplexHopfFibration} to analyze the situation over the actual closed annulus:

\begin{proposition}
[Mapping \textbf{closed annulus} boundary inclusion into Hopf fibration]
\label[proposition]{MappingBoundaryInclusionOfClosedAnnulusIntoComplexHopfFibration}
In the case where 
\begin{enumerate}

\item  $\phi \defneq \big(\inlinetikzcd{ S^1 \sqcup S^1 \ar[r, "{ \iota_{A^2} }"] \&  A^2  }\big) \sqcup \{\infty\}  $ is the boundary inclusion of the closed annulus,

\item  $\wp \defneq h_{\mathbb{C}} : \inlinetikzcd{ S^3 \ar[r] \& S^2 }$ is the complex Hopf fibration \cref{TheComplexHopfFibration},
\end{enumerate}
we have
\begin{subequations}
  \label{pinOfMapsOfBdrOfClosedAnnulusIntoHC}
  \begin{align}
  \label{pi0OfMapsOfBdrOfClosedAnnulusIntoHC}
  \pi_0\Map\bracket({
    \iota_{A^2}
    ;
    h_{\mathbb{C}}
  })
  \simeq
  \mathbb{Z}
  \\
  \label{pi1OfMapsOfBdrOfClosedAnnulusIntoHC}
  \pi_1\Map\bracket({
    \iota_{A^2}
    ;
    h_{\mathbb{C}}
  })
  \simeq
  \mathbb{Z}_{\mathrlap{u}}
  \\
  \label{pi2OfMapsOfBdrOfClosedAnnulusIntoHC}
  \pi_2\Map\bracket({
    \iota_{A^2}
    ;
    h_{\mathbb{C}}
  })
  \simeq
  \mathbb{Z}
  \mathrlap{\,,}
  \end{align}
\end{subequations}
and the unwinding homomorphism \cref{TheUnwindingLES} is:
\begin{equation}
\label
{UnwindingHomomorphismForClosedAnnulus}
  \begin{tikzcd}[row sep=small,
    column sep=-10pt
  ]
    &&
    \pi_2\UnpointedMap\bracket({
      S^1, S^3
    })^2
    \ar[
      dll,
      snake left,
      "{ \Unwinding_1 }"{description}
    ]
    \\
    \pi_1\Map\bracket({
      S^2 \vee S^1,
      S^2
    })
  \end{tikzcd}
  \;\;=\;\;
  \begin{tikzcd}[row sep=small,
    ampersand replacement=\&, 
    column sep=35pt
  ]
    \&\& 
    \mathbb{Z}^2.
    \ar[
      dll,
      "{
      \scalebox{0.7}{$
        \left(
        \begin{matrix}
          +1 & -1
          \\
          \,0 & \;0
        \end{matrix}
        \right)
        $}
      }"{description}
    ]
    \\
    \mathbb{Z}
      \times 
    \mathbb{Z}_u
  \end{tikzcd}
\end{equation}
\end{proposition}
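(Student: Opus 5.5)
The plan is to run the unwinding long exact sequence \cref{HomotopyLESForRelativeMap} for $\phi = \iota_{A^2}$, with the deep bulk identified by \cref{QuotientOfClsdAnnulusByBndry} as $\DeepBulkDomain \simeq S^2 \vee S^1$. The strategy is to compare this sequence with the constricted-annulus case (\cref{MappingBoundaryInclusionOfConstrictedAnnulusIntoComplexHopfFibration}) along the projection $c$ of \cref{ProjectionFromClosedToConstrictedAnnulus}.

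\textbf{Identifying the terms.} First I identify all terms. The boundary is unpointed, $S^1 \sqcup S^1$, so the near-boundary terms are $\pi_n \UnpointedMap(S^1,S^3)^2$. By \cref{HomotopyGroupsOfFreeLoopSpace} these are $\pi_n(S^3)\times\pi_{n+1}(S^3)$ squared, that is, $\ast$, $1$, $\mathbb{Z}^2$ and $\mathbb{Z}_{/2}^2$ for $n=0,1,2,3$.

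The deep-bulk terms are $\pi_n \Map(S^2\vee S^1,S^2) \simeq \pi_n\Map(S^2,S^2)\times \pi_n\Map(S^1,S^2)$ by \cref{PointedMapsOutOfWedgeSum}, hence $\pi_{n+2}(S^2)\times\pi_{n+1}(S^2)$. This gives $\mathbb{Z}\times\mathbb{Z}$ for $n=0$, $\mathbb{Z}\times\mathbb{Z}_u$ for $n=1$, and $\mathbb{Z}_{/2}\times\mathbb{Z}$ for $n=2$, using \cref{IterationOfHomotopyGroups,ComplexHopfFibrationOnPi3,SecondStableStem}.

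\textbf{Degree 0.} Since the near-boundary $\pi_1$ and $\pi_0$ vanish, exactness gives $\pi_0\Map(\iota_{A^2};h_{\mathbb{C}})\simeq \pi_0\Map(S^2\vee S^1,S^2)$. This is $\mathbb{Z}^2$, not the claimed $\mathbb{Z}$. I therefore expect to recheck the basepoint conventions here, perhaps counting components modulo the $\Unwinding_0$-action, or noting that the relevant $\pi_1(S^2)$ factor is trivial, giving $\pi_1(S^2)=0$ and so $\mathbb{Z}$. Indeed $\pi_0\Map(S^1,S^2)=\pi_1(S^2)=0$, so the deep-bulk $\pi_0$ is $\mathbb{Z}$. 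I will redo the index bookkeeping carefully along these lines.

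\textbf{Degree 1 and the matrix.} Naturality of the homotopy LES \cref{NaturalityOfHomotopyLES} under $c$ gives a map of unwinding sequences. On deep bulk this map is $\overline{c}^\ast$, which by \cref{ProjectionFromClosedToConstrictedAnnulus} is the inclusion of the $S^2$-summand, $\pi_1\Map(S^2,S^2)\hookrightarrow \pi_1\Map(S^2\vee S^1,S^2)$. On the boundary it is the restriction from based to free maps, $\pi_2\Map(S^1,S^3)^2 \to \pi_2\UnpointedMap(S^1,S^3)^2$, which is the identity on $\mathbb{Z}^2$ because $\pi_2(S^3)=0$.

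Commutativity then says that $\Unwinding_1$ for $A^2$ equals $\overline{c}^\ast$ composed with the constricted $\Unwinding_1 = (+1\;-1)$ from \cref{UnwindingIsoForIotaA2HC}. This is exactly the matrix $\left(\begin{smallmatrix}+1&-1\\0&0\end{smallmatrix}\right)$. Its cokernel is $\mathbb{Z}_u$, and it injects into $\pi_1\Map(\iota_{A^2};h_{\mathbb{C}})$. The next boundary term vanishes, so $\pi_1 \simeq \mathbb{Z}_u$.

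\textbf{Degree 2 (main obstacle).} The kernel of $\Unwinding_1$ is $\Delta(\mathbb{Z})$. It remains to show that $\Unwinding_2:\mathbb{Z}_{/2}^2\to \mathbb{Z}_{/2}\times\mathbb{Z}$ surjects onto the $\mathbb{Z}_{/2}$ factor; it cannot hit $\mathbb{Z}$. The cokernel of $\Unwinding_2$ is then $\mathbb{Z}$, coming from $\pi_2\Map(S^1,S^2)=\pi_3(S^2)$, and it sits in an extension with $\Delta(\mathbb{Z})$.

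This is the main obstacle. The naive count gives $\mathbb{Z}\oplus\mathbb{Z}$ rather than $\mathbb{Z}$, so I must show that the $S^1$-summand class is hit or cancels. I would first try the factorization \cref{MapFromHomotopyLESOfRelMap} through $\partial^{\phi^\ast}_2$ on the free loop spaces: there the $\pi_3(S^2)$ components of the free boundary maps feed, via the codiagonal \cref{AnnulusBoundaryIncIsCodiagonal}, into the $S^1$-wedge summand. If this fails, I would fall back on naturality along $c$ again, using \cref{pi2MapIotaA2CnsHC}.
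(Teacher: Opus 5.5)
Your degrees 0 and 1 are fine. The self-corrected deep-bulk term $\pi_0\Map(S^2\vee S^1,S^2)\cong\pi_2(S^2)\times\pi_1(S^2)\cong\mathbb{Z}$ is right. Applying naturality along $c$ directly to the relative unwinding sequences also works, since the boundary comparison $\pi_2\Map(S^1\vee S^1,S^3)\to\pi_2\UnpointedMap(S^1,S^3)^2$ is an isomorphism because $\pi_2(S^3)=0$. The paper instead factors $\Unwinding_1$ through $(h_{\mathbb{C}})_\ast$ and applies naturality along $c$ to the sequence of $\Map(-,S^2)$, but it is the same idea.

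The genuine gap is in degree 2, and it comes from a slip in your table. By your own formula, $\pi_3\UnpointedMap(S^1,S^3)\cong\pi_3(S^3)\times\pi_4(S^3)\cong\mathbb{Z}\times\mathbb{Z}_{/2}$. So the near-boundary term is $\mathbb{Z}^2\oplus\mathbb{Z}_{/2}^2$, not $\mathbb{Z}_{/2}^2$; you dropped exactly the unshifted summand $\pi_3(S^3)^2$ (spheres of constant loops). With a torsion source you then correctly conclude that $\Unwinding_2$ cannot reach the $\mathbb{Z}\cong\pi_3(S^2)$ factor of $\pi_2\Map(S^2\vee S^1,S^2)$. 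That forces $\pi_2\Map(\iota_{A^2};h_{\mathbb{C}})$ to have rank two, contradicting the claim. As written, your proposal argues against the statement at this step and leaves the contradiction unresolved.

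The repair is the factorization you list as a first try, applied to the corrected group; this is the paper's argument.
\begin{itemize}
\item By \cref{MapFromHomotopyLESOfRelMap}, $\Unwinding_2=\partial^{\iota^\ast_{A^2}}_2\circ(h_{\mathbb{C}})_\ast$.
\item $(h_{\mathbb{C}})_\ast$ is an isomorphism $\mathbb{Z}^2\oplus\mathbb{Z}_{/2}^2\to\pi_3\UnpointedMap(S^1,S^2)^2\cong\mathbb{Z}^2\oplus\mathbb{Z}_{/2}^2$. This follows from \cref{ComplexHopfFibrationOnPi3} on both $\pi_3$ and $\pi_4$, under the natural splitting \cref{HomotopyGroupsOfFreeLoopSpace}.
\item $\partial^{\iota^\ast_{A^2}}_2$ is surjective onto $\pi_2\Map(S^2\vee S^1,S^2)$. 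This is exactness of the sequence for $\Map(A^2/\partial A^2,S^2)\to\UnpointedMap(A^2,S^2)\to\UnpointedMap(\partial A^2,S^2)$: the next map $\pi_2(q^\ast_{A^2})$ vanishes because the coprojection is null \cref{AnnulusCoprojectionIsNull}.
\end{itemize}
Hence $\Unwinding_2$ is onto $\mathbb{Z}_{/2}\oplus\mathbb{Z}$, and exactness gives $\pi_2\Map(\iota_{A^2};h_{\mathbb{C}})\cong\ker(\Unwinding_1)=\Delta(\mathbb{Z})\cong\mathbb{Z}$.

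Your fallback of naturality along $c$ could not supply this. The constricted boundary $S^1\vee S^1$ carries only based loops, so $\pi_3\Map(S^1\vee S^1,S^3)\cong\pi_4(S^3)^2$ is torsion. Moreover, $\overline{c}^\ast$ misses the $S^1$ wedge summand of the deep bulk entirely.
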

Here, the subscript ``$u$'' is just to track copies of $\mathbb{Z}$ that arise as ``\emph{u}nshifted'' homotopy groups, in the sense of the computation \cref{SampleComputationUShifted}. These copies disappear in the analogous computation for the differentially refined situation in \cref{OnHomotopyOfThePhaseSpace}, cf. \cref{WhereTheUnshiftedCopyDisappears}.
\begin{proof}
With \cref{DeepBulkOfClosedAnnulus}, the long exact sequence \cref{HomotopyLESForRelativeMap} here is of this form:
\begin{equation}
  \label{UnwindingLESForIotaA2IntoHC}
  \adjustbox{scale=.95,center}{
  \begin{tikzcd}[
    column sep=5pt
  ]
    &
    {}
    \ar[
      r, -, dotted
    ]
    &
    \pi_3\UnpointedMap\bracket({
      S^1, S^3
    })^2
    \ar[
      dll,
      snake left,
      "{ \Unwinding_2 }"{description}
    ]
    \\
    \pi_2\Map\bracket({
      S^2\!\vee\!S^1,
      S^2
    })
    \ar[
      r, shorten=-2pt
    ]
    &
    \pi_2\Map(\phi,\wp)
    \ar[r, shorten=-2pt]
    &
    \pi_2 \UnpointedMap\bracket({
      S^1,
      S^3
    })^2
    \ar[
      dll,
      snake left,
      "{
        \Unwinding_1
      }"{description}
    ]
    \\
    \pi_1 \Map\bracket({
      S^2\!\vee\!S^1,
      S^2
    })
    \ar[
      r,
      shorten=-2pt
    ]
    &
    \pi_1 \Map\bracket({
      \phi,
      \wp
    })
    \ar[
      r,
      shorten=-2pt
    ]
    &
    \pi_1 \UnpointedMap\bracket({
      S^1, S^3
    })^2
    \ar[
      dll,
      snake left,
      "{ \Unwinding_0 }"{description}
    ]
    \\
    \pi_0 \Map\bracket({
      S^2\!\vee\!S^1,
      S^2
    })
    \ar[
      r,
      shorten=-2pt
    ]
    &
    \pi_0 \Map\bracket({
      \phi,
      \wp
    })
    \ar[
      r,
      shorten=-2pt
    ]
    &
    \pi_0 \UnpointedMap\bracket({
      S^1, S^3
    })^2
  \end{tikzcd}
  \;\;\;=\;\;\;
  \begin{tikzcd}[
    column sep=5pt
  ]
    &
    {}
    \ar[r, -, dotted, shorten >=20pt]
    &
    \mathllap{\mathbb{Z}_{/2}^2
    \oplus\,
    }
    \mathbb{Z}^2
    \ar[
      dll,
      snake left,
      "{ \Unwinding_2 }"{description}
    ]
    \\
    \mathbb{Z}_{/2}
    \oplus
    \mathbb{Z}
    \ar[r, shorten=-2pt]
    &
    \pi_2\UnpointedMap\bracket({
      \iota_{A^2},
      h_{\mathbb{C}}
    })
    \ar[r, shorten=-2pt]
    &
    \mathbb{Z}^2
    \ar[
      dll,
      snake left,
      "{
        \Unwinding_1
      }"{description}
    ]
    \\
    \mathbb{Z}
    \!\times\!
    \mathbb{Z}_u
    \ar[r, shorten=-2pt]
    & 
    \pi_1\UnpointedMap\bracket({
      \iota_{A^2},
      h_{\mathbb{C}}
    })
    \ar[r, shorten=-2pt]
    & 
    1
    \ar[
      dll,
      snake left,
      "{ \Unwinding_0 }"{description}
    ]
    \\
    \mathbb{Z}
    \ar[r, shorten=-2pt]
    &
    \pi_0\UnpointedMap\bracket({
      \iota_{A^2},
      h_{\mathbb{C}}
    })
    \ar[r, shorten=-2pt]
    &
    \ast
    \mathrlap{\,,}
  \end{tikzcd}
  }
\end{equation}
where the subscript $\mathbb{Z}_u$ is just to track this ``unshifted'' copy of the integers, in the sense of computations like this one:
  \begin{equation}
  \label{SampleComputationUShifted}
    \begin{aligned}
    \pi_1 \, 
    \Map\bracket({S^2 \vee S^1, S^2})
    &
    \;
    \underset{\mathclap{
      \scalebox{.7}{\cref{MapTakingPushoutInFirstArgumentToPullback}}
    }}{\simeq}
    \;
    \pi_1\bracket({
      \Map\bracket({S^2, S^2})
      \times
      \Map\bracket({ S^1, S^2 })
    })
    \\
    & 
   \; \underset{\mathclap{
      \cref{PiNPreservesProducts}
    }}
    {\simeq}
   \; \pi_1\bracket({
      \Map\bracket({S^2, S^2})
    })
      \times
   \; \pi_1\bracket({
      \Map\bracket({ S^1, S^2 })
    })
    \\
    &
    \; \underset{\mathclap{
      \cref{IterationOfHomotopyGroups}
    }}
    {\simeq}
    \; \pi_3\bracket({S^2})
    \times
    \pi_2\bracket({S^2})
    \\
    &
   \; \underset{\mathclap{
      \scalebox{.7}{\cref{ComplexHopfFibrationOnPi3}}
    }}{\simeq}
   \; \mathbb{Z} \times \mathbb{Z}_u
    \mathrlap{\,.}
    \end{aligned}
  \end{equation}

Now, first, exactness of the tail end of the above sequence immediately implies the claim \cref{pi0OfMapsOfBdrOfClosedAnnulusIntoHC}.

Next, as before, $\Unwinding_1$ factors via \cref{MapFromHomotopyLESOfRelMap}:
\begin{equation}
  \label{FactorizationOfUnwindingForClosedAnnulus}
  \begin{tikzcd}[
    row sep=6pt, column sep=large
  ]
    \pi_2\, \UnpointedMap\bracket({
      S^1, S^3
    })^2
    \ar[
      rr,
      uphordown,
      "{ \Unwinding_1 }"{description}
    ]
    \ar[
      r,
      "{ (h_{\mathbb{C}})_\ast }"
    ]
    \ar[
      d,
      equals
    ]
    &
    \pi_2\, \UnpointedMap\bracket({
      S^1, S^2
    })^2
    \ar[
      r,
      "{  \partial^{\iota_{\smash{A^2}}^\ast}_1 }"
    ]
    \ar[
      d,
      equals
    ]
    &
    \pi_1\, \Map\bracket({
      S^2 \vee S^1, S^2
    })
    \\
    \pi_2\bracket({S^3})^2
    \times
    \pi_2\bracket({\Omega S^3})^2
    \ar[
      d,
      equals
    ]
    \ar[
      r,
      "{
        (h_{\mathbb{C}})_\ast
      }"
    ]
    &
    \pi_2\bracket({S^2})^2
    \times
    \pi_2\bracket({\Omega S^2})^2 
    \ar[
      d,
      equals
    ]
    \\
    0_u 
    \oplus
    \mathbb{Z}^2
    \ar[
      r,
      "{
        0
        \oplus
        \mathrm{id}
      }"
    ]
    &
    \mathbb{Z}_u^2 \oplus \mathbb{Z}^2
    \mathrlap{\,.}
  \end{tikzcd}
\end{equation}
Since the left map injects the unshifted copy of $\mathbb{Z}^2$ it is now sufficient to compute $\partial_1^{\iota^\ast_{\smash{A^2}}}$ on this unshifted copy.

For this, consider naturality \cref{NaturalityOfHomotopyLES} of the corresponding exact sequence under pullback along the map \cref{TheProjectionFromClosedToConstrictedAnnulus} from the closed to the constricted annulus:
\begin{equation}
  \begin{tikzcd}[row sep=-1pt, column sep=5pt]
    \pi_2\Map\bracket({
      S^1, S^2
    })^2
    \ar[
      ddd,
      "{ \pi_2(c^\ast) }"
      {swap}
    ]
    \ar[
      rrr,
      "{
        \partial_1^{
          \iota^\ast_{\smash{A^2_{\mathrm{cns}}}}
        }
      }"
    ]
    &&[50pt]&
    \pi_1 \Map\bracket({
     S^2,
     S^2
    })
    \ar[
      ddd,
      "{ \pi_1(c^\ast) }"
    ]
    \\
    & 
    \mathbb{Z}^2
    \ar[
      r,
      "{ (+1 \; -1) }"
    ]
    \ar[
      d,
      "{
        (0,\mathrm{id})
      }"{swap}
    ]
    &
    \mathbb{Z}
    \ar[
      d,
      "{
        (\mathrm{id}, 0)
      }"
    ]
    \\[30pt]
    &
    \mathllap{
      \mathbb{Z}^2_u
      \oplus 
    }
    \mathbb{Z}^2
    \ar[r]
    &
    \mathbb{Z}
    \mathrlap{
      \oplus \mathbb{Z}_u
    }
    \\
    \pi_2 \UnpointedMap\bracket({
      S^1, S^2
    })^2    
    \ar[
      rrr,
      "{
        \partial_1^{
          \iota^\ast_{\smash{A^2}}
        }
      }"
    ]
    &&&
    \pi_1 \Map\bracket({
     S^2 \vee S^1,
     S^2
    })
    \mathrlap{\,.}
  \end{tikzcd}
\end{equation}
Here the shown identification of the left map holds by \cref{ComparingHomotopyGroupsUpPointedAndUnpointedMappingSpace}, and
the identification of the right map follows by \cref{TheProjectionFromClosedToConstrictedAnnulus} in \cref{ProjectionFromClosedToConstrictedAnnulus}. 
The inner diagram thus shows that away from the unshifted generators, $\partial_1^{\iota^\ast_{\smash{A^2}}}$ coincides with $\partial_1^{\iota^\ast_{\smash{A^2_{\mathrm{cns}}}}}$. But the latter was computed in \cref{IdentifyingPartial1IotaAstA2Cns} and with that our claim \cref{UnwindingHomomorphismForClosedAnnulus} follows. By exactness of \cref{UnwindingLESForIotaA2IntoHC}, this furthermore implies the claim \cref{pi1OfMapsOfBdrOfClosedAnnulusIntoHC}.

On the other hand, $\Unwinding_2$ factors via \cref{MapFromHomotopyLESOfRelMap} as:
\begin{equation}
  \label{FactorizationOfUnwinding2ForClosedAnnulus}
  \begin{tikzcd}[
    row sep=6pt, column sep=large
  ]
    \pi_3\, \UnpointedMap\bracket({
      S^1, S^3
    })^2
    \ar[
      rr,
      uphordown,
      "{ \Unwinding_2 }"{description}
    ]
    \ar[
      r,
      "{ (h_{\mathbb{C}})_\ast }"
    ]
    \ar[
      d,
      equals
    ]
    &
    \pi_3\, \UnpointedMap\bracket({
      S^1, S^2
    })^2
    \ar[
      r,
      "{  \partial^{\iota_{\smash{A^2}}^\ast}_2 }"
    ]
    \ar[
      d,
      equals
    ]
    &
    \pi_2\, \Map\bracket({
      S^2 \vee S^1, S^2
    })
    \\
    \pi_3\bracket({S^3})^2
    \times
    \pi_3\bracket({\Omega S^3})^2
    \ar[
      d,
      equals
    ]
    \ar[
      r,
      "{
        (h_{\mathbb{C}})_\ast
      }"
    ]
    &
    \pi_3\bracket({S^2})^2
    \times
    \pi_3\bracket({\Omega S^2})^2 
    \ar[
      d,
      equals
    ]
    \\
    \mathbb{Z}_u^2 
    \oplus
    \mathbb{Z}_{/2}^2
    \ar[
      r,
      "{ 
        \mathrm{id}
        \oplus
        \mathrm{id}
    }"
    ]
    &
    \mathbb{Z}_u^2 
      \oplus 
    \mathbb{Z}_{/2}^2
    \mathrlap{\,,}
  \end{tikzcd}
\end{equation}
while exactness of the sequence (\cref{MappingFiberSequenceOfCWInclusion})
\begin{equation}
  \begin{tikzcd}[
    column sep=45pt
  ]
    \pi_3\Map\bracket({
      S^2 \vee S^1, S^2
    })^2
    \ar[
      r,
      "{  
        \pi_3(q^\ast_{\smash{A^2}})
      }"{yshift=1pt},
      "{ 
        = 0 
        \,
        \scalebox{.7}{\cref{AnnulusCoprojectionIsNull}}
      }"{swap}
    ]
    &
    \pi_3\UnpointedMap\bracket({S^1, S^2})
    \ar[
      r,
      "{
        \pi_3(
          \iota^\ast_{\smash{
            A^2
          }}
        )
      }"{yshift=1pt},
      "{
        = \Delta
        \,
        \scalebox{.7}{\cref{AnnulusBoundaryIncIsCodiagonal}}
      }"{swap}
    ]
    &
    \pi_3\UnpointedMap\bracket({
      S^1,S^2
    })^2
    \ar[
      dll,
      snake left,
      "{
        \partial_2^{
          \iota^\ast_{\smash{
            A^2
          }}
        }
      }"{description}
    ]
    \\
    \pi_2\Map\bracket({
      S^2 \vee S^1, S^2
    })
    \ar[
      r,
      "{  
        \pi_2(q^\ast_{\smash{A^2}})
      }"{yshift=3pt},
      "{
       = 0
       \,
       \scalebox{.7}{\cref{AnnulusCoprojectionIsNull}}
      }"{swap}
    ]
    &
    \pi_1 \Map\bracket({S^1, S^2})
    \mathrlap{\,,}
  \end{tikzcd}
\end{equation}
shows that $\partial_2^{\iota^\ast_{\smash{A^2}}}$ is surjective (with kernel the diagonal map). This implies that also $\inlinetikzcd{ \mathbb{Z}^2 \oplus \mathbb{Z}_{/2}^2 \ar[r, "{ \Unwinding_2 }"] \&  \mathbb{Z} \oplus \mathbb{Z}_{/2} }$ is surjective (with kernel the diagonal).
Together with the previous result \cref{UnwindingHomomorphismForClosedAnnulus} that $\mathrm{ker}\bracket({\Unwinding_1}) \simeq \mathbb{Z}$, this implies with the exactness of \cref{UnwindingLESForIotaA2IntoHC} the claim \cref{pi2OfMapsOfBdrOfClosedAnnulusIntoHC}.
\end{proof}

\begin{lemma}
\label[lemma]
{ImageOfMapA2S3InMapiotaA2hC}
The map
\begin{equation}
  \begin{tikzcd}
    \UnpointedMap\bracket({
      A^2, S^3
    })
    \underset{\mathclap{
      \scalebox{.7}{\cref{TwRelMapIntoIdentity}}
    }}{\simeq}
    \UnpointedMap\bracket({
      \iota_{A^2},
      \mathrm{id}_{S^3}
    })
    \ar[
      rr,
      "{
        (\mathrm{id},h_{\mathbb{C}})_\ast
      }"
    ]
    &&
    \UnpointedMap\bracket({
      \iota_{A^2},
      h_{\mathbb{C}}
    })    
  \end{tikzcd}
\end{equation}
induces on $\pi_2$ an isomorphism
\begin{equation}
  \label{pi2IdHC}
  \Big(
  \begin{tikzcd}
    \pi_2
    \UnpointedMap\bracket({
      A^2, S^3
    })
    \ar[
      rr,
      "{
        \pi_2
        (
          \mathrm{id},
          h_{\mathbb{C}}
        )_\ast
      }"
    ]
    &&
    \pi_2
    \UnpointedMap\bracket({
      \iota_{A^2},
      h_{\mathbb{C}}
    })    
  \end{tikzcd}
  \Big)
  =
  \big(
  \begin{tikzcd}
    \mathbb{Z}
    \ar[
      r,
      "{
        \sim
      }"
    ]
    &
    \mathbb{Z}
  \end{tikzcd}
  \big)
  \mathrlap{\,.}
\end{equation}
\end{lemma}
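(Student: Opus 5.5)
The plan is to detect $\pi_2$ of the twisted relative mapping space by its image in the near-boundary moduli. We compose $(\mathrm{id},h_{\mathbb{C}})_\ast$ with the boundary restriction $\BoundaryRestriction$ from \cref{HomotopyFiberSequencesOfModuli-DeepBulk-Full-NearBdr}. This composite is simply restriction along the boundary,
\[
  \UnpointedMap\bracket({A^2,S^3}) \xrightarrow{\;\iota_{A^2}^\ast\;} \UnpointedMap\bracket({S^1,S^3})^2 ,
\]
because the near-boundary component of $(\mathrm{id},h_{\mathbb{C}})_\ast$ is the identity on $S^3$. It only remains to check that the image of $\pi_2(\BoundaryRestriction)$ and the map $\pi_2(\iota_{A^2}^\ast)$ match up appropriately.

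\textbf{Step 1: the source.} Since $A^2$ deformation retracts onto a boundary circle, $\UnpointedMap(A^2,S^3)\simeq \UnpointedMap(S^1,S^3)$. By \cref{HomotopyGroupsOfFreeLoopSpace} this gives $\pi_2 \simeq \pi_2(S^3)\times\pi_3(S^3)\simeq 0\oplus\mathbb{Z}$. By \cref{AnnulusBoundaryIncIsCodiagonal}, $\pi_2(\iota_{A^2}^\ast)$ is the diagonal $\Delta:\mathbb{Z}\to\mathbb{Z}^2$. In particular it is injective with image $\Delta(\mathbb{Z})$.

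\textbf{Step 2: the target.} I reuse the proof of \cref{MappingBoundaryInclusionOfClosedAnnulusIntoComplexHopfFibration}, which showed that $\Unwinding_2$ is surjective. By exactness of \cref{UnwindingLESForIotaA2IntoHC}, the map $\pi_2(\BulkInclusion)$ is then zero. Hence $\pi_2(\BoundaryRestriction)$ is injective, and its image is $\mathrm{ker}(\Unwinding_1)\subset \pi_2\UnpointedMap(S^1,S^3)^2\simeq\mathbb{Z}^2$. By \cref{UnwindingHomomorphismForClosedAnnulus}, $\Unwinding_1$ is given by the matrix with rows $(+1\;-1)$ and $(0\;0)$, so this kernel is exactly $\Delta(\mathbb{Z})$. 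Thus $\pi_2(\BoundaryRestriction)$ is an isomorphism $\pi_2\UnpointedMap(\iota_{A^2},h_{\mathbb{C}})\xrightarrow{\sim}\Delta(\mathbb{Z})$.

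\textbf{Step 3: conclusion.} Write $f:=\pi_2(\mathrm{id},h_{\mathbb{C}})_\ast$. From Steps 1 and 2,
\[
  \pi_2(\BoundaryRestriction)\circ f=\Delta ,
\]
where $\Delta$ is an isomorphism of $\mathbb{Z}$ onto $\Delta(\mathbb{Z})$ and $\pi_2(\BoundaryRestriction)$ is an isomorphism onto the same subgroup. Therefore $f$ itself is an isomorphism $\mathbb{Z}\xrightarrow{\sim}\mathbb{Z}$.

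\textbf{Main obstacle.} I expect the delicate point to be the diagonal identification in Step 1 in the \emph{unpointed}, disjoint-basepoint setting. We must check that the codiagonal statement \cref{AnnulusBoundaryIncIsCodiagonal} applies to free loop spaces on the $\pi_3(S^3)$ summand. We must also check that the identification $\pi_2\UnpointedMap(S^1,S^3)^2\simeq\mathbb{Z}^2$ used for the target is the same one used in computing $\mathrm{ker}(\Unwinding_1)$, so that the two diagonals genuinely coincide rather than differing by a sign on one factor. If they did differ, I would fall back on the naturality square comparing with the constricted annulus via $c$ (\cref{ProjectionFromClosedToConstrictedAnnulus}, \cref{ComparingHomotopyGroupsUpPointedAndUnpointedMappingSpace}), where both sides are pointed and the kernel was computed explicitly in \cref{IdentifyingPartial1IotaAstA2Cns}.
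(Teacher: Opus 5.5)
Your proposal is correct and is essentially the paper's own argument: the paper likewise puts $(\mathrm{id},h_{\mathbb{C}})_\ast$ in a commuting triangle with the two boundary restrictions to $\UnpointedMap(\partial A^2, S^3)$ and observes that both are the diagonal $\Delta$ on $\pi_2$ (the right one because $\mathrm{ker}(\Unwinding_1)\simeq\Delta(\mathbb{Z})$), with your Step 2 only making explicit the injectivity of $\pi_2(\BoundaryRestriction)$ that comes from surjectivity of $\Unwinding_2$. The sign worry in your last paragraph is unnecessary: exactness already forces $\mathrm{im}\,\pi_2(\iota_{A^2}^\ast)=\mathrm{im}\big(\pi_2(\BoundaryRestriction)\circ f\big)\subseteq\mathrm{ker}(\Unwinding_1)$, and a primitive rank-one sublattice of $\mathbb{Z}^2$ contained in the saturated rank-one kernel must equal it.
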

\begin{proof}
  The group $\mathbb{Z}$ on the left follows by \cref{HomotopyGroupsOfFreeLoopSpace,HopfDegreeTheorem}, while the group $\mathbb{Z}$ on the right is from \cref{pi2OfMapsOfBdrOfClosedAnnulusIntoHC}. It remains to identify the induced homomorphism between them.

  To this end, consider the commuting diagram
  \begin{equation}
    \begin{tikzcd}[
      column sep=10pt,
      row sep=-3pt
    ]
      \UnpointedMap\bracket({
        A^2,
        S^3
      })
      \underset{\mathclap{
        \scalebox{.7}{\cref{TwRelMapIntoIdentity}}
      }}{\simeq}
      &
      \UnpointedMap\bracket({
        \iota_{A^2},
        \mathrm{id}_{S^3}
      })
      \ar[
        rr,
        "{
          (
            \mathrm{id},
            h_{\mathbb{C}}
          )_\ast
        }"
      ]
      \ar[
        dr,
        "{
          r_{\mathrm{bdr}}
        }"{swap}
      ]
      &[20pt]
      &[20pt]
      \UnpointedMap\bracket({
        \iota_{A^2},
        h_{\mathbb{C}}
      })
       \mathrlap{\,,}
      \ar[
        dl,
        "{
          r'_{\mathrm{bdr}}
        }"
      ]
      \\
      &&
      \UnpointedMap\bracket({
        \partial A^2,
        S^3
      })
    \end{tikzcd}
  \end{equation}
  where the diagonal maps are the boundary restrictions from \cref{ThePastingDiagram}.
  The claim follows by observing that on $\pi_2$ both these boundary restrictions are the diagonal maps $\Delta$ 
  \begin{equation}
    \begin{tikzcd}[
      column sep=70pt,
      row sep=0pt
    ]
      \mathbb{Z}
      \ar[
        dr,
        "{
          \pi_2(r_{\mathrm{bdr}})
          =
          \Delta
        }"{sloped, swap}
      ]
      \ar[
        rr,
        "{
          \pi_2\big(
            (\mathrm{id}, h_{\mathbb{C}})_\ast
          \big)
        }"
      ]
      &&
      \mathbb{Z}
      \mathrlap{\,.}
      \ar[
        dl,
        "{
          \pi_2(r'_{\mathrm{bdr}})
          =
          \Delta
        }"{sloped, swap}
      ]
      \\
      &
      \mathbb{Z}^2
    \end{tikzcd}
  \end{equation}
  For the boundary map on the left this is \cref{AnnulusBoundaryIncIsCodiagonal}, while for the boundary map on the right this follows by exactness of the sequence 
  \cref{HomotopyLESForRelativeMap}
  \begin{equation}
    \begin{tikzcd}[row sep=small]
      {}
      \ar[r, -, dotted]
      &[-20pt] 
      \pi_2 \UnpointedMap\bracket({
        \iota_{A^2},
        h_{\mathbb{C}}
      })
      \ar[
        r,
        "{
          \pi_2(r_{\mathrm{bdr}})
        }"
      ]
      &[+20pt]
      \pi_2\UnpointedMap\bracket({
        \partial A^2,
        S^3
      })
      \ar[
        dll,
        snake left,
        "{ 
          \Unwinding_1 
        }"{description}
      ]
      \\
      \pi_1\Map\bracket({
        S^2 \vee S^1,
        S^2
      })
      \ar[r, -, dotted]
      &
      {}
    \end{tikzcd}
  \end{equation}
  using the form of $\Unwinding_1$ established in \cref{UnwindingHomomorphismForClosedAnnulus}, due to which $\mathrm{ker}\bracket({\Unwinding_1}) \simeq \Delta\bracket({\mathbb{Z}})$.
\end{proof}

\subsection
{The Differentially-Refined Correspondence}

We turn to the differential refinement of the above topological computations in the sense of \cref{OnBndryFieldsInHGT}.

\subsubsection
{Homotopy of the phase space}
\label
{OnHomotopyOfThePhaseSpace}

The following \cref{ShapeOfFQHPhaseSpaceViaRHT} is the key observation that translates the geometric homotopy theory of phase spaces back to pure homotopy theory. The reader not concerned with geometric homotopy theory may take the following \cref{ShapeOfFQHPhaseSpaceViaRHT} as the definition of the homotopy type that we shall be concerned with in the following.

Recall here that $\RRationalization$ \cref{RRationalization} denotes \emph{rationalization} over the real numbers.
\begin{proposition}
\label[proposition]
{ShapeOfFQHPhaseSpaceViaRHT}
  The shape of the FQH phase space \cref{TheFQHPhaseSpace} fits into a homotopy Cartesian square of this form:
  \begin{equation}
    \label{TheShapeOfFQHPhaseSpaceViaRHT}
    \begin{tikzcd}[
      column sep=60pt
    ]
      \shape\FQHPhaseSpace(\phi)
      \ar[r]
      \ar[d]
      \ar[
        dr,
        phantom,
        "{ \lrcorner_h }"{pos=.05}
      ]
      &
      \UnpointedMap\bracket({
        \phi, h_{\mathbb{C}}
      })
      \ar[
        d,
        "{
          \eta^\mathbb{R}_\ast
        }"
      ]
      \\
      \UnpointedMap\bracket({
        \BulkDomain,
        \RRationalization
        S^3
      })
      \ar[
        r,
        "{
          \RRationalization(
            \mathrm{id},h_{\mathbb{C}}
          )_\ast
        }"{}
      ]
      &
      \UnpointedMap\bracket({
        \phi,
        \RRationalization
        h_{\mathbb{C}}
      })      
      \mathrlap{\,,}
    \end{tikzcd}
  \end{equation}
  where the right map is induced by \cref{TheRRationalizationUnit}.
\end{proposition}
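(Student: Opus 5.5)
The plan is to start from the general formula \cref{PrimedPhaseSpaceInIntroduction} for the shape of a phase space defined by a homotopy pullback as in \cref{ThePhaseSpaceInIntro}, and specialize it to the refinement \cref{TheRefinementFibrationForFQH}. Concretely, I will first use that $\shape$ preserves homotopy pullbacks whose corner is a shape-image. Then I apply $\shape$ to the defining square \cref{TheFQHPhaseSpace}. The top-right object $\UnpointedMap(\phi,h_{\mathbb{C}})$ is already homotopy-discrete, i.e.\ it is its own shape. The bottom-right object is $\shape$ of a smooth set, so the pullback is of the form $X \times^h_{\shape Y} Z$ with $X$ and $\shape Y$ discrete. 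Since $\shape$ is a left adjoint that preserves finite products, and since the comparison $\eta^{\shape}$ lands in a discrete object, I expect that $\shape$ preserves this particular homotopy pullback. This is the standard ``shape commutes with pullbacks over discrete objects'' fact (cf.\ cohesive $\infty$-topos axioms: $\shape$ preserves fiber products over discrete objects). It yields
\[
  \shape\FQHPhaseSpace(\phi)
  \simeq
  \UnpointedMap(\phi,h_{\mathbb{C}})
  \underset{\shape\mathbf{\Omega}^1_{\mathrm{cl}}(\phi;\mathfrak{l}h_{\mathbb{C}})}{\times^h}
  \shape\mathbf{\Omega}^1_{\mathrm{cl}}(\phi;\mathfrak{l}\,\mathrm{id}_{S^3})\,.
\]

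Second, I will invoke the (relative, twisted) nonabelian de Rham theorem from \cref{OnSomeGeometricHomotopy} (as in \cite{FSS23-Char}). It identifies $\shape\mathbf{\Omega}^1_{\mathrm{cl}}(\phi;\mathfrak{l}\wp)\simeq\UnpointedMap(\phi,\RRationalization\wp)$, naturally in $\wp$, for nilpotent fibrations of finite type. Under this identification the right vertical map becomes $\eta^{\mathbb{R}}_\ast$ \cref{TheRRationalizationUnit}, i.e.\ the character map is the rationalization unit. The bottom map becomes $\RRationalization(\mathrm{id},h_{\mathbb{C}})_\ast$. Naturality is essential here, so that the square actually commutes with these identifications. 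I would argue it through the functoriality of the Sullivan/CE-model construction in the fibration, applied to the morphism of fibrations $(\mathrm{id},h_{\mathbb{C}})$. The spaces $S^3, S^2$ are simply connected of finite type, so the hypotheses hold. The first identification also uses that $\eta^{\shape}$ is the unit, which becomes an equivalence after applying $\shape$.

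Third, I simplify the bottom-left corner via \cref{TwRelMapIntoIdentity}. Since $\RRationalization(\mathrm{id}_{S^3})=\mathrm{id}_{\RRationalization S^3}$ (rationalization preserves identities), we get $\UnpointedMap(\phi,\RRationalization\mathrm{id}_{S^3})\simeq\UnpointedMap(\BulkDomain,\RRationalization S^3)$. This gives exactly the square \cref{TheShapeOfFQHPhaseSpaceViaRHT}.

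The main obstacle I expect is the first step: justifying that $\shape$ commutes with this homotopy pullback. In general $\shape$ does not preserve pullbacks. My first attempt is to note that both $\UnpointedMap(\phi,h_{\mathbb{C}})$ and $\shape\mathbf{\Omega}^1_{\mathrm{cl}}$ are discrete, so the pullback is a pullback of a smooth stack along a map into a discrete object. I would then use the cohesion property that $\shape$ preserves pullbacks over discrete objects. Equivalently, the fibers over points of a discrete base are computed locally, and $\shape$ commutes with colimits, which are disjoint unions of fibers over points of the base. If that fails, the fallback is to take the statement as the paper allows, i.e.\ as defining the homotopy type.
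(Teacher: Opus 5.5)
Your proposal is correct and follows essentially the same route as the paper. You apply $\shape$ to the defining pullback \cref{TheFQHPhaseSpace}, commute it past the fiber product over the geometrically discrete object $\shape\mathbf{\Omega}^1_{\mathrm{cl}}(\phi;\mathfrak{l}h_{\mathbb{C}})$ via \cref{ShapePreservingFiberProductsOverDiscrete}, identify the shapes of the closed-form smooth sets with mapping spaces into $\mathbb{R}$-rationalizations via \cref{ShapeOfClosedFormsOnSmoothManifold}, and pin down the maps using idempotency of $\shape$ — the paper's steps exactly. The only differences are cosmetic: the paper replaces $\mathbf{\Omega}^1_{\mathrm{cl}}(\phi;\mathfrak{l}\,\mathrm{id}_{S^3})$ by $\mathbf{\Omega}^1_{\mathrm{cl}}(\BulkDomain;\mathfrak{l}S^3)$ already at the level of forms rather than after rationalization, and your appeal to naturality of the Sullivan-model identification spells out what the paper compresses into ``by idempotency''.
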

\begin{proof}
  This follows by combining results compiled in \cref{OnTheCohesiveModalities}:
\begin{equation}
\begin{alignedat}{2}
    \shape \FQHPhaseSpace(\phi)
    & \defneq
    \shape
    \bracket({
      \mathbf{\Omega}^1_{\mathrm{cl}}
      \bracket({
        \BulkDomain; \mathfrak{l}S^3
      })
      \underset{
        \shape
        \mathbf{\Omega}^1_{\mathrm{cl}}
        \scaledbracket({
          \phi; \mathfrak{l}h_{\mathbb{C}}
        })     
      }
        {\times}
      \UnpointedMap\bracket({
        \phi, h_{\mathbb{C}}
      }) 
    })
    &\;\;&
   \substack{ \text{by \cref{TheFQHPhaseSpace}}}
    \\
    & 
    \sim
    \shape\mathbf{\Omega}^1_{\mathrm{cl}}
    \bracket({
        \BulkDomain; \mathfrak{l}S^3
      })
      \underset{
        \shape
        \mathbf{\Omega}^1_{\mathrm{cl}}
        \scaledbracket({
          \phi; \mathfrak{l}h_{\mathbb{C}}
        })     
      }
        {\times}
      \UnpointedMap\bracket({
        \phi, h_{\mathbb{C}}
      }) 
    &\;\;&
   \substack{ \text{by \cref{ShapePreservingFiberProductsOverDiscrete}}}
    \\
    & 
    \simeq
    \UnpointedMap\bracket({
      \BulkDomain,
      \RRationalization S^3
    })
    \underset{
      \UnpointedMap
      \scaledbracket({
        \phi;
        \RRationalization
        h_{\mathbb{C}}
      })
    }
      {\times}
    \UnpointedMap\bracket({
      \phi,
      h_{\mathbb{C}}
    })
    &\;\;&
   \substack{ \text{by \cref{ShapeOfClosedFormsOnSmoothManifold}.}}
\end{alignedat}
\end{equation}
Finally, that the corresponding maps (over which this fiber product is taken) are as claimed follows by idempotency of the shape operator \cref{IdempotencyOfShapeInIntro}.
\end{proof}
\begin{lemma}
  We have natural isomorphisms
  \begin{equation}
    \label{HomotopyGroupsOfMapsToRRationalization}
    \begin{aligned}
    \pi_n
    \UnpointedMap\bracket({
      \phi,
      \RRationalization 
      h_{\mathbb{C}}
    })
    & 
    \simeq
    \pi_n
    \UnpointedMap\bracket({
      \phi,
      h_{\mathbb{C}}
    })
    \otimes_{{}_{\mathbb{Z}}}
    \mathbb{R}\,.
    \end{aligned}
  \end{equation}
\end{lemma}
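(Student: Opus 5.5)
The plan is to reduce the claim to the standard fact that real rationalization commutes with taking mapping spaces out of finite complexes into nilpotent targets. Then we pass this through the fiber product structure of the relative mapping space using the long exact sequences of \cref{TheLongExactSequences}.

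First, I treat the absolute cases. For a finite pointed CW complex $X$ (here $\BulkDomain$, $\BoundaryDomain$ and $\DeepBulkDomain$ are finite) and a simply connected, hence nilpotent, finite-type target $Y$ (here $S^2$, $S^3$), the classical localization theorem (Hilton--Mislin--Roitberg, in its real version) says the following. The map $\eta^{\mathbb{R}}_\ast : \Map(X, Y) \to \Map(X, \RRationalization Y)$ is a real localization on each connected component. Each component is a nilpotent space, and at the base component it induces $\pi_n(-) \otimes_{\mathbb{Z}} \mathbb{R}$ for $n \geq 1$. The same holds for unpointed maps, via the fibration $\Map \to \UnpointedMap \to Y$ and the five lemma for nilpotent groups tensored with $\mathbb{R}$. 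Naturality in $X$ and $Y$ is part of the theorem.

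Second, I use that $\RRationalization$ applied to the Serre fibration $h_{\mathbb{C}}$ yields a fibration $\RRationalization h_{\mathbb{C}}$ of simply connected spaces with fiber $\RRationalization S^1 = K(\mathbb{R},1)$. I take the relative mapping space $\UnpointedMap(\phi, \RRationalization h_{\mathbb{C}})$ to be defined from this fibration as in \cref{RelativeMappingSpace}. Then $\eta^{\mathbb{R}}$ gives a map of Cartesian squares \cref{ThePastingDiagram}, and hence a morphism between the long exact sequences \cref{HomotopyLESForRelativeMap} for $h_{\mathbb{C}}$ and for $\RRationalization h_{\mathbb{C}}$. On the outer terms $\pi_n\Map(\DeepBulkDomain, S^2)$ and $\pi_n\UnpointedMap(\BoundaryDomain, S^3)$, this morphism is identified with $-\otimes \mathbb{R}$ by the first step. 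Since $\otimes_{\mathbb{Z}}\mathbb{R}$ is exact on abelian groups, the tensored top sequence is again exact. The five lemma then identifies the middle term, giving \cref{HomotopyGroupsOfMapsToRRationalization}.

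I expect the main obstacle to be the low degrees and nonabelian terms. There are three issues: $\pi_1$ of the relative mapping space might be nonabelian (e.g.\ Heisenberg-type), $\pi_0$ is only a set, and the sequence terminates in sets. I would handle this by stating the claim for $n \geq 2$, and for $n=1$ on each component where the groups are abelian or at least nilpotent, interpreting $\otimes\mathbb{R}$ as Malcev real completion. I would then use that the components of mapping spaces from finite complexes into nilpotent spaces are nilpotent, with $\eta^{\mathbb{R}}_\ast$ a localization componentwise. This lets me apply the localization theory directly to the relative mapping space, which is the homotopy fiber product of nilpotent components. That route bypasses the five lemma at the bottom of the sequence. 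In the cases actually used ($\phi = \iota_{D^2}, \iota_{A^2}$), the computed groups \cref{PinMapIotaD2hC,pinOfMapsOfBdrOfClosedAnnulusIntoHC} are abelian, so no such subtlety arises.
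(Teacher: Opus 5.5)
Your argument is correct at the same level of rigor as the paper's, but it is organized differently. The paper's proof invokes the general fact that mapping spaces out of finite CW complexes into $\mathbb{R}$-rationalizations are the $\mathbb{R}$-rationalizations of the original mapping spaces, and applies it directly to the relative mapping space $\UnpointedMap(\phi, h_{\mathbb{C}})$. This is the route you only mention at the end. As a concrete alternative, the paper says the explicit computations in \cref{MappingBoundaryInclusionOfDiskIntoComplexHopfFibration,MappingBoundaryInclusionOfClosedAnnulusIntoComplexHopfFibration} can be rerun with $\RRationalization h_{\mathbb{C}}$ in place of $h_{\mathbb{C}}$. You instead apply the localization theorem only to the absolute mapping spaces $\Map(\DeepBulkDomain, S^2)$ and $\UnpointedMap(\BoundaryDomain, S^3)$. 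You then transport it to the relative space via naturality of \cref{HomotopyLESForRelativeMap} and the five lemma. This replaces the paper's case-by-case ``the computations go through'' with one argument that works for any $\phi$ wherever the groups are abelian. The cost is bookkeeping at the nonabelian, set-valued bottom of the sequence, which the paper's direct componentwise statement avoids.

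Two points to tighten. First, over $\mathbb{R}$, unlike over $\mathbb{Q}$, the five lemma needs an actual comparison homomorphism $\pi_n\UnpointedMap(\phi,h_{\mathbb{C}})\otimes_{\mathbb{Z}}\mathbb{R}\to\pi_n\UnpointedMap(\phi,\RRationalization h_{\mathbb{C}})$ compatible with the connecting maps. That requires the lower sequence to consist of $\mathbb{R}$-vector spaces and $\mathbb{R}$-linear maps. Unique divisibility only gives $\mathbb{Q}$-linearity, and since $\RRationalization$ is not a localization (see the paper's footnote) there is no off-the-shelf ``real HMR''. The clean fix is to run your argument with $\mathbb{Q}$-rationalization, where HMR and the five lemma apply verbatim. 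Then extend scalars using $\mathbb{R}$-linear Sullivan or $L_\infty$ models of the mapping spaces. Second, the paper also uses the lemma at $n=0$, e.g.\ $\pi_0\UnpointedMap(\iota_{D^2},\RRationalization h_{\mathbb{C}})\simeq\mathbb{R}$ in \cref{PhaseSpaceHomotopyOverClosedDisk}. Your restriction to $n\geq 1$ leaves this out. It follows directly from the tail of the real unwinding sequence, because $\mathcal{L}\RRationalization S^3$ is simply connected.
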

\begin{proof}
  This follows generally from the fact that mapping spaces out of finite CW complexes into $\mathbb{R}$-rationalizations of some $X$ are $\mathbb{R}$-rationalizations of the corresponding mapping spaces into $X$. Concretely, one may observe that all the above computations of homotopy groups go through with $h_{\mathbb{C}}$ replaced by $\RRationalization(h_{\mathbb{C}})$ and yield the claimed result. 
\end{proof}

\begin{lemma}
We have
\begin{equation}
\label
{FundamentalGrpOfFQHPhsSpaceInDeepBulkOfDisk}
  \pi_1\bracket({
    \Map\bracket({
      S^2, S^2
    })
    \underset{
      \Map\scaledbracket({
        S^2, 
        \RRationalization S^2
      })
    }
    {\times}
    \Map\bracket({
      S^2, 
      \RRationalization
      S^3
    })
  })
  \simeq
  \mathbb{Z}\,.
\end{equation}
\end{lemma}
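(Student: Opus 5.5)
We compute $\pi_1$ of the homotopy fiber product via the Mayer--Vietoris-type long exact sequence of a homotopy pullback. Write $P$ for the homotopy pullback in \cref{FundamentalGrpOfFQHPhsSpaceInDeepBulkOfDisk}, with legs $\Map(S^2,S^2)\to\Map(S^2,\RRationalization S^2)$ (induced by the rationalization unit $\eta^{\mathbb{R}}$) and $\Map(S^2,\RRationalization S^3)\to\Map(S^2,\RRationalization S^2)$ (induced by $\RRationalization h_{\mathbb{C}}$). For a homotopy pullback $X\times^h_Z Y$ there is an exact sequence
\[
  \cdots\to\pi_{2}Z\to\pi_1 P\to\pi_1X\times\pi_1Y\to\pi_1Z\to\pi_0P\to\cdots,
\]
where the middle map is the difference of the two induced maps. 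This can be obtained from the homotopy LES (\cref{HomotopyLES}) of the fibration $P\to X\times Y$ with fiber $\Omega Z$, or by applying the pasting law \cref{PastingLaw}. So first I tabulate the groups involved, using \cref{IterationOfHomotopyGroups} and \cref{HomotopyGroupsOfMapsToRRationalization}:
\[
  \pi_k\Map(S^2,\RRationalization X)\simeq\pi_{k+2}(X)\otimes\mathbb{R}.
\]
This gives $\pi_1\Map(S^2,S^2)\simeq\pi_3S^2\simeq\mathbb{Z}$ and $\pi_1\Map(S^2,\RRationalization S^3)\simeq\pi_3S^3\otimes\mathbb{R}\simeq\mathbb{R}$. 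It also gives $\pi_1\Map(S^2,\RRationalization S^2)\simeq\pi_3S^2\otimes\mathbb{R}\simeq\mathbb{R}$ and $\pi_2\Map(S^2,\RRationalization S^2)\simeq\pi_4S^2\otimes\mathbb{R}=0$.

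Next I identify the two maps into $\mathbb{R}$. On $\pi_1$, the rationalization leg is the canonical inclusion $\mathbb{Z}\hookrightarrow\mathbb{R}$. The other leg is $(h_{\mathbb{C}})_\ast\otimes\mathbb{R}$, which is an isomorphism $\mathbb{R}\xrightarrow{\sim}\mathbb{R}$ by \cref{ComplexHopfFibrationOnPi3}. The difference map
\[
  \mathbb{Z}\times\mathbb{R}\longrightarrow\mathbb{R},\qquad (n,r)\mapsto n-r\ \ (\text{up to the iso})
\]
is therefore surjective, and its kernel is the graph $\{(n,n)\}\simeq\mathbb{Z}$. Since $\pi_2$ of the base vanishes, the map $\pi_1P\to\mathbb{Z}\times\mathbb{R}$ is injective, so $\pi_1P\simeq\mathbb{Z}$.

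The step needing most care is the bookkeeping of basepoints and components. The LES above is stated at a chosen basepoint. We take the basepoint to be the constant map, with matching components in all three spaces, so that the relevant homotopy groups are those just listed; the components of $\Map(S^2,S^2)$ (degree) do not affect the $\pi_1$ computation at the trivial component. One should also make sure that exactness at $\pi_1$ of the pulled-back fibration is used correctly. The map $\pi_1 P\to\pi_1X\times\pi_1Y$ is a group homomorphism whose image is the kernel of the difference map, which is legitimate here because $\pi_1 Z=\mathbb{R}$ is abelian. I would verify this either via the fiber sequence $\Omega Z\to P\to X\times Y$, or by first replacing the rationalization leg by a fibration and using the LES of $P\to\Map(S^2,\RRationalization S^3)$ with fiber the homotopy fiber of $\eta^{\mathbb{R}}_\ast$.
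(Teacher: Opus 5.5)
Your proposal is correct and follows the same route as the paper. The paper also uses the Mayer--Vietoris sequence of the homotopy pullback, kills the left term via $\pi_2\Map(S^2,\RRationalization S^2)\simeq\pi_4(S^2)\otimes\mathbb{R}=0$, and obtains $\pi_1\simeq\mathbb{Z}$ as the kernel of $\mathbb{Z}\times\mathbb{R}\to\mathbb{R}$, $(n,r)\mapsto n-r$, identified via \cref{ComplexHopfFibrationOnPi3}; your extra remarks on basepoints and on exactness at $\pi_1$ are a harmless refinement of the same argument.
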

\begin{proof}
The Mayer-Vietoris sequence (\cref{MayerVietorisSequence}) gives:
\begin{equation}
  \begin{tikzcd}[row sep=small,
    column sep=16pt
  ]
    \overbrace{
    \pi_2\Map\bracket({
      S^2, 
      \RRationalization
      S^2
    })
    }^{0}
    \ar[r, shorten=-2pt]
    &
    \pi_1\bracket({Z})
    \ar[r, shorten=-2pt]
    &
    \overbrace{
    \pi_1
    \Map\bracket({
      S^2, S^2
    })
    }^{ \mathbb{Z} }
    \times
    \overbrace{
    \pi_1
    \Map\bracket({
      S^2, 
      \RRationalization
      S^3
    })
    }^{ \mathbb{R} }
    \ar[
      dll,
      snake left,
      "{ \partial_1 }"{description}
    ]
    \\
    \underbrace{
    \pi_1\Map\bracket({
      S^2, 
      \RRationalization
      S^2
    })
    }_{ \mathbb{R} }
  \end{tikzcd}
\end{equation}
Here the left term vanishes since $\pi_2\Map\bracket({S^2, S^2}) \simeq \pi_4\bracket({S^2})$ is pure torsion, by  \cref{SecondStableStem}. Moreover, using \cref{ComplexHopfFibrationOnPi3}, the connecting homomorphism is seen to be
\begin{equation}
  \begin{tikzcd}[
    sep=0pt
  ]
    \mathbb{Z}
    \times
    \mathbb{R}
    \ar[
      rr,
      "{ \partial_1 }"
    ]
    &&
    \mathbb{R}
    \\
    \bracket({
      n,r
    })
    &\mapsto&
    n-r
    \mathrlap{\,.}
  \end{tikzcd}
\end{equation}
Therefore exactness of the above sequence implies the claim: $\pi_1(Z) \simeq \mathrm{ker}\bracket({ \partial_1 }) \simeq \mathbb{Z}$.
\end{proof}

\begin{proposition}[Phase space homotopy over \textbf{closed disk}]
\label[proposition]
{PhaseSpaceHomotopyOverClosedDisk}
Over the closed disk, the fundamental group of the shape of the FQH phase space \cref{TheShapeOfFQHPhaseSpaceViaRHT} is trivial:
\begin{subequations}
  \label{ThePhaseSpaceHomotopyOverClosedDisk}
  \begin{align}
    \pi_0 
    \bracket({
      \shape\FQHPhaseSpace(\iota_{D^2})
    })
    &
    \simeq
    \ast
    \mathrlap{\,,}
    \\
    \label{Pi1FQHPhaseSpaceIotaD2}
    \pi_1 
    \bracket({
      \shape\FQHPhaseSpace(\iota_{D^2})
    })
    &
    \simeq
    1
    \mathrlap{\,,}
  \end{align}
\end{subequations}
and the unwinding homomorphism \cref{TheRefinedUnwindingLES} is still an isomorphism:
\begin{equation}
\label
{RefinedUnwinding1ForClosedDisk}
  \begin{tikzcd}[row sep=small,
    column sep=-10pt
  ]
    & &
    H^{-2}\bracket({
      \partial D^2;
      S^3
    })
    \ar[
      dll,
      snake left,
      "{
        \RefinedUnwinding_1
      }"{description}
    ]
    \\
    \tilde H'^{-1}\bracket({
      D^2/\partial D^2;
      S^2
    })
  \end{tikzcd}
  \;\;
  =
  \;\;
  \begin{tikzcd}[row sep=small,
    column sep=40pt
  ]
    &&
    \mathbb{Z}
     \mathrlap{\,.}
    \ar[
      dll,
      "{
        \sim
      }"{sloped}
    ]
    \\
    \mathbb{Z}
  \end{tikzcd}
\end{equation}
\end{proposition}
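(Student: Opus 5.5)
We compute the low homotopy of $\shape\FQHPhaseSpace(\iota_{D^2})$ from the homotopy Cartesian square \cref{TheShapeOfFQHPhaseSpaceViaRHT} of \cref{ShapeOfFQHPhaseSpaceViaRHT}. The strategy is to use the refined unwinding long exact sequence \cref{TheRefinedUnwindingLES}, obtained from the refined fiber sequence \cref{TheRefinedHomotopyFiberSequence}, and to compare it with the unrefined one \cref{TheLESForpi1OfDiskIntoCHopf}.

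First, identify the three terms of the refined fiber sequence over the disk.

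\begin{enumerate}
\item The deep bulk term is the space $Z$ of \cref{FundamentalGrpOfFQHPhsSpaceInDeepBulkOfDisk}, with $\DeepBulkDomain \simeq S^2$. By that lemma, $\pi_1(Z)\simeq\mathbb{Z}$. Its $\pi_0$ follows from the same Mayer--Vietoris sequence: $\pi_0\Map(S^2,S^2)=\mathbb{Z}$ maps into $\pi_0\Map(S^2,\RRationalization S^2)=\mathbb{R}$, while $\pi_0\Map(S^2,\RRationalization S^3)=\ast$. So we expect $\pi_0(Z)$ to be the trivial fiber over $0$, i.e. only degree-zero classes survive, since the Chern--Simons constraint $F_2=0$ kills the degree.
\item The near boundary term is the homotopy pullback of $\UnpointedMap(S^1,S^3)$ and $\UnpointedMap(S^1,\RRationalization S^3)$ over $\UnpointedMap(S^1,\RRationalization S^3)$. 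This is just $\UnpointedMap(S^1,S^3)$, with $\pi_0=\pi_1=1$ and $\pi_2=\mathbb{Z}$.
\end{enumerate}

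More cleanly, the boundary-restriction map of the refined total space is the composite of the square's top map with $r_{\mathrm{bdr}}$. Its homotopy fiber can be computed by pasting homotopy pullbacks, as in \cref{ThePastingDiagram}; by \cref{ShapeOfFQHPhaseSpaceViaRHT}, homotopy fiber products commute.

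Exactness of the refined sequence then gives the following.

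\begin{enumerate}
\item $\pi_0(\shape\FQHPhaseSpace)$ is a quotient of $\pi_0(Z)$, since the boundary $\pi_0$ is trivial. To see $\pi_0 \simeq \ast$, we check directly that $\pi_0$ of the square \cref{TheShapeOfFQHPhaseSpaceViaRHT} is trivial. $\pi_0\UnpointedMap(\iota_{D^2},h_{\mathbb{C}})\simeq\mathbb{Z}$ by \cref{pi0MapIotaD2HC} injects rationally into $\pi_0\UnpointedMap(\iota_{D^2},\RRationalization h_{\mathbb{C}})\simeq\mathbb{R}$ by \cref{HomotopyGroupsOfMapsToRRationalization}. Meanwhile $\pi_0\UnpointedMap(D^2,\RRationalization S^3)=\ast$ hits only $0$. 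Hence only the zero component survives. To ensure no extra components arise from $\pi_1$ of the base, we use that the relevant $\pi_1$-map into $\mathbb{R}$ is surjective, as in the next step.
\item For $\pi_1$, the sequence reads
\[
\mathbb{Z}=\pi_2\UnpointedMap(S^1,S^3)\xrightarrow{\RefinedUnwinding_1}\pi_1(Z)\simeq\mathbb{Z}\to\pi_1\shape\FQHPhaseSpace(\iota_{D^2})\to\pi_1\UnpointedMap(S^1,S^3)=1 .
\]
So both \cref{Pi1FQHPhaseSpaceIotaD2} and \cref{RefinedUnwinding1ForClosedDisk} reduce to showing that $\RefinedUnwinding_1$ is surjective; a surjection $\mathbb{Z}\to\mathbb{Z}$ is automatically an isomorphism.
\end{enumerate}

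The main obstacle is identifying $\RefinedUnwinding_1$. The plan is to use naturality of the homotopy long exact sequence \cref{NaturalityOfHomotopyLES} along the comparison map $\shape\rchi$ from the refined to the unrefined fiber sequence. This map is the identity on the boundary term and, on the deep bulk term, is the projection $\pi_1(Z)\to\pi_1\Map(S^2,S^2)$. By the proof of \cref{FundamentalGrpOfFQHPhsSpaceInDeepBulkOfDisk}, $\pi_1(Z)=\ker(\partial_1)=\{(n,n)\}$, so this projection is an isomorphism $\mathbb{Z}\xrightarrow{\sim}\mathbb{Z}$. Naturality then gives $\Unwinding_1=\mathrm{pr}\circ\RefinedUnwinding_1$. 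Since $\Unwinding_1$ is an isomorphism by \cref{UnwindingIsoForIotaD2HC}, $\RefinedUnwinding_1$ is an isomorphism as well.

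One point needs care: the boundary term must really be unchanged. The map $\UnpointedMap(S^1,S^3)\to\UnpointedMap(S^1,\RRationalization S^3)$ in the pullback is the identity-compatible one, so the homotopy pullback collapses. I would verify this by writing out the cube of homotopy pullbacks explicitly.
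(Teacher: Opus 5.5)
Your proof is correct, but the key step runs in the opposite direction from the paper's.

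\textbf{The paper's route.} It never compares with the unrefined sequence over the disk. It applies Mayer--Vietoris directly to the defining square \cref{TheShapeOfFQHPhaseSpaceViaRHT}, using four inputs:
\begin{itemize}
\item $\pi_2\UnpointedMap(\iota_{D^2},h_{\mathbb{C}})\otimes\mathbb{R}=0$;
\item $\pi_1\UnpointedMap(\iota_{D^2},h_{\mathbb{C}})=1$;
\item the 2-connectivity of $\RRationalization S^3$;
\item $\mathbb{Z}\times_{\mathbb{R}}\ast\simeq\ast$.
\end{itemize}
These give $\pi_0\simeq\ast$ and $\pi_1\simeq 1$ of the total space at once. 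Only then does the paper invoke the refined unwinding sequence: triviality of $\pi_1$ forces $\RefinedUnwinding_1$ to be surjective, and \cref{FundamentalGrpOfFQHPhsSpaceInDeepBulkOfDisk} makes it a surjection $\mathbb{Z}\to\mathbb{Z}$.

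\textbf{Your route.} You transport the already established isomorphism \cref{UnwindingIsoForIotaD2HC} along $\shape\rchi$ by naturality, which is exactly the device the paper reserves for the annulus in \cref{ANaturalitySquare}. This works because $\pi_1(Z)$ is the graph of $\mathbb{Z}\hookrightarrow\mathbb{R}$, so its projection to $\pi_1\Map(S^2,S^2)$ is an isomorphism. You then read off $\pi_1\simeq 1$ from exactness.

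\textbf{What each buys.} Your route never needs the vanishing \cref{Pi2MapIotaD2hC} of $\pi_2\UnpointedMap(\iota_{D^2},h_{\mathbb{C}})$, whose proof required the $\mathbb{Z}_{/2}$-analysis of $\Unwinding_2$. It also identifies $\RefinedUnwinding_1$ explicitly as the lift of $\Unwinding_1$, treating the disk uniformly with the annulus. The paper's route gets $\pi_0$ and $\pi_1$ of the total space from the defining square alone. It needs neither the refined fiber sequence \cref{TheRefinedHomotopyFiberSequence} (which the paper only sketches) nor the comparison map, except for the final claim about $\RefinedUnwinding_1$.

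\textbf{A small slip.} In your direct $\pi_0$ check, the base of the square has $\pi_1\UnpointedMap(\iota_{D^2},\RRationalization h_{\mathbb{C}})\simeq\pi_1\UnpointedMap(\iota_{D^2},h_{\mathbb{C}})\otimes\mathbb{R}=0$. So no surjectivity onto $\mathbb{R}$ is needed there. The surjectivity of $(n,r)\mapsto n-r$ is what your other $\pi_0$ argument needs, namely $\pi_0(Z)\simeq\ast$ combined with connectedness of $\UnpointedMap(S^1,S^3)$. State it as a consequence of Mayer--Vietoris for $Z$, not as an expectation.
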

\begin{proof}
The Mayer-Vietoris sequence (\cref{MayerVietorisSequence}) induced by \cref{TheShapeOfFQHPhaseSpaceViaRHT} gives:
\begin{equation}
  \begin{tikzcd}[row sep=small,
    column sep=8pt
  ]
    \overbrace{
    \pi_{2}
      \UnpointedMap\bracket({
        \iota_{D^2},h_{\mathbb{C}}
      })
     \!\otimes \mathbb{R}
    }^{0}
    \,
    \ar[r, shorten=-2pt]
    &
    \pi_1\bracket({
      \shape\FQHPhaseSpace(\iota_{D^2})
    })
    \ar[r, shorten=-2pt]
    &
    \overbrace{
    \pi_1
      \UnpointedMap\bracket({
        \iota_{D^2}, h_{\mathbb{C}}
      })
    }^{1}
    \times
    \overbrace{
      \pi_1 
      \UnpointedMap\bracket({
        D^2,
        \RRationalization\bracket({S^3})
      })
    }^{1}
    \ar[
      dll,
      snake left
    ]
    \\
    \underbrace{
    \pi_{1}
      \UnpointedMap\bracket({
        \iota_{D^2}, h_{\mathbb{C}}
      })
    \!\otimes\! \mathbb{R}
    }_0
    \ar[r, shorten=-2pt]
    &
    \pi_0\bracket({
      \shape\FQHPhaseSpace(\iota_{D^2})
    })
    \ar[r, shorten=-2pt]
    &
    \underbrace{
      \pi_0
        \UnpointedMap\bracket({
          \iota_{D^2}, h_{\mathbb{C}}
        })
    }_{ \mathbb{Z} }
    \times
    \underbrace{
      \pi_0 
      \UnpointedMap\bracket({
        D^2,
        \RRationalization\bracket({S^3})
      })
    }_{\ast}
    \mathrlap{\,,}
  \end{tikzcd}
\end{equation}
where the last map surjects onto the fiber product
\[
  \overbrace{
  \pi_0\UnpointedMap\bracket({
    \iota_{D^2},
    h_{\mathbb{C}}
  })
  }^{\mathbb{Z}}
  \underset{
    \underbrace{
    \pi_0 \UnpointedMap
    \scaledbracket({
      \iota_{D^2},
      \RRationalization(h_{\mathbb{C}})
    })
    }_{\mathbb{R}}
  }{\times}
  \overbrace{
  \pi_0\UnpointedMap\bracket({
    D^2, \RRationalization(S^3)
  })
  }^{\ast}
  \simeq
  \ast
  \mathrlap{\,.}
\]
Here, over/under the braces we used \cref{PinMapIotaD2hC} and the 2-connectivity of $\RRationalization\bracket({S^3})$. This implies the claim \cref{ThePhaseSpaceHomotopyOverClosedDisk}, by exactness.

It also follows that $\RefinedUnwinding_1$ is surjective, and with \cref{FundamentalGrpOfFQHPhsSpaceInDeepBulkOfDisk} it follows that it is of the shown form $\inlinetikzcd{\mathbb{Z} \ar[r] \& \mathbb{Z}}$. As such it must be an isomorphism, as claimed in \cref{RefinedUnwinding1ForClosedDisk}. 
\end{proof}

\begin{lemma}
\label[lemma]
{HomotopyGroupsOfAdjustedMapsFromClosedAnnulusDeepBulkToS2}
Under $\pi_1$, the projection onto the first fiber factor of $\shape \FQHPhaseSpace\bracket({ A^2/\partial A^2 })$ \cref{TheRefinedHomotopyFiberSequence} is
\begin{equation}
\label
{pi1OfAdjustedMapDeepBulkOfClosedAnnulus}
  \begin{tikzcd}[
    row sep=5pt
  ]
    \pi_1
    \bigg(
    \overbrace{
    \Map\bracket({
      A^2/\partial A^2,
      S^2
    })
    \qquad \;
    \underset
    {\mathclap{
      \Map\scaledbracket({
        A^2/\partial A^2,
        \RRationalization S^2
      })
    }}
    {\times}
    \qquad \;
    \Map\bracket({
      A^2/\partial A^2,
      \RRationalization S^3
    })
    }^{
      Z :=
    }
    \bigg)
    \ar[
      d,
      equals
    ]
    \ar[r]
    &
    \pi_1
    \left(
    \Map\bracket({
      A^2/\partial A^2,
      S^2
    })
    \right)
    \ar[
      d,
      equals
    ]
    \\
    \mathbb{Z}
    \ar[
      r, 
      hook,
      "{
        \left(
        \begin{matrix}
          1
          \\
          0
        \end{matrix}
        \right)
      }"{
        description,
        scale=.8
      }
    ]
    &
    \mathbb{Z}
    \times
    \mathbb{Z}_u
    \mathrlap{\,.}
  \end{tikzcd}
\end{equation}
\end{lemma}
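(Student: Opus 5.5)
The plan is to compute $\pi_1(Z)$ with the Mayer--Vietoris sequence (\cref{MayerVietorisSequence}) of the homotopy pullback defining $Z$, exactly as in the proof of \cref{FundamentalGrpOfFQHPhsSpaceInDeepBulkOfDisk}. Then I will read off the projection to the first factor from the middle map of that sequence.

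First I would use \cref{DeepBulkOfClosedAnnulus} to replace $A^2/\partial A^2$ by $S^2 \vee S^1$. By \cref{MapTakingPushoutInFirstArgumentToPullback}, every pointed mapping space out of the wedge splits as a product. The whole homotopy pullback then splits factorwise:
\[
Z \simeq Z_{S^2} \times Z_{S^1},
\qquad
Z_{X} := \Map(X,S^2) \underset{\Map(X,\RRationalization S^2)}{\times} \Map(X,\RRationalization S^3).
\]
Here I use that homotopy pullbacks commute with products. The projection to $\Map(A^2/\partial A^2,S^2)$ is then the product of the two factor projections. By \cref{PiNPreservesProducts}, it suffices to compute $\pi_1$ of each factor separately.

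For the $S^2$-factor, \cref{FundamentalGrpOfFQHPhsSpaceInDeepBulkOfDisk} already gives $\pi_1(Z_{S^2}) \simeq \mathbb{Z}$. From its proof I will extract the map to $\pi_1\Map(S^2,S^2)\simeq\mathbb{Z}$. There $\pi_1(Z_{S^2})$ is identified with $\ker(\partial_1)=\{(n,n)\}\subset \mathbb{Z}\times\mathbb{R}$, and the projection is $(n,n)\mapsto n$, which is an isomorphism onto the first $\mathbb{Z}$.

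For the $S^1$-factor, the Mayer--Vietoris sequence reads
\[
\pi_2\Map(S^1,\RRationalization S^2)\to \pi_1(Z_{S^1}) \to \pi_1\Map(S^1,S^2)\times\pi_1\Map(S^1,\RRationalization S^3) \to \pi_1\Map(S^1,\RRationalization S^2),
\]
which via \cref{IterationOfHomotopyGroups} is
\[
\pi_3(\RRationalization S^2)\to \pi_1(Z_{S^1}) \to \pi_2(S^2)\times\pi_2(\RRationalization S^3)\to \pi_2(\RRationalization S^2).
\]
Since $\pi_2(\RRationalization S^3)=0$, the middle term is $\mathbb{Z}_u\times 0$. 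The last map is the rationalization $\mathbb{Z}_u\hookrightarrow\mathbb{R}$, so it is injective. Hence the image of $\pi_1(Z_{S^1})$ in $\mathbb{Z}_u$ is zero, which is what makes the unshifted copy disappear.

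To conclude that $\pi_1(Z_{S^1})$ vanishes, I also need the incoming map from $\pi_3(\RRationalization S^2)\simeq\mathbb{R}$ to be trivial. This holds because $\pi_2\Map(S^1,S^2)\times\pi_2\Map(S^1,\RRationalization S^3) \simeq \pi_3(S^2)\times\pi_3(\RRationalization S^3) \simeq \mathbb{Z}\times\mathbb{R}$ maps onto $\pi_3(\RRationalization S^2)\simeq\mathbb{R}$ by $(n,r)\mapsto n-r$, using \cref{ComplexHopfFibrationOnPi3}. This is surjective, so the connecting map out of $\mathbb{R}$ is zero.

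Combining the two factors gives $\pi_1(Z)\simeq\mathbb{Z}$, and the projection is $(1,0)^T$ into $\mathbb{Z}\times\mathbb{Z}_u$, as claimed. The step needing most care is identifying the maps in the Mayer--Vietoris sequence: both the surjectivity just used and the injectivity of $\mathbb{Z}_u\to\mathbb{R}$ require that the rationalization units act on homotopy groups as the canonical maps $\pi_\bullet\to\pi_\bullet\otimes\mathbb{R}$, as in \cref{HomotopyGroupsOfMapsToRRationalization}. I expect this to be the main obstacle, though it should be routine.
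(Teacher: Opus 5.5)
Your argument is correct and is essentially the paper's proof. The paper runs one Mayer--Vietoris sequence (\cref{MayerVietorisSequence}) for $Z$ over $S^2\vee S^1$, finding $\partial_1(m,n,r)=r-n$ surjective and $\ker\partial_0\simeq\mathbb{Z}$; that sequence is just the direct sum of your two factor-wise sequences, so splitting $Z\simeq Z_{S^2}\times Z_{S^1}$ first only lets you reuse \cref{FundamentalGrpOfFQHPhsSpaceInDeepBulkOfDisk} for the $S^2$-summand and shows explicitly that $\mathbb{Z}_u$ is killed within the $S^1$-summand.
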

\begin{proof}
Recalling that $A^2/\partial A^2 \sim S^2 \vee S^1$ \cref{DeepBulkOfClosedAnnulus}, and using \cref{SampleComputationUShifted,HomotopyGroupsOfMapsToRRationalization}, the Mayer-Vietoris LES (\cref{MayerVietorisSequence}) gives:
\begin{equation}
  \begin{tikzcd}[row sep=small,
    column sep=12pt
  ]
    \pi_3
      \Map\bracket({
        S^2 \!\vee\! S^1,
        \RRationalization S^2
      })
    \ar[r, shorten=-2pt]
    &
    \pi_2(Z)
    \ar[r, shorten=-2pt]
    &
    \overbrace{
    \pi_2 \Map\bracket({
        S^2 \!\vee\! S^1,
      S^2
    })
    }^{
      \mathbb{Z}_{/2}
      \times 
      \mathbb{Z}_u
    }
    \times
    \overbrace{
    \pi_2 \Map\bracket({
      S^2 \!\vee\! S^1,
      \RRationalization S^3
    })
    }^{
      \mathbb{R}_u
    }
    \ar[
      dll,
      snake left,
      "{ 
        \partial_1 
      }"{description}
    ]
    \\
    \underbrace{
    \pi_2
      \Map\bracket({
        S^2 \!\vee\! S^1,
        \RRationalization S^2
      })
    }_{
      \mathbb{R}_u
    }
    \ar[r, shorten=-2pt]
    &
    \pi_1(Z)
    \ar[r, shorten=-2pt]
    &
    \underbrace{
    \pi_1 \Map\bracket({
      S^2 \!\vee\! S^1,
      S^2
    })
    }_{
      \mathbb{Z} 
        \times 
      \mathbb{Z}_u
    }
    \times
    \underbrace{
    \pi_1 \Map\bracket({
      S^2 \!\vee\! S^1,
      \RRationalization S^3
    })}_{
      \mathbb{R}
    }
    \ar[
      dll,
      snake left,
      "{ \partial_0 }"{description}
    ]
    \\
    \underbrace{
    \pi_1
      \Map\bracket({
        S^2 \!\vee\! S^1,
        \RRationalization S^2
      })
      \mathrlap{\,,}
    }_{
      \mathbb{R}
      \times
      \mathbb{R}_u
    }
  \end{tikzcd}
\end{equation}
where
\begin{equation}
  \begin{tikzcd}[row sep=-3pt,
   column sep=0pt
  ]
    \mathbb{Z}_{/2} 
      \times 
    \mathbb{Z}_u
      \times 
    \mathbb{R}_u
    \ar[
      rr,
      "{ \partial_1 }"
    ]
    &&
    \mathbb{R}_u
    \\
    \bracket({
      [m], n , r
    })
    &\mapsto&
    r - n
    \mathrlap{\,,}
  \end{tikzcd}
  \hspace{.9cm}
  \begin{tikzcd}[
    sep=0pt
  ]
    \mathbb{Z} 
      \times 
    \mathbb{Z}_{u}
      \times 
    \mathbb{R}
    \ar[
      rr,
      "{ \partial_0 }"
    ]
    &&
    \mathbb{R} 
      \times 
    \mathbb{R}_u
    \\
    (m,n, r)
    &\mapsto&
    ( r-m, -n  )
    \mathrlap{\,.}
  \end{tikzcd}
\end{equation}
Hence $\partial_1$ is surjective and the kernel of $\partial_0$ is $\mathrm{ker}\bracket({\partial_0}) \simeq \mathbb{Z} \subset \bracket({\mathbb{Z} \times \mathbb{Z}_u}) \times \mathbb{R}$. By exactness this implies the claim \cref{pi1OfAdjustedMapDeepBulkOfClosedAnnulus}.
\end{proof}

\begin{proposition}
[Phase space homotopy over \textbf{closed annulus}]
\label[proposition]
{PhaseSpaceHomotopyOverAnnulus}
Over the closed annulus, the fundamental group of the shape of the FQH phase space \cref{TheShapeOfFQHPhaseSpaceViaRHT} is trivial:
\begin{equation}
    \label{Pi1FQHPhaseSpaceIotaA2}
    \pi_1 
    \bracket({
      \shape\FQHPhaseSpace(\iota_{A^2})
    })
    \simeq
    1
    \mathrlap{\,,}
\end{equation}
and the refined unwinding homomorphism \cref{TheRefinedUnwindingLES} is
\begin{equation}
\label
{RefinedUnwinding1ForClosedAnnulus}
  \begin{tikzcd}[row sep=small,
    column sep=-10pt
  ]
    & &
    H'^{-2}\bracket({
      \partial A^2;
      S^3
    })
    \ar[
      dll,
      snake left,
      "{
        \RefinedUnwinding_1
      }"{description}
    ]
    \\
    \tilde H'^{-1}\bracket({
      A^2/\partial A^2;
      S^2
    })
  \end{tikzcd}
  \;\;
  =
  \;\;
  \begin{tikzcd}[row sep=small,
    column sep=40pt
  ]
    &&
    \mathbb{Z}^2 
    \mathrlap{\,.}
    \ar[
      dll,
      "{
        (
          +1 \, -1
        )
      }"{description}
    ]
    \\
    \mathbb{Z}
  \end{tikzcd}
\end{equation}
\end{proposition}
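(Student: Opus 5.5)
The plan is to compute $\pi_1$ of the homotopy pullback \cref{TheShapeOfFQHPhaseSpaceViaRHT} for $\phi \defneq \iota_{A^2}$ using its Mayer-Vietoris sequence (\cref{MayerVietorisSequence}), exactly as in the proof of \cref{PhaseSpaceHomotopyOverClosedDisk}. The relevant segment is
\[
  \pi_2\UnpointedMap(\iota_{A^2},h_{\mathbb{C}})
  \times
  \pi_2\UnpointedMap(A^2,\RRationalization S^3)
  \longrightarrow
  \pi_2\UnpointedMap(\iota_{A^2},\RRationalization h_{\mathbb{C}})
  \longrightarrow
  \pi_1\bracket({\shape\FQHPhaseSpace(\iota_{A^2})})
  \longrightarrow
  \pi_1\UnpointedMap(\iota_{A^2},h_{\mathbb{C}})
  \times
  \pi_1\UnpointedMap(A^2,\RRationalization S^3)
  \longrightarrow
  \pi_1\UnpointedMap(\iota_{A^2},\RRationalization h_{\mathbb{C}}) .
\]

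First, I identify the groups.
\begin{enumerate}
\item By \cref{pinOfMapsOfBdrOfClosedAnnulusIntoHC}, $\pi_1\UnpointedMap(\iota_{A^2},h_{\mathbb{C}})\simeq \mathbb{Z}_u$ and $\pi_2\UnpointedMap(\iota_{A^2},h_{\mathbb{C}})\simeq\mathbb{Z}$.
\item By \cref{HomotopyGroupsOfMapsToRRationalization}, the corresponding rationalized groups are $\mathbb{R}_u$ and $\mathbb{R}$.
\item Since $A^2\simeq S^1$, the free-loop-space computation \cref{HomotopyGroupsOfFreeLoopSpace} together with the 2-connectivity of $\RRationalization S^3$ gives $\pi_1\UnpointedMap(A^2,\RRationalization S^3)\simeq \pi_1(\RRationalization S^3)\times\pi_2(\RRationalization S^3)=1$ and $\pi_2\UnpointedMap(A^2,\RRationalization S^3)\simeq\mathbb{R}$.
\end{enumerate}

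Next, I identify the maps.
\begin{enumerate}
\item The right-hand map restricted to the $\mathbb{Z}_u$ factor is the rationalization $\mathbb{Z}_u\hookrightarrow\mathbb{R}_u$, which is injective. Hence the map $\pi_1(\shape\FQHPhaseSpace)\to\mathbb{Z}_u\times 1$ has trivial image.
\item For the left-hand map on $\pi_2$, I use \cref{ImageOfMapA2S3InMapiotaA2hC}. The $\mathbb{R}$ factor coming from $\UnpointedMap(A^2,\RRationalization S^3)$ maps via $\RRationalization(\mathrm{id},h_{\mathbb{C}})_\ast$, which is the rationalization of the isomorphism \cref{pi2IdHC}, hence $\mathbb{R}\xrightarrow{\sim}\mathbb{R}$.
\item Consequently the map $\mathbb{Z}\times\mathbb{R}\to\mathbb{R}$ is surjective, so $\pi_2\UnpointedMap(\iota_{A^2},\RRationalization h_{\mathbb{C}})\to\pi_1(\shape\FQHPhaseSpace)$ is zero.
\end{enumerate}
By exactness, $\pi_1 = 1$, which is \cref{Pi1FQHPhaseSpaceIotaA2}. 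This is the precise mechanism by which the unshifted copy $\mathbb{Z}_u$ of \cref{pi1OfMapsOfBdrOfClosedAnnulusIntoHC} disappears.

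For \cref{RefinedUnwinding1ForClosedAnnulus}, I use the refined unwinding LES \cref{TheRefinedUnwindingLES}.
\begin{enumerate}
\item The refined near-boundary term is a homotopy pullback along an identity (since $\wp'=\mathrm{id}_{S^3}$ restricts over $N$ to $\mathrm{id}_{S^3}$). It is therefore equivalent to $\UnpointedMap(\partial A^2,S^3)$, whose $\pi_2$ is $\mathbb{Z}^2$.
\item The refined deep-bulk term has $\pi_1\simeq\mathbb{Z}$ by \cref{HomotopyGroupsOfAdjustedMapsFromClosedAnnulusDeepBulkToS2}. Moreover, its projection to $\pi_1\Map(A^2/\partial A^2,S^2)=\mathbb{Z}\times\mathbb{Z}_u$ is the inclusion $(1,0)^T$ of the first factor.
\item The projections $\shape\rchi$ give a morphism of fiber sequences from \cref{TheRefinedHomotopyFiberSequence} to \cref{HomotopyFiberSequencesOfModuli-DeepBulk-Full-NearBdr}. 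Naturality \cref{NaturalityOfHomotopyLES} then yields $(1,0)^T\circ\RefinedUnwinding_1=\Unwinding_1\circ\pi_2(\shape\rchi)$, where $\pi_2(\shape\rchi)$ is the identity on $\mathbb{Z}^2$.
\item By \cref{UnwindingHomomorphismForClosedAnnulus}, $\Unwinding_1$ has matrix with rows $(+1,-1)$ and $(0,0)$. Its image lies in the first factor, and injectivity of $(1,0)^T$ gives $\RefinedUnwinding_1=(+1\;-1)$.
\end{enumerate}
As a consistency check, this map is surjective, matching exactness at the vanishing $\pi_1$ just computed.

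The main obstacle is step 2 of the map identifications: justifying that the connecting/comparison map on $\pi_2$ into $\pi_2\UnpointedMap(\iota_{A^2},\RRationalization h_{\mathbb{C}})$ is indeed the rationalization of \cref{pi2IdHC}. I would handle this using naturality of the rationalization unit $\eta^{\mathbb{R}}$, and I also need the Mayer-Vietoris sign conventions not to spoil surjectivity. A secondary point is checking that the identification of the refined near-boundary term with $\UnpointedMap(\partial A^2,S^3)$ is compatible with the naturality square used in the refined unwinding step.
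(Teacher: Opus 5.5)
Your proposal is correct and follows the paper's proof essentially step for step. For $\pi_1 = 1$ you use the same Mayer--Vietoris segment, with the $\pi_2$-map surjective because of the isomorphism \cref{pi2IdHC} and the $\pi_1$-map $\mathbb{Z}_u \hookrightarrow \mathbb{R}_u$ injective. For $\RefinedUnwinding_1$ you use the same naturality square along the projection to the left factor: the left vertical map is the identity, the right vertical map is $(1,0)^T$ from \cref{HomotopyGroupsOfAdjustedMapsFromClosedAnnulusDeepBulkToS2}, and the bottom map is taken from \cref{UnwindingHomomorphismForClosedAnnulus}.
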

\begin{proof}
The Mayer-Vietoris sequence (\cref{MayerVietorisSequence}) induced by \cref{TheShapeOfFQHPhaseSpaceViaRHT} gives:
\begin{equation}
  \begin{tikzcd}[
    column sep=12pt
  ]
    && 
    \overbrace{
    \pi_2
      \UnpointedMap\bracket({
        \iota_{A^2}, h_{\mathbb{C}}
      })
    }^{\mathbb{Z}}
    \times
    \overbrace{
      \pi_2 
      \UnpointedMap\bracket({
        A^2,
        S^3
      })
      \otimes
      \mathbb{R}
    }^{\mathbb{R}}
    \ar[
      dll,
      snake left,
      "{
        \partial_1
        =
        \pi_2
        \big(
          (-)\otimes(\mathbb{Z} \hookrightarrow\mathbb{R})
          ,\,
          (
            \mathrm{id},
            h_{\mathbb{C}}
          )_\ast
        \big)
      }"{description, pos=.62}
    ]
    \\
    \overbrace{
    \pi_{2}
      \UnpointedMap\bracket({
        \iota_{A^2},h_{\mathbb{C}}
      })
     \!\otimes \mathbb{R}
    }^{\mathbb{R}}
    \ar[r, shorten=-2pt]
    &
    \pi_1\bracket({
      \FQHPhaseSpace(\iota_{A^2})
    })
    \ar[r, shorten=-2pt]
    &
    \overbrace{
    \pi_1
      \UnpointedMap\bracket({
        \iota_{A^2}, h_{\mathbb{C}}
      })
    }^{
      \mathbb{Z}_{\mathrlap{u}}
    }
    \times
    \overbrace{
      \pi_1 
      \UnpointedMap\bracket({
        A^2,
        S^3
      })
      \otimes
      \mathbb{R}
    }^{1}
    \ar[
      dll,
      snake left,
      "{
        \partial_0
        =
        \pi_1
        \big(
          (-)\otimes(\mathbb{Z} \hookrightarrow\mathbb{R})
          ,\,
          (
            \mathrm{id},
            h_{\mathbb{C}}
          )_\ast
        \big)
      }"{description, pos=.62}
    ]
    \\
    \underbrace{
    \pi_{1}
      \UnpointedMap\bracket({
        \iota_{A^2}, h_{\mathbb{C}}
      })
    \!\otimes\! \mathbb{R}
    }_{
      \mathbb{R}_{\mathrm{u}}
    }
    \ar[r, shorten=-2pt]
    &
    \pi_0\bracket({\FQHPhaseSpace(\iota_{A^2})})
    \ar[r, shorten=-2pt]
    &
    \underbrace{
      \pi_0
        \UnpointedMap\bracket({
          \iota_{A^2}, 
          h_{\mathbb{C}}
        })
    }_{ \mathbb{Z} }
    \times
    \underbrace{
      \pi_0 
      \UnpointedMap\bracket({
        A^2,
        S^3
      })
      \otimes
      \mathbb{R}
    }_{\ast}
    \mathrlap{\,,}
  \end{tikzcd}
\end{equation}
where over/under the braces we used
\cref{pinOfMapsOfBdrOfClosedAnnulusIntoHC} and the 2-connectivity of $\RRationalization(S^3)$. But since $\partial_1$ is surjective, by \cref{pi2IdHC}, and $\partial_0 : \mathbb{Z}_u \hookrightarrow \mathbb{R}_u$ is clearly injective, this implies the claim \cref{Pi1FQHPhaseSpaceIotaA2}, by exactness of the above sequence.

Now in order to identify the refined unwinding homomorphism, consider the naturality square \cref{NaturalityOfHomotopyLES} of homotopy exact sequences induced by the projection $\shape \rchi$ \cref{PrimedPhaseSpaceInIntroduction} to the left factor in the homotopy fiber products: 
\begin{equation}
\label
{ANaturalitySquare}
  \begin{tikzcd}[
    column sep=25pt
  ]
    \pi_2\bracket({
      \shape
      \PhaseSpace\bracket({
        \partial A^2
        ;
        S^3
      })
    })
    \ar[
      d,
      equals
    ]
    \ar[
      r,
      "{
        \RefinedUnwinding_1
      }"
    ]
    &
    \pi_1\bracket({
      \shape
      \PhaseSpace'\bracket({
        A^2/\partial A^2
        ;
        S^2
      })
    })
    \ar[
      d,
      "{
         \shape\chi_\ast
      }"
    ]
    \\
    \pi_2
    \UnpointedMap\bracket({
      \partial A^2,
      S^3
    })
    \ar[
      r,
      "{ \Unwinding_1 }"
    ]
    &
    \pi_1
    \Map\bracket({
      A^2/\partial A^2,
      S^2
    })
  \end{tikzcd}
  \;\;
  =
  \;\;
  \begin{tikzcd}[
    ampersand replacement=\&,
    column sep=60pt
  ]
    \mathbb{Z}^2
    \ar[
      d,
      equals
    ]
    \ar[
      r,
      "{
        \RefinedUnwinding_1
      }"
    ]
    \&
    \mathbb{Z}
    \ar[
      d,
      hook
    ]
    \\
    \mathbb{Z}^2
    \ar[
      r,
      "{
        \left(
        \begin{matrix}
          + 1 & -1
          \\
          \,0 & \;0
        \end{matrix}
        \right)
      }"{
        description,
        scale=.8
      }
    ]
    \&
    \mathbb{Z}
    \mathrlap{
      \times
      \mathbb{Z}_u
      \mathrlap{\,.}
    }
  \end{tikzcd}
\end{equation}
Here:
\begin{enumerate}
\item
the left vertical map is the identity,  because we have \cref{TheRefinementFibrationForFQH} $\ClassifyingA' = \ClassifyingA$ (both being $S^3$)
and hence 
$\UnpointedMap\bracket({\BoundaryDomain, \RRationalization \ClassifyingA'}) = \UnpointedMap\bracket({\BoundaryDomain, \RRationalization \ClassifyingA})$ 
in the homotopy fiber product on the right of \cref{TheRefinedHomotopyFiberSequence}

\item
the right vertical map is the inclusion of the ``unshifted'' copy of $\mathbb{Z}$, by \cref{pi1OfAdjustedMapDeepBulkOfClosedAnnulus},

\item
the bottom map is as shown, by \cref{UnwindingHomomorphismForClosedAnnulus}.
\end{enumerate}
Therefore the commutativity of \cref{ANaturalitySquare} implies the claim \cref{RefinedUnwinding1ForClosedAnnulus}.
\end{proof}
\begin{remark}
\label[remark]
{WhereTheUnshiftedCopyDisappears}
  So \cref{PhaseSpaceHomotopyOverAnnulus} shows that in passing from the plain relative mapping space to the shape of the phase space, the nontrivial fundamental group from \cref{MappingBoundaryInclusionOfClosedAnnulusIntoComplexHopfFibration} disappears. We may also understand this via \cref{HomotopyGroupsOfAdjustedMapsFromClosedAnnulusDeepBulkToS2}: 
  
  Because, the proof of \cref{MappingBoundaryInclusionOfClosedAnnulusIntoComplexHopfFibration} shows that the fundamental group $\mathbb{Z}$ of $\Map\bracket({\iota_{A^2}}, h_{\mathbb{C}})$ is inherited from the ``unshifted'' copy $\mathbb{Z}_u$ of integers in 
  \[
    \begin{aligned}
    \pi_1
    \Map\bracket({
      S^2 \vee S^1,
      S^2
    })
    &
    \simeq
    \pi_1
    \Map\bracket({
      S^2,
      S^2
    })
    \times
    \pi_1
    \Map\bracket({
      S^1,
      S^2
    })
    \\
    & 
    \simeq
    \mathbb{Z} \times \mathbb{Z}_u
    \mathrlap{\,.}
    \end{aligned}
  \]
  But \cref{HomotopyGroupsOfAdjustedMapsFromClosedAnnulusDeepBulkToS2} shows that exactly this unshifted copy $\mathbb{Z}_u$ is removed by passage from the plain mapping space to the shape of the phase space.
\end{remark}

\section
{Geometric Engineering on M-Branes}
\label
{OnGeometricEngineeringOnMBraneProbes}

Finally, we discuss how the structures seen above arise when \emph{geometrically engineering} effective FQH field theory on M5-brane probes of 11D Supergravity, globally completed by electromagnetic flux quantization.

The key observation here is that the flux-quantization for M-strings on M5-branes on $\mathrm{A}_1$-singularities in 11D Sugra that is considered in \cite{SS25-Seifert,SS25-Srni,BaSS26-MString} leads exactly to the phase space considered in \cref{FQHPhaseSpace}, with the M-string inclusion into the orbi-fixed transversal part of the M5-brane playing the role of the boundary domain inclusion \cref{BoundaryInclusionInIntroduction}.

In more detail: We are considering this brane configuration in the \emph{OM theory limit} \cite{nLab:NncommutativeOpenStringTheory}, where the angle vanishes between the M5-brane and the M2-brane that is incident on it (ending at an M-string), hence where the M2-brane lies entirely inside the M5:
This way the otherwise perpedicular direction of the M2 may be included in the light cone common to the M5 brane and the bulk, in that we have the following brane configuration:
\begin{equation}
  \begin{tikzcd}[
    column sep=-5pt
  ]
   {}
   \\[-5pt]
    \substack{\text{
      \color{darkblue}
      M2-brane
    }}
    &
    \overset{
      \mathclap{
        \adjustbox{
          rotate=30,
          scale=.75
        }
        {\rlap{
          \color{darkblue}
          lightcone
        }}
      }
    }{
    \mathbb{R}^{1,1}
    }
    &
    \times
    \ar[
      d,
      hook,
      "{ \phi }"
    ]
    &
    \BoundaryDomain^1
    \ar[
      rrrrr,
      dashed
    ]
    &&&&
    &[40pt]
    S^7
    \ar[
      d,
      ->>,
      "{
        h_{\mathbb{C}}
      }"
    ]
    &[-11pt]
    \phantom{S}
    \ar[
      in=58-90,
      out=180-58-90,
      looseness=4,
      shorten <=1pt,
      shorten >=-1pt,
      shift right=3pt,
      "{
        \,\mathclap{G}\,
      }"{description}
    ]
    \\
    \substack{\text{
      \color{darkblue}
      M5-brane
    }}
    &
    \mathbb{R}^{1,1} 
    &
    \times
    \ar[
      d,
      hook,
      "{ \Phi }"
    ]
    &
    \BulkDomain^2
    &
    \times
    &
    \mathbb{C}
    \ar[
      in=60,
      out=180-60,
      looseness=4,
      shift right=2pt,
      "{
        \,\mathclap{G}\,
      }"{description}
    ]
    \ar[
     rrr,
     dashed,
     shorten >=6pt
    ]
    &&
    & 
    \mathllap{\mathbb{C}}P^3
    \ar[
      d,
      ->>,
      "{
        t_{\mathbb{C}}
      }"
    ]
    &
    \phantom{S}
    \ar[
      in=58-90,
      out=180-58-90,
      looseness=4,
      shorten <=1pt,
      shorten >=-1pt,
      shift right=3pt,
      "{
        \,\mathclap{G}\,
      }"{description}
    ]
    \\
    \substack{\text{
      \color{darkblue}
      11D bulk
    }}
    &
    \mathbb{R}^{1,1}
    &\times&
    \mathbb{R}^5
    &\times&
    \mathbb{C}
    \ar[
      in=60,
      out=180-60,
      looseness=4,
      shift right=2pt,
      "{
        \,\mathclap{G}\,
      }"{description}
    ]
    &
    \underset{\mathclap{
      \substack{\text{
        \color{darkblue}
        $\mathrm{A}_1$-orbifold
      }}
    }}
      {\times}
    &
    {\mathbb{C}}
    \ar[
      in=60,
      out=180-60,
      looseness=4,
      shift right=2pt,
      "{
        \,
        \mathclap{
          G^{\mathrlap{\mathrm{op}}}
        }
        \,
      }"{description}
    ]
    \ar[
      r,
      dashed
    ]
    &
    S^4
    &
    \phantom{S}
    \ar[
      in=58-90,
      out=180-58-90,
      looseness=4,
      shorten <=1pt,
      shorten >=-1pt,
      shift right=3pt,
      "{
        \,\mathclap{G}\,
      }"{description}
    ]
  \end{tikzcd}
\end{equation}

(Here $G$ is a finite cyclic group acting as a subgroup of $\mathrm{SU}(2) \simeq \mathrm{Sp}(1)$ on $\mathbb{C}^2 \simeq_{\mathbb{R}} \mathbb{H}$, thus making an A-type orbifold.)

Since the orbifold factors here are equivariantly contractible, the twisted relative mapping spaces of dashed classifying maps, in the above diagram, are equivalent to the corresponding maps at their $G$-fixed locus:
\begin{equation}
  \begin{tikzcd}[
    column sep=-5pt
  ]
    \substack{
      \text{\color{darkblue}M2-brane}
    }
    &
    \mathbb{R}^{1,1}
    &
    \times
    \ar[
      d,
      hook,
      "{ \phi }"
    ]
    &
    \BoundaryDomain^1
    \ar[
      rrrrr,
      dashed
    ]
    &&&&
    &[40pt]
    S^3
    \ar[
      d,
      ->>,
      "{
        h_{\mathbb{C}}
      }"
    ]
    &[-11pt]
    \\
    \substack{
      \text{\color{darkblue}
        M5-brane
      }
      \\
      \text{\color{darkblue}
        singularity
      }
    }
    &
    \mathbb{R}^{1,1} 
    &
    \times
    \ar[
      d,
      hook,
      "{ \Phi }"
    ]
    &
    \BulkDomain^2
    &
    &
    \ar[
     rrr,
     dashed,
    ]
    &&
    & 
    S^2
    \ar[
      d,
      ->>
    ]
    &
    \\
    \substack{
      \text{\color{darkblue}7D $\mathrm{A}_1$-}
      \\
      \text{\color{darkblue}singularity}
    }
    &
    \mathbb{R}^{1,1}
    &
      \times
    &
    \mathbb{R}^5
    \ar[
      rrrrr,
      dashed
    ]
    &
    &
    &
    &
    &
    \ast
  \end{tikzcd}
\end{equation}

By the rules of topological light cone quantization (\cite[\S 2]{SS25-Complete}, following \cite{SS24-Obs}), this makes the topological quantum states be representations of 
\begin{equation}
  \pi_0
  \bracket({
    \Map\bracket({
      \mathbb{R}^{1}_{\cpt}
      \wedge
      \phi_{\plus}
      , 
      h_{\mathbb{C}}
    })    
  })
  \simeq
  \pi_1
  \UnpointedMap\bracket({
    \phi, h_{\mathbb{C}}
  })
\end{equation}
But this coincides just with the topologically ordered states \cref{BoundedOrderRepresentation} implied by the choice of classifying fibration $\wp \defneq h_{\mathbb{C}}$ which we derived for FQH systems in \cref{OnDeterminingTheNearBdrClassfFib}.

Better yet, the actual flux density present on the M5-brane has characteristic $L_\infty$-fibration
\begin{equation}
  \begin{tikzcd}[
    column sep=-5pt
  ]
    \mathfrak{l}
    \bracket({
    S^7 
    \rotatebox[origin=c]{-90}{$
      \curvearrowright
    $}
    \,G
    })
    \ar[
      d,
      ->>,
      "{ \mathrm{id} }"
    ]
    \\
    \mathfrak{l}
    \bracket({
    S^7 
    \rotatebox[origin=c]{-90}{$
      \curvearrowright
    $}
    \,G
    })
    \ar[
      d,
      ->>,
      "{
        \mathfrak{l}
        \bracket({
        h_{\mathbb{C}} 
        \rotatebox[origin=c]{-90}{$
          \curvearrowright
        $}
        \,G
        })
          }"
    ]
    \\
    \mathfrak{l}
    \bracket({
    S^4
    \rotatebox[origin=c]{-90}{$
      \curvearrowright
    $}
    \,G
    })
  \end{tikzcd}
  \;\;\;\;\;\;
  \text{\small with fixed locus}
  \;\;\;\;\;\;
  \begin{tikzcd}
    \mathfrak{l}S^3
    \ar[
      d,
      ->>,
      "{ \mathrm{id} }"
    ]
    \\
    \mathfrak{l}S^3
    \ar[
      d,
      ->>
    ]
    \\
    \ast
    \mathrlap{\,,}
  \end{tikzcd}
\end{equation}
reflecting the fact that on-shell non-vanishing is only a 3-flux $H_3$, while the 2-flux $F_2$ and 1-flux $H_1$ characterized by $\mathfrak{l}h_{\mathbb{C}}$ are Chern-Simons type and vanish on-shell (\parencites[p. 7]{SS25-Seifert}[(4.3)]{BaSS26-MString}).

But this means that we are dealing with the refinement map exactly as in \cref{TheRefinementFibrationForFQH} and that 
the actual phase space of this brane configuration is exactly as in \cref{FQHPhaseSpace}! 

In this sense, the above brane configuration ``geometrically engineers'' (\cite{Duplij2017,nLab:GeometricEngineering}) the topological bulk/boundary quantum observables discussed in the main text, and with it the bulk-boundary correspondence for FQH systems as discussed in \cref{OnResults}. 

\section
{Conclusion}
\label
{Conclusions}

We have laid out (in \cref{OnABNCFOrTopologicalOrders}) a new mathematical formulation of bulk-boundary correspondence (BBC), applicable to strongly interacting systems whose classical bulk/boundary parameter space (over which quantum ground states are adiabatically parameterized) admits a classifying fibration $\wp$. In contrast to existing formulations, this new proposal mathematically resolves the crucial distinction between the total system (which cannot as such be gapped if its boundary is ungapped) and its \emph{deep bulk} that may remain gapped and topologically ordered.

For the case of fractional quantum Hall (FQH) systems, we narrowed down (in \cref{OnDeterminingTheNearBdrClassfFib}) the choice of $\wp$ essentially to the complex Hopf fibration, $\wp \defneq h_{\mathbb{C}}$, by demanding that over the closed disk, $\BulkDomain \defneq D^2$, the boundary and hence the total system be ungapped, while topological order with the usual abelian Chern-Simons/FQH anyon braiding phases is seen in the deep bulk (via results recalled in \cref{OnRecallingUnboundedFQHOrders}).  

Using this choice of classifying fibration, $h_{\mathbb{C}}$, over the constricted annulus, $\BulkDomain \defneq A^2_{\mathrm{cns}}$, we found (in \cref{OnClassifyingBoundedTopologicalOrders}) a bulk-boundary correspondence which exhibits the unwinding of the deep bulk FQH order over the two edges, with opposite signs, as expected in view of the opposite ungapped chiral edge currents characteristic of these configurations.

Over the closed annulus, $\wp \defneq A^2$, however, the analogous purely topological analysis with classifying fibration $\wp \defneq h_{\mathbb{C}}$ superficially predicts nontrivial monodromy (further in \cref{OnClassifyingBoundedTopologicalOrders}), contrary to the physical condition that with its boundary also the total system is gapped and hence not topologically ordered.

In resolution of this situation, we switched gears (in \cref{OnBndryFieldsInHGT}) from plain to geometric homotopy theory (cohesive $\infty$-topos theory, \cref{OnSomeGeometricHomotopy}) in order to give a more fine-grained construction of the actual phase space of the effective dynamics, modeled as an effective higher gauge theory of Maxwell/Chern-Simon type. Such refinement depends on the further datum of a cover $\mathfrak{l}\wp'$ of the image $\mathfrak{l}(-)$ of the classifying fibration $\wp$ under $\mathbb{R}$-rationalization, concretely modeled by its relative Whitehead $L_\infty$-algebra $\mathfrak{l}\wp$: Where $\mathfrak{l}\wp$ characterizes the phase space structure of Maxwell-type effective higher gauge field fluxes, the cokernel of the refinement map $\inlinetikzcd{\mathfrak{l}\wp' \ar[r, "{ \prime }"] \& \mathfrak{l}\wp }$ characterizes those among these which satisfy Chern-Simons type equations of motion in that the flux densities vanish on-shell.

Here we found (in \cref{OnIdentifyingEdgeCurrentsInTEDCoh}) that the choice $\prime \defneq \mathfrak{l}(\mathrm{id}, h_{\mathbb{C}}) : \inlinetikzcd{ \mathfrak{l}\mathrm{id}_{S^3} \ar[r] \& \mathfrak{l}h_{\mathbb{C}} }$ removes the spurious order over the closed annulus, while retaining the situation over the closed disk as well as the BBCs in both cases (\cref{OnTheHomotopyOfThePhaseSpace}). While this is just a mathematical fact in itself, its physical relevance is corroborated by analysis of the structure of the gauge potentials in the corresponding effective gauge theory: 

Remarkably, these turn out (\cref{OnIdentifyingThePhaseSpace}) to be of the form known from Floreanini-Jackiw/Wess-Zumino-Witten boundary field theory of abelian Chern-Simons theory in the usual Lagrangian formulation, as traditionally used for effective FQH field theory. Our non-Lagrangian construction (cf. \cite{SS25-ISQS29}) hence reproduces the expected form of the FQH edge currents while sidestepping notorious conceptual issues with Lagrangian Chern-Simons effective field theory applied to this situation (cf. \cite[Rem. A.1]{SS25-FQH}).

This result is the main conclusion drawn here: 
\begin{standout}
With a BBC for strongly interacting systems formulated in generality, the bulk/edge dynamics of FQH anyons neatly fits in as an effective higher gauge theory with refined classifying fibration $(\mathrm{id}_{S^3},h_{\mathbb{C}})$.
\end{standout}

To put this in perspective, we observed (in \cref{OnGeometricEngineeringOnMBraneProbes}) that this is most curious, because just this kind of higher gauge theory appears in what is traditionally advertised as ``high energy physics'', namely on suitable M-brane probes of A-type singularities in 11-dimensional supergravity --- if the latter is globally completed by flux-quantization in twisted Cohomotopy (a completion that in turn is motivated from expectations about the conjectural \emph{M-theory}, cf. \cite{FSS20-H,FSS21-Hopf}). 

While it is a well-appreciated phenomenon that the effective dynamics of strongly coupled quantum systems may be embedded into that of branes fluctuating in higher dimensional auxiliary orbifold super-spacetimes, known as \emph{geometric engineering} of quantum field theories, here we see this idea realized in a novel refined form, where the global topological effects are not just a sketchy afterthought to local Lagrangian analysis, but are natively captured by globally completed effective higher gauge theory (cf. \cite{SS25-Complete}). 

It appears plausible that this geometric engineering of effective bulk/boundary FQH theory on M-brane configurations may shed new light on subtle aspects of FQH dynamics of contemporary interest. Because, remarkably, both 
\emph{$W_\infty$-symmetry}
as well as 
\emph{supersymmetry},
which famously characterize M-brane dynamics, have come to be thought to control the collective excitation modes of FQH liquids (cf. \cite{SS26-SDiff}). 
While this topic is now receiving increased attention, a comprehensive theoretical grasp of the situation is still outstanding.

\appendix

\section
{Background}
\label{OnBackground}

For ease of reference, we compile some background facts that are used in the main text.

\subsection
{Some General Topology}
\label{OnSomeGeneralTopology}

For suitable background on general topology cf. \parencites[\S 2]{James1984}[\S 2]{Schwarz1994}[\S 1-2]{Eschrig2011}.
We work in the category $\Top$ of pointed compactly-generated topological spaces with continuous pointed maps between them (just called \emph{maps}, for short). We will also refer to the \emph{opposite category} $({\Top})^{\mathrm{op}}$, whose objects are the same, but whose morphisms are such maps regarded as morphism going in the opposite direction.

We denote the singleton space by $\ast \in \Top$. This is both \emph{initial} and \emph{terminal}, in that for every $X \in \Top$ there are \emph{unique} maps $\inlinetikzcd{ \ast \ar[r] \& X \ar[r] \& \ast }$.

A particular map that plays a key role in the main text:
\begin{example}
  The \emph{complex Hopf fibration} (cf. \parencites{Lyons2003}[\S 3.2.3]{SS25-Orient}) 
  is the map from the 3-sphere to the 2-sphere,
  \begin{equation}
    \label{TheComplexHopfFibration}
    \begin{tikzcd}[
      row sep=-2pt
    ]
      S^3
      \ar[
        r,
        "{ h_{\mathbb{C}} }"
      ]
      \ar[
        d,
        equals
      ]
      &
      S^2
      \ar[
        d,
        equals
      ]
      \\[8pt]
      \bracket({
        \mathbb{C}^2 \setminus \{0\}
      })/ \mathbb{R}^\times_+
      \ar[r]
      &
      \bracket({
        \mathbb{C}^2 \setminus \{0\}
      })/ \mathbb{C}^\times
      \\
      (z_1, z_2) 
      \mod \mathbb{R}^\times_+
      \ar[
        r,
        |->,
        shorten=5pt
      ]
      &
      (z_1, z_2) 
      \mod \mathbb{C}^\times
      \mathrlap{\,,}
    \end{tikzcd}
  \end{equation}
  given by sending $\mathbb{R}_+$-lines in $\mathbb{C}^2$ to the $\mathbb{C}$-lines which they span, under the canonical inclusion $\mathbb{R}^\times_+ \subset \mathbb{C}^\times$ of the multiplicative group of positive real numbers into that of non-zero complex numbers.
\end{example}

A particular role in our discussion is played by spaces of maps:
\begin{definition}
\label{TheMappingSpaces}
Given pointed CW-complexes $(X,Y)$, write $\UnpointedMap(X,Y)$ for the space of maps from $X$ to $Y$, and
\begin{equation}
\label{ThePointedMappingSpace}
  \mathrm{Map}^{\!\ast}({X,Y})
  \subset
  \mathrm{Map}({X,Y})
\end{equation}
for the subspace of the general mapping space on the
basepoint preserving map. This space is itself pointed by the map constant on the basepoint of $Y$. 

As such, this construction extends to a functor:
  \begin{equation}
    \label{MappingSpaceFunctor}
    \begin{tikzcd}[
      row sep=-2pt, column sep=35pt
    ]
    (\Top)^{\mathrm{op}}
    \times
    \Top
    \ar[
      r,
      "{ 
        \Map(-,-)
      }"
    ]
    &
    \Top
    \\
    (X,A)
    \ar[
      d, 
      shift left=5pt,
      "{ f }"
    ]
    \ar[
      r,
      |->, 
      shorten=4pt
    ]
    &
    \Map(X,A)
    \ar[
      d,
      "{
        g^\ast f_\ast
      }"
    ]
    \\[15pt]
    (Y,B)
    \ar[
      u, 
      shift left=5pt,
      "{ g }"
    ]
    \ar[
      r,
      |->, 
      shorten=4pt
    ]
    &
    \Map(Y,B)
    \mathrlap{\,.}
    \end{tikzcd}
  \end{equation}
\end{definition}

For example, for $X \in \Top$ its \emph{based loop space} and \emph{free loop space} are, respectively:
\begin{equation}
  \label{FreeAndBasedLoopSpace}
  \Omega X
  :=
  \Map\bracket({S^1,X})
  \,,
  \;\;\;
  \mathcal{L}X
  :=
  \UnpointedMap\bracket({S^1,X})
  \mathrlap{\,.}
\end{equation}

\subsection
{Some Category Theory}

For suitable background on category theory see \parencites[\S 1]{James1984}[\S 2]{Geroch1985}{Awodey2010}.

A commuting square of maps is called \emph{Cartesian} (or a \emph{pullback}), denoted by a hook ``$\lrcorner$'' in its top left corner, if it uniquely factors all commuting square completions of its bottom and right maps:
\begin{equation}
  \label{CartesianSquare}
  \begin{tikzcd}
    {} 
      \ar[r] 
      \ar[d]
    \ar[
      dr,
      phantom,
      "{ \lrcorner }"{pos=.1}
    ]
    & {} \ar[d]
    \\
    {} \ar[r] & {}
  \end{tikzcd}
  \;\;\;
  \Leftrightarrow
  \;\;\;
  \begin{tikzcd}
    {}
      \ar[
        drr,
        bend left=20,
        "{ \forall }"{description}
      ]
      \ar[
        ddr,
        bend right=20,
        "{ \forall }"{description}
      ]
    \ar[
      dr,
      dashed,
      "{ \exists! }"{description}
    ]
    &[-8pt]
    \\[-3pt]
    &
    {} 
      \ar[r] 
      \ar[d]
    & {} \ar[d]
    \\
    &
    {} \ar[r] & {}
    \mathrlap{\,.}
  \end{tikzcd}
\end{equation}
Dually, a commuting square is called \emph{co-Cartesian} (or a \emph{pushout}), denoted by a dual hook ``$\ulcorner$'' in its bottom right corner, if it is Cartesian when seen in the opposite category $\bracket(\Top)^{\mathrm{op}}$.

\begin{example}
Given a map \inlinetikzcd{X \ar[r] \& Y}, then:
\begin{enumerate}
  \item its \emph{fiber} is the subspace $\inlinetikzcd{F \ar[r, hook] \& X}$, in that:
  \begin{equation}
    \label{Fiber}
    \begin{tikzcd}[row sep=small, column sep=large]
      F 
        \ar[d]
        \ar[r]
        \ar[
          dr,
          phantom,
          "{ \lrcorner }"{pos=.1}
        ]
      & X \ar[d]
      \\
      \ast \ar[r] & Y
      \mathrlap{\,,}
    \end{tikzcd}
  \end{equation}
  \item 
  its \emph{cofiber} 
  is the quotient space $\inlinetikzcd{Y \ar[r, ->>] \& Q = Y/X}$ in that
  \begin{equation}
    \label{Cofiber}
    \begin{tikzcd}[row sep=small, column sep=large]
      X 
        \ar[d]
        \ar[r]
        \ar[
          dr,
          phantom,
          "{ \ulcorner }"{pos=.9}
        ]
      & Y \ar[d]
      \\
      \ast \ar[r] & Q
      \mathrlap{\,.}
    \end{tikzcd}
  \end{equation}
\end{enumerate}
\end{example}
\begin{example}
  Given $X,Y \in \Top$, 
  \begin{enumerate}
  \item
  the pullback of their joint basepoint projections is their \emph{product space} $X \times Y$,
  \item
  the pushout of their joint basepoint inclusions is their \emph{wedge sum} $X \vee Y$ (the result of identifying the basepoints in their disjoint union $X \sqcup Y$):
  \end{enumerate}
  \begin{equation}
    \label{ProductSpaceAndWedgeSum}
    \begin{tikzcd}[row sep=small, 
      column sep=15pt
    ]
      X \times Y
      \ar[r]
      \ar[d]
      \ar[
        dr,
        phantom,
        "{ \lrcorner }"{pos=.1}
      ]
      &
      X
      \ar[d]
      \\
      Y
      \ar[r]
      &
      \ast
      \mathrlap{\,,}
    \end{tikzcd}
    \hspace{1.5cm}
    \begin{tikzcd}[row sep=small, 
      column sep=20pt
    ]
      \ast 
        \ar[r]
        \ar[d]
        \ar[
          dr,
          phantom,
          "{ \ulcorner }"{pos=.9}
        ]
      & 
      X
      \ar[d]
      \\
      Y
      \ar[r]
      & 
      X \vee Y
      \mathrlap{\,.}
    \end{tikzcd}
  \end{equation}
\end{example}

\begin{lemma}[Pasting law]
\label[lemma]{PastingLaw}
For a commuting diagram of maps of the form
\[
  \begin{tikzcd}[row sep=small]
    {} \ar[r] \ar[d] 
    & {} \ar[r] \ar[d]
    \ar[dr, phantom, "{\lrcorner}"{pos=.1}]
    & {} \ar[d]
    \\
    {} \ar[r] & {} \ar[r] & {}    
  \end{tikzcd}
\]
if the right square is Cartesian \cref{CartesianSquare}, as indicated, then the left square is Cartesian iff the total rectangle is.
\end{lemma}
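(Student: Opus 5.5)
The pasting law is a statement purely about universal properties, so I will argue directly from the definition \cref{CartesianSquare}. Label the diagram as
\[
  \begin{tikzcd}[row sep=small]
    A \ar[r, "{f}"] \ar[d, "{a}"] & B \ar[r, "{g}"] \ar[d, "{b}"] & C \ar[d, "{c}"]
    \\
    D \ar[r, "{u}"] & E \ar[r, "{v}"] & F
  \end{tikzcd}
\]
with the right square Cartesian. Both implications reduce to translating a commuting cone over one cospan into a commuting cone over the other, using the right square's universal property to do the translation. Uniqueness then follows from the uniqueness clause of the right square.

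For \emph{left Cartesian $\Rightarrow$ rectangle Cartesian}, take a cone $x : T \to D$, $y : T \to C$ with $v u x = c y$. I first apply the right square to the pair $(u x, y)$ to get a unique $z : T \to B$ with $b z = u x$ and $g z = y$. The left square then applies to $(x, z)$ and gives a unique $w : T \to A$ with $a w = x$ and $f w = z$, so that $g f w = y$. For uniqueness, suppose $w'$ also satisfies $a w' = x$ and $g f w' = y$. Then $f w'$ satisfies the defining conditions of $z$, since $b f w' = u a w' = u x$. Hence $f w' = z$, and the left square's uniqueness forces $w' = w$.

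For \emph{rectangle Cartesian $\Rightarrow$ left Cartesian}, take $x : T \to D$, $z : T \to B$ with $u x = b z$. The pair $(x, g z)$ is a cone over the rectangle's cospan, because $v u x = v b z = c g z$. It therefore yields a unique $w$ with $a w = x$ and $g f w = g z$. The one point needing care, and the only real obstacle, is to upgrade $g f w = g z$ to $f w = z$. Here I use the uniqueness in the right square: both $f w$ and $z$ are maps $T \to B$ whose composites with $b$ agree ($b f w = u a w = u x = b z$) and whose composites with $g$ agree, so they coincide. Uniqueness of $w$ for the left square follows likewise, since any competitor is in particular a rectangle factorization of $(x, g z)$.

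Since everything here is stated element-free in terms of maps, the argument holds verbatim in $\Top$ and in any category. No earlier results are needed beyond the definition \cref{CartesianSquare}.
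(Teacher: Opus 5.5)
Your proof is correct. Both directions are the standard universal-property chase, and you correctly isolate the one nontrivial step: maps into the apex $B$ of the Cartesian right square are determined by their composites with $b$ and $g$, which is what upgrades $g f w = g z$ to $f w = z$.

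The paper states this lemma as standard background and gives no proof, so your element-free argument is exactly what fills that gap. Because it works in any category, it holds in $\Top$. Because the definition \cref{CartesianSquare} is symmetric in its two legs, it also gives the transposed, vertical form of the law, which the paper needs for the top square in \cref{ThePastingDiagram}.
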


\begin{lemma}
  \label[lemma]{MapPreservesLimits}
  The mapping space construction \cref{MappingSpaceFunctor} 
  preserves Cartesian squares in each argument separately, hence:
  \begin{subequations}
    \begin{align}
    \label{MapTakingPullbackInSecondArgumentToPullbackI}
    \Map\left(
      X,\;
      \begin{tikzcd}[row sep=small,
        ampersand replacement=\&
      ]
        A \ar[r] \ar[d] 
        \ar[
          dr,
          phantom, 
          "{ \lrcorner }"{pos=.1}
        ] 
        \& B \ar[d]
        \\
        C \ar[r] \& D
      \end{tikzcd}
    \right)
    \simeq
      \begin{tikzcd}[
        ampersand replacement=\&
      ]
        \Map(X,A) \ar[r] \ar[d] 
        \ar[
          dr,
          phantom, 
          "{ \lrcorner }"{pos=.1}
        ] 
        \& 
        \Map(X,B) \ar[d]
        \\
        \Map(X,C) \ar[r] 
        \& 
        \Map(X,D)
        \mathrlap{\,,}
      \end{tikzcd}    
      \\
    \label{MapTakingPushoutInFirstArgumentToPullback}
    \Map\left(
      \begin{tikzcd}[row sep=small, 
        ampersand replacement=\&
      ]
        X \ar[r, <-] \ar[d, <-] 
        \ar[
          dr,
          phantom, 
          "{ \lrcorner }"{pos=.1}
        ] 
        \& Y \ar[d, <-]
        \\
        Z \ar[r, <-] \& W
      \end{tikzcd}
      \;
      ,
      A
    \right)
    \simeq
      \begin{tikzcd}[
        ampersand replacement=\&
      ]
        \Map(X,A) \ar[r] \ar[d] 
        \ar[
          dr,
          phantom, 
          "{ \lrcorner }"{pos=.1}
        ] 
        \& 
        \Map(Y,A) \ar[d]
        \\
        \Map(Z,A) \ar[r] 
        \& 
        \Map(W,A)
        \mathrlap{\,.}
      \end{tikzcd}    
    \end{align}
  \end{subequations}
\end{lemma}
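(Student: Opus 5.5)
The plan is to reduce both statements to the universal properties of Cartesian squares and to the exponential adjunction in $\Top$. For the first statement \cref{MapTakingPullbackInSecondArgumentToPullbackI}, we treat $\Map(X,-)$ as a right adjoint and conclude that it preserves all limits, in particular pullbacks. Concretely, we use the adjunction $\Top(W\wedge X, A)\simeq \Top(W,\Map(X,A))$, which is valid because we work with compactly generated spaces and the smash product is taken in that category. Given a Cartesian square $A = B\times_D C$, a map $W\to \Map(X,B)\times_{\Map(X,D)}\Map(X,C)$ corresponds to a compatible pair of maps $W\wedge X\to B$ and $W\wedge X\to C$ that agree in $D$. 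By \cref{CartesianSquare}, such a pair corresponds uniquely to a map $W\wedge X\to A$, and this in turn corresponds uniquely to a map $W\to\Map(X,A)$. By the Yoneda lemma the induced comparison map is an isomorphism in $\Top$.

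For the second statement \cref{MapTakingPushoutInFirstArgumentToPullback}, note that the square shown is a pushout in $\Top$, namely a pullback in $(\Top)^{\mathrm{op}}$. We argue with the same adjunction, now used in the first variable. A map $W\to\Map(Z\cup_W' ,A)$ from a test space $T$ corresponds to a map $T\wedge X\to A$. Here we need that $T\wedge(-)$, being a left adjoint (to $\Map(T,-)$), preserves pushouts, so that $T\wedge X\simeq (T\wedge Y)\cup_{T\wedge W}(T\wedge Z)$. Maps out of this pushout are exactly compatible pairs of maps $T\wedge Y\to A$ and $T\wedge Z\to A$. Transposing back identifies these with maps from $T$ into the fiber product $\Map(Y,A)\times_{\Map(W,A)}\Map(Z,A)$, and Yoneda again gives the homeomorphism.

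The main obstacle is topological rather than categorical. We must check that the point-set fiber product formed in compactly generated spaces carries the right topology, and that the exponential law holds for all the spaces involved. We will handle this by working throughout in the convenient category of pointed compactly generated (weak Hausdorff) spaces, in which $\Map(-,-)$ is the internal hom, taken with the $k$-ified compact-open topology. We will cite the standard fact that this category is closed symmetric monoidal under $\wedge$. Once that fact is in place, both claims are formal consequences of the adjunctions, and no further computation is needed.
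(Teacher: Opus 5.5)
The paper records this lemma as standard background without a proof; its later use, where it says $\Map(-,-)$ ``preserves limits in each argument'', is the adjoint-functor reasoning you spell out. Your argument via the pointed smash--hom adjunction in compactly generated spaces, the fact that $T\wedge(-)$ preserves pushouts, and Yoneda is correct. Two small fixes: repair the garbled sentence in your second part, which should read ``a map $T\to\Map(X,A)$ from a test space $T$ corresponds to a map $T\wedge X\to A$''; and state explicitly that transposing back to the fiber product over $\Map(W,A)$ uses naturality of the adjunction bijection in the domain variable.
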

\begin{example}
  Combined with \cref{ProductSpaceAndWedgeSum} this yields:
  \begin{equation}
    \label{PointedMapsOutOfWedgeSum}
    \Map\bracket({
      X \vee Y, Z
    })
    \simeq
    \Map\bracket({X,Z})
    \times
    \Map\bracket({Y,Z})
    \mathrlap{\,.}
  \end{equation}
\end{example}

\subsection
{Some Homotopy Theory}

For $X \in \UnpointedTop$ we write $\pi_0(X)$ for its set of path-connected components. Applied to mapping spaces \cref{ThePointedMappingSpace}, these are the \emph{homotopy classes} of maps:
\begin{equation}
  \label{HomotopyClass}
  \begin{tikzcd}[
    row sep=0pt
  ]
    \Map\bracket({X,A})
    \ar[
      r,
      ->>
    ]
    &
    \pi_0
    \Map\bracket({X,A})
    \\
    f 
    \ar[
      r,
      |->,
      shorten=8pt
    ]
    &
    {[f]}
    \mathrlap{\,.}
  \end{tikzcd}
\end{equation}
If here we think of $A$ as the classifying space of a nonabelian cohomology theory \cite[\S 2]{FSS23-Char}, then we may abbreviate
\begin{equation}
  \label{CohomologyAsHomotopyClasses}
  \tilde H^0\bracket({
    X; A
  })
  :=
  \pi_0\, \Map\bracket({
    X, A
  })
  \mathrlap{\,.}
\end{equation}

\begin{definition}[Homotopy groups]
Set $I := [0,1] \in \UnpointedTop$. We take the $n$th homotopy group $(n \in \mathbb{N}_{\geq 1})$ of $X \in \Top$ to be given by homotopy classes \cref{HomotopyClass} by maps $\inlinetikzcd{I^n \ar[r] \& X}$ whose restriction to $\partial\bracket({I^n})$ is constant on the base point, hence:
\begin{equation}
  \label{HomotopyGroup}
  \pi_n(X)
  \defneq
  \pi_0 \, 
  \Map\bracket({
    I^n / \partial I^n,
    X
  })
  \mathrlap{\,.}
\end{equation}
The group operations are represented by concatenation and reversion of loops, respectively.

This construction is clearly functorial, taking maps $f$ to the group homomorphism obtained by postcomposing representatives:
\begin{equation}
\label{FunctorialityOfHomotopyGroups}
  \begin{tikzcd}[
    column sep=0pt,
    row sep=0pt
  ]
    \Top
    \ar[
      rr,
      "{ \pi_n }"
    ]
    &&
    \mathrm{Grp}
    \\
    X
    \ar[
      dd,
      "{ f }"
    ]
    &&
    \pi_n(X)
    \ar[
      dd,
      "{
        \pi_n(f)
      }"{description}
    ]
    &
    {[\gamma]}
    \ar[
      dd,
      |->,
      shorten=5pt
    ]
    \\[7pt]
    &
    \mapsto
    \\[7pt]
    Y
    &&
    \pi_n(Y)
    &
    \bracket[{
      f \circ \gamma
    }]
    \mathrlap{\,.}
  \end{tikzcd}
\end{equation}
\end{definition}
\begin{remark}
In particular,
\begin{equation}
  \label{IterationOfHomotopyGroups}
  \pi_{n+1}(X)
  \simeq
  \pi_0 \, \Map\bracket({ S^{n+1}, X })
  \simeq
  \pi_1 \, \Map\bracket({ S^n, X })
  \mathrlap{\,.}
\end{equation}
Also notice that
\begin{equation}
  \label{PiNPreservesProducts}
  \pi_n(X \times Y)
  \simeq
  \pi_n(X)
  \times 
  \pi_n(Y)
  \mathrlap{\,.}
\end{equation}
\end{remark}
\begin{example}[Hopf degree theorem, cf. {\cite[\S IX 5.8]{Kosinski1993}}]
  For $n \in \mathbb{N}_+$ we have
  \begin{equation}
    \label{HopfDegreeTheorem}
    \begin{aligned}
      \pi_0\, 
      \UnpointedMap\bracket({
        S^n, S^n
      })
      &
      \simeq
      \mathbb{Z}
      \mathrlap{\,,}
      \\
      \pi_0\, 
      \Map\bracket({
        S^n, S^n
      })
      \simeq
      \pi_n\bracket({S^n})
      &
      \simeq
      \mathbb{Z}
      \mathrlap{\,.}
    \end{aligned}
  \end{equation}
\end{example}
\begin{example}[Hopf 1931]
  The remarkable property of the complex Hopf fibration $h_{\mathbb{C}}$ \cref{TheComplexHopfFibration} is that it freely generates the 3rd homotopy group of the 2-sphere, and in fact that it identifies all the higher homotopy groups of these spheres:
  \begin{equation}
    \label{ComplexHopfFibrationOnPi3}
    \forall_{n \geq 3}
    :
    \;\;\;
    \begin{tikzcd}[column sep=large]
      \pi_n\bracket({S^3})
      \ar[
        r,
        "{ \pi_n(h_{\mathbb{C}}) }",
        "{ \sim }"{swap}
      ]
      &
      \pi_n\bracket({S^2})
      \,,
    \end{tikzcd}
    \;\;\;\text{in particular}:\;
    \begin{tikzcd}[
      row sep=4pt
    ]
      \pi_3\bracket({S^3})
      \ar[
        d,
        equals
      ]
      \ar[
        rr,
        "{ \pi_3(h_{\mathbb{C}}) }"
      ]
      &&
      \pi_3\bracket({S^2})
      \ar[
        d,
        equals
      ]
      \\
      \mathbb{Z}
      \ar[
        rr,
        "{ \sim }"
      ]
      &&
      \mathbb{Z}
      \mathrlap{\,.}
    \end{tikzcd}
  \end{equation}
\end{example}
\begin{proof}
  Using that 
  \begin{enumerate}
    \item $h_{\mathbb{C}}$ is a Serre fibration (\cref{SerreFibration}) with fiber $S^1$,
    \item $\pi_{\geq 2}\bracket({S^1}) \simeq 0$,
  \end{enumerate}
  this follows by exactness of the associated homotopy LES (\cref{HomotopyLES}):
  \begin{equation}
    \forall_{n \geq 3} :
    \;\;\;
    \begin{tikzcd}[sep=20pt]
      \underbrace{
        \pi_n\bracket({S^1})
      }_{ 0 }
      \ar[r]
      &
      \pi_n\bracket({S^3})
      \ar[
        rr,
        "{ \pi_n(h_{\mathbb{C}}) }",
      ]
      &&
      \pi_n\bracket({S^2})
      \ar[r]
      &
      \underbrace{
        \pi_{n-1}\bracket({S^1})
      \mathrlap{\,.}
      }_{ 0 }
    \end{tikzcd}
    \qedhere
  \end{equation}
\end{proof}
\begin{example}[Pontrjagin 1936-, {\cite{Whitehead1950}}]
  For $n \geq 2$ we have:
  \begin{equation}
    \label{SecondStableStem}
    \pi_{n+2}\bracket({S^{n}}) 
      \simeq 
    \mathbb{Z}_{/2}
    \mathrlap{\,.}
  \end{equation}
\end{example}
\begin{definition} 
\label[definition]
{WeakHomotopyEquivalence}
  A map $f$ is a \emph{weak homotopy equivalence}, denoted 
  $\inlinetikzcd{ X \ar[r, "\sim", "{f}"{swap}] \& Y }$, of just $X \sim Y$, if its induced homomorphisms \cref{FunctorialityOfHomotopyGroups} of homotopy groups are all isomorphisms, for all $n \in \mathbb{N}$ and all choices of basepoints.
\end{definition}
\begin{definition}
\label[definition]{SerreFibration}
  A map $\inlinetikzcd{X \ar[r, "{ p }"] \& Y }$ is a \emph{Serre fibration}, denoted $p \in \mathrm{Fib}$, if all paths of $n$-paths in $Y$ may be lifted through $p$ for every lift over the starting point:
  \begin{equation}
    \label{LiftingConditionForSerreFib}
    \begin{tikzcd}[column sep=large]
      \{0\} \times I^n
      \ar[
        d,
        hook
      ]
      \ar[
        r, 
        "{ \forall }"{description}
      ]
      &
      X
      \ar[
        d,
        "{
          p
        }"
      ]
      \\
      I \times I^n
      \ar[
        r,
        "{ \forall }"{description}
      ]
      \ar[
        ur,
        dashed,
        "{ \exists }"{description}
      ]
      &
      Y
      \mathrlap{\,.}
    \end{tikzcd}
  \end{equation}
\end{definition}
\begin{example}
  For pointed CW-complexes $X,Y \in \Top$, the map $\inlinetikzcd{\UnpointedMap\bracket({X,Y}) \ar[r, "{ \mathrm{ev}_\ast }"] \& Y}$, given by evaluation at the basepoint, is a Serre fibration (\cref{SerreFibration}) with fiber the pointed mapping space \cref{ThePointedMappingSpace}:
  \begin{equation}
    \label{TheEvaluationFibration}
    \begin{tikzcd}[row sep=10pt, 
      column sep=25pt
    ]
      \Map\bracket({
        X,Y
      })
      \ar[d]
      \ar[r]
      &
      \UnpointedMap\bracket({X,Y})
      \ar[
        d,
        "{
          \mathrm{ev}_\ast
        }"{swap},
        "{ \in \mathrm{Fib} }"
      ]
      \\
      \ast
      \ar[r]
      &
      Y
    \end{tikzcd}
  \end{equation}
\end{example}
\begin{proposition}[{cf. \cite[Prop. 5.9]{Hirschhorn2019}}]
\label[proposition]
{FactorizationWFib}
  Every map factors (nonuniquely) through a weak homotopy equivalence \textup{(\cref{WeakHomotopyEquivalence})} followed by a Serre fibration \textup{(\cref{SerreFibration})}:
  \begin{equation}
    \begin{tikzcd}[
      row sep=-1pt, column sep=large 
    ]
      X 
      \ar[rr, "{ f }"]
      \ar[
        dr,
        "{ \sim }"{sloped, swap}
      ]
      &&
      Y
      \mathrlap{\,.}
      \\
      &
      \widehat X
      \ar[
        ur,
        "{ 
          \in \mathrm{Fib} 
        }"{swap, pos=0.3}
      ]
    \end{tikzcd}
  \end{equation}
\end{proposition}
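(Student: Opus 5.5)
We would prove this by the classical mapping-path-space construction. Given $f : X \to Y$, set
\[
  \widehat X
  :=
  X \underset{Y}{\times} \UnpointedMap\bracket({I, Y})
  =
  \big\{ (x,\gamma) \,\big\vert\, \gamma(0) = f(x) \big\}
  \mathrlap{\,,}
\]
with basepoint $(x_0, \mathrm{const}_{y_0})$. We then factor $f$ as $x \mapsto (x, \mathrm{const}_{f(x)})$ followed by $p : (x,\gamma) \mapsto \gamma(1)$. The composite is visibly $f$, and both maps are pointed and continuous, since we work with compactly generated spaces, where evaluation $\UnpointedMap(I,Y)\times I \to Y$ is continuous.

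For the first map we show it is in fact a homotopy equivalence, hence in particular a weak homotopy equivalence (\cref{WeakHomotopyEquivalence}). The projection $r : \widehat X \to X$, $(x,\gamma)\mapsto x$, is a retraction of it. Conversely, the homotopy $(x,\gamma), t \mapsto (x, \gamma(t\cdot -))$ contracts each path to its starting point and stays inside $\widehat X$, since $\gamma(t\cdot 0) = f(x)$. It therefore deforms $\mathrm{id}_{\widehat X}$ to the composite of $r$ followed by the inclusion. Since this is a homotopy equivalence, $\pi_n$ is an isomorphism at all basepoints.

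For the second map we check the lifting condition \cref{LiftingConditionForSerreFib}. Suppose we are given $h : \{0\}\times I^n \to \widehat X$, $u \mapsto (x(u), \gamma_u)$, and $H : I\times I^n \to Y$ with $H(0,u) = \gamma_u(1)$. We define the lift $\widetilde H(s,u) := (x(u), \gamma_u \ast H(-,u)\vert_{[0,s]})$, concatenating $\gamma_u$ with the path $H$ traversed up to time $s$ and reparametrized to $[0,1]$.

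This is the one step needing care, namely that the concatenation is continuous in $(s,u)$ and degenerates correctly at $s=0$. We handle it by the explicit formula
\[
  \widetilde\gamma_{s,u}(t)
  =
  \begin{cases}
    \gamma_u\big(t(1+s)\big) & t \leq \tfrac{1}{1+s}
    \\
    H\big(t(1+s)-1,\,u\big) & t \geq \tfrac{1}{1+s}
  \end{cases}
\]
The two cases agree on the overlap because $H(0,u) = \gamma_u(1)$. At $s=0$ the formula reduces to $\gamma_u$, and the endpoint is $\widetilde\gamma_{s,u}(1) = H(s,u)$, as required.

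Continuity into the mapping space follows from the exponential law for compactly generated spaces, since the adjoint map $I\times I^n \times I \to Y$ is given piecewise by continuous formulas on closed pieces. Hence $p$ is a (Hurewicz, in particular Serre) fibration, which completes the factorization. Nonuniqueness is evident, for example by replacing $\widehat X$ with $\widehat X \times D^k$.
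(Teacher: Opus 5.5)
Your proof is correct. The mapping-path-space factorization $X \to X\times_Y \mathrm{Map}(I,Y) \to Y$ does exactly what is claimed. Your explicit reparametrized concatenation also handles the only delicate point, namely continuity of the lift and its agreement with the given datum at $s=0$.

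The paper gives no argument of its own here, only the pointer to Hirschhorn. There this factorization is one of the model-category factorization axioms for the Quillen structure on spaces. In that setting the first map must moreover be a cofibration (a relative cell complex), so it is presumably produced by a cellular small-object construction rather than by path spaces.

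The two routes buy different things. Yours is explicit, functorial and elementary, and on each side it yields more than asked:
\begin{itemize}
\item $X$ is a pointed strong deformation retract of $\widehat X$, so the first map is a genuine homotopy equivalence.
\item Your lift works for homotopies parametrized by an arbitrary space, not just $I^n$, so $p$ is a Hurewicz fibration.
\end{itemize}
What it does not give is that $X \to \widehat X$ is a Quillen cofibration. It need not be one for general $X$ and $Y$, and that is what the model structure requires.

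The paper only uses this proposition to replace a map by a Serre fibration up to weak equivalence: in defining homotopy pullbacks and in the Mayer--Vietoris argument. For those purposes your version fully suffices.
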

\begin{definition}[Homotopy pullback, homotopy fiber]
  \begin{enumerate}
  \item
  A Cartesian square \cref{CartesianSquare} is \emph{homotopy Cartesian} (a \emph{homotopy pullback}) if at least one of the maps to the base is a Serre fibration (\cref{SerreFibration}).

  More generally, a commuting square is \emph{homotopy Cartesian} (a \emph{homotopy pullback}), denoted by ``$\lrcorner_h$'', if, after factoring either of the maps to the base  through a Serre fibration according to \cref{FactorizationWFib}, the comparison morphism into the pullback of that fibration is a weak homotopy equivalence (\cref{WeakHomotopyEquivalence}):
  \begin{equation}
  \label{HomotopyCartesianSquare}
    \begin{tikzcd}
      Z
      \ar[r]
      \ar[
        d,
        "{ g^\ast f }"{swap}
      ]
      \ar[
        dr,
        phantom,
        "{ \lrcorner_h }"{pos=.1}
      ]
      & 
      X
      \ar[d, "{ f }"]
      \\
      Y
      \ar[r, "{ g }"]
      & 
      B
    \end{tikzcd}
    \;\;\;\;
    \Leftrightarrow
    \;\;\;\;
    \begin{tikzcd}
      Z
      \ar[r]
      \ar[
        dd,
        leftvertright,
        "{
          g^\ast f
        }"{swap}
      ]
      \ar[
        d,
        shorten >=-3pt,
        "{ \sim }"{sloped, swap}
      ]
      \ar[
        dr,
        phantom,
        "{ \lrcorner_h }"{pos=.1}
      ]
      & 
      X
      \ar[d, "{ \sim }"{sloped}]
      \ar[
        dd,
        rightvertleft,
        "{ f }"
      ]
      \\[-5pt]
      \;Y \!\!\times_{\!_{B}}\!\! \widehat Y
      \ar[r]
      \ar[d]
      \ar[
        dr,
        phantom,
        "{ \lrcorner }"{pos=.1}
      ]
      & 
      \widehat Y
      \ar[
        d,
        "{ 
          \in \mathrm{Fib} 
        }"{description, pos=.45}
      ]
      \\[+3pt]
      Y
      \ar[r, "{ g }"]
      & 
      B
    \end{tikzcd}
  \end{equation}
  
  \item
  A fiber sequence $\inlinetikzcd{F \ar[r] \& X \ar[r, "{f}"] \& Y}$ \cref{Fiber} is a \emph{homotopy fiber sequence} if $f$ is a Serre fibration.

  More generally, any such pair of consecutive morphisms is a \emph{homotopy fiber sequence} if its completion, through the basepoint, to a commuting square is homotopy Cartesian \cref{HomotopyCartesianSquare}:
  \begin{equation}
    \label{HomotopyFiberSequence}
    \begin{tikzcd}[
      row sep=small, 
    ]
      F 
      \ar[r]
      \ar[d]
      \ar[
        dr,
        phantom,
        "{ \lrcorner_h }"{pos=.1}
      ]
      &
      X
      \ar[
        d,
        "{ f }"
      ]
      \\
      \ast \ar[r]
      &
      Y
      \mathrlap{\,.}
    \end{tikzcd}
  \end{equation}
  \end{enumerate}
\end{definition}

\begin{definition}[Connecting homomorphism in Homotopy LES]
\label[definition]{ConnectingHomomorphism}
Given a homotopy fiber sequence $\inlinetikzcd{F \ar[r, "{ \iota }"] \& X \ar[r, "{f}"] \& Y}$ (\cref{HomotopyFiberSequence}), the $n$th \emph{connecting homomorphism} is the homotopy group homomorphism
\begin{equation}
  \label{AConnectingHomomorphism}
  \begin{tikzcd}[row sep=-2pt]
    &&
    \pi_{n+1}(Y)
    \ar[
      dll, 
      snake left,
      "{ \partial^f_n }"{description}
    ]
    \\
    \pi_n(X)
  \end{tikzcd}
\end{equation}
defined as follows (cf. \cref{ConnectingHomomorphismSchematics}):
Given $[\gamma] \in \pi_{n+1}(Y)$, we may choose 
\begin{enumerate}
\item
a representative 
$\inlinetikzcd{\gamma: I^{n+1} \ar[r] \& Y}$,
\item
a map $\inlinetikzcd{\widehat{\gamma}: I^{n+1} \ar[r] \& Y}$,

\item such that 
{\bf (a)}
  $\widehat \gamma_{
    \vert 
    (
      \partial I^{n+1} 
        \setminus
      \{1\} \times I^n
    )
   }
   = \mathrm{const}_{x_0}$
and {\bf (b)}
  $f \circ \widehat{\gamma} = \gamma$.
\end{enumerate}
Then 
\begin{equation}
  \partial^f_n\bracket({
    [\gamma]
  })
  :=
  \bracket[{\,
    \widehat{\gamma}_{\vert \{1\} \times I^n}
  }]
  \in
  \pi_n(F)
\end{equation}
is (well-defined and as such) the value of the $n$th connecting homomorphism on $[\gamma]$.
\end{definition}

\begin{figure}[htb]
\centering
\adjustbox{
  rndfbox=4pt,
  scale=.75
}{
\hspace{5pt}
\begin{tikzpicture}

\node at (0,0) {
  \includegraphics[width=10cm]{graphics/CylindricalSpiral.png}
};

\draw[
  shift={(2.4,.36)},
  darkorange,
  dashed,
  line width=2
]
  (0,0) ellipse (1.9 and .92);
\draw[
  shift={(2.4,.36)},
  fill=black
]
  (0:1.9) circle (.08);

\draw[
  darkorange,
  densely dashed,
  line width=2
]
  plot[
    smooth, 
    tension=1
  ]
  coordinates {
    (-1.1,1.6)
    (-1.45,1.25)
    (-2,1)
    (-2.5,.88)
    (-3,.8)
    (-3.5,.76)
    (-4,.76)
    (-4.4,.81)
    (-4.55,.9)
    (-4.56,1.05)
    (-4.45,1.14)
    (-4.3,1.18)
    (-4.1,1.19)
    (-3.8,1.2)
    (-3.5,1.18)
    (-3.2,1.15)
    (-2.9,1.11)
    (-2.73,1.08)
  };

\draw[
  darkorange,
  densely dashed,
  line width=2
]
  plot[
    smooth, 
    tension=1
  ]
  coordinates {
    (-1.07,.25)
    (-1.09,.4)
    (-1.17,.51)
    (-1.3,.61)
    (-1.42,.68)
  };

\draw[
  fill=black!70
]
  (-1.07,.25) circle (.08);

\draw[
  fill=darkorange,
  draw=darkorange
]
  (-1.1,1.6) circle (.08);

\node at
  (-.6,.25) {0};
\node at
  (4.53,.36) {0};

\node at
  (-6.6,+3.8) {$\vdots$};

\draw[
  fill=black!70
]
  (-6.6,.25) circle (.08);
\node at
  (-6.3,.25) {0};

\draw[
  fill=darkorange,
  draw=darkorange
]
  (-6.6,1.6) circle (.08);

\draw[
  draw=gray,
  fill=gray
]
  (-6.6,1.6+1.6-.25) circle (.08);

\draw[
  draw=gray,
  fill=gray
]
  (-6.6,.25-1.6+.25) circle (.08);

\draw[
  draw=gray,
  fill=gray
]
  (-6.6,.25-1.6+.25-1.6+.25) circle (.08);

\node at
  (-6.6,.25-1.6+.25-1.6+.25-.7) {$\vdots$};

\node at 
 (-7.2,5) {$
   \mathrlap{
     \begin{tikzcd}[
       column sep=110pt,
       ampersand replacement=\&
     ]
       \phantom{A}
       \&[-45]
       \phantom{A}
       \ar[
         r,
         dashed,
         shorten <=40pt
       ]
       \&
       \pi_{n+1}(B)
       \ar[
         dll,
         snake left,
         "{ 
           \mathcolor{darkblue}
           {\partial_n}
          }"{description, scale=1.4}
       ]
       \\
       \pi_n(F)
       \ar[
         r,
         dashed,
         shorten >=30pt
       ]
       \&
       \phantom{A}
     \end{tikzcd}
   }
 $};

 \draw[
   white,
   line width=5
 ]
   (1.2,1.1) to[
     bend right=20
   ]
   (-6,1.75);
 \draw[
   |-Latex,
   line width=1.4
 ]
   (1.2,1.1) to[
     bend right=20
   ]
   (-6,1.75);

\draw[
  draw=white,
  fill=white
]
  (3.8,1) circle (.16);
\node at (3.8,1) 
  {$\gamma$};

\draw[
  draw=white,
  fill=white
]
  (-3.57,.8) circle (.21);
\node at (-3.6,.8) 
  {$\widehat{\gamma}$};

\node at (-6.1,1.6) 
  {$\widehat{\gamma}_{\vert \{1\}}$};

\end{tikzpicture}
}
\caption{
  \label{ConnectingHomomorphismSchematics}
  \textbf{The connecting homomorphisms}
  $\partial_n$ (\cref{ConnectingHomomorphism}) in the homotopy LES (\cref{HomotopyLES})
  of a homotopy fibration sequence $F \hookrightarrow E \overset{p}{\twoheadrightarrow} B$ (\cref{HomotopyFiberSequence})
  takes loops $\gamma$ of $n$-spheres in $B$ \cref{HomotopyGroup} to the endpoint
  $n$-sphere $\widehat{\gamma}_{\vert \{1\}}$ in $F$ of any based path $\widehat{\gamma}$ of $n$-spheres in $E$ which covers the loop, $p \circ \widehat{\gamma} = \gamma.$
}
\end{figure}

\begin{proposition}[Homotopy long exact sequence (cf. {\cite[\S 9.8]{Fomenko2016}})]
\label[proposition]{HomotopyLES}
Given a homotopy fiber sequence $\inlinetikzcd{F \ar[r, "{ \iota }"] \& X \ar[r, "{f}"] \& Y}$, 

\item[\bf (i)] 
the long sequence of homomorphisms of homotopy groups obtained with \textup{\cref{ConnectingHomomorphism}} 
\begin{equation}
  \label{TheHomotopyLES}
  \begin{tikzcd}[row sep=small,
    column sep=50pt
  ]
    {}
    \ar[r, - , dotted]
    &[-30pt]
    \pi_{n+1}(F)
    \ar[r, "{ \pi_{n+1}(\iota) }"]
    &
    \pi_{n+1}(X)
    \ar[r, "{ \pi_{n+1}(f) }"]
    &
    \pi_{n+1}(Y)
    \ar[
      dll,
      snake left,
      "{ 
        \partial^f_n 
      }"{description}
    ]
    &[-30pt]
    \\
    &
    \pi_{n}(F)
    \ar[r, "{ \pi_{n}(\iota) }"]
    &
    \pi_{n}(X)
    \ar[r, "{ \pi_{n}(f) }"]
    &
    \pi_{n}(Y)
    \ar[r, -,  dotted]
    &
    {}
  \end{tikzcd}
\end{equation}
is exact, meaning that at each vertex the image of the incident map coincides with the kernel of the outgoing map. This continues to hold for the maps of pointed sets in degree 0, 
\begin{equation}
  \label{TheHomotopyLESInLowDegree}
  \begin{tikzcd}[row sep=small, column sep=35pt]
  &
  {}
  \ar[
    r,
    -,
    dotted
  ]
  & \pi_1(Y)
  \ar[
    dll,
    snake left,
    "{ \partial^f_0 }"{description}
  ]
  \\
  \pi_0(F)
  \ar[r, "{ \pi_0(\iota) }"]
  &
  \pi_0(X)
  \ar[r, "{ \pi_0(f)}"]
  &
  \pi_0(Y)
  \mathrlap{\,,}
  \end{tikzcd}
\end{equation}
where kernels are understood as preimages of the base point.

\item[\bf (ii)]  Moreover, this construction is natural: Given a commuting square of the form,
\[
  \begin{tikzcd}[column sep=huge]
    X 
    \ar[
      d,
      "{ \varphi_X }"
    ]
    \ar[
      r,
      "{ f }",
      "{ \in \mathrm{Fib} }"{swap}
    ] 
    & 
    Y
    \ar[
      d,
      "{ \varphi_Y }"
    ]
    \\
    X' 
      \ar[
        r,
        "{ f' }",
        "{ \in \mathrm{Fib} }"{swap}
      ] 
    & 
    Y'
  \end{tikzcd}
\]
then the following diagrams, between the corresponding homotopy LESs \cref{TheHomotopyLES}, commute:
\begin{equation}
\label{NaturalityOfHomotopyLES}
  \begin{tikzcd}[column sep=large]
    \pi_{n+1}(Y)
    \ar[
      d,
      "{
        \pi_{n+1}(\phi_Y)
      }"
    ]
    \ar[
      r,
      "{ \partial^f_n }"
    ]
    &
    \pi_n(F)
    \ar[
      d,
      "{
        \pi_{n}(\phi_F)
      }"
    ]
    \ar[
      r,
      "{ \pi_n(\iota) }"
    ]
    &
    \pi_n(X)
    \ar[
      d,
      "{
        \pi_{n}(\phi_X)
      }"
    ]
    \ar[
      r, 
      "{ \pi_n(f) }"
    ]
    &
    \pi_n(Y)
    \ar[
      d,
      "{
        \pi_{n}(\phi_Y)
      }"
    ]
    \\
    \pi_{n+1}(Y')
    \ar[
      r,
      "{ \partial^{f'}_n }"{swap}
    ]
    &
    \pi_n(F')
    \ar[
      r,
      "{ \pi_n(\iota') }"{swap}
    ]
    &
    \pi_n(X')
    \ar[
      r,
      "{ \pi_n(f') }"{swap}
    ]
    &
    \pi_n(Y')
    \mathrlap{\,.}
  \end{tikzcd}
\end{equation}
\end{proposition}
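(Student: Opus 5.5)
We first reduce to the case where $f$ is a Serre fibration. If $F \to X \to Y$ is only a homotopy fiber sequence, we factor $f$ via \cref{FactorizationWFib} as $\inlinetikzcd{X \ar[r, "{\sim}"] \& \widehat X \ar[r] \& Y}$. By \cref{HomotopyCartesianSquare}, $F$ is then weakly equivalent to the strict fiber $\widehat F$ of $\widehat X \to Y$. Weak equivalences induce isomorphisms on all $\pi_n$, so it suffices to treat $\widehat F \hookrightarrow \widehat X \twoheadrightarrow Y$. From now on we therefore assume $f \in \mathrm{Fib}$ and $F = f^{-1}(y_0)$.

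\textbf{Well-definedness of \cref{AConnectingHomomorphism}.} Existence of the lift $\widehat\gamma$ comes from the lifting property \cref{LiftingConditionForSerreFib}. Regard $I^{n+1} = I \times I^n$, and lift starting from the constant map on the part $\partial I^{n+1} \setminus \{1\}\times I^n$. This part is homeomorphic to $\{0\}\times I^n$ as a pair, so the standard homotopy lifting property relative to this box applies. The restriction $\widehat\gamma|_{\{1\}\times I^n}$ lands in $F$, because $f\circ\widehat\gamma = \gamma$ and $\gamma$ is constant at $y_0$ on $\partial I^{n+1}$. It also sends $\partial I^n$ to the basepoint, so it defines a class in $\pi_n(F)$.

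Independence of choices goes in one dimension higher. Take a homotopy $\Gamma$ between two representatives $\gamma,\gamma'$, together with two lifts $\widehat\gamma,\widehat\gamma'$. Apply the lifting property to $I\times I^{n+1}$, relative to the box formed by the two given lifts and the constant faces. The face at $1$ of the resulting lift is a homotopy in $F$ between the two endpoint classes.

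The homomorphism property follows by concatenating lifts in a coordinate of $I^n$. That coordinate is not the lifting direction, so concatenating lifts of $\gamma$ and $\gamma'$ gives a lift of their concatenation whose end face is the concatenation of the end faces.

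\textbf{Exactness.} We check the six inclusions in turn.
\begin{enumerate}
\item Image of $\pi_n(\iota)$ lies in the kernel of $\pi_n(f)$: immediate, since $f \circ \iota$ is constant.
\item Kernel of $\pi_n(f)$ lies in the image of $\pi_n(\iota)$: a nullhomotopy of $f\circ\alpha$ lifts to a homotopy of $\alpha$ into a map landing in $F$.
\item Image of $\pi_{n+1}(f)$ lies in the kernel of $\partial$: if $\gamma = f\circ\beta$, choose $\widehat\gamma = \beta$; its end face is constant.
\item Kernel of $\partial$ lies in the image of $\pi_{n+1}(f)$: if the end face of $\widehat\gamma$ is nullhomotopic in $F$, glue that nullhomotopy onto $\widehat\gamma$. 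This yields a sphere in $X$ whose image under $f$ is homotopic to $\gamma$.
\item Image of $\partial$ lies in the kernel of $\pi_n(\iota)$: $\widehat\gamma$ itself is a nullhomotopy in $X$ of its end face.
\item Kernel of $\pi_n(\iota)$ lies in the image of $\partial$: given a nullhomotopy $H$ in $X$ of $\alpha \in \pi_n(F)$, the map $f\circ H$ is an $(n{+}1)$-sphere in $Y$, and $\partial[f\circ H] = [\alpha]$ by choosing $\widehat\gamma = H$.
\end{enumerate}
The same arguments work verbatim in degree $0$, where kernels are read as preimages of the basepoint, giving \cref{TheHomotopyLESInLowDegree}.

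\textbf{Naturality \cref{NaturalityOfHomotopyLES}.} Given the square $(\varphi_X,\varphi_Y)$, $\varphi_X$ restricts to a map $\varphi_F : F \to F'$. The two squares not involving $\partial$ commute by functoriality \cref{FunctorialityOfHomotopyGroups}. For the square involving $\partial$: if $\widehat\gamma$ lifts $\gamma$ through $f$, then $\varphi_X\circ\widehat\gamma$ lifts $\varphi_Y\circ\gamma$ through $f'$ and satisfies the same boundary conditions. By the well-definedness established above, we may use it to compute $\partial^{f'}_n$, and its end face is $\varphi_F$ applied to the end face of $\widehat\gamma$.

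The main obstacle is the independence of choices in the well-definedness step, since it needs the relative lifting on $I\times I^{n+1}$ with prescribed boundary data. We will carry it out using the standard homeomorphism between the box $\partial(I\times I^{n+1}) \setminus (\{1\}\times I^{n+1})$ and $\{0\}\times I^{n+1}$, as in \cite[\S 9.8]{Fomenko2016}.
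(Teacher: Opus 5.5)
Your proof is correct. After your reduction to a Serre fibration, it is the standard direct argument: relative lifting on open boxes for existence and independence of $\widehat\gamma$, concatenation in an $I^n$-coordinate for additivity, the six inclusions, and naturality by postcomposing lifts with $\varphi_X$. The paper gives no proof of its own beyond the citation to Fomenko--Fuchs, and it uses the same lift-based description of $\partial$ that you verify, so your proposal supplies the argument that citation stands for.
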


\begin{example}
  \label[example]{MappingFiberSequenceOfCWInclusion}
  Given $\inlinetikzcd{N \ar[r, hook, "{ \iota }"] \& \BulkDomain}$ an inclusion of cell complexes, then for any space $\ClassifyingA$ we have -- by \cref{Cofiber,MapTakingPushoutInFirstArgumentToPullback} a homotopy fiber sequence
  \begin{equation}
    \begin{tikzcd}
      \Map\bracket({
        \BulkDomain/\BoundaryDomain
        , \ClassifyingA
      })
      \ar[
        r,
        "{ q^\ast }"
      ]
      &
      \Map\bracket({
        \BulkDomain, \ClassifyingA
      })
      \ar[
        r,
        "{ \iota^\ast }"
      ]
      &
      \Map\bracket({
        N, \ClassifyingA
      })
    \end{tikzcd}
  \end{equation}
  and hence the induced homotopy LES\textup{ \cref{HomotopyLES}}.
\end{example}

\begin{lemma}
  \label[lemma]{HomotopyGroupsOfFreeMappingSpace}
  For CW-complexes $X,Y \in \Top$ and for $n \in \mathbb{N}$ we have split short exact sequences:
  \begin{equation}
    \label{HomotopySESForEvaluationFibration}
    \begin{tikzcd}[
      column sep=14pt
    ]
      1
      \ar[r]
      &
      \pi_n
        \,
        \mathrm{Map}^\ast({X,Y})
      \ar[rr]
      &&
      \pi_n\, \UnpointedMap({X,Y})
      \ar[
        rrr, 
        "{ 
          \pi_n \mathrm{ev} 
        }"{swap}
      ]
      &&&
      \pi_n\, Y
      \ar[r]
      \ar[
        lll,
        bend right=20pt,
        "{
          \pi_n \sigma
        }"{sloped}
      ]
      &
      1
    \end{tikzcd}
  \end{equation}
  and hence on homotopy groups:
  \begin{equation}
    \label{ComparingHomotopyGroupsUpPointedAndUnpointedMappingSpace}
    \begin{tikzcd}[
      column sep=5pt
    ]
    \pi_n\bracket({\Map(X,Y)})
    \ar[r]
    \ar[
      rr,
      uphordown,
      "{
        (\mathrm{id},0)
      }"
    ]
    &[10pt]
    \pi_n \bracket({
      \UnpointedMap(X,Y)
    })
    \ar[
      r,
      phantom,
      "{ \simeq }"
    ]
    &
    \pi_n \bracket({
      \Map(X,Y)
    })
    \times
    \pi_n (Y)
    \end{tikzcd}
  \end{equation}
  \textup{
  (for all $n \geq 2$, while for $n = 1$ we may have a semidirect product of groups on the right).}
\end{lemma}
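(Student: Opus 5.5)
The plan is to obtain the short exact sequence \cref{HomotopySESForEvaluationFibration} from the homotopy long exact sequence (\cref{HomotopyLES}) of the evaluation fibration \cref{TheEvaluationFibration}, $\Map(X,Y) \to \UnpointedMap(X,Y) \xrightarrow{\mathrm{ev}_\ast} Y$. This is a Serre fibration with fiber the pointed mapping space, so \cref{TheHomotopyLES} applies directly. The only extra input needed is a section of $\mathrm{ev}_\ast$.

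The section is the map $\sigma : Y \to \UnpointedMap(X,Y)$ sending $y$ to the constant map $\mathrm{const}_y$. It is continuous, satisfies $\mathrm{ev}_\ast \circ \sigma = \mathrm{id}_Y$, and sends the basepoint of $Y$ to the basepoint of $\UnpointedMap(X,Y)$, namely the constant map on the basepoint. By functoriality \cref{FunctorialityOfHomotopyGroups}, $\pi_n(\mathrm{ev}_\ast)\circ \pi_n(\sigma) = \mathrm{id}$, so $\pi_n(\mathrm{ev}_\ast)$ is surjective for every $n$.

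Surjectivity in degree $n+1$ forces the connecting homomorphism $\partial_n$ out of $\pi_{n+1}(Y)$ to vanish. Exactness then makes $\pi_n \Map(X,Y) \to \pi_n \UnpointedMap(X,Y)$ injective. Combined with exactness in the middle, this gives the short exact sequence \cref{HomotopySESForEvaluationFibration}, split by $\pi_n\sigma$.

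For $n=0$ these are only pointed sets. There the statement should be read as the exact sequence \cref{TheHomotopyLESInLowDegree} with $\pi_0(\mathrm{ev}_\ast)$ surjective, and I would add a brief remark to that effect.

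For the splitting \cref{ComparingHomotopyGroupsUpPointedAndUnpointedMappingSpace}, a split short exact sequence of groups gives a semidirect product $\pi_n \Map(X,Y) \rtimes \pi_n(Y)$. This is the caveat stated for $n=1$.

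For $n \geq 2$ all the groups involved are abelian. The normal subgroup and the image of the splitting then commute, so the semidirect product is a direct product. Under this identification the inclusion of the fiber becomes $(\mathrm{id},0)$, as displayed.

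No step is a real obstacle. The one point to check is that $\sigma$ is pointed, so that it induces homomorphisms on the homotopy groups based at the constant map. A secondary check is that the evaluation map is a fibration for CW domains, which the paper takes from \cref{TheEvaluationFibration}.
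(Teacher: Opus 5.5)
Your proposal is correct and is essentially the paper's proof: you take the homotopy long exact sequence of the evaluation fibration, and the constant-map section $\sigma$ makes every $\pi_n(\mathrm{ev}_\ast)$ surjective, so every connecting homomorphism is trivial. Your added remarks make explicit what the paper leaves implicit, namely reading the case $n=0$ only as exactness of pointed sets and using commutativity for $n\geq 2$ to turn the semidirect product into a direct product.
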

\begin{proof}
  The homotopy LES (\cref{HomotopyLES}) of the evaluation fibration \cref{TheEvaluationFibration} is of the form
  \[
    \begin{tikzcd}[row sep=small]
      &&
      \pi_{n+1}(Y)
      \ar[
        dll,
        snake left,
        "{ \partial_n }"{description}
      ]
      \\
      \pi_n \Map(X,Y)
      \ar[r]
      &
      \pi_n \UnpointedMap(X,Y)
      \ar[r, "{ \pi_n(\mathrm{ev}_\ast) } "]
      &
      \pi_n(Y)
      \ar[
        dll,
        snake left,
        "{ \partial_{n-1} }"{description}
      ]
      \\
      \pi_{n-1} \Map(X,Y)\,.
    \end{tikzcd}
  \]
  But the evaluation map has a pointed section $\sigma$, taking points in $Y$ to the maps on $X$ which are constant on these points. Therefore, all $\pi_n \mathrm{ev}$ are surjective, and hence the connecting homomorphisms (\cref{ConnectingHomomorphism}) in the homotopy long exact sequence (\cref{HomotopyLES}) are all null.
\end{proof}
\begin{example}
  In particular this holds for $X \defneq S^1 \sqcup \{\infty\}$, hence for free loop spaces \cref{FreeAndBasedLoopSpace} pointed at the zero-loop
  \begin{equation}
    \label{HomotopyGroupsOfFreeLoopSpace}
    \pi_n\bracket({
      \mathcal{L}Y
    })
    \simeq
    \pi_n\bracket({
      \Omega Y
    })
    \times
    \pi_n\bracket({
      Y
    })\,.
  \end{equation}
\end{example}

\begin{remark}
  Beware that the splitting of the evaluation map used in \cref{HomotopyGroupsOfFreeMappingSpace} in general only exists into the connected component of the zero-map, hence that the analogous splitting need not hold when the pointed mapping space is equipped with a noncanonical base point.
\end{remark}
One exception to this caveat is when $Y$ admits group structure:

\begin{lemma}
  For $G$ a topological group, its free loop space $\mathcal{L}G := \Map\bracket({S^1, G})$ is homeomorphic to the product space of $G$ with its based loop space (based at the neutral element):
  \begin{equation}
    \label{FreeLoopSpaceHomeomorphicToBasedTimesG}
    \mathcal{L}G
    \simeq
    G \times \Omega G
    \mathrlap{\,.}
  \end{equation}
\end{lemma}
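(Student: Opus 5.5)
The homeomorphism in \cref{FreeLoopSpaceHomeomorphicToBasedTimesG} can be written down explicitly using the group structure of $G$; no homotopy theory is needed. The plan is to separate a free loop into its value at the basepoint and a based loop, via pointwise multiplication by that value. Take $S^1 = I/\partial I$ with basepoint $0$, and write $e \in G$ for the neutral element. Define
\[
  \Psi : \mathcal{L}G \longrightarrow G \times \Omega G\,,
  \qquad
  \gamma \longmapsto
  \bracket({
    \gamma(0)\,,\;
    t \mapsto \gamma(0)^{-1}\cdot\gamma(t)
  })\,.
\]
Its proposed inverse is
\[
  \Phi : G \times \Omega G \longrightarrow \mathcal{L}G\,,
  \qquad
  (g,\omega) \longmapsto \bracket({ t \mapsto g\cdot\omega(t) })\,.
\]
Note that $\gamma(0)^{-1}\gamma(t)$ takes the value $e$ at $t=0$, so $\Psi$ does land in loops based at the neutral element.

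The first step is purely algebraic: check pointwise that $\Phi\circ\Psi$ and $\Psi\circ\Phi$ are identities. This uses only associativity and the identity $g^{-1}g = e$. For instance, $\Psi(\Phi(g,\omega)) = \bracket({ g\,\omega(0),\, g^{-1}\cdots }) = (g,\omega)$ because $\omega(0)=e$.

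The second step, which is the only point requiring care, is continuity of both maps in the compactly generated mapping space topology. Here I would use the exponential law for $\mathrm{Map}(S^1,-)$, since $S^1$ is compact Hausdorff. Continuity of $\Phi$ then reduces to continuity of the adjoint map $(g,\omega,t)\mapsto g\cdot\omega(t)$, which is the composite of the evaluation map $\Omega G\times S^1\to G$ with multiplication. Similarly, continuity of $\Psi$ reduces to continuity of $\gamma\mapsto\gamma(0)$, which is evaluation at the basepoint (already used in \cref{TheEvaluationFibration}), together with $(\gamma,t)\mapsto \gamma(0)^{-1}\gamma(t)$, which is built from evaluation, inversion and multiplication. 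The one subtlety is that the product $G\times \Omega G$ and the evaluation maps must be taken in the compactly generated category. This is exactly our standing convention for $\Top$, so evaluation is continuous and the exponential law holds.

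Finally, I would remark that $\Psi$ sends the constant loop at $e$ to the basepoint $(e,\mathrm{const}_e)$, so the homeomorphism is pointed. It is also compatible with the evaluation fibration: $\mathrm{ev}_\ast$ corresponds to the projection onto $G$. Consequently, the splitting caveat in the preceding remark does not arise for topological groups, because the homeomorphism trivializes the evaluation fibration globally, over all connected components.
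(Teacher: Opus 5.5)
Your proof is correct: the mutually inverse maps $\gamma \mapsto \bigl(\gamma(0),\, \gamma(0)^{-1}\cdot\gamma\bigr)$ and $(g,\omega)\mapsto g\cdot\omega$ are the standard argument, and continuity follows from the exponential law for the compact Hausdorff domain $S^1$. You also correctly take $\mathcal{L}G$ to be the unpointed $\UnpointedMap(S^1,G)$, as in the paper's definition of free loop spaces; the $\Map$ in the statement is a slip. The paper states this lemma without proof, so there is no alternative route to compare. Your observation that $\mathrm{ev}_\ast$ becomes the projection onto $G$ is exactly why the paper cites this as the exception to the preceding splitting caveat.
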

\begin{example}
For $G = S^1$ the circle group, \cref{FreeLoopSpaceHomeomorphicToBasedTimesG} reduces to:
\begin{equation}
  \label{FreeLoopSpaceOfTheCircle}
  \mathcal{L}S^1
  \simeq
  S^1 \times \Omega S^1
  \sim
  S^1 \times \mathbb{Z}\,.
\end{equation}
\end{example}

\begin{proposition}
[Homotopical Mayer-Vietoris Sequence {\cite{DyerRoitberg1980}}]
\label[proposition]
{MayerVietorisSequence}
  Given a homotopy pullback square in $\Top$ \textup{(\cref{HomotopyCartesianSquare})},
  \begin{equation}
    \begin{tikzcd}[column sep=large]
      Z 
        \ar[r, "{ u }"]
        \ar[d, "{ v }"{swap}]
        \ar[
          dr,
          phantom,
          "{ \lrcorner }"{pos=.1}
        ]
      &
      X
      \ar[
        d,
        "{f}"{swap},
        "{ \in \mathrm{Fib} }"
      ]
      \\
      Y 
        \ar[
          r,
          "{ g }"
        ] 
      & 
      B
      \mathrlap{\,,}
    \end{tikzcd}
  \end{equation}
  there is a homotopy fiber sequence \cref{HomotopyFiberSequence} of the form
  \begin{equation}
    \begin{tikzcd}
      \Omega B
      \ar[r]
      &
      Z 
      \ar[
        r,
        "{
          (u,v)
        }"
      ]
      &
      X \times Y
      \mathrlap{\,,}
    \end{tikzcd}
  \end{equation}
  whose corresponding homotopy LES \cref{TheHomotopyLES} is of this form:
  \begin{equation}
    \begin{tikzcd}[
      column sep=50pt
    ]
      {}
      \ar[
        r,
        -, 
        dotted
      ]
      &[-40pt]
      \pi_{n+2}(B)
      \ar[r]
      &[-15pt]
      \pi_{n+1}(Z)
      \ar[
        r,
        "{
          \big(
            \pi_{n+1}(u), \pi_{n+1}(v)
          \big)
        }"{description}
      ]
      &[+40pt]
      \pi_{n+1}(X) 
        \oplus 
      \pi_{n+1}(Y)
      \ar[
        dll,
        snake left,
        "{ 
          \pi_{n+1}(f) 
            - 
          \pi_{n+1}(g)
        }"{description}
      ]
      &[-50pt]
      \\
      &
      \pi_{n+1}(B)
      \ar[r]
      &
      \pi_n(Z)
      \ar[
        r,
        "{
          \big(
            \pi_{n}(u), 
            \pi_{n}(v)
          \big)
        }"{description}
      ]
      &
      \pi_n(X) \oplus \pi_n(Y)
      \ar[
        r,
        -, 
        dotted
      ]
      &[-10pt]
      {}
      \\
      {}
      \ar[
        r, - , dotted
      ]
      &
      \pi_2(B)
      \ar[r]
      &
      \pi_1(Z)
      \ar[
        r,
        "{
          \big(
            \pi_1(u),
            \pi_1(v)
          \big)
        }"{description}
      ]
      &
      \pi_1(X) \times \pi_1(Y)
      \ar[
        dll,
        snake left,
        "{
          \pi_1(f) 
            \cdot
          \pi_1(g)^{-1}
        }"{description}
      ]
      \\
      & 
      \pi_1(B)
      \ar[
        r
      ]
      &
      \pi_0(Z)
      \ar[
        r,
        "{
          \big(
            \pi_0(u), 
            \pi_0(v)
          \big)
        }"{description}
      ]
      &
      \pi_0(X) \times_{\pi_0(B)} \pi_0(Y)
      \ar[
        dll,
        snake left
      ]
      \\
      &
      \ast
      \mathrlap{\,.}
    \end{tikzcd}
  \end{equation}
\end{proposition}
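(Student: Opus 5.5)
The plan is to realize $Z$, up to weak equivalence, as a pullback of the path-space fibration of $B$, and then read off everything from \cref{HomotopyLES} and its naturality \cref{NaturalityOfHomotopyLES}.

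\textbf{Step 1 (model for $Z$).} Let $P B := \UnpointedMap(I,B)$, with the endpoint evaluation
\[
  (\mathrm{ev}_0,\mathrm{ev}_1) : \inlinetikzcd{P B \ar[r] \& B \times B}\,.
\]
This map is a Serre fibration (by the same lifting argument as for \cref{TheEvaluationFibration}), and its fiber over the basepoint is $\Omega B$. Form the strict pullback
\[
  \widetilde Z := (X \times Y) \times_{B\times B} P B
\]
along $f \times g$. The comparison map $\inlinetikzcd{Z = X \times_B Y \ar[r] \& \widetilde Z}$ inserts constant paths. It is a weak homotopy equivalence because $f$ is a Serre fibration, so $X \times_B Y$ already computes the homotopy pullback in the sense of \cref{HomotopyCartesianSquare}; concretely, path lifting through $f$ deformation retracts $\widetilde Z$ onto $Z$.

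\textbf{Step 2 (fiber sequence).} The projection $\inlinetikzcd{\widetilde Z \ar[r] \& X \times Y}$ is a pullback of the Serre fibration $(\mathrm{ev}_0,\mathrm{ev}_1)$, hence itself a Serre fibration. By the pasting law (\cref{PastingLaw}) its fiber is the fiber $\Omega B$ of the path fibration. Composing with the equivalence $Z \sim \widetilde Z$ gives the homotopy fiber sequence $\Omega B \to Z \xrightarrow{(u,v)} X \times Y$. The associated long exact sequence then follows from \cref{HomotopyLES}, together with:
\begin{itemize}
  \item $\pi_n(\Omega B) \simeq \pi_{n+1}(B)$, by \cref{IterationOfHomotopyGroups};
  \item $\pi_n(X\times Y) \simeq \pi_n(X) \times \pi_n(Y)$, by \cref{PiNPreservesProducts}.
\end{itemize}

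\textbf{Step 3 (connecting map).} By naturality \cref{NaturalityOfHomotopyLES}, applied to the pullback square defining $\widetilde Z$, the connecting homomorphism factors as $\pi_{n+1}(f\times g)$ followed by the connecting map $\partial$ of the path fibration $\inlinetikzcd{\Omega B \ar[r] \& PB \ar[r] \& B\times B}$. Since $PB \simeq B$ via the diagonal, $\partial$ is determined by exactness together with the explicit lift of \cref{ConnectingHomomorphism}. Lift a pair of based spheres $(\alpha,\beta)$ in $B\times B$ to a path-family starting at constant paths; its endpoint is the concatenation $\alpha \cdot \beta^{-1}$. This yields the formula $\pi_{n+1}(f)-\pi_{n+1}(g)$ in the abelian range and $\pi_1(f)\cdot\pi_1(g)^{-1}$ in degree one.

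\textbf{Step 4 (bottom end).} For the terms at the end, a class in $\pi_0(Z)$ maps into $\pi_0(X)\times_{\pi_0(B)}\pi_0(Y)$. To see that this map is surjective, take $(x,y)$ with $f(x)$ and $g(y)$ joined by a path in $B$, and lift that path through $f$ to move $x$ into the fiber over $g(y)$. This surjectivity is what the final arrow to $\ast$ records.

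I expect the main obstacle to be Step 3: pinning down the sign and order conventions, and the non-abelian form in degree one, of the path-fibration connecting map in a way consistent with \cref{ConnectingHomomorphism}. I would first check that computation explicitly for $n=0$ via the lift just described, and then argue for general $n$ by applying the $n=0$ case to $\Map(S^n,-)$.
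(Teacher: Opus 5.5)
Your proposal is correct, but it takes a different route from the paper. The paper's proof is essentially a citation: it takes the fiber sequence and its long exact sequence from Dyer--Roitberg. Its only own contribution is the remark that the last map $\pi_0(Z)\to\pi_0(X)\times\pi_0(Y)$ has image exactly $\pi_0(X)\times_{\pi_0(B)}\pi_0(Y)$, which it gets ``using the factorization lemma''.

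Your Steps 1--3 reconstruct the cited content from scratch. You use the double mapping path space $\widetilde Z=(X\times Y)\times_{B\times B}\UnpointedMap(I,B)$ and naturality against the free path fibration $\UnpointedMap(I,B)\to B\times B$. What this buys is an actual derivation of the connecting-map formulas $\pi_{n+1}(f)-\pi_{n+1}(g)$ and $\pi_1(f)\cdot\pi_1(g)^{-1}$. The paper uses these in its later computations (e.g.\ $\partial_1(n,r)=n-r$) and otherwise leaves them to the citation. Reducing $\partial$ to the diagonal $B\to B\times B$ plus one explicit lift is an efficient way to pin it down, up to a harmless sign/order convention.

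Your Step 4 makes the same point as the paper's addendum, but more directly. Since $f$ is already a Serre fibration, lifting a path from $f(x)$ to $g(y)$ through $f$ suffices, and no fibrant replacement is needed.

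Two small repairs are needed; neither affects the long exact sequence.
\begin{enumerate}
\item A Serre fibration has no continuous path-lifting function, so ``path lifting deformation retracts $\widetilde Z$ onto $Z$'' only works for Hurewicz fibrations. Instead, note that $Z\to Y$ and $\widetilde Z\to Y$ are both Serre fibrations, and that $Z\to\widetilde Z$ restricts on fibers to the inclusion $f^{-1}(g(y))\hookrightarrow\mathrm{hofib}_{g(y)}(f)$. That inclusion is a weak equivalence because $f$ is Serre. Equivalently, this is right properness, which is what makes the two clauses of the paper's definition \cref{HomotopyCartesianSquare} consistent.
\item There is no map $\widetilde Z\to Z$ to ``compose'' with. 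The honest fiber sequence is $\Omega B\to\widetilde Z\to X\times Y$, and the sequence for $Z$ is obtained by transporting along the isomorphism $\pi_\bullet(Z)\cong\pi_\bullet(\widetilde Z)$ over $X\times Y$. Indeed, the strict fiber of the injection $(u,v)\colon Z\hookrightarrow X\times Y$ is a point, so this is the only sense in which the statement's fiber sequence holds.
\end{enumerate}
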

\begin{proof}
  This is essentially discussed in \cite[p. 661]{DyerRoitberg1980}, where the sequence ends with $(\pi_0(u), \pi_0(v)) : \inlinetikzcd{\pi_0(Z) \ar[r] \& \pi_0(X) \times \pi_0(Y)}$. But using the factorization lemma one finds that the image of this last map is $\pi_0(X) \times_{\pi_0(B)} \times \pi_0(Y)$.
\end{proof}

\subsection
{Some Geometric Homotopy}
\label
{OnSomeGeometricHomotopy}

Introduction to the \emph{geometric homotopy theory} (\emph{$\infty$-topos theory} \cite[\S 6]{Lurie2009}) that we need here is in \parencites[\S 1]{FSS23-Char}[\S 4.2]{SS26-Bun}, with expository survey in \cite{Schreiber2025}.
For expository background on the language of $\infty$-categories cf. \parencites[Lurie2009]{AntolinCamarena2016}[\S 4.1]{SS26-Bun}. We use the \emph{cohesive} flavor (\cite{Sc13-dcct}) of geometric homotopy theory which is laid out in \parencites[\S IV]{SS26-Orb}[\S 4.3]{SS26-Bun}, making crucial use of \cite{PavlovEtAl2024} (cf. \cref{PathGroupoidAndSmoothOkaPrinciple}).

The focus here is on the construction of \emph{nonabelian differential cohomology}
\cite[\S 9]{FSS23-Char}, for which exposition in the context of flux-quantization is in \cite[\S 3.3]{SS25-Flux}.

\subsubsection
{Homotopy Types as $\infty$-Groupoids}
\label{OnInfinityGroupoids}

Before we get to the geometric refinement, we first now need to make explicit that in the above discussion it is only the \emph{homotopy type} of the classifying spaces $\ClassifyingB$ that matters, which identifies these spaces not up to homeomorphism but only up to weak homotopy equivalence \cref{WeakHomotopyEquivalence}. 

It is natural to think of this homotopy type of $\ClassifyingB$ as reflected by the the classical \emph{singular simplicial complex} $\mathrm{Sing}(\ClassifyingB)$, which is usefully thought of as the  \emph{fundamental path $\infty$-groupoid}. As such we, revisionistically, denote it by $\mathrm{Shp}(\ClassifyingB)$ \cref{ShapeAsSmoothPathGroupoid}, and we shall understand all classifying spaces under this operation, from now on:
\begin{equation}
\label
{ClassifyingSpacesAsInfinityGroupoids}
  \ClassifyingB
  \in
  \inlinetikzcd{
    \UnpointedTop
    \ar[r, "{ \mathrm{Shp} }"]
    \&[10pt]
    \mathrm{Grpd}_\infty
    \mathrlap{\,.}
  }
\end{equation}
Here $\mathrm{Grpd}_\infty$ denotes the $\infty$-category of $\infty$-grouoids, presented for instance by the simplicially enriched category category of Kan simplicial complexes, with respect to which the path $\infty$-groupoids \cref{ClassifyingSpacesAsInfinityGroupoids} are given by the classical singular simplicial Kan complexes (cf. \cite{Friedman2012}).

As such, there is a \emph{hom $\infty$-groupoid} $\mathrm{Grpd}_\infty(-,-)$ between any pair of objects in this $\infty$-category. Leaving the $\mathrm{Shp}$ in  \cref{ClassifyingSpacesAsInfinityGroupoids} notationally implicit, this is just (the homotopy type of) the mapping space \cref{ThePointedMappingSpace}:
\begin{equation}
  \mathrm{Grpd}_\infty\bracket({
    \mathcal{X},
    \mathcal{Y}
  })
  =
  \mathrm{Map}\bracket({
    \mathcal{X},
    \mathcal{Y}
  })
  \mathrlap{\,.}
\end{equation}

\subsubsection
{Rationalization over the Reals}

To (the homotopy type of) a simply-connected space $\ClassifyingB \in \mathrm{Grpd}_\infty$ \cref{ClassifyingSpacesAsInfinityGroupoids} 
is associated its \emph{rationalization} (cf. \cite{FHT2000,Hess2007}) considered here over the real numbers $\mathbb{R}$ (\cite[\S 11]{BousfieldGugenheim1976}, cf. \cite[Rem. 5.2, Def. 5.7]{FSS23-Char}), to be denoted: 
\footnote{
  Beware that $\mathbb{R}$-rationalization $\RRationalization(-)$ is not a \emph{localization} in the technical sense, in that it is not idempotent, since already $(-) \otimes_{_{\mathbb{Z}}} \mathbb{R}$ \cref{TheRRationalizationUnit} is not an idempotent operation on abelian groups ($\mathbb{R}$ is not a \emph{solid ring}, in contrast to $\mathbb{Q}$). But otherwise it behaves just like $\mathbb{Q}$-rationalization, cf. \cite[\S 11]{BousfieldGugenheim1976}, in fact it is the composite operation of $\mathbb{Q}$-rationalization followed by derived extension of scalars \cite[Prop. 5.8]{FSS23-Char}.
}
\begin{equation}
  \label{RRationalization}
  \RRationalization\ClassifyingB
  \in
  \mathrm{Grpd}_\infty
  \mathrlap{\,.}
\end{equation}
This may be thought of as the universal approximation of $\ClassifyingB$ all whose homotopy groups have an $\mathbb{R}$-module structure. Concretely, there is a natural comparison map
\begin{equation}
\label{TheRRationalizationUnit}
  \begin{tikzcd}
    \ClassifyingB
    \ar[
      r,
      "{ 
        \eta
          ^{\mathbb{R}}
          _{\ClassifyingB}
      }"
    ]
    &
    \RRationalization
    \ClassifyingB
    \mathrlap{\,,}
  \end{tikzcd}
\end{equation}
which on homotopy groups $\pi_{n \geq 2}$ induces the canonical maps into the tensor product with $\mathbb{R}$: 
\begin{equation}
  \begin{tikzcd}
  \pi_n\bracket({
    \ClassifyingB
  })
  \ar[
    r,
    "{
      \pi_n(\eta^{\mathbb{R}})
    }",
    "{
      = 
      (-) \otimes 
      (\mathbb{Z} 
        \hookrightarrow
      \mathbb{R})
    }"{swap}
  ]
  &[40pt]
  \pi_n\bracket({
    \ClassifyingB
  })
  \otimes_{{}_{\mathbb{Z}}}
  \mathbb{R}
  \simeq
  \pi_n\bracket({
    \RRationalization
    \ClassifyingB
  })
  \mathrlap{\,.}
  \end{tikzcd}
\end{equation}

While rationalization is traditionally and commonly considered over $\mathbb{Q}$,  it is well-known that it exists over any field of characteristic zero \cite[\S 11]{BousfieldGugenheim1976}. The version over the real numbers stands out as pivotal for relating homotopy theory to \emph{differential geometry} and thereby making it an ingredient of the construction of differential cohomology theories --- this is the content of \cref{ShapeOfSmoothSetOfClosedDiffForms} below.

To that end, we note that the homotopical data retained by $\mathbb{R}$-rationalization is entirely encoded in the data of $L_\infty$-algebras: 

Recall (\cite{LadaStasheff1992}, cf. \parencites[\S 6.1]{SatiSchreiberStasheff2009}[\S 4]{SS26-Bun}[\S 3]{FSS19-RationalM}) than an \emph{$L_\infty$-algebra} is an $\mathbb{N}$-graded vector space $\mathfrak{g}$ (for ordinary Lie algebras this is concentrated in degree 0) equipped with a system of graded skew-symmetric brackets $[-,\cdots,-]$ of any arity, subject to a generalization of the Jacobi identity (which is recovered when only the binary bracket $[-,-]$ is non-trivial). Concretely, when $\mathfrak{g}$ is of \emph{finite type} (ft, meaning degreewise finite-dimensional) then the $L_\infty$-structure is dually encoded in its \emph{Chevalley-Eilenberg algebra} $\mathrm{CE}(\mathfrak{g})$, which is the differential graded-commutative algebra $\bracket({ \wedge^\bullet \mathfrak{g}^\vee, \mathrm{d}})$ over the graded Grassmann algebra of the degreewise dual vector space, with differential $\mathrm{d}$ given by the linear dual of all the bracket operations, extended by the graded Leibniz rule: 
\begin{equation}
\label
{LInfinityAlgsViaCEdgcAlgs}
  \begin{tikzcd}[
    sep=0pt
  ]
    L_\infty\mathrm{Alg}
      ^{\mathrm{ft}}
      _{\mathbb{R}}
    \ar[
      r,
      hook,
      "{
        \mathrm{CE}
      }"
    ]
    &[30pt]
    \mathrm{dgcAlg}
      ^{\mathrm{op}}
      _{\mathbb{R}}
    \\
    \bracket({
      \mathfrak{g},
      [-],
      [-,-],
      [-,-,-], 
      \cdots
    })
    \ar[
      r,
      |->,
      shorten=5pt
    ]
    &
    \bracket({
      \wedge^\bullet \mathfrak{g}^\vee,
      \mathrm{d}\vert
        _{\wedge^1\mathfrak{g}^\vee }
      =
      [-]^\ast
      + 
      [-,-]^\ast
      +
      [-,-,-]^\ast
      +
      \cdots
    })
    \mathrlap{\,.}
  \end{tikzcd}
\end{equation}
We consider all this over the ground field of \emph{real} numbers.

For example, for $\ClassifyingB$ a simply connected space with finite-dimensional rational cohomology, there is an essentially unique $L_\infty$-algebra
\begin{equation}
  \label{WhiteheadShLieAlgebraInIntro}
  \mathfrak{l}\ClassifyingB
  \in
  L_\infty\mathrm{Alg}_{\mathbb{R}}
\end{equation}
characterized by these two properties:
\footnote{
  On the left of \cref{CochainCohomologyOfCEAlgebra} we have the cochain cohomology of the CE-algebra with respect to its differential \cref{LInfinityAlgsViaCEdgcAlgs}, on the right we have ordinary real cohomology of a space.
}
\begin{subequations}
\begin{align}
  (\mathfrak{l}\ClassifyingB)_\bullet
  &
  \simeq
  \pi_\bullet\bracket({
    \Omega \ClassifyingB
  })
  \otimes_{_{\mathbb{Z}}}
  \mathbb{R}
  \mathrlap{\,,}
  \\
  \label{CochainCohomologyOfCEAlgebra}
  H^\bullet\bracket({
    \mathrm{CE}\bracket({
      \mathfrak{l}\ClassifyingB
    })
  })
  & 
  \simeq
  H^\bullet\bracket({
    \ClassifyingB;
    \mathbb{R}
  })
  \;
  \text{
    induced by a quasi-iso
  }
  \inlinetikzcd{
    \mathrm{CE}\bracket({
      \mathfrak{l}\ClassifyingB
    })
    \ar[r, "{ \sim }"]
    \&
    \Omega^\bullet_{\mathrm{dR}}\bracket({
      \mathrm{Sing}\ClassifyingB
    })
  }
  \mathrlap{\,.}
\end{align}
\end{subequations}
This essentially unique dgc-algebra $\mathrm{CE}\bracket({\mathfrak{l}\ClassifyingB})$ is famous as the \emph{minimal Sullivan model} of $\ClassifyingB$
(\parencites[Def. 7.2]{BousfieldGugenheim1976}, cf. \parencites[Def. 1.10]{Hess2007}{Menichi2015}[Def. 4.22]{FSS23-Char}), and $\mathfrak{l}\ClassifyingB$ itself is the
\emph{real Whitehead bracket $L_\infty$-algebra} of this space
(\cite[Prop. 3.1]{BelchiEtAl2017}, cf. \cite[Prop. 5.11, Rem. 5.4]{FSS23-Char}).

The key statement of the \emph{fundamental theorem of dg-algebraic rational homotopy theory} (whose traditional formulation culminated in \cite{BousfieldGugenheim1976}, cf. review in \cite[Prop. 5.6]{FSS23-Char}) is that from this Whitehead bracket $L_\infty$-algebra $\mathfrak{l}\ClassifyingB$ of the (simply connected and rational finite type) space $\ClassifyingB$, the latter's $\mathbb{R}$-rationalization \cref{TheRRationalizationUnit} may be recovered: Up to equivalence, it is the simplicial set of closed $\mathfrak{l}\ClassifyingB$-valued smooth differential forms on extended simplices $\mathbf{\Delta}^{(-)}$:
\footnote{
  The equivalence \cref{RRationalizationFromSullianModel} is traditionally considered not for smooth but for polynomial differential forms and not on extended simplices but on the the ordinary simplices $\Delta^n \subset \mathbf{\Delta}^n$ (for which all $x_i \geq 0$). However, the inclusion of the latter simplicial dg-algebra into the former is clearly degreewise a quasi-isomorphism and hence induces a weak homotopy equivalence under mapping out of the cofibrant dgc-algebra $\mathrm{CE}\bracket({ \mathfrak{l}\ClassifyingB })$ if both are Reedy fibrant. This condition is equivalent to the differential forms satisfying the \emph{extension lemma} (cf. \cite[Lem. 9.4]{GriffithsMorgan2013}), which is the case for both versions (the construction for smooth differential forms in \cite[proof of Cor. 9.9]{GriffithsMorgan2013} applies verbatim also over extended simplicies, cf. \cite{nLab:ExtensionLemmaForDifferentialForms}.
}
\begin{equation}
\label{RRationalizationFromSullianModel}
  \RRationalization \ClassifyingB
  \sim
  \mathrm{Hom}_{\mathrm{dgcAlg}}\bracket({
    \mathrm{CE}\bracket({
      \mathfrak{l}\ClassifyingB
    }),
    \Omega^\bullet_{\mathrm{dR}}\bracket({
      \mathbf{\Delta}^{(-)}
    })
  })
  \in
  \mathrm{sSet}
  \sim
  \mathrm{Grpd}_\infty
  \mathrlap{\,.}
\end{equation}
Here the ``extended $n$-simplex'' (extending its closed subspace, where all $x_i \geq 0$, to a smooth manifold without boundary) is
\begin{equation}
  \label{TheExtendedNSimplex}
  \mathbf{\Delta}^n
  :=
  \bracketmid\{{
    \bracket({
      x_0, \cdots, x_n
    })
    \in
    \mathbb{R}^{n+1}
  }{
    \sum_{k=0}^n x_k = 1
  }\}
  \simeq
  \mathbb{R}^n 
  \subset 
  \mathbb{R}^{n+1}
\end{equation}
with the usual coface and codegeneracy maps
\begin{subequations}
\begin{align}
  \begin{tikzcd}[
    ampersand replacement=\&,
    sep=0pt
  ]
    \mathbf{\Delta}^{n-1}
    \ar[
      rr,
      "{
        \delta_k
      }"
    ]
    \&\&
    \mathbf{\Delta}^{n+1}
    \\
    \bracket({
      x_0, \cdots, x_{n-1}
    })
    \&\mapsto\&
    \bracket({
      x_0, \cdots, x_{k-1}, 0 , x_k, \cdots x_{n-1}
    })
  \end{tikzcd}
  \\
  \begin{tikzcd}[
    ampersand replacement=\&,
    sep=0pt
  ]
    \mathbf{\Delta}^{n}
    \ar[
      rr,
      "{ \sigma_k }"
    ]
    \&\&
    \mathbf{\Delta}^{n-1}
    \\
    \bracket({
      x_0, \cdots, x_n
    })
    \&\mapsto\&
    \bracket({
      x_0, 
      \cdots, 
      x_k + x_{k+1},
      \cdots,
      x_n
    })
    \mathrlap{\,.}
  \end{tikzcd}
\end{align}
\end{subequations}

\subsubsection
{Smooth Sets of closed $L_\infty$-valued differential forms}
\label{OnSmoothSetsOfLAlgValuedForms}

As we turn attention to higher flux densities satisfying generalized Gau{\ss} laws, we make use of the remarkable fact that these are equivalently closed (flat) differential forms with coefficients in (connected, nilpotent, finite type) $L_\infty$-algebras (cf. \parencites[\S 6]{FSS23-Char}[\S 3.1]{SS25-Flux}).

For $\mathfrak{g} \in L_\infty \mathrm{Alg}^{\mathrm{ft}}_{\mathbb{R}}$ \cref{LInfinityAlgsViaCEdgcAlgs} and for $\BulkDomain$ a smooth manifold, its \emph{closed $\mathfrak{g}$-valued differential forms} constitute the set
\begin{equation}
\label
{SetOfClosedLInfinityValuedForms}
  \Omega^1_{\mathrm{cl}}\bracket({
    \BulkDomain;
    \mathfrak{g}
  })
  :=
  \mathrm{Hom}_{\mathrm{dgAlg}}
  \bracket({
    \mathrm{CE}(\mathfrak{g})
    ,
    \Omega^\bullet_{\mathrm{dR}}
    \bracket({\BulkDomain})
  })
\end{equation}
of dg-algebra homomorphism from the CE-algebra \cref{LInfinityAlgsViaCEdgcAlgs} of $\mathfrak{g}$ into the de Rham algebra of smooth differential forms on $\BulkDomain$ (also known as the \emph{Maurer-Cartan elements} in $\Omega^\bullet_{\mathrm{dR}}(\BulkDomain) \otimes \mathfrak{g}$).

We need to promote this set to a kind of \emph{smooth space} $\mathbf{\Omega}^1_{\mathrm{cl}}\bracket({\BulkDomain, \mathfrak{g}})$ (the boldface indicates the additional smooth structure) that makes it behave like a \emph{moduli space} of such differential forms, in that smooth maps into it, from any smooth manifold $X$, are in natural bijection with such differential forms on the product manifold $X \times \Sigma$:
\begin{equation}
  \label{MapsIntoSmoothSetOfClosedForms}
  \mathbf{Map}\bracket({
    X,
    \mathbf{\Omega}^1_{\mathrm{cl}}
    \bracket({
      \BulkDomain,
      \mathfrak{g}
    })
  })
  \simeq
    \mathbf{\Omega}^1_{\mathrm{cl}}
    \bracket({
      X \times \BulkDomain,
      \mathfrak{g}
    })  
  \mathrlap{\,.}
\end{equation}

The tautological solution to this problem is to declare \emph{smooth sets} (\parencites[Def. 1.2.16, 1.3.58]{Sc13-dcct}[Def. 2.1]{KhavkineSchreiber2026}{GS25-FieldsI}{Giotopoulos2025}{IbortMas2025}) to be sheaves on the site $\mathrm{SmthMfd}$ of smooth manifolds
\begin{equation}
\label{CategoryOfSmoothSets}
  \mathrm{SmthSet}
  :=
  \mathrm{Sh}\bracket({
    \mathrm{SmthMfd},
    \mathrm{Set}
  })
  \mathrlap{\,,}
\end{equation}
observe that ordinary smooth manifolds are faithfully included among smooth sets (the \emph{Yoneda embedding}) 
\begin{equation}
  \begin{tikzcd}[
    row sep=0pt
  ]
    \mathrm{SmthMfd}
    \ar[r, hook]
    &
    \mathrm{SmthSet}
    \\
    X 
    \ar[r, |->, shorten=7pt]
    &
    \bracket({
      U 
        \mapsto 
      C^\infty\bracket({
        U, X
      })
    })
    \mathrlap{\,,}
  \end{tikzcd}
\end{equation}
and then define the smooth set of closed $\mathfrak{g}$-valued differential forms on $\BulkDomain$ to be:
\begin{equation}
\label
{SmoothSetOfClosedLInfinityValuedDiffForms}
  \begin{tikzcd}[row sep=-3pt, 
    column sep=0pt
  ]
    \mathbf{\Omega}^1_{\mathrm{cl}}
    \bracket({
      \Sigma;
      \mathfrak{g}
    })
    :
    &
    \mathrm{SmthMfd}^{\mathrm{op}}
    \ar[r]
    &[30pt]
    \mathrm{Set}
    \\
    &
    U
    \ar[
      r,
      |->,
      shorten=5pt
    ]
      &
    \Omega^1_{\mathrm{cl}}\bracket({
      U \!\times\! \Sigma;
      \mathfrak{g}
    })
    \mathrlap{\,.}
  \end{tikzcd}
\end{equation}

\subsubsection
{Smooth $\infty$-Groupoids}

In order to unify the differential geometry of smooth sets (\cref{OnSmoothSetsOfLAlgValuedForms}) with the homotopy theory of $\infty$-groupoids (\cref{OnInfinityGroupoids}), we now pass to \emph{smooth $\infty$-groupoids} (\parencites{SS26-Orb,SS26-Bun}, exposition in \cite{Schreiber2025}), namely to $\infty$-sheaves over the site of smooth manifolds
\begin{equation}
  \mathrm{SmthGrpd}_\infty
  :=
  \mathrm{Sh}_\infty\bracket({
    \mathrm{SmthMfd},
    \mathrm{Grpd}_\infty
  })
  \mathrlap{\,,}
\end{equation}
in straightforward generalization of the smooth sets (smooth 0-groupoids) from \cref{CategoryOfSmoothSets}.
This means that a smooth $\infty$-groupoid $\mathbf{X} \in \mathrm{SmthGrpd}_\infty$ is characterized by its \emph{$\infty$-groupoid of plots} by any probe smooth manifold $U$:
\begin{equation}
  \label{PlotsOfSmoothInfinityGroupoid}
  \begin{tikzcd}[
    sep=0pt
  ]
    \mathrm{SMthMfd}^{\mathrm{op}}
    \ar[
      rr,
      "{ \mathbf{X} }"
    ]
    &&
    \mathrm{Grpd}_\infty
    \\
    U &\mapsto& \mathbf{X}(U)
    \mathrlap{\,.}
  \end{tikzcd}
\end{equation}

This \emph{geometric homotopy theory} ($\infty$-topos) of \emph{smooth $\infty$-groupoids} faithfully includes and hence combines the differential geometry of smooth sets and the bare homotopy theory of plain (geometrically discrete) $\infty$-groupoids:
\begin{equation}
\label
{DiffGeoAndHomotopyInsideSmoothHomotopy}
  \begin{tikzcd}[row sep=5pt, 
    column sep=20pt
  ]
    \mathrm{SmthSet}
    \ar[
      dr,
      hook'
    ]
    &&
    \mathrm{Grpd}_\infty \,.
    \ar[
      dl,
      hook,
      "{
        \mathrm{Dsc}
      }"{pos=.8}
    ]
    \\
    &
    \mathrm{SmthGrpd}_\infty
  \end{tikzcd}
\end{equation}
The collection of smooth maps between a pair $\mathbf{X}, \mathbf{Y} \in \mathrm{SmthGrpd}_\infty$ forms itself a smooth $\infty$-groupoid, whence we denote it in boldface:
\begin{equation}
\label
{SmoothMappingInfinityGroupoid}
  \mathbf{Map}\bracket({
    \mathbf{X},
    \mathbf{Y}
  })
  \in
  \mathrm{SmthGrpd}
  \mathrlap{\,,}
\end{equation}
as characterized by plots \cref{PlotsOfSmoothInfinityGroupoid} given by
\begin{equation}
  \mathbf{Map}\bracket({
    \mathbf{X},
    \mathbf{Y}
  })
  :\;
  U \longmapsto
  \mathrm{SmothGrpd}_\infty\bracket({
    \mathbf{X} \times U,
    \mathbf{Y}
  })
  \mathrlap{\,.}
\end{equation}

\subsubsection
{The cohesive modalities}
\label{OnTheCohesiveModalities}

\begin{proposition}[{\parencites[Prop. 4.4.8]{Sc13-dcct}[Prop. 4.3.39]{SS26-Bun}[Ex. 9.1.19]{SS26-Orb}}]
The inclusion $\mathrm{Dsc}$ \cref{DiffGeoAndHomotopyInsideSmoothHomotopy} is part of an adjoint quadruple of $\infty$-functors
\begin{equation}
  \label{TheAdjointQuadruple}
  \begin{tikzcd}
    \mathrm{SmthGrpd}_\infty
    \ar[
      rr,
      shift left=28pt,
      "{
        \mathrm{Shp}
      }"{description}
    ]
    \ar[
      rr,
      phantom,
      shift left=21pt,
      "{ \bot }"{scale=.7}
    ]
    \ar[
      rr,
      phantom,
      shift left=7pt,
      "{ \bot }"{scale=.7}
    ]
    \ar[
      rr,
      "{
        \mathrm{Pnt}
      }"{description}
    ]
    \ar[
      rr,
      phantom,
      shift right=7pt,
      "{ \bot }"{scale=.7}
    ]
    &&
    \mathrm{Grpd}_\infty
    \mathrlap{\,.}
    \ar[
      ll,
      hook',
      shift right=14pt,
      "{
        \mathrm{Dsc}
      }"{description}
    ]
    \ar[
      ll,
      hook',
      shift left=14pt,
      "{
        \mathrm{Cht}
      }"{description}
    ]
  \end{tikzcd}
\end{equation}
\end{proposition}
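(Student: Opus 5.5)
The plan is to build the adjoint quadruple from the site itself. $\mathrm{SmthGrpd}_\infty = \mathrm{Sh}_\infty(\mathrm{SmthMfd},\mathrm{Grpd}_\infty)$ is equivalent to $\infty$-sheaves on the dense subsite $\mathrm{CartSp}$ of Cartesian spaces $\mathbb{R}^n$ with good open covers. This reduction is standard: every manifold has a good open cover, so the two sheaf $\infty$-toposes agree. I will then show that $\mathrm{CartSp}$ is an \emph{$\infty$-cohesive site}:
\begin{enumerate}
\item it has a terminal object $\mathbb{R}^0$;
\item every object is connected and has a point;
\item for every good cover $\{U_i \to U\}$, the Čech nerve, after applying $\pi_0$ of representables (i.e.\ the constant functor on the Čech simplicial object), has contractible colimit.
\end{enumerate}
The last point holds because, by the nerve theorem, the Čech nerve of a good cover of the contractible space $\mathbb{R}^n$ is weakly contractible.

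The four functors are then defined directly.
\begin{enumerate}
\item $\mathrm{Pnt}$ is evaluation on $\mathbb{R}^0$. It preserves all limits and colimits, since limits and colimits of $\infty$-sheaves on a site with terminal object are computed objectwise on the terminal object, sheafification not changing global sections at a point whose covers are trivial.
\item $\mathrm{Dsc}$ is the constant-presheaf functor followed by sheafification. On $\mathrm{CartSp}$ the constant presheaf is already an $\infty$-sheaf: descent for a good cover reduces to $\lim$ over the Čech nerve of a constant diagram, which is $\mathrm{Map}$ out of its contractible colimit. So $\mathrm{Dsc}\dashv \mathrm{Pnt}$ holds by the Yoneda lemma at the terminal object.
\item $\mathrm{Cht}$ is the right adjoint of $\mathrm{Pnt}$: $\mathrm{Cht}(S)(U) := \mathrm{Map}(\mathrm{Pnt}(U), S)$. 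It is a sheaf because $\mathrm{Pnt}$ sends covers to effective epimorphisms, since points of $U$ lift to some $U_i$.
\item $\mathrm{Shp}$ is the left adjoint of $\mathrm{Dsc}$: $\mathrm{Shp}(\mathbf{X}) := \mathrm{colim}_{U\to \mathbf{X}} \ast$, the $\infty$-colimit of the presheaf over $\mathrm{CartSp}^{\mathrm{op}}$. This is the left Kan extension along the terminal functor. It descends to sheaves precisely because $\mathrm{Dsc}$ lands in sheaves, so $\mathrm{Shp}\dashv\mathrm{Dsc}$ follows formally from $\mathrm{colim}\dashv\mathrm{const}$.
\end{enumerate}

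Finally I check that $\mathrm{Dsc}$ and $\mathrm{Cht}$ are fully faithful. Both reduce to $\mathrm{Pnt}\circ\mathrm{Dsc}\simeq \mathrm{id}$ and $\mathrm{Pnt}\circ\mathrm{Cht}\simeq\mathrm{id}$, which hold since evaluating on $\mathbb{R}^0$ recovers $S$.

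The main obstacle is item 4, namely that the constant presheaf is already an $\infty$-sheaf, which is where contractibility of good-cover Čech nerves is essential. Without it, $\mathrm{Shp}$ would not exist as a left adjoint on the sheaf topos. Here I would invoke the nerve theorem, in its homotopy-colimit form $\mathrm{hocolim}_{[k]} \coprod_{i_0..i_k}\ast \simeq \mathbb{R}^n \simeq \ast$, and cite the cited references for the comparison between $\mathrm{SmthMfd}$ and $\mathrm{CartSp}$.
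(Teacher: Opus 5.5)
Your proposal is correct and follows essentially the approach the paper relies on: the paper gives no proof of its own but defers to \cite[Prop.~4.4.8]{Sc13-dcct}, where $\mathrm{SmthGrpd}_\infty$ is identified with $\infty$-sheaves on the $\infty$-cohesive site $\mathrm{CartSp}$ and $\mathrm{Shp}\dashv\mathrm{Dsc}\dashv\mathrm{Pnt}\dashv\mathrm{Cht}$ is built just as you build it (evaluation at $\mathbb{R}^0$, constant presheaves already being sheaves because good-cover nerves are contractible, and $\mathrm{Cht}(S)(U)=\mathrm{Map}(\mathrm{Pnt}(U),S)$). One overstatement to correct: contractibility of the good-cover nerves is what lets you compute $\mathrm{Shp}$ as a plain colimit over $\mathrm{CartSp}^{\mathrm{op}}$, but it is not necessary for a left adjoint of $\mathrm{Dsc}$ to exist, since the same topos presented over $\mathrm{SmthMfd}$ has non-sheaf constant presheaves and still has $\mathrm{Shp}$.
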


These being adjoint, as shown, means that we have natural equivalences
\begin{equation}
  \label{HomIsoForShp}
  \mathrm{Grpd}_\infty\bracket({
    \mathrm{Shp}\mathbf{X},
    \mathcal{Y}
  })
  \simeq
  \mathrm{SmthGrpd}_\infty\bracket({
    \mathbf{X},
    \mathrm{Dsc}\mathcal{Y}
  })
  \mathrlap{\,,}
\end{equation}
etc.

The composite operation
\begin{equation}
  \label{FlatOperation}
  \begin{tikzcd}
    \flat
      : 
    \mathrm{SmthGrpd}_\infty
    \ar[
      r, 
      "{ \mathrm{\mathrm{Pnt}} }"
    ]
    &
    \mathrm{Grpd}_\infty
    \ar[
      r, 
      hook,
      "{ \mathrm{Dsc} }"
    ]
    &
    \mathrm{SmthGrpd}_\infty
    \mathrlap{\,}
  \end{tikzcd}
\end{equation}
is called the \emph{points modality} (or \emph{flat modality}, cf. \cite[\S 4.4.12]{Sc13-dcct}).
It may be thought of as sending  any smooth $\infty$-groupoid to its underlying geometrically discrete $\infty$-groupoid, obtained by forgetting its smooth structure, in that
\begin{equation}
\label
{PlotsOfFlatOfX}
  \mathrm{Dsc}\bracket({
    \mathcal{X}
  })
  :
  U \mapsto \mathcal{X}
  \,,
  \;\;\;\;
  \mathrm{Pnt}\bracket({
    \mathbf{X}
  })
  = 
  \mathbf{X}(\ast)
  \,,
  \;\;\;\;
  \flat \mathbf{X}
  :
  U \mapsto \mathbf{X}(\ast)
  \mathrlap{\,.}
\end{equation}
Accordingly, the induced natural comparison maps
\begin{equation}
  \label{FlatCoUnitInIntro}
  \begin{tikzcd}
    \flat \mathbf{X}
    \ar[
      r, 
      "{ \epsilon^{\flat}_{\mathbf{X}} }"]
    &
    \mathbf{X}
    \mathrlap{\,,}
  \end{tikzcd}
\end{equation}
may be thought of as generalizing the inclusion of the underlying set into a smooth manifold. Since $\mathrm{Dsc}$ is fully faithful, this is idempotent, in that
\begin{equation}
\label
{IdempotencyOfFlatInIntro}
\begin{tikzcd}
    \flat \flat \mathbf{X}
    \ar[
      r, 
      "{ 
        \epsilon
          ^{\flat}
          _{\flat\mathbf{X}} 
      }",
      "{
        \sim
      }"{swap}
    ]
    &
    \flat\mathbf{X}
    \mathrlap{\,,}
  \end{tikzcd}
\end{equation}
is an equivalence.
\begin{example}
  The points of the internal mapping smooth $\infty$-groupoid \cref{SmoothMappingInfinityGroupoid} form, by \cref{PlotsOfFlatOfX}, the bare $\infty$-groupoid of smooth maps:
  \begin{equation}
    \label{PointsOfSmoothMap}
    \flat \mathbf{Map}\bracket({
      \mathbf{X},
      \mathbf{Y}
    })
    \simeq
    \mathrm{Dsc}\bracket({
    \mathrm{SmthGrpd}_\infty\bracket({
      \mathbf{X},
      \mathbf{Y}
    })
    })
    \mathrlap{\,.}
  \end{equation}
  Therefore, on geometrically discrete $\infty$-groupoids $\mathcal{X}, \mathcal{Y} \in \mathrm{Grpd}_\infty$ we have moreover:
  \begin{equation}
  \begin{alignedat}{2}
      \mathbf{Map}\bracket({
        \mathrm{Dsc}\bracket({
          \mathcal{X}
        }),
        \mathrm{Dsc}\bracket({
          \mathcal{Y}
        })
      })
      & 
      \sim
      \flat
      \mathbf{Map}\bracket({
        \mathrm{Dsc}\bracket({
          \mathcal{X}
        }),
        \mathrm{Dsc}\bracket({
          \mathcal{Y}
        })
      })
      &\;\;&
      \substack{\text{by \cref{PlotsOfFlatOfX}}}
      \\
      & 
      \sim
      \mathrm{Dsc}\,
      \mathrm{SmthGrpd}_\infty\bracket({
        \mathrm{Dsc}\bracket({
          \mathcal{X}
        }),
        \mathrm{Dsc}\bracket({
          \mathcal{Y}
        })
      })
      &\;\;&
      \substack{\text{ by \cref{PointsOfSmoothMap}}}
      \\
      & \sim
      \mathrm{Dsc}\,
      \mathrm{Grpd}_\infty\bracket({
        \mathcal{X},
        \mathcal{Y}
      })
      &\;\;&
      \substack{\text{ by full faithfulness  \cref{DiffGeoAndHomotopyInsideSmoothHomotopy}}}
      \\
      & \sim
      \mathrm{Dsc}\,
      \mathrm{Map}\bracket({
        \mathcal{X},
        \mathcal{Y}
      })
      \mathrlap{\,.}
  \end{alignedat}
  \end{equation}
  But since $\mathrm{Dsc}$ is fully faithful, we may suppress it notationally. With that understood, the above says that between geometrically discrete $\infty$-groupoids we have simply:
  \begin{equation}
    \mathcal{X}, \mathcal{Y}
    \in
    \mathrm{Grpd}_\infty
    \;\;\;
    \vdash
    \;\;\;
    \mathbf{Map}\bracket({
      \mathcal{X},
      \mathcal{Y}
    })
    \sim
    \mathrm{Map}\bracket({
      \mathcal{X},
      \mathcal{Y}
    })
    \mathrlap{\,.}
  \end{equation}
\end{example}

Next, left adjoint to $\flat$, the composite \cref{TheAdjointQuadruple} operation
\begin{equation}
  \label{ShapeOperation}
  \begin{tikzcd}
    \shape 
      : 
    \mathrm{SmthGrpd}_\infty
    \ar[
      r, 
      "{ \mathrm{Shp} }"
    ]
    &
    \mathrm{Grpd}_\infty
    \ar[
      r, 
      hook,
      "{ \mathrm{Dsc} }"
    ]
    &
    \mathrm{SmthGrpd}_\infty
    \mathrlap{\,}
  \end{tikzcd}
\end{equation}
is called the \emph{shape modality} (cf. \cite[ftn. 1]{SS26-Bun}). It may be understood \cref{ShapeAsSmoothPathGroupoid} as sending any object to its \emph{path $\infty$-groupoid}. By its construction from adjoints, this comes with natural comparison maps 
\begin{equation}
  \label{ShapeUnitInIntro}
  \begin{tikzcd}
    \mathbf{X}
    \ar[r, "{ \eta^{\shape}_{\mathbf{X}} }"]
    &
    \shape \mathbf{X}
    \mathrlap{\,,}
  \end{tikzcd}
\end{equation}
which may be thought of as including the constant paths amoung all paths. Since $\mathrm{Dsc}$ is fully faithful, this map exhibits \emph{idempotency} of the shape operation, in that applied to a pure shape it is an equivalence:
\begin{equation}
\label
{IdempotencyOfShapeInIntro}
    \begin{tikzcd}
      \shape\mathbf{X}
      \ar[
        r, 
        "{ 
          \eta
            ^{\shape}
            _{\shape\mathbf{X}}
        }",
        "{ \sim }"{swap}
      ]
      &
      \shape\shape\mathbf{X}
      \mathrlap{\,.}
    \end{tikzcd}
\end{equation}

Less immediate are the following important properties of the shape modality:
\begin{proposition}[{\parencites[Prop. 4.3.8]{SS26-Bun}[Thm. 3.8.19]{Sc13-dcct}}]
\label[proposition]
{ShapePreservesFiberProductsOverDiscete}
The shape operation \cref{ShapeOperation} preserves fiber products over geometrically discrete objects:
\begin{equation}
\label
{ShapePreservingFiberProductsOverDiscrete}
  \left.
  \begin{subarray}{l}
    \mathbf{X}, \mathbf{Y}
    \in
    \mathrm{SmthGrpd}_\infty
    \\
    \mathbf{B} 
    \in 
    \mathrm{Dsc}(\mathrm{Grpd}_\infty)
  \end{subarray}
  \right\}
  \;\;
  \Rightarrow
  \;\;
  \shape\bracket({
    \mathbf{X} \times_{\mathbf{B}}
    \mathbf{Y}
  })
  \simeq
  \bracket({\shape \mathbf{X}})
  \times_{\shape\mathbf{B}}
  \bracket({\shape \mathbf{Y}})
  \simeq
  \bracket({\shape \mathbf{X}})
  \times_{\mathbf{B}}
  \bracket({\shape \mathbf{Y}})
  \mathrlap{\,.}
\end{equation}
\end{proposition}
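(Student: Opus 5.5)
The plan is to deduce \cref{ShapePreservingFiberProductsOverDiscrete} from the adjoint quadruple \cref{TheAdjointQuadruple}: $\mathrm{Shp}$ should commute with the relevant fiber products. By itself $\mathrm{Shp}$ is only a left adjoint, so a priori it preserves colimits, not limits; the extra input is the geometric discreteness of $\mathbf{B}$.

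First reduce to a statement about slices. Since $\mathbf{B} \simeq \mathrm{Dsc}(\mathcal{B})$, the slice $\mathrm{SmthGrpd}_\infty/\mathrm{Dsc}\mathcal{B}$ is equivalent to $\infty$-functors $\mathcal{B} \to \mathrm{SmthGrpd}_\infty$. This uses straightening/unstraightening: every object over $\mathrm{Dsc}\mathcal{B}$ is the colimit over $b\in\mathcal{B}$ of its fibers $\mathbf{X}_b$, by universality of colimits in the $\infty$-topos, writing $\mathrm{Dsc}\mathcal{B} \simeq \mathrm{colim}_{b\in \mathcal{B}}\ast$. Dually, in $\mathrm{Grpd}_\infty/\mathcal{B}$ objects are $\mathcal{B}$-indexed families of $\infty$-groupoids. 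Then $\mathrm{Shp}$ (a left adjoint, so colimit-preserving) acts fiberwise:
\[
\mathrm{Shp}(\mathbf{X}) \simeq \mathrm{colim}_{b}\,\mathrm{Shp}(\mathbf{X}_b),
\]
with structure map to $\mathrm{colim}_b\ast \simeq \mathcal{B}$.

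Second, compute the fiber product fiberwise. The fiber of $\mathbf{X}\times_{\mathbf{B}}\mathbf{Y}$ over $b$ is $\mathbf{X}_b\times\mathbf{Y}_b$, so
\[
\shape(\mathbf{X}\times_{\mathbf{B}}\mathbf{Y}) \simeq \mathrm{colim}_b\,\shape(\mathbf{X}_b\times\mathbf{Y}_b).
\]
The remaining input is that $\mathrm{Shp}$ preserves finite products. This is the substantive cohesion axiom, and it holds for the site of smooth manifolds (cartesian spaces) because $\mathrm{Shp}$ is computed as the colimit over the sifted site $\mathrm{CartSp}^{\mathrm{op}}$, e.g.\ via \cref{ShapeAsSmoothPathGroupoid}. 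Given this, each fiber becomes $\shape\mathbf{X}_b\times\shape\mathbf{Y}_b$. Reassembling over $\mathcal{B}$ uses universality of colimits in $\mathrm{Grpd}_\infty$ again, giving $(\shape\mathbf{X})\times_{\mathcal{B}}(\shape\mathbf{Y})$. Finally, $\shape\mathbf{B}\simeq\mathbf{B}$ by \cref{IdempotencyOfShapeInIntro}, since $\mathrm{Dsc}$ is fully faithful and $\mathrm{Shp}\circ\mathrm{Dsc}\simeq\mathrm{id}$.

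The main obstacle is the first step. One must check carefully that the fiber of $\shape\mathbf{X}\to\mathcal{B}$ over $b$ really is $\shape(\mathbf{X}_b)$. This is where discreteness of $\mathbf{B}$ is essential: for non-discrete $\mathbf{B}$, the colimit decomposition over points of $\mathbf{B}$ fails. I would verify it by expressing $\mathbf{X} \simeq \mathrm{colim}_{b}\,\mathbf{X}_b$ over the $\infty$-groupoid $\mathcal{B}$, and noting that the pullback of $\mathbf{X}\to\mathrm{Dsc}\mathcal{B}$ along each point is preserved by $\mathrm{Dsc}$. As a fallback, one can cite \cite[Thm. 3.8.19]{Sc13-dcct} for the local-systems argument.
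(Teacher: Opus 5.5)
The paper gives no proof of its own here; it cites \cite[Prop. 4.3.8]{SS26-Bun} and \cite[Thm. 3.8.19]{Sc13-dcct}. Your argument is correct and is the standard one behind those references: write $\mathrm{Dsc}\,\mathcal{B}\simeq\mathrm{colim}_{b\in\mathcal{B}}\ast$, use universality of colimits to decompose $\mathbf{X}\times_{\mathbf{B}}\mathbf{Y}$ into the fibers $\mathbf{X}_b\times\mathbf{Y}_b$, and use that $\mathrm{Shp}$ preserves colimits and also finite products (the latter via the sifted colimit in \cref{ShapeAsSmoothPathGroupoid}).

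The one step to state more cleanly is that the fiber of $\mathrm{colim}_b\,\mathrm{Shp}(\mathbf{X}_b)\to\mathcal{B}$ over $b$ is $\mathrm{Shp}(\mathbf{X}_b)$. This follows from descent, i.e.\ straightening $\mathrm{Grpd}_\infty/\mathcal{B}\simeq\mathrm{Fun}(\mathcal{B},\mathrm{Grpd}_\infty)$, for groupoid-indexed colimits in $\mathrm{Grpd}_\infty$. It does not follow from $\mathrm{Dsc}$ preserving pullbacks, as your last paragraph suggests.
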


\begin{proposition}[
  {\cite{PavlovEtAl2024}, cf. \cite[\S 4.3.2]{SS26-Bun}}
]
\label[proposition]
{PathGroupoidAndSmoothOkaPrinciple}
  The shape operation \cref{ShapeOperation}:
  \begin{enumerate}
    \item
    \textbf{\textup{(path $\infty$-groupoid)}}
    is equivalently given by forming smooth singular simplicial complexes, in that
    \begin{equation}
      \label{ShapeAsSmoothPathGroupoid}
      \shape \mathbf{X}
      \simeq
      \mathrm{hocolim}_n
      \mathbf{Map}\bracket({
        \mathbf{\Delta}^n,
        \mathbf{X}
      })
      \simeq
      \mathrm{Dsc}
      \bracket({
      \mathrm{hocolim}_n
      \mathbf{X}\bracket({
        \mathbf{\Delta}^n
      })      
      })
      \mathrlap{\,,}
    \end{equation}
    where $\mathbf{\Delta}^\bullet$ is the cosimplicial smooth manifold of extended simplices \cref{TheExtendedNSimplex},

    \item 
    \textbf{\textup{(smooth Oka principle)}}
    satisfies
    \begin{equation}
    \label{SmoothOkaPrinciple}
      \shape 
      \mathbf{Map}\bracket({
        X, \mathbf{Y}
      })
      \simeq
      \mathbf{Map}\bracket({
        \shape X, 
        \shape \mathbf{Y}
      })
      \simeq
      \mathrm{Dsc}\bracket({
        \mathrm{Map}\bracket({
          X, \mathrm{Shp} \mathbf{Y}
        })
      })\,,
    \end{equation}
    for all $X \in \mathrm{SmthMfd}$ and $\mathbf{Y} \in \mathrm{SmthGrpd}_\infty$. 
  \end{enumerate}
\end{proposition}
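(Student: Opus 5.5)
The plan is to prove the two parts in order, using \emph{(i)} as the basic tool for \emph{(ii)}.

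For \emph{(i)}, write $\mathrm{S}(\mathbf{X}) := \mathrm{hocolim}_n\, \mathbf{X}(\mathbf{\Delta}^n) \in \mathrm{Grpd}_\infty$. The aim is to show that $\mathrm{S}$ is left adjoint to $\mathrm{Dsc}$; then $\mathrm{S} \simeq \mathrm{Shp}$ by uniqueness of adjoints. The steps are as follows.
\begin{enumerate}
\item $\mathrm{S}$ preserves small colimits. Evaluation at each $\mathbf{\Delta}^n$ is colimit-preserving on presheaves, and so is the geometric realization $\mathrm{hocolim}_n$.
\item $\mathrm{S}$ sends every smooth manifold $U$ to the contractible $\infty$-groupoid whenever $U \simeq \mathbb{R}^k$. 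The simplicial set $C^\infty(\mathbf{\Delta}^\bullet, \mathbb{R}^k)$ is contracted by the straight-line homotopy, which provides extra degeneracies.
\item $\mathrm{S}$ inverts \v{C}ech covers. This reduces to the statement that the smooth singular complex of a manifold satisfies descent along good open covers.
\item Since every $\mathbf{X}$ is a colimit of representables $\mathbb{R}^k$, steps 1--3 give
\[
  \mathrm{Grpd}_\infty\bracket({\mathrm{S}\mathbf{X}, \mathcal{Y}})
  \simeq
  \mathrm{lim}_{\mathbb{R}^k \to \mathbf{X}} \, \mathcal{Y}
  \simeq
  \mathrm{SmthGrpd}_\infty\bracket({\mathbf{X}, \mathrm{Dsc}\,\mathcal{Y}})\,.
\]
\end{enumerate}
The last equivalence uses that the plots of $\mathrm{Dsc}\,\mathcal{Y}$ are constant, by \cref{PlotsOfFlatOfX}. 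The middle expression $\mathrm{hocolim}_n \mathbf{Map}(\mathbf{\Delta}^n, \mathbf{X})$ in \cref{ShapeAsSmoothPathGroupoid} is geometrically discrete by the same $\mathbb{R}$-homotopy invariance. Concretely, $U \mapsto \mathrm{hocolim}_n \mathbf{X}(U \times \mathbf{\Delta}^n)$ is locally constant in $U$, since $U \times \mathbf{\Delta}^n \to \mathbf{\Delta}^n$ is an $\mathbb{R}$-homotopy equivalence on contractible charts.

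For \emph{(ii)}, apply \emph{(i)} to $\mathbf{Map}(X, \mathbf{Y})$. This gives
\[
  \shape \mathbf{Map}(X,\mathbf{Y}) \simeq \mathrm{hocolim}_n\, \mathbf{Y}(X \times \mathbf{\Delta}^n)\,,
\]
which has to be compared with $\mathrm{Map}(\mathrm{Shp}X, \mathrm{Shp}\mathbf{Y})$. First I will prove the claim for $X = \mathbb{R}^k$: there both sides reduce to $\mathrm{Shp}\,\mathbf{Y}$, by $\mathbb{R}$-homotopy invariance in the $X$-variable. Then I extend to general $X$ by choosing a good open cover of $X$ and expressing $X$ as the homotopy colimit of its \v{C}ech nerve. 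The target side turns this colimit into a limit, because $\mathrm{Map}(-, \mathrm{Shp}\mathbf{Y})$ sends colimits to limits. On the source side the same is true provided $\shape$ commutes with the finite limit $\mathbf{Map}(\mathrm{hocolim}\, U_\bullet, \mathbf{Y}) \simeq \mathrm{holim}\, \mathbf{Map}(U_\bullet, \mathbf{Y})$. For finite covers, where the \v{C}ech nerve is finite, this uses that the homotopy colimit over $\Delta^{\mathrm{op}}$ of concordance spaces commutes with such finite limits. Infinite covers need an additional exhaustion argument.

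The main obstacle is the descent statement behind step 3 of \emph{(i)} and the last step of \emph{(ii)}. One must show that the smooth singular complex $U \mapsto \mathrm{hocolim}_n \mathbf{Y}(U \times \mathbf{\Delta}^n)$ is itself an $\infty$-sheaf on manifolds, which is the \emph{concordance sheafification} theorem. My first attempt will follow the Berwick-Evans--Boavida de Brito--Pavlov strategy. Given compatible concordances on the pieces of a cover, I would glue them using a partition of unity and collars: reparametrize each simplex direction so that the data are constant near the faces, which makes the gluing strict. This should show that the comparison map to the homotopy limit over the \v{C}ech nerve has contractible fibers. If this direct approach fails, I would instead invoke \cite{PavlovEtAl2024}, as the statement does, for exactly this descent step and keep the remaining formal adjunction argument above.
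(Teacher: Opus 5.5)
Your proposal is correct in outline. At its one non-formal point it agrees with the paper, which gives no argument of its own and only cites \cite{PavlovEtAl2024}. You reduce (ii) to the concordance-sheafification theorem of that reference, namely that $U \mapsto \mathrm{hocolim}_n\, \mathbf{Y}(U \times \mathbf{\Delta}^n)$ is an $\infty$-sheaf. That is exactly what is needed to identify this presheaf's value at a non-contractible $X$, which is $\mathrm{Shp}\,\mathbf{Map}(X,\mathbf{Y})$, with the value $\mathrm{Map}(\mathrm{Shp}X,\mathrm{Shp}\mathbf{Y})$ of its sheafification $\mathrm{Dsc}\,\mathrm{Shp}\,\mathbf{Y}$. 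Your account of where the difficulty lies is off in two places, however.

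\emph{Part (i) needs none of this.} Your step 3 concerns representables only. It says that for an open cover of a manifold $M$ the map $\mathrm{hocolim}_{[k]} \coprod \mathrm{Sing}(U_{i_0 \cdots i_k}) \to \mathrm{Sing}(M)$ is an equivalence. This is the classical Dugger--Isaksen theorem (for good covers, just the nerve theorem), plus the comparison of smooth and continuous singular complexes. It is a co-descent property of singular complexes of manifolds, not what concordance sheafification provides. Deriving it from the latter would be circular.

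Likewise, discreteness of the middle term in (i) only needs the presheaf $U \mapsto \mathrm{hocolim}_n\, \mathbf{X}(U \times \mathbf{\Delta}^n)$ to be constant on Cartesian spaces. This comes from the simplicial homotopy induced by $(t,x) \mapsto tx$, not from a levelwise statement: $\mathbf{X}(\mathbf{\Delta}^n) \to \mathbf{X}(U \times \mathbf{\Delta}^n)$ is not an equivalence in general. The colimit in $\mathrm{SmthGrpd}_\infty$ is the sheafification of that presheaf, hence $\mathrm{Dsc}$ of the constant value. So (i) follows from classical inputs, and only (ii) needs the deep theorem.

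\emph{The finite-cover step in (ii).} You cannot ``use'' as a known fact that realization of concordance spaces commutes with finite limits. Geometric realization commutes with finite products but not with pullbacks. For example, the levelwise pullback of $\ast \to \Delta^1/\partial\Delta^1 \leftarrow \ast$ realizes to a point, whereas $\ast \times^h_{S^1} \ast \simeq \mathbb{Z}$. For concordance spaces this commutation \emph{is} the theorem being proved. Also, the \v{C}ech nerve of a finite cover is never a finite diagram; for finite covers, use the equivalent descent condition over the finite cube of non-empty index subsets.

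Your partition-of-unity gluing is only a heuristic. Points of the homotopy limit are coherent systems of simplices in the realizations, not strictly compatible concordances. Invoking \cite{PavlovEtAl2024} at that step, as the paper does, is what actually closes the argument.
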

\begin{proposition}
We have natural equivalences
\begin{equation}
  \label{MapIntoShapeIsMapFromShape}
  \mathbf{Map}\bracket({
    \mathbf{X},
    \shape
    \mathbf{Y}
  })
  \sim
  \mathbf{Map}\bracket({
    \shape \mathbf{X},
    \shape
    \mathbf{Y}
  })\,.
\end{equation}
\end{proposition}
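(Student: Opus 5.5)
The plan is to deduce the equivalence \cref{MapIntoShapeIsMapFromShape} formally from the adjoint quadruple \cref{TheAdjointQuadruple}, without using any geometry. The point is that $\shape\mathbf{Y} \simeq \mathrm{Dsc}(\mathrm{Shp}\,\mathbf{Y})$ is geometrically discrete, and a discrete target cannot see the difference between $\mathbf{X}$ and its shape. Since both sides are smooth $\infty$-groupoids, I will check the equivalence on $\infty$-groupoids of plots \cref{PlotsOfSmoothInfinityGroupoid}, naturally in a probe manifold $U$. The equivalence is induced by precomposition with the unit $\eta^{\shape}_{\mathbf{X}}$ \cref{ShapeUnitInIntro}.

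Concretely, $U$-plots of the right-hand side are $\mathrm{SmthGrpd}_\infty\bracket({\shape\mathbf{X}\times U, \shape\mathbf{Y}})$, and those of the left-hand side are $\mathrm{SmthGrpd}_\infty\bracket({\mathbf{X}\times U,\shape\mathbf{Y}})$. I apply the hom-equivalence \cref{HomIsoForShp} to both, since the target is $\mathrm{Dsc}$ of something. This rewrites them as $\mathrm{Grpd}_\infty\bracket({\mathrm{Shp}(\shape\mathbf{X}\times U), \mathrm{Shp}\mathbf{Y}})$ and $\mathrm{Grpd}_\infty\bracket({\mathrm{Shp}(\mathbf{X}\times U), \mathrm{Shp}\mathbf{Y}})$ respectively. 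It therefore suffices to show that $\mathrm{Shp}(\eta^{\shape}_{\mathbf{X}}\times U)$ is an equivalence.

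That reduction is the main step. I intend to use that $\mathrm{Shp}$ preserves finite products, which holds because it has the further left adjoint structure of a cohesive $\infty$-topos; alternatively, one can take the case of \cref{ShapePreservesFiberProductsOverDiscete} with $\mathbf{B} = \ast$, which is discrete. Granting this, $\mathrm{Shp}(\mathbf{X}\times U) \simeq \mathrm{Shp}\mathbf{X}\times\mathrm{Shp}U$ and likewise $\mathrm{Shp}(\shape\mathbf{X}\times U)\simeq \mathrm{Shp}\shape\mathbf{X}\times \mathrm{Shp}U$. Under these identifications the map becomes $\mathrm{Shp}(\eta^{\shape}_{\mathbf{X}})\times\mathrm{id}$. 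Finally, $\mathrm{Shp}(\eta^{\shape}_{\mathbf{X}})$ is an equivalence by idempotency \cref{IdempotencyOfShapeInIntro}, that is, by the triangle identity together with full faithfulness of $\mathrm{Dsc}$.

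I expect the only real obstacle to be this product preservation, in particular making sure that the identifications are compatible with the map induced by $\eta^{\shape}$. Naturality in $U$ is automatic because every step is natural. So the plot-wise equivalences assemble to an equivalence of $\infty$-sheaves, which gives \cref{MapIntoShapeIsMapFromShape}.
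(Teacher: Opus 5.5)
Your argument is correct, but it takes a different route from the paper in how it handles the probe manifold $U$. The paper checks the equivalence only on contractible plots $U$. There $\shape U \simeq \ast$ lets one drop the factor $U$, so both sides reduce, via $\mathrm{Shp}\dashv\mathrm{Dsc}$ and full faithfulness of $\mathrm{Dsc}$, to $\mathrm{Grpd}_\infty\bracket({\mathrm{Shp}\mathbf{X},\mathrm{Shp}\mathbf{Y}})$. It then passes to all manifolds by descent, since every manifold is covered by contractibles.

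You instead use finite-product preservation of $\mathrm{Shp}$, which is legitimately available as the case $\mathbf{B}=\ast$ of \cref{ShapePreservesFiberProductsOverDiscete}. This treats every $U$ uniformly, so no descent step is needed. Your version also makes explicit that the equivalence is precomposition with $\eta^{\shape}_{\mathbf{X}}$. The paper leaves this implicit, even though its descent argument needs such a global comparison map in order to be checked plotwise.

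Dropping a contractible $U$ in the paper's argument is itself an instance of $\mathrm{Shp}(\mathbf{X}\times U)\simeq\mathrm{Shp}\mathbf{X}\times\mathrm{Shp}U$, so your route needs no more input. What the paper's version buys is that it only needs this for contractible $U$, i.e.\ a homotopy-invariance statement, rather than full product preservation, at the cost of the extra descent step.
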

\begin{proof}
  This follows over plots by manifolds $U$ which are contractible, $\shape U \sim \ast$, by repeated use of the adjunction $\mathrm{Shp} \dashv \mathrm{Dsc}$ \cref{HomIsoForShp} and the full faithfulness of $\mathrm{Dsc}$, and from here in general by the sheaf/descent property, using that every manifold is covered by contractibles. 
\end{proof}

\begin{example}
\label[example]
{ShapeOfSmoothSetOfClosedDiffForms}
For $\mathfrak{g}$ an $L_\infty$-algebra of finite type \cref{LInfinityAlgsViaCEdgcAlgs} and $X$ a smooth manifold, the smooth Oka principle \cref{SmoothOkaPrinciple} shows that the shape \cref{ShapeOperation} of the smooth set of closed $\mathfrak{g}$-valued differential forms \cref{SmoothSetOfClosedLInfinityValuedDiffForms} has the following equivalent incarnations:
\begin{equation}
  \label{ShapeOfClosedFormsOnSigmaIsMapsFromSigmaToShapeOfFormsOverPoint}
  \begin{alignedat}{2}
  \shape
  \mathbf{\Omega}_{\mathrm{cl}}^1\bracket({
    X; 
    \mathfrak{g}
  })
  &
  \simeq
  \shape
  \mathbf{Map}\bracket({
    X,
    \mathbf{\Omega}^1_{\mathrm{cl}}
    ({
      \ast; \mathfrak{g}
    })
  })
  &\;\;&
  \substack{\text{by \cref{MapsIntoSmoothSetOfClosedForms}}}
  \\
  & \simeq
  \mathbf{Map}\bracket({
    \shape X
    ,
    \shape
    \mathbf{\Omega}^1_{\mathrm{cl}}
    ({
      \ast; \mathfrak{g}
    })
  })
  &\;\;&
  \substack{\text{ by \cref{SmoothOkaPrinciple}}}
  \\
  & \simeq
  \mathbf{Map}\bracket({
    X,
    \shape
    \mathbf{\Omega}^1_{\mathrm{cl}}
    ({
      \ast; \mathfrak{g}
    })
  })
  &\;\;&
  \substack{\text{by \cref{MapIntoShapeIsMapFromShape}.}}
  \end{alignedat}
\end{equation}
Moreover, if here $\mathfrak{g} \defneq \mathfrak{l}\ClassifyingB$ is the real Whitehead $L_\infty$-algebra of a space $\ClassifyingB$ (simply connected with finite dimensional rational cohomology) then \cref{ShapeAsSmoothPathGroupoid} implies, with the fundamental theorem of dg-algebraic rational homotopy theory \cref{RRationalizationFromSullianModel}, that
\begin{equation}
  \shape \mathbf{\Omega}^1_{\mathrm{cl}}
  \bracket({
    \ast,
    \mathfrak{l}\ClassifyingB
  })
  \sim
  \RRationalization 
  \ClassifyingB
\end{equation}
is the $\mathbb{R}$-rationalization \cref{TheRRationalizationUnit} of that space. With \cref{ShapeOfClosedFormsOnSigmaIsMapsFromSigmaToShapeOfFormsOverPoint} this implies for any smooth manifold $X$ that
\begin{equation}
\label
{ShapeOfClosedFormsOnSmoothManifold}
  \shape 
  \mathbf{\Omega}^1_{\mathrm{cl}}
  \bracket({
    X;
    \mathfrak{l}\ClassifyingB
  })
  \sim
  \mathrm{Map}\bracket({
    X,
    \RRationalization
    \ClassifyingB
  })
  \mathrlap{\,.}
\end{equation}
\end{example}

More generally:
\begin{lemma}
\label[lemma]
{ShapePreservesFiberProductOfClosedDiffForms}
There is a natural equivalence
\begin{equation}
  \begin{aligned}
    \shape\left(
    \mathbf{\Omega}^1_{\mathrm{cl}}\bracket({
      \BoundaryDomain;
      \mathfrak{l}_{{}_{
        \ClassifyingB
      }}
      \ClassifyingA
    })
    \underset{
        \mathbf{\Omega}^1_{\mathrm{cl}}
        \scaledbracket({
          \BoundaryDomain;
          \mathfrak{l}\ClassifyingB
        })    
    }
    {\times}
    \mathbf{\Omega}^1_{\mathrm{cl}}\bracket({
      \BulkDomain;
      \mathfrak{l}\ClassifyingB
    })
    \right)
    \;\sim\;
    \shape
    \mathbf{\Omega}^1_{\mathrm{cl}}\bracket({
      \BoundaryDomain;
      \mathfrak{l}_{{}_{
        \mathfrak{l}\ClassifyingB
      }}
      \ClassifyingA
    })
    \underset{
        \shape
        \mathbf{\Omega}^1_{\mathrm{cl}}
        \scaledbracket({
          \BoundaryDomain;
          \mathfrak{l}\ClassifyingB
        })    
    }
    {\times}
    \shape
    \mathbf{\Omega}^1_{\mathrm{cl}}\bracket({
      \BulkDomain;
      \mathfrak{l}\ClassifyingB
    })
    \mathrlap{\,.}
  \end{aligned}
\end{equation}
\end{lemma}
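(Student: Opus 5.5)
The plan is to reduce everything to the smooth Oka principle and to the fact that shape preserves fiber products over discrete objects (\cref{ShapePreservingFiberProductsOverDiscrete}). The obstruction is that the base of the fiber product on the left, $\mathbf{\Omega}^1_{\mathrm{cl}}(\BoundaryDomain;\mathfrak{l}\ClassifyingB)$, is not geometrically discrete. So I would first replace the strict fiber product by one whose legs exhibit a fibration structure that is detectable plotwise on the extended simplices $\mathbf{\Delta}^n$. Then I would compute shape via the formula \cref{ShapeAsSmoothPathGroupoid} as the realization $\mathrm{hocolim}_n$ of the simplicial object of plots on $\mathbf{\Delta}^n$.

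The first step is to evaluate the fiber product on $\mathbf{\Delta}^n$. Since fiber products of (pre)sheaves are computed plotwise, by \cref{SmoothSetOfClosedLInfinityValuedDiffForms} its $\mathbf{\Delta}^n$-plots are the pullback of simplicial sets
\[
  \Omega^1_{\mathrm{cl}}(\mathbf{\Delta}^\bullet\times\BoundaryDomain;\mathfrak{l}_{\ClassifyingB}\ClassifyingA)
  \times_{\Omega^1_{\mathrm{cl}}(\mathbf{\Delta}^\bullet\times\BoundaryDomain;\mathfrak{l}\ClassifyingB)}
  \Omega^1_{\mathrm{cl}}(\mathbf{\Delta}^\bullet\times\BulkDomain;\mathfrak{l}\ClassifyingB).
\]
Each factor is $\mathrm{Hom}_{\mathrm{dgcAlg}}$ out of a Sullivan (relative Sullivan) algebra into the simplicial dgc-algebra $\Omega^\bullet_{\mathrm{dR}}(\mathbf{\Delta}^\bullet\times -)$.

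The second step is to show that this pullback is a homotopy pullback of Kan complexes. The relative CE-algebra $\mathrm{CE}(\mathfrak{l}\ClassifyingB)\hookrightarrow\mathrm{CE}(\mathfrak{l}_{\ClassifyingB}\ClassifyingA)$ is a relative Sullivan algebra, hence a cofibration. Restriction along $\phi$, namely $\Omega^\bullet_{\mathrm{dR}}(\mathbf{\Delta}^\bullet\times\BulkDomain)\to\Omega^\bullet_{\mathrm{dR}}(\mathbf{\Delta}^\bullet\times\BoundaryDomain)$, is degreewise surjective because $\phi$ is a closed embedding and forms extend; this uses the extension lemma as in the footnote to \cref{RRationalizationFromSullianModel}. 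Together with Reedy fibrancy of these simplicial dgc-algebras, the map
\[
  \mathrm{Hom}(\mathrm{CE}(\mathfrak{l}_{\ClassifyingB}\ClassifyingA),\Omega^\bullet(\mathbf{\Delta}^\bullet\times\BoundaryDomain))
  \longrightarrow
  \mathrm{Hom}(\mathrm{CE}(\mathfrak{l}\ClassifyingB),\Omega^\bullet(\mathbf{\Delta}^\bullet\times\BoundaryDomain))
\]
becomes a Kan fibration, by the standard lifting argument for Sullivan cofibrations against surjective quasi-isomorphisms. The strict pullback is therefore a homotopy pullback.

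The third step is to pass to realization. The realization of the simplicial object $n\mapsto \mathbf{X}(\mathbf{\Delta}^n)$, a diagonal of a bisimplicial set, is computed levelwise by the simplicial sets above. The Oka principle \cref{SmoothOkaPrinciple} together with \cref{ShapeOfClosedFormsOnSmoothManifold} identifies the shapes of the three corners with these Kan complexes, up to the equivalence $\mathbf{\Delta}^n\times X\sim X$. Since homotopy pullbacks of simplicial sets commute with diagonal realization when the fibration condition holds uniformly in the outer degree (Bousfield–Friedlander $\pi_*$-Kan condition, which holds here since the relevant $\pi_0$ are constant in $n$ by homotopy invariance), the shape of the fiber product is the homotopy fiber product of shapes. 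This is the claim, with the right-hand side read as a homotopy fiber product.

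The main obstacle is the second step, specifically verifying the Kan fibration property uniformly in the $\mathbf{\Delta}^n$ direction, i.e.\ the $\pi_*$-Kan condition. I would first try to sidestep it by noting that every object involved is, up to equivalence, $\shape$-local in the $\mathbf{\Delta}^n$-direction (homotopy invariance of de Rham cohomology on $\mathbf{\Delta}^n\times X$). That would make the outer simplicial direction homotopically constant, and Bousfield–Friedlander applies trivially.
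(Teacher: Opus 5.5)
Your argument is essentially the paper's: the relative Sullivan cofibration $\mathrm{CE}(\mathfrak{l}\ClassifyingB)\hookrightarrow\mathrm{CE}(\mathfrak{l}_{\ClassifyingB}\ClassifyingA)$, together with Reedy fibrancy of $\Omega^\bullet_{\mathrm{dR}}(\BoundaryDomain\times\mathbf{\Delta}^{(-)})$, makes the $\BoundaryDomain$-side leg a Kan fibration, so the ordinary pullback of the three simplicial sets is a homotopy pullback. By \cref{ShapeAsSmoothPathGroupoid} this degreewise pullback is exactly the simplicial set modelling the shape of the left-hand side, and your third step (bisimplicial diagonals, the Bousfield--Friedlander $\pi_*$-Kan condition) as well as the surjectivity of restriction along $\phi$ are unnecessary: all objects here are smooth sets, so $n\mapsto\mathbf{X}(\mathbf{\Delta}^n)$ is already a single simplicial set modelling $\shape\mathbf{X}$, the ``main obstacle'' you flag does not arise, and the proof is complete at the end of your second step.
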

\begin{proof}
  The point is that $\inlinetikzcd{
    \mathfrak{l}_{{}_{\ClassifyingB}}
    \ClassifyingA 
      \ar[r, ->>, "{ \mathfrak{l}\wp }"] 
      \& 
    \mathfrak{l}\ClassifyingB
    }$ is a fibration between fibrant objects, by construction, dual to a relative Sullivan model. Hence $\mathrm{CE}\bracket({\mathfrak{l}\wp})$ is a cofibration  and hence also a Reedy cofibration between simplicially constant simplicial dgc-algebra. But this means that the homotopy fiber product on the right is the ordinary fiber product of the simplicial sets representing the the three objects by \cref{ShapeAsSmoothPathGroupoid}, such as $\Omega^1_{\mathrm{cl}}\bracket({N \times \mathbf{\Delta}^{(-)}; \mathfrak{l}\ClassifyingB})$  (where $\mathbf{\Omega}^\bullet_{\mathrm{dR}}\bracket({N \times \mathbf{\Delta}^{(-)}})$ is Reedy fibrant, by the extension lemma). That ordinary fiber product of simplicial sets is computed degreewise, which yields the corresponding simplicial set modelling the left-hand side:
    \[
      [n] 
      \;\longmapsto\;
        {\Omega}^1_{\mathrm{cl}}\bracket({
          \BoundaryDomain 
            \!\times\!
          \mathbf{\Delta}^n;
          \mathfrak{l}_{{}_{
            \ClassifyingB
          }}
          \ClassifyingA
        })
        \underset{
            {\Omega}^1_{\mathrm{cl}}
            \scaledbracket({
              \BoundaryDomain
                \!\times\!
              \mathbf{\Delta}^n;
              \mathfrak{l}\ClassifyingB
            })    
        }
        {\times}
        {\Omega}^1_{\mathrm{cl}}\bracket({
          \BulkDomain
            \times
          \mathbf{\Delta}^n;
          \mathfrak{l}\ClassifyingB
        })
        \mathrlap{\,.}
      \qedhere
    \]
\end{proof}

\subsubsection
{The character map}
\label
{OnTheCharacterMap}

The equivalence \cref{ShapeOfClosedFormsOnSmoothManifold}
is consequential in relating differential form data to homotopy theoretic data, as we  now have canonical maps, \cref{TheRRationalizationUnit,ShapeUnitInIntro}, from both these realms into the same object:
\begin{equation}
  \label{TheDiffHomotopyCospan}
  \begin{tikzcd}[
    row sep=5pt, 
    column sep=40pt
  ]
    &[+15pt]
    &[-79pt]
    \UnpointedMap\bracket({
      \BulkDomain,
      \ClassifyingB
    })
    \ar[
      d,
      "{
        (\eta^{\mathbb{R}})_\ast
      }"
    ]
    \ar[
      ddl,
      start anchor={
        [xshift=10pt]south west
      },
      "{
        \mathbf{ch}
          ^{\ClassifyingB}
          _{\BulkDomain}
      }"{swap}
    ]
    \\[30pt]
    &&
    \;\;\;
    \UnpointedMap\bracket({
      \BulkDomain,
      \RRationalization\ClassifyingB
    })
    \\[2pt]
    \mathbf{\Omega}^1_{\mathrm{cl}}
    \bracket({
      \BulkDomain,
      \mathfrak{l}\ClassifyingB
    })
    \ar[
      r,
      "{
        \eta^{\shape}
         _{\BulkDomain}
      }"
    ]
    &
    \shape
    \mathbf{\Omega}^1_{\mathrm{cl}}
    \bracket({
      \BulkDomain,
      \mathfrak{l}\ClassifyingB
    })
    \mathrlap{\,.}
    \ar[
      ur,
      <->,
      shorten=-3pt,
      "{ \sim }"{sloped, swap}
    ]
  \end{tikzcd}
\end{equation}
Here the composite map on the right is called the differential \emph{character map} \parencites[Def. 9.2]{FSS23-Char}[\S 3.3]{SS25-Flux} since, as we will recall in a moment, it generalizes, from abelian to nonabelian generalized cohomology theories, the \emph{Dold-Chern character map}.

Namely, the universal ``unification'' of the differential-geometric and homotopy-theoretic aspects in \cref{TheDiffHomotopyCospan}, as given by the homotopy fiber product $\mathrm{PhsSp}\bracket({\BulkDomain;\ClassifyingB})$ of these two maps (cf. \cite{SS24-Phase,SS25-Flux}):
\begin{equation}
\label{PlainPhaseSpaceInIntro}
  \begin{tikzcd}
    \mathrm{PhsSp}\bracket({
      \BulkDomain;
      \ClassifyingB
    })
    \ar[r]
    \ar[d]
    \ar[
      dr,
      phantom,
      "{ \lrcorner }"{pos=.1}
    ]
    &
    \UnpointedMap\bracket({
      \BulkDomain,
      \ClassifyingB
    })
    \ar[
      d,
      "{
        \mathbf{ch}^{\ClassifyingB}
        _{\BulkDomain}
      }"
    ]
    \\
    \mathbf{\Omega}^1_{\mathrm{cl}}
    \bracket({
      \BulkDomain,
      \mathfrak{l}\ClassifyingB
    })
    \ar[
      r,
      "{ 
        \eta^{\shape} 
          _{\BulkDomain}
      }"
    ]
    &
    \shape
    \mathbf{\Omega}^1_{\mathrm{cl}}
    \bracket({
      \BulkDomain,
      \mathfrak{l}\ClassifyingB
    })
    \mathrlap{\,,}
  \end{tikzcd}
\end{equation}
is the moduli stack for \emph{differential cohomology} with coefficients in $\ClassifyingB$, in refinement of how $\UnpointedMap\bracket({\BulkDomain, \ClassifyingB})$ is the moduli space for plain cohomology \cref{DefinitionOnNonabelianCohomology} with coefficients in $\ClassifyingB$:
\begin{equation}
  H^0_{\mathrm{dff}}\bracket({
    \BulkDomain;
    \ClassifyingB
  })
  :=
  \pi_0
  \,
  \flat
  \mathrm{PhsSp}\bracket({
    \BulkDomain,
    \ClassifyingB
  })\,.
\end{equation}

Crucially, the shape of the phase space \cref{PlainPhaseSpaceInIntro} is the homotopy type of the plain mapping space:
\begin{equation}
  \shape
  \mathrm{PhasSp}\bracket({
    \BulkDomain,
    \ClassifyingB
  })
  \simeq
  \UnpointedMap\bracket({
    \BulkDomain,
    \ClassifyingB
  })
  \mathrlap{\,.}
\end{equation}
This is a consequence \cref{IdempotencyOfShapeInIntro} of having the same $L_\infty$-coefficients in both objects in the bottom line of \cref{PlainPhaseSpaceInIntro}. Conversely, in a moment we will see that we may ``adjust'' the shape by admitting only more restricted $L_\infty$-algebra coefficients.

\subsubsection
{The twisted relative character map}

It is fairly straightforward to generalize the character map \cref{OnTheCharacterMap} to the twisted relative situation.

First, in generalization of \cref{WhiteheadShLieAlgebraInIntro}, associated with a fibration $\inlinetikzcd{\ClassifyingA \ar[r, "{\wp}"] \& \ClassifyingB}$ (of simply connected spaces with finite-dimensional rational cohomology) is a fibration of  relative Whitehead $L_\infty$-algebras \cite[Def. 6.6, Prop. 5.16]{FSS23-Char}
\begin{equation}
\label
{RelativeWhiteheadLAlgebra}
  \begin{tikzcd}
    \mathfrak{l}_{_{\ClassifyingB}}
    \ClassifyingA
    \ar[
      r,
      ->>,
      "{
        \mathfrak{l}\wp
      }"
    ]
    &
    \mathfrak{l}\ClassifyingB
    \mathrlap{\,,}
  \end{tikzcd}
\end{equation}
this being the formal dual to the \emph{relative minimal Sullivan model} of $\wp$, relative to the minimal Sullivan model of $\ClassifyingB$:
\begin{equation}
  \begin{tikzcd}
    \mathrm{CE}\bracket({
      \mathfrak{l}_{{}_{\ClassifyingB}}
      \ClassifyingA
    })
    &&
    \mathrm{CE}\bracket({
      \mathfrak{l}\ClassifyingB
    })
    \mathrlap{\,.}
    \ar[
      ll,
      hook',
      "{
        \mathrm{CE}(
          \mathfrak{l}\wp
        )
      }"{swap}
    ]
  \end{tikzcd}
\end{equation}

With that and in generalization of \cref{SetOfClosedLInfinityValuedForms,SmoothSetOfClosedLInfinityValuedDiffForms}, given an embedding of smooth manifolds $\inlinetikzcd{ \BoundaryDomain \ar[r, hook, "{ \phi }"] \& \BulkDomain }$ we take the smooth set \cref{CategoryOfSmoothSets} of \emph{twisted relative closed differential forms} on $\phi$ with coefficients in $\mathfrak{l}\wp$ to be:
\begin{equation}
  \mathbf{\Omega}^1_{\mathrm{cl}}
  \bracket({
    \phi;
    \mathfrak{l}\wp
  })
  :=
  \mathbf{\Omega}^1_{\mathrm{cl}}\bracket({
    \BulkDomain;
    \mathfrak{l}\ClassifyingB
  })
  \underset
    {
      \mathbf{\Omega}^1_{\mathrm{cl}}
      ({
        \BoundaryDomain;
        \mathfrak{l}\ClassifyingB
      })    
    }
    {\times}
  \mathbf{\Omega}^1_{\mathrm{cl}}\bracket({
    \BoundaryDomain;
    \mathfrak{l}_{{}_{\ClassifyingB}}
    \ClassifyingA
  })\,.
\end{equation}

While the shape operation \cref{ShapeOperation} does not preserve fiber products in general, it does preserve this one (\cref{ShapePreservesFiberProductOfClosedDiffForms}), since $\mathfrak{l}\wp$ is a fibration. Similarly, the homotopy type of $\UnpointedMap\bracket({\phi,\wp})$ \cref{TwistedRelativeNonabelianCohomology} is well-defined since $\wp$ is a fibration. Therefore the differential character map \cref{TheDiffHomotopyCospan} generalizes immediately, and with it we obtain the \emph{twisted relative phase space}, in evident generalization of \cref{PlainPhaseSpaceInIntro}:
\begin{equation}
\label{TwistedRelativePhaseSpaceInIntro}
  \begin{tikzcd}
    \mathrm{PhsSp}({
      \phi;
      \wp
    })
    \ar[r]
    \ar[d]
    \ar[
      dr,
      phantom,
      "{ \lrcorner }"{pos=.1}
    ]
    &
    \UnpointedMap({
      \phi,
      \wp
    })
    \ar[
      d,
      "{
        \mathbf{ch}^{\wp}
      }"
    ]
    \\
    \mathbf{\Omega}^1_{\mathrm{cl}}
    ({
      \phi,
      \mathfrak{l}\wp
    })
    \ar[
      r,
      "{ \eta^{\shape} }"
    ]
    &
    \shape
    \mathbf{\Omega}^1_{\mathrm{cl}}
    ({
      \phi,
      \mathfrak{l}\wp
    })
    \mathrlap{\,,}
  \end{tikzcd}
\end{equation}

At this point we highlight that there is the  freedom to choose other differential form coefficients $\mathfrak{l}\wp'$, as long as these map to $\mathfrak{l}\wp$. In particular we may consider the case $\mathfrak{l}\wp' := \mathfrak{l}\,\mathrm{id}_{\ClassifyingA}$, for which we have  a natural identification:
\begin{equation}
  \mathbf{\Omega}^1_{\mathrm{cl}}\bracket({
    \phi;
    \mathfrak{l}\,
    \mathrm{id}_{\ClassifyingA}
  })
  \simeq
  \mathbf{\Omega}^1_{\mathrm{cl}}\bracket({
    \BulkDomain;
    \mathfrak{l}
    \ClassifyingA
  })
\end{equation}
and consider the \emph{constrained} phase space $\mathrm{PhsSpc}'(\phi;\wp)$ as the further pullback of \cref{TwistedRelativePhaseSpaceInIntro} along the canonical map:
\begin{equation}
\label{ConstrTwistedRelativePhaseSpaceInIntro}
  \begin{tikzcd}[column sep=large]
    \mathrm{PhsSp}'({
      \phi;
      \wp
    })
    \ar[r]
    \ar[d]
    \ar[
      dr,
      phantom,
      "{ \lrcorner }"{pos=.1}
    ]
    &
    \mathrm{PhsSp}({
      \phi;
      \wp
    })
    \ar[r]
    \ar[d]
    \ar[
      dr,
      phantom,
      "{ \lrcorner }"{pos=.1}
    ]
    &
    \UnpointedMap({
      \phi,
      \wp
    })
    \ar[
      d,
      "{
        \mathbf{ch}^{\wp}
      }"
    ]
    \\
    \mathbf{\Omega}^1_{\mathrm{cl}}\bracket({
      \BulkDomain;
      \mathfrak{l}\ClassifyingA
    })
    \ar[
      r,
      "{
        \mathfrak{l}(
          \mathrm{id}, \wp
        )_\ast
      }"
    ]
    &
    \mathbf{\Omega}^1_{\mathrm{cl}}
    ({
      \phi,
      \mathfrak{l}\wp
    })
    \ar[
      r,
      "{ \eta^{\shape} }"
    ]
    &
    \shape
    \mathbf{\Omega}^1_{\mathrm{cl}}
    ({
      \phi,
      \mathfrak{l}\wp
    })
    \mathrlap{\,.}
  \end{tikzcd}
\end{equation}

This is the key object of interest in \cref{OnIdentifyingThePhaseSpace}.

\printbibliography

\end{document}